\documentclass[amsthm]{elsart}
\usepackage{yjsco}
\usepackage{natbib}

\usepackage{enumerate}
\usepackage{amsmath,amsfonts,amssymb,amsthm}

\usepackage[linkcolor=black,colorlinks=true,citecolor=blue,urlcolor=blue]{hyperref} 

\defcitealias{BecLab00}{ibid.}
\defcitealias{vzGathen13}{ibid.}

\makeatletter \let\cl@part\relax \makeatother 
\usepackage{cleveref}
\Crefname{subsec}{Subsection}{Subsections}
\crefname{subsec}{subsection}{subsections}

\newcommand{\gathen}[1]{#1}
\renewcommand\leq\leqslant
\newcommand{\storeArg}{} 
\newcounter{notationCounter}
\newcommand{\bigO}[1]{\mathcal{O}(#1)} 
\newcommand{\bigOPar}[1]{\mathcal{O}\left(#1\right)} 
\newcommand{\softO}[1]{\mathcal{O}\tilde{~}(#1)} 
\newcommand{\polmultime}[1]{\mathsf{M}(#1)}
\newcommand{\polmultimePar}[1]{\mathsf{M}{#1}}
\newcommand{\polmatmultime}[1]{\mathsf{MM}(#1)}
\newcommand{\polmatmultimePrime}[1]{\mathsf{MM'}(#1)}
\newcommand{\polmatmultimeBis}[1]{\mathsf{MM''}(#1)}
\newcommand{\polmatmultimePrimeDnc}[1]{\mathsf{\overline{MM'}}(#1)}
\newcommand{\polmatmultimeBisDnc}[1]{\mathsf{\overline{MM''}}(#1)}
\newcommand{\expmatmul}{\omega} 
\newcommand{\algoname}[1]{{\normalfont\textsc{#1}}}
\newcommand{\algoword}[1]{\textsf{#1}}
\renewcommand{\ge}{\geqslant} 
\renewcommand{\le}{\leqslant} 
\newcommand{\ZZ}{\mathbb{Z}} 
\newcommand{\NN}{\mathbb{N}} 
\newcommand{\NNp}{\mathbb{N}_{> 0}} 
\newcommand{\var}{X} 
\newcommand{\field}{\mathbb{K}} 
\newcommand{\vecSpace}{\mathfrak{E}} 
\newcommand{\polSpace}{\mathfrak{F}} 
\newcommand{\polRing}{\field[\var]} 
\newcommand{\matSpace}[1][\rdim]{\renewcommand\storeArg{#1}\matSpaceAux} 
\newcommand{\polMatSpace}[1][\rdim]{\renewcommand\storeArg{#1}\polMatSpaceAux} 
\newcommand{\matSpaceAux}[1][\storeArg]{\field^{\storeArg \times #1}} 
\newcommand{\polMatSpaceAux}[1][\storeArg]{\polRing^{\storeArg \times #1}} 
\newcommand{\row}[1]{\mathbf{\MakeLowercase{#1}}} 
\newcommand{\mat}[1]{\mathbf{\MakeUppercase{#1}}} 
\newcommand{\trsp}[1]{#1^\mathsf{T}} 
\newcommand{\rdim}{m} 
\newcommand{\cdim}{n} 
\newcommand{\smat}[1][1]{\setcounter{notationCounter}{#1}\mat{\Alph{notationCounter}}} 
\newcommand{\spolmat}[1][1]{\setcounter{notationCounter}{#1}\mathbf{\Alph{notationCounter}}} 
\newcommand{\matrow}[2]{{#1}_{#2,*}} 
\newcommand{\matcol}[2]{{#1}_{*,#2}} 
\newcommand{\idMat}[1][\rdim]{\mat{I}_{#1}} 
\newcommand{\shiftMat}[1]{\mat{\var}^{#1\,}} 
\newcommand{\shift}[2][s]{#1_{#2}} 
\newcommand{\shifts}[1][s]{\mathbf{#1}} 
\newcommand{\sshifts}[1][\shifts]{|#1|} 
\newcommand{\shiftSpace}[1][\rdim]{\NN^{#1}} 
\newcommand{\unishift}{\mathbf{0}} 
\newcommand{\rdeg}[2][]{\mathrm{rdeg}_{{#1}}(#2)} 
\newcommand{\order}{\sigma} 
\newcommand{\linFunc}{\ell} 
\renewcommand{\int}{\row{p}} 
\newcommand{\intBasis}{\mat{P}} 
\newcommand{\intSpace}[1][\rdim]{\polMatSpace[#1][#1]} 
\newcommand{\evMat}{\mat{E}} 
\newcommand{\evSpace}[2][\rdim]{\matSpace[#1][#2]}
\newcommand{\mul}{\cdot} 
\newcommand{\mulmat}{\jordan} 
\newcommand{\jordan}{\mat{J}} 
\newcommand{\mulshift}{\mat{Z}} 
\newcommand{\nbbl}{n} 
\newcommand{\szbl}{\order} 
\newcommand{\evpt}{x} 
\newcommand{\module}{\mathfrak{M}}
\newcommand{\kernel}{\mathfrak{K}}
\newcommand{\intMod}[1][\evMat,\mulmat]{\mathfrak{I}(#1)}
\newcommand{\rowvec}[1][1]{\setcounter{notationCounter}{#1}\addtocounter{notationCounter}{21} \row{\alph{notationCounter}}} 
\newcommand{\krylov}[2][]{\mathcal{K}_{#1}(#2)} 
\newcommand{\linPolMat}[2][]{\mathcal{E}_{#1}(#2)} 
\newcommand{\polFromLin}[2][]{\mathcal{C}_{#1}(#2)} 
\newcommand{\pivmat}[2][]{\mathcal{P}_{#1}(#2)} 
\newcommand{\tgtmat}[2][]{\mathcal{T}_{#1}(#2)} 
\newcommand{\relmat}[2][]{\mathcal{R}_{#1}(#2)} 
\newcommand{\prioPerm}[1][\shifts]{\mat{\pi}_{#1}}
\newcommand{\prioEval}{\psi_{\shifts}}
\newcommand{\prioIndex}[1][\shifts]{\phi_{#1}}
\newcommand{\minDeg}{\delta}
\newcommand{\maxDeg}{\delta}
\newcommand{\rep}[2]{\mathrm{rep}(#1,#2)}
\newcommand{\evMatF}{\mat{F}} 
\newcommand{\supp}{\mu}
\newcommand{\nbpt}{p}

\newcommand{\expSet}{\Gamma}
\newcommand{\nvars}{r}

\newcommand{\sumVec}[1]{|#1|} 
\theoremstyle{definition}
\newtheorem{algorithm}{Algorithm}
\theoremstyle{plain}
\newtheorem{definition}{Definition}[section]
\newtheorem{theorem}[definition]{Theorem}
\newtheorem{corollary}[definition]{Corollary}
\newtheorem{proposition}[definition]{Proposition}
\newtheorem{lemma}[definition]{Lemma}
\newtheorem{problem}{Problem}
\theoremstyle{remark}
\newtheorem{remark}[definition]{Remark}
\newtheorem{example}[definition]{Example}

\begin{document}

\begin{frontmatter}

\title{Computing minimal interpolation bases}

\author{Claude-Pierre Jeannerod}
\address{Inria, Universit\'e de Lyon \\ Laboratoire LIP (CNRS, Inria, ENS de Lyon, UCBL), Lyon, France}

\author{Vincent Neiger}
\address{ENS de Lyon, Universit\'e de Lyon \\ Laboratoire LIP (CNRS, Inria, ENS de Lyon, UCBL), Lyon, France}

\author{\'Eric Schost}
\address{University of Waterloo \\ David Cheriton School of Computer Science, Waterloo ON, Canada}

\author{Gilles Villard}
\address{ENS de Lyon, Universit\'e de Lyon \\ Laboratoire LIP (CNRS, Inria, ENS de Lyon, UCBL), Lyon, France}

\begin{abstract}
We consider the problem of computing univariate polynomial matrices over a
field that represent minimal solution bases for a general interpolation
problem, some forms of which are the \emph{vector M-Pad\'e approximation
problem} in [Van Barel and Bultheel, Numerical Algorithms 3, 1992] and the
\emph{rational interpolation problem} in [Beckermann and Labahn, SIAM J. Matrix
Anal. Appl. 22, 2000]. Particular instances of this problem include the
bivariate interpolation steps of Guruswami-Sudan hard-decision and
K\"otter-Vardy soft-decision decodings of Reed-Solomon codes, the multivariate
interpolation step of list-decoding of folded Reed-Solomon codes, and
Hermite-Pad\'e approximation.

In the mentioned references, the problem is solved using iterative algorithms
based on recurrence relations. Here, we discuss a fast, divide-and-conquer
version of this recurrence, taking advantage of fast matrix computations over
the scalars and over the polynomials. This new algorithm is deterministic, and
for computing shifted minimal bases of relations between $\rdim$ vectors of
size $\order$ it uses $\softO{\rdim^{\expmatmul-1} (\order+\sshifts)}$ field
operations, where $\expmatmul$ is the exponent of matrix multiplication, and
$\sshifts$ is the sum of the entries of the input shift $\shifts$, with
$\min(\shifts)=0$.  This complexity bound improves in particular on earlier
algorithms in the case of bivariate interpolation for soft decoding, while
matching fastest existing algorithms for simultaneous Hermite-Pad\'e
approximation.
\end{abstract}

\begin{keyword}
M-Pad\'e approximation; Hermite-Pad\'e approximation; polynomial interpolation;
order basis; polynomial matrix.
\end{keyword}

\end{frontmatter}
\endNoHyper 

\section{Introduction}
\label{sec:intro}

\subsection{Context} In this paper, we study fast algorithms for
generalizations of classical \emph{Pad\'e approximation} and \emph{polynomial
interpolation} problems. Two typical examples of such problems are the
following.

\medskip
\begin{description}
  \item[\rm\emph{Constrained bivariate interpolation.}] In coding
    theory, some decoding algorithms rely on solving a bivariate interpolation
    problem which may be formulated as follows. Given a set of $\order$ points
    $\{(\evpt_1, y_1), \ldots , (\evpt_{\order}, y_{\order}) \}$ with
    coordinates in a field $\field$, find a non-zero polynomial $Q \in
    \field[X,Y]$ of $Y$-degree less than $\rdim$ satisfying
    \begin{equation*}
      Q(x_1,y_1) = \cdots = Q(x_\order,y_\order) = 0,
    \end{equation*}
    as well as a weighted degree constraint. In terms of linear algebra, we
    interpret this using the $\field$-linear functionals
    $\linFunc_1,\ldots,\linFunc_{\order}$ defined by $\linFunc_j(Q) =
    Q(\evpt_j,y_j)$ for polynomials $Q$ in $\field[X,Y]$. Then, given the
    points, the problem is to find a polynomial $Q$ satisfying the degree
    constraints and such that $\linFunc_j(Q)=0$ for each $j$. Writing $Q =
    \sum_{j<\rdim} p_{j+1}(X) Y^j$, in this context, one may actually want to
    compute a whole basis $\mat{P}$ of such interpolants $\row{p} =
    (p_1,\ldots,p_\rdim)$, and the weighted degree constraint is satisfied
    through the minimization of some suitably defined degree of $\mat{P}$.
    \medskip
  \item[\rm \emph{Hermite-Pad\'e approximation.}] Given a vector of $\rdim$
    polynomials $\row{f} = (f_1,\ldots,f_{\rdim}) \in \polRing^\rdim$, with
    coefficients in a field $\field$, and given a target order $\order$, find another
    vector of polynomials $\row{p} = (p_1,\ldots,p_\rdim)$ such that
    \begin{equation} \label{eq:dfnpade0} p_1 f_1 + \cdots + p_\rdim f_\rdim
      \,=\, 0 \bmod X^{\sigma}, \end{equation}
    with some prescribed degree constraints on $p_1,\ldots,p_\rdim$.

    Here as well, one may actually wish to compute a set of such vectors
    $\row{p}$, forming the rows of a matrix $\mat{P}$ over $\polRing$, which
    describe a whole basis of solutions. Then, these vectors may not all
    satisfy the degree constraints, but by requiring that the basis matrix
    $\mat{P}$ minimizes some suitably defined degree, we will ensure that at
    least one of its rows does (unless the problem has no solution).
\end{description}

\medskip
Concerning Hermite-Pad\'e approximation, a minimal basis of solutions can be
computed in $\softO{\rdim^{\expmatmul-1} \order}$ operations in
$\field$~\citep{ZhoLab12}. Here and hereafter, the soft-O notation
$\softO{\cdot}$ indicates that we omit polylogarithmic terms, and the exponent
$\expmatmul$ is so that we can multiply $\rdim \times \rdim$ matrices in
$\bigO{\rdim^\expmatmul}$ ring operations on any ring, the best known bound
being $\expmatmul < 2.38$~\citep{CopWin90, LeGall14}. For constrained bivariate
interpolation, assuming that $\evpt_1,\ldots,\evpt_\order$ are pairwise
distinct, the best known cost bound for computing a minimal basis is
$\softO{\rdim^\expmatmul \order}$~\citep{Bernstein11,CohHen15,Nielsen14}; the
cost bound $\softO{\rdim^{\expmatmul-1} \order}$ was achieved in~\citep{CJNSV15}
with a probabilistic algorithm which outputs only one interpolant satisfying
the degree constraints.

Following the work of \citet{BarBul92}, \citet{BecLab00},
and McEliece's presentation of K\"otter's algorithm~\citep[Section
7]{McEliece03}, we adopt a framework that encompasses both examples above, and
many other applications detailed in \Cref{sec:applications}; we propose
a deterministic algorithm for computing a minimal basis of solutions to this
general problem.

\subsection{Minimal interpolation bases} Consider a field $\field$ and the
vector space $\vecSpace = \field^\order$, for some positive integer $\order$;
we see its elements as row vectors. Choosing a $\order \times \order$ matrix
${\mulmat}$ with entries in $\field$ allows us to make $\vecSpace$ a
$\polRing$-module in the usual manner, by setting $p \mul \row{e} = \row{e} \,
p({\mulmat})$, for $p$ in $\polRing$ and $\row{e}$ in $\vecSpace$. We will call
$\mulmat$ the {\em multiplication matrix} of $(\vecSpace, \mul)$. 

\begin{definition}[Interpolant]
  \label{dfn:interp}
  Given a vector $\evMat = (\row{e}_1,\ldots,\row{e}_\rdim)$ in
  $\vecSpace^\rdim$ and a vector $\row{p} = (p_1,\ldots,p_\rdim)$ in
  $\polRing^\rdim$, we write $\row{p} \mul \evMat = p_1 \mul \row{e}_1 + \cdots +
  p_\rdim \mul \row{e}_\rdim \in \vecSpace$. We say that $\row{p}$ is an
  \emph{interpolant for $(\evMat,\mulmat)$} if 
\begin{equation}\label{eq:definterpol}
\row{p} \mul \evMat =0.
\end{equation} 
\end{definition}
\noindent Here, $\int$ is seen as a row vector, and $\evMat$ is seen as a
column vector of $\rdim$ elements of $\vecSpace$: as a matter of notation,
$\evMat$ will often equivalently be seen as an $\rdim \times \order$ matrix
over $\field$.

Interpolants $\row{p}$ are often called {\em relations} or {\em syzygies} of
$\row{e}_1,\dots,\row{e}_\rdim$. This notion of interpolants was introduced by
\citet{BecLab00}, with the requirement that $\mulmat$ be upper triangular. One
of the main results of this paper holds with no assumption on $\mulmat$; for
our second main result, we will work under the stronger assumption that
$\mulmat$ is a Jordan matrix: it has $\nbbl$ Jordan blocks of respective sizes
$\szbl_1, \ldots, \szbl_\nbbl$ and with respective eigenvalues $\evpt_1,
\ldots, \evpt_\nbbl$. 

In the latter context, the notion of interpolant directly relates to the one
introduced by \citet{BarBul92} in terms of $\field[X]$-modules. Indeed, one may
identify $\vecSpace$ with the product of residue class rings
\[\polSpace=\polRing/(X^{\szbl_1}) \times \cdots \times
\polRing/(X^{\szbl_\nbbl}),\] by mapping a vector $\row{f}=(f_1,\dots,f_\nbbl)$
in $\polSpace$ to the vector $\row{e} \in \vecSpace$ made from the
concatenation of the coefficient vectors of $f_1,\dots,f_\nbbl$. Then, over
$\polSpace$, the $\polRing$-module structure on $\vecSpace$ given by $p \mul
\row{e} = \row{e}\, p(\mulmat)$ simply becomes \[p \cdot \row{f} =
(p(X+\evpt_1) f_1 \bmod X^{\szbl_1}, \dots,p(X+\evpt_\nbbl) f_\nbbl \bmod
X^{\szbl_n}).\] Now, if $(\row{e}_1,\dots,\row{e}_\rdim)$ in $\vecSpace^\rdim$
is associated to $(\row{f}_1,\dots,\row{f}_\rdim)$ in $\polSpace^\rdim$, with
$\row{f}_i = ({f}_{i,1},\dots,{f}_{i,\nbbl})$ and ${f}_{i,j}$ in
$\polRing/(X^{\szbl_j})$ for all $i,j$, the relation $p_1 \mul \row{e}_1 +
\cdots + p_\rdim \mul \row{e}_\rdim = 0$ means that for all $j$ in
$\{1,\dots,\nbbl\}$, we have \[p_1(X+\evpt_j) f_{1,j} + \cdots +
p_\rdim(X+\evpt_j) f_{\rdim,j} = 0 \bmod X^{\sigma_j};\] applying a translation
by $-\evpt_j$, this is equivalent to 
\begin{equation}\label{eq:vectorMPade}
p_1 f_{1,j}(X-x_j) + \cdots + p_\rdim
f_{\rdim,j}(X-x_j) = 0 \bmod (X-x_j)^{\sigma_j}.  
\end{equation}
Thus, in terms of vector M-Pad\'e approximation as in~\citep{BarBul92},
$(p_1,\ldots,p_\rdim)$ is an interpolant for
$(\row{f}_1,\ldots,\row{f}_\rdim)$, $\evpt_1, \ldots, \evpt_\nbbl$, and
$\szbl_1, \ldots, \szbl_\nbbl$.

Both examples above, and many more along the same lines, can be cast
into this setting. In the second example above, this is
straightforward: we have $\nbbl=1$, $\evpt_1=0$, so that the multiplication
matrix is the upper shift matrix
\begin{equation}
  \label{eqn:mulshift}
\mulshift =
\begin{bmatrix}
\;\;0\;\; & \;\;1\;\; \\
          & \ddots    & \ddots                \\
          &           & \;\;0\;\; & \;\;1\;\; \\
          &           &           & \;\;0\;\;
\end{bmatrix};  
\end{equation}
it is a nilpotent Jordan block.  In the first example, we 
have $\nbbl=\order$, and the multiplication matrix is the diagonal matrix
\[\mat{D}= 
  \begin{bmatrix}
    \;\;x_1\;\; \\
        & \;\;x_2\;\; \\
        &     & \ddots \\
        &     &        & \;\;x_\order\;\;
  \end{bmatrix}.
\]
Then, for $p$ in $\polRing$ and $\row{e} = [e_1,\dots,e_\order]$ in
$\vecSpace$, $p \mul \row{e}$ is the row vector $[p(x_1) e_1, \dots,
p(x_\order) e_\order]$. In this case, to solve the interpolation problem, we
start from the tuple of bivariate polynomials $(1,Y,\dots,Y^{\rdim-1})$; their
evaluations $\evMat = (\row{e}_1,\dots,\row{e}_\rdim)$ in $\vecSpace^\rdim$ are
the vectors $\row{e}_i = [y_1^i,\dots,y_\order^i]$ and the relation $\row{p}
\mul \evMat = 0$ precisely means that $Q(X,Y) = p_1 + p_2 Y + \cdots + p_\rdim
Y^{\rdim-1}$ vanishes at all points $\{(x_j,y_j), 1\le j\le \order\}$, where
$\row{p} = (p_1,\ldots,p_\rdim)$.
 
Let us come back to our general context. The set of all interpolants for
$(\evMat,\mulmat)$ is a submodule of $\polRing^\rdim$, which we will denote by
$\intMod$. Since it contains $\Pi_{\mulmat}(X) \polRing^\rdim$, where
$\Pi_{\mulmat} \in \polRing$ is the minimal polynomial of $\mulmat$, this
submodule is free of rank $\rdim$ (see for example~\citep[Chapter 12, Theorem
4]{DumFoo04}).
\begin{definition}[Interpolation basis]
  \label{dfn:ib}
  Given $\evMat$ in $\vecSpace^\rdim$ and $\mulmat$ in $\matSpace[\order]$, a
  matrix $\intBasis$ in $\intSpace$ is an \emph{interpolation basis for
  $(\evMat,\mulmat)$} if its rows form a basis of $\intMod$.
\end{definition}
In terms of notation, if a matrix $\mat{P} \in \polMatSpace[k][\rdim]$ has rows
$\row{p}_1,\dots,\row{p}_k$, we write $\mat{P} \mul \evMat$ for $(\row{p}_1
\mul \evMat, \dots, \row{p}_k \mul \evMat) \in \vecSpace^k$, seen as a column
vector. Thus, for $k = \rdim$, if $\intBasis$ is an interpolation basis for
$(\evMat,\mulmat)$ then in particular $\intBasis \mul \evMat = 0$.

In many situations, one wants to compute an interpolation basis which has
sufficiently small degrees: as we will see in \Cref{sec:applications},
most previous algorithms compute a basis which is \emph{reduced} with respect
to some degree \emph{shift}. In what follows, by shift, we mean a tuple of
nonnegative integers which will be used as degree weights on the columns of a
polynomial matrix. Before giving a precise definition of shifted minimal
interpolation bases, we recall the notions of shifted row degree and shifted
reducedness for univariate polynomial matrices; for more details we refer
to~\citep{Kailath80} and \citep[Chapter 2]{Zhou12}.

The row degree of a matrix $\mat{P} = [p_{i,j}]_{i,j}$ in
$\polMatSpace[k][\rdim]$ with no zero row is the tuple $\rdeg{\mat{P}} = (d_1,
\ldots , d_k) \in \NN^k$ with $d_i = \max_j \deg(p_{i,j})$ for all~$i$.  For a
shift $\shifts = (\shift{1}, \ldots \shift{\rdim}) \in \shiftSpace$, the
diagonal matrix with diagonal entries $\var^{\shift{1}}, \ldots,
\var^{\shift{\rdim}}$ is denoted by $\shiftMat{\shifts}$, and the
$\shifts$-row degree of $\mat{P}$ is $\rdeg[\shifts]{\mat{P}} = \rdeg{\mat{P}
\shiftMat{\shifts}}$. Then, the $\shifts$-leading matrix of $\mat{P}$ is the
matrix in $\matSpace[k][\rdim]$ whose entries are the coefficients of degree
zero of $\shiftMat{-\rdeg[{\shifts}]{\mat{P}}} \mat{P} \shiftMat{\shifts}$, and
we say that $\mat{P}$ is $\shifts$-reduced when its $\shifts$-leading matrix has
full rank. In particular, if $\mat{P}$ is square (still with no zero row), it
is $\shifts$-reduced if and only if its $\shifts$-leading matrix is invertible.
We note that $\mat{P}$ is $\shifts$-reduced if and only if
$\mat{P}\shiftMat{\shifts}$ is $\unishift$-reduced, where $\unishift =
(0,\ldots,0)$ is called the \emph{uniform} shift.

\begin{definition}[Shifted minimal interpolation basis]
  \label{dfn:mib}
  Consider $\vecSpace= \field^\order$ and a multiplication matrix $\mulmat$ in
  $\matSpace[\order]$. Given $\evMat = (\row{e}_1,\ldots,\row{e}_\rdim)$ in
  $\vecSpace^\rdim$ and a shift $\shifts \in \shiftSpace$, a matrix $\intBasis
  \in \intSpace$ is said to be an \emph{$\shifts$-minimal interpolation basis
    for $(\evMat,\mulmat)$} if
  \begin{itemize}
  \item $\intBasis$ is an interpolation basis for $(\evMat,\mulmat)$, and
  \item $\intBasis$ is $\shifts$-reduced.
\end{itemize}
\end{definition}

We recall that all bases of a free $\polRing$-module of rank $\rdim$ are
unimodularly equivalent: given two bases $\mat{A}$ and $\mat{B}$, there exists
$\mat{U}\in\intSpace$ such that $\mat{A} = \mat{U}\mat{B}$ and $\mat{U}$ is
unimodular (that is, $\mat{U}$ is invertible in $\intSpace$). Among all the
interpolation bases for $(\evMat,\mulmat)$, an $\shifts$-minimal basis
$\intBasis$ has a type of minimal degree property. Indeed, $\mat{P}$ is
$\shifts$-reduced if and only if $\rdeg[\shifts]{\mat{P}} \le
\rdeg[\shifts]{\mat{U}\mat{P}}$ for any unimodular $\mat{U}$; in this
inequality, the tuples are first sorted in non-decreasing order and then
compared lexicographically. In particular, a row of $\intBasis$ which has
minimal $\shifts$-row degree among the rows of $\intBasis$ also has minimal
$\shifts$-row degree among \emph{all} interpolants for $(\evMat,\mulmat)$.

\subsection{Main results} In this article, we propose fast deterministic
algorithms that solve \Cref{pbm:mib}.

\medskip
\begin{center}
\fbox{ \begin{minipage}{10.2cm}
\begin{problem}[Minimal interpolation basis]
  \label{pbm:mib}
~ \smallskip \\
\emph{Input:}
  \begin{itemize}
  \item the base field $\field$,
  \item the dimensions $\rdim$ and $\order$,
  \item a matrix $\evMat \in \evSpace{\order}$,
  \item a matrix $\mulmat \in \matSpace[\order]$,
  \item a shift $\shifts\in\shiftSpace$.
  \end{itemize}

\smallskip
\emph{Output:} an $\shifts$-minimal interpolation basis
$\intBasis\in\intSpace$ for $(\evMat,\mulmat)$.
\end{problem}
\end{minipage}
}
\end{center}
\medskip

Our first main result deals with an arbitrary matrix $\mulmat$, and 
uses techniques from fast linear algebra.

Taking $\mulmat$ \emph{upper triangular} as in~\citep{BecLab00} would allow us
to design a divide-and-conquer algorithm, using the leading and trailing
principal submatrices of $\mulmat$ for the recursive calls. However, this
assumption alone is not enough to obtain an algorithm with cost quasi-linear in
$\order$, as simply representing $\mulmat$ would require a number of
coefficients in $\field$ quadratic in $\order$. Taking $\mulmat$ a Jordan
matrix solves this issue, and is not a strong restriction for applications: it
is satisfied in all those we have in mind, which are detailed in
\Cref{sec:applications}.

If $\jordan\in\matSpace[\order]$ is a Jordan matrix with $\nbbl$ diagonal
blocks of respective sizes $\szbl_1,\ldots,\szbl_\nbbl$ and with respective
eigenvalues $\evpt_1,\ldots,\evpt_\nbbl$, we will write it in a compact manner
by specifying only those sizes and eigenvalues. Precisely, we will 
assume that
$\mulmat$ is given to us as the form
\begin{equation}\label{eq:mulmat_normal}
\mulmat=((x_1,\sigma_{1,1}),\dots,(x_1,\sigma_{1,r_1}),\dots,(x_t,\sigma_{t,1}),\dots,(x_t,\sigma_{t,r_t})),  
\end{equation}
for some pairwise distinct $x_1,\dots,x_t$, with $r_1 \ge \cdots \ge r_t$ and
$\sigma_{i,1} \ge \cdots \ge \sigma_{i,r_i}$ for all $i$; we will say that this
representation is {\em standard}. If $\mulmat$ is given as an arbitrary list $(
(\evpt_1,\szbl_1),\ldots,(\evpt_\nbbl,\szbl_\nbbl))$, we can reorder it (and
from that, permute the columns of $\evMat$ accordingly) to bring it to the
above form in time $\bigO{\polmultime{\sigma} \log(\sigma)^3}$ using the
algorithm of \citet[Proposition~12]{BoJeSc08}; if $\field$ is equipped with an
order, and if we assume that comparisons take unit time, it is of course enough
to sort the $x_i$'s. Here, $\polmultime{\cdot}$ is a multiplication time
function for $\polRing$: polynomials of degree at most $d$ in $\polRing$ can be
multiplied using $\polmultime{d}$ operations in $\field$, and
$\polmultime{\cdot}$ satisfies the super-linearity properties of~\citep[Chapter
8]{vzGathen13}. It follows from the algorithm of \citet{CanKal91} that
$\polmultime{d}$ can be taken in $\bigO{d \log(d) \log(\log(d))}$.

Adding a constant to every entry of $\shifts$ does not change the notion of
$\shifts$-reducedness, and thus does not change the output matrix $\intBasis$
of \Cref{pbm:mib}; in particular, one may ensure that $\min(\shifts)=0$
without loss of generality. The shift $\shifts$, as a set of degree weights on
the columns of $\intBasis$, naturally affects how the degrees of the entries of
$\intBasis$ are distributed. Although no precise degree profile of $\intBasis$
can be stated in general, we do have a global control over the degrees in
$\intBasis$, as showed in the following results; in these statements, for a
shift $\shifts$, we write $\sshifts$ to denote the quantity $\sshifts =
\shift{1} + \cdots + \shift{\rdim}$.

We start with the most general result in this paper, where we make no
assumption on~$\mulmat$. In this case, we obtain an algorithm whose cost is
essentially that of fast linear algebra over $\field$. The output of this
algorithm has an extra uniqueness property: it is in {\em Popov form}; we refer
the reader to \Cref{sec:lin-mib} or~\citep{BeLaVi06} for a definition.

\begin{theorem}
  \label{thm:mib-linear}
  There is a deterministic algorithm which solves \Cref{pbm:mib} using
  \begin{align*}
    \bigO{ & \order^\expmatmul ( \lceil \rdim / \order \rceil + \log(\order) )} & \text{if } \expmatmul>2 \\
    \bigO{ & \order^2 ( \lceil \rdim / \order \rceil + \log(\order) ) \log(\order) } & \text{if } \expmatmul=2
  \end{align*}
  operations in $\field$ and returns the unique $\shifts$-minimal interpolation
  basis for $(\evMat,\mulmat)$ which is in $\shifts$-Popov form. Besides, the
  sum of the column degrees of this basis is at most $\order$.
\end{theorem}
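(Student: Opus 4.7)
The strategy is to reduce \Cref{pbm:mib} to scalar fast linear algebra, exploiting two structural facts about the $\shifts$-Popov basis $\intBasis$: its column degrees sum to at most $\sigma$, and when $\rdim>\sigma$ all but at most $\sigma$ of its columns are unit vectors of $\polRing^{\rdim}$. The degree-sum bound (which is also the final assertion of the theorem) follows from a Fitting-determinant argument. Since the minimal polynomial $\Pi_{\mulmat}$ of $\mulmat$ satisfies $\Pi_{\mulmat}(\mulmat)=\bzero$, the module $\intMod$ contains $\Pi_{\mulmat}\polRing^{\rdim}$, so the $\field$-linear map $\polRing^{\rdim}/\intMod\to\field^{\sigma}$ sending $\row{p}$ to $\row{p}\mul\evMat$ is injective. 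Hence $\dim_{\field}(\polRing^{\rdim}/\intMod)\le\sigma$. Since this dimension equals $\deg\det\intBasis$, and any column-reduced basis (in particular any $\shifts$-Popov basis) satisfies $\deg\det\intBasis=\sum_{j}\deg(\matcol{\intBasis}{j})$, the bound follows.

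Next I would present a divide-and-conquer algorithm on the order $\sigma$, first assuming $\rdim\le\sigma$. The Krylov sequence $\evMat,\evMat\mulmat,\ldots,\evMat\mulmat^{\sigma-1}$ is split into two halves of order $\sigma/2$: recursively compute a $\shifts$-minimal basis $\intBasis_{1}$ for the first half; use $\intBasis_{1}$ to transform the second half into a residual interpolation problem of order $\sigma/2$ with shift $\rdeg[\shifts]{\intBasis_{1}}$; recursively compute its minimal basis $\intBasis_{2}$; and return $\intBasis_{2}\intBasis_{1}$. By induction, both factors have column-degree sums at most $\sigma/2$, so after linearization their product reduces to a scalar matrix product of dimension $O(\sigma)$, with merge cost $O(\sigma^{\expmatmul})$. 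The recurrence $T(\sigma)=2\,T(\sigma/2)+O(\sigma^{\expmatmul})$ yields $O(\sigma^{\expmatmul}\log\sigma)$ when $\expmatmul>2$ and $O(\sigma^{2}\log^{2}\sigma)$ when $\expmatmul=2$. A final column-by-column normalization converts the output to $\shifts$-Popov form within the same budget, using the uniqueness of this form to extract the canonical representative.

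For $\rdim>\sigma$, the column-degree bound forces at most $\sigma$ columns of $\intBasis$ to be non-constant; the remaining columns are unit vectors of $\polRing^{\rdim}$, determined by the $\field$-linear dependencies among the rows of $\evMat$. A preliminary rank-revealing reduction on $\evMat\in\evSpace{\sigma}$ (and on a suitable Krylov extension thereof) extracts $\sigma'\le\sigma$ essential rows and expresses the other $\rdim-\sigma'$ rows as scalar combinations of them; this costs $O(\rdim\,\sigma^{\expmatmul-1})=O(\sigma^{\expmatmul}\lceil\rdim/\sigma\rceil)$ via block matrix multiplication. The main algorithm is then applied to the reduced $\sigma'\times\sigma$ subproblem, and its output embedded back into an $\rdim\times\rdim$ basis by inserting unit-vector columns at the dependent row indices, with the combination coefficients propagated as constant off-pivot entries.

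The main obstacle is the correctness of the divide-and-conquer merge: one must show that if $\intBasis_{1}$ is a $\shifts$-minimal basis for the first-half subproblem and $\intBasis_{2}$ is a $\rdeg[\shifts]{\intBasis_{1}}$-minimal basis for the residual second-half problem, then $\intBasis_{2}\intBasis_{1}$ is a $\shifts$-minimal basis for the full problem. This requires adapting the classical product-of-bases lemma for order bases to the general $(\evMat,\mulmat)$ setting, and verifying that the residual problem is itself of the form $(\evMat',\mulmat')$ for some explicitly constructible $\evMat'$ and $\mulmat'$, so that the recursion is well-defined. A secondary delicate point is keeping the effective dimensions of the recursive subproblems bounded by $\sigma$ (rather than $\rdim$) so that each recursion level costs $O(\sigma^{\expmatmul})$ and not $O(\rdim^{\expmatmul})$; this in turn relies on pruning trivial rows at each level using the column-degree bound established in the first step.
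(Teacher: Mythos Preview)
Your column-degree bound via $\dim_{\field}(\polRing^{\rdim}/\intMod)\le\sigma$ is correct and essentially what the paper proves in \Cref{lem:sum-mindeg}. The algorithmic part, however, has a genuine gap: your divide-and-conquer on $\sigma$ presupposes that the problem for an \emph{arbitrary} $\mulmat$ splits into a ``first-half'' subproblem of order $\sigma/2$ and a residual of order $\sigma/2$, each again of the form $(\evMat',\mulmat')$. This is precisely what fails without structural hypotheses on $\mulmat$. Restricting the constraint $\row{p}\mul\evMat=0$ to its first $\sigma/2$ coordinates gives a self-contained instance $(\evMat',\mulmat')$ with $\mulmat'\in\matSpace[\sigma/2]$ only when $\mulmat$ is block upper triangular, so that the first $\sigma/2$ columns of $\evMat\mulmat^{d}$ depend only on the leading principal block of $\mulmat$. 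The paper says this explicitly in the introduction: the divide-and-conquer route of \Cref{sec:algo} (yielding \Cref{thm:mib}) requires $\mulmat$ in Jordan form, while \Cref{thm:mib-linear} covers arbitrary $\mulmat$ and is proved by an entirely different method. You list ``verifying that the residual problem is itself of the form $(\evMat',\mulmat')$'' as an obstacle to be handled; for general $\mulmat$ it is not merely delicate but false.

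The paper's actual argument (\Cref{sec:lin-mib}) is direct linear algebra over $\field$, with no recursion on $\sigma$. Interpolants of degree at most $\delta$ correspond to left-nullspace vectors of a striped Krylov matrix $\krylov[\shifts]{\evMat}$ built from the blocks $\evMat,\evMat\mulmat,\ldots,\evMat\mulmat^{\delta}$, with rows permuted according to the shift priority. The $\shifts$-Popov basis is then read off from the \emph{row rank profile} of $\krylov[\shifts]{\evMat}$: the pivot degrees are determined by which rows first become dependent on their predecessors (\Cref{lem:rank-profile}), and the basis rows are the unique relations witnessing those first dependencies (\Cref{lem:lin-mib}). The $\log(\order)$ factor comes from a Keller--Gehrig doubling scheme (\Cref{algo:rank-profile}) that computes this rank profile by repeatedly squaring the power of $\mulmat$, keeping at most $r\le\sigma$ surviving rows at each stage; the $\lceil\rdim/\sigma\rceil$ term is the cost of the initial rank computation on $\evMat$ itself when $\rdim>\sigma$.
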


In the usual case where $\rdim = \bigO{\order}$, the cost is thus
$\softO{\order^\expmatmul}$; this is to be compared with the algorithm
of~\citet{BecLab00}, which we discuss in \Cref{sec:applications}. Next, we deal
with the case of $\mulmat$ in Jordan canonical form, for which we obtain a cost
bound that is quasi-linear with respect to $\order$.

\begin{theorem}
	\label{thm:mib}
  Assuming that $\mulmat\in\matSpace[\order]$ is a Jordan matrix, given by a
  standard representation, there is a deterministic algorithm which solves
  \Cref{pbm:mib} using
  \begin{align*}
   \bigO{ & \rdim^{\expmatmul-1} \polmultime{\order} \log(\order)
   \log(\order/\rdim) + \rdim^{\expmatmul-1} \polmultime{\xi} \log(\xi/\rdim)}
   & \text{if } \expmatmul>2 \\
   \bigO{ & \rdim \polmultime{\order} \log(\order) \log(\order/\rdim)
   \log(\rdim)^3 + \rdim \polmultime{\xi} \log(\xi/\rdim) \log(\rdim)^2} &
   \text{if } \expmatmul=2
  \end{align*}
   operations in $\field$, where $\xi = \sshifts[\shifts-\min(\shifts)]$.
   Besides, the sum of the row degrees of the computed $\shifts$-minimal
   interpolation basis is at most $\order+\xi$.
\end{theorem}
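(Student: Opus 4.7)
The plan is a divide-and-conquer algorithm on the size $\order$ of $\mulmat$, exploiting the standard representation to split its Jordan structure into two balanced halves. Since $\shifts$-reducedness depends only on $\shifts-\min(\shifts)$, I may assume $\min(\shifts)=0$, so that $\xi=\sshifts$. Given $\mulmat=((\evpt_1,\szbl_{1,1}),\ldots,(\evpt_t,\szbl_{t,r_t}))$, I partition the list of blocks into two sublists of total sizes $\order_1,\order_2\approx\order/2$, yielding Jordan matrices $\mulmat_1,\mulmat_2$ and a column split $\evMat=[\evMat^{(1)} \mid \evMat^{(2)}]$. I then recursively compute a $\shifts$-minimal interpolation basis $\intBasis_1$ for $(\evMat^{(1)},\mulmat_1)$; form the residual $\evMat'=\intBasis_1\mul\evMat^{(2)}$ using the action of $\mulmat_2$; recursively compute a $\shifts'$-minimal basis $\intBasis_2$ for $(\evMat',\mulmat_2)$ with $\shifts'=\rdeg[\shifts]{\intBasis_1}$; and return $\intBasis_2\intBasis_1$. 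The base case switches to \Cref{thm:mib-linear} once $\order$ has dropped to $\bigO{\rdim}$.

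Correctness uses that $\mulmat=\mathrm{diag}(\mulmat_1,\mulmat_2)$ after reordering: every row of $\intBasis_2\intBasis_1$ cancels $\evMat$, because $\intBasis_1$ kills $\evMat^{(1)}$ and $\intBasis_2$ kills the residual $\evMat'$ on the $\mulmat_2$ part, and the order-basis recursion principle of~\citet{BecLab00} shows that these rows span $\intMod$. The $\shifts$-reducedness of the product, together with the identity $\rdeg[\shifts]{\intBasis_2\intBasis_1}=\rdeg[\shifts']{\intBasis_2}$, follows from the predictable-degree property applied to $\intBasis_2\intBasis_1\shiftMat{\shifts}$. The row-degree-sum bound of the theorem is then a consequence of $\shifts$-reducedness: $\sum_i\rdeg[\shifts]{\intBasis}_i=\sshifts+\deg(\det\intBasis)$, and $\deg(\det\intBasis)$ equals $\dim_{\field}(\polRing^\rdim/\intMod)$, which is at most $\order$ because $\polRing^\rdim/\intMod$ injects into $\vecSpace$ via $\row{p}\mapsto\row{p}\mul\evMat$.

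For the cost, the bound $\sshifts[\shifts']\le\sshifts+\order_1$ at every split implies that the $\rdim\times\rdim$ polynomial matrices $\intBasis_1,\intBasis_2$ at recursion depth $k$ have average row degree $\bigO{(\order+\xi)/(2^k\rdim)}$. Each product $\intBasis_2\intBasis_1$ then costs $\softO{\rdim^{\expmatmul-1}\polmultime{(\order+\xi)/2^k}}$ via fast polynomial matrix multiplication, and each residual $\intBasis_1\mul\evMat^{(2)}$ is computed blockwise through fast arithmetic modulo $(\var-\evpt_i)^{\szbl_{i,j}}$ within the same budget. Summing over the $\bigO{\log(\order/\rdim)+\log(\xi/\rdim)}$ useful recursion levels yields the claimed bound. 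The main obstacle will be to separate the $\polmultime{\order}$ and $\polmultime{\xi}$ contributions cleanly, since a naive telescoping collapses them into a single $\polmultime{\order+\xi}$ factor per level; achieving the split requires a two-phase bookkeeping where the $\xi$-driven growth of intermediate row-degree sums is charged only during the final $\bigO{\log(\xi/\rdim)}$ levels, the earlier levels being analyzed against an $\order$-only budget. A secondary technicality is that a single Jordan block of size exceeding $\order/2$ may need to be subdivided to balance the split; this can be handled without altering $\intMod$ by temporarily substituting it with an equivalent pair of smaller blocks.
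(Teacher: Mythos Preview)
Your recursion carries the input shift $\shifts$ through all subproblems, updating it to $\shifts'=\rdeg[\shifts]{\intBasis_1}$ at each split, and you assert that the resulting bases at depth $k$ have average row degree $\bigO{(\order+\xi)/(2^k\rdim)}$. This implication is the gap. The bound $\sshifts[\shifts']\le\sshifts+\order_1$ controls the \emph{sum} of the running shift, but the quantity governing the unshifted row degree of a $\shifts_k$-reduced basis is the \emph{spread} $\sshifts[\shifts_k-\min(\shifts_k)]$, and that does not shrink along the recursion: on the leftmost branch the shift is identically $\shifts$, so the spread stays $\xi$ at every depth. Hence the row-degree sum of $\intBasis_1$ at depth $k$ is only bounded by $\order/2^k+\xi$ (and by $\rdim\,\order/2^k$ via the max-degree bound), not by $(\order+\xi)/2^k$. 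A concrete instance: with $\mulmat$ a single nilpotent block, $\evMat=\idMat$, and $\shifts=(0,\rdim,\ldots,\rdim)$, the $\shifts$-Popov basis at a leaf of size $\rdim$ has first column of degrees $\rdim,1,2,\ldots,\rdim-1$ and row-degree sum $\Theta(\rdim^2)$; your bound predicts $\rdim(1+\xi/\order)$, which for $\order\gg\rdim^2$ is $\bigO{\rdim}$. With row-degree sums this large, neither the unbalanced product of \Cref{prop:unbalanced-polmatmul} nor the residual computation of \Cref{sec:multiplication} (which assumes the sum is $\bigO{\order}$) stay within budget, and the overall cost picks up an extra factor of order $\rdim$.

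This is precisely the obstacle the paper singles out, and its remedy is different from yours: all recursive calls are made with the \emph{uniform} shift $\unishift$, so that every intermediate basis is $\unishift$-reduced and hence has row-degree sum equal to its determinantal degree, at most $\order/2^k$. At each recombination, the $\unishift$-minimal $\intBasis^{(2)}$ is converted to a $\rdeg{\intBasis^{(1)}}$-minimal form via a change-of-shift (\Cref{sec:change-shift}) before multiplying; since both $\sshifts[\rdeg{\intBasis^{(1)}}]$ and $\sshifts[\rdeg{\intBasis^{(2)}}]$ are at most $\order/2^{k+1}$, this conversion is cheap. The dependence on $\xi$ enters only once, in a final change of shift from $\unishift$ to $\shifts$ applied to the root basis, which is what produces the separate $\polmultime{\xi}\log(\xi/\rdim)$ term. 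Your ``two-phase bookkeeping'' sketch does not substitute for this, because it still presumes the false degree bound at intermediate levels.

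A secondary issue: replacing a Jordan block of size $k$ by two blocks of sizes $a+b=k$ changes the minimal polynomial from $(\var-\evpt)^k$ to $(\var-\evpt)^{\max(a,b)}$, hence alters $\intMod$. The paper does not avoid splitting blocks; it takes the leading and trailing principal $\lfloor\order/2\rfloor\times\lfloor\order/2\rfloor$ submatrices of $\mulmat$, which remain Jordan but leave $\mulmat$ merely upper triangular rather than block-diagonal, and the correctness argument in \Cref{lem:mib-rec} uses upper-triangularity rather than the block-diagonal decomposition you invoke.
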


The reader interested in the logarithmic factors should refer to the more
precise cost bound in \Cref{prop:mib}. Masking logarithmic factors,
this cost bound is $\softO{ \rdim^{\expmatmul-1} (\order+\xi) }$. We remark
that the bound on the output row degree implies that the size of the output
matrix $\mat{P}$ is $\bigO{ \rdim (\order + \xi)}$, where by size we mean the
number of coefficients of $\field$ needed to represent this matrix.

We are not aware of a previous cost bound for the general question stated in
\Cref{pbm:mib} that would be similar to our result; we give a detailed
comparison with several previous algorithms and discuss useful particular cases
in \Cref{sec:applications}.

\subsection{Overview of the algorithms} To deal with an arbitrary matrix
$\mulmat$, we rely on a linear algebra approach presented in
\Cref{sec:lin-mib}, using a linearization framework that is classical
for this kind of problems~\citep{Kailath80}. Our algorithm computes the rank
profile of a block Krylov matrix using techniques that are reminiscent of the
algorithm of \citet{KelGeh85}; this framework also allows us to derive a bound
on the sum of the row degrees of shifted minimal interpolation bases.
\Cref{sec:lin-mib} is the last section of this paper; it is the only
section where we make no assumption on $\mulmat$, and it does not use results
from other parts of the paper.

We give in \Cref{sec:algo} a divide-and-conquer algorithm for the case of a
matrix $\mulmat$ in Jordan canonical form. The idea is to use a
Knuth-Sch\"onhage-like half-gcd approach~\citep{Knu70,Sch71,BrGuYu80},
previously carried over to the specific case of simultaneous Hermite-Pad\'e
approximation in~\citep{BecLab94,GiJeVi03}. This approach consists in reducing
a problem in size $\order$ to a first sub-problem in size $\order/2$, the
computation of the so-called \emph{residual}, a second sub-problem in size
$\order/2$, and finally a recombination of the results of both sub-problems via
polynomial matrix multiplication. The shift to be used in the second recursive
call is essentially the $\shifts$-row degree of the outcome of the first
recursive call.

The main difficulty is to control the sizes of the interpolation bases that are
obtained recursively. The bound we rely on, as stated in our main theorems,
depends on the input shift. In our algorithm, we cannot make any assumption on
the shifts that will appear in recursive calls, since they depend on the
degrees of the previously computed bases. Hence, even in the case of a uniform
input shift for which the output basis is of size $\bigO{\rdim \order}$, there
may be recursive calls with an unbalanced shift, which may output bases that
have large size.

Our workaround is to perform all recursive calls with the uniform shift
$\shifts = \unishift$, and resort to a change of shift that will be studied in
\Cref{sec:change-shift}; this strategy is an alternative to the
linearization approach that was used by~\citet{ZhoLab12} in the specific case
of simultaneous Hermite-Pad\'e approximation. We note that our change of shift
uses an algorithm in~\citep{ZhLaSt12}, which itself relies on simultaneous
Hermite-Pad\'e approximation; with the dimensions needed here, this
approximation problem is solved efficiently using the algorithm
of~\citet{GiJeVi03}, without resorting to~\citep{ZhoLab12}.

Another difficulty is to deal with instances where $\order$ is small.  Our
bound on the size of the output states that when $\order \le \rdim$ and the
shift is uniform, the average degree of the entries of a minimal interpolation
basis is at most $1$. Thus, in this case our focus is not anymore on using fast
polynomial arithmetic but rather on exploiting efficient linear algebra over
$\field$: the divide-and-conquer process stops when reaching $\order \le
\rdim$, and invokes instead the algorithm based on linear algebra proposed in
\Cref{sec:lin-mib}.

The last ingredient is the fast computation of the residual, that is,
a matrix in $\matSpace[\rdim][\order/2]$ for restarting the process
after having found a basis $\intBasis^{(1)}$ for the first sub-problem
of size $\order/2$. This boils down to computing $\intBasis^{(1)} \mul
\evMat$ and discarding the first $\order/2$ columns, which are known to
be zero. In \Cref{sec:multiplication}, we design a general
procedure for computing this kind of product, using Hermite
interpolation and evaluation to reduce it to multiplying polynomial
matrices. 

Concerning the multiplication of the bases obtained recursively, to handle the
fact that they may have unbalanced row degrees, we use the approach
in~\citep[Section 3.6]{ZhLaSt12}; we give a detailed algorithm and cost analysis
in \Cref{sec:cost-mult}.

%
%

\section{Applications and comparisons with previous work}
\label{sec:applications}

In this section, we review and expand the scope of the examples described in
the introduction, and we compare our results for these examples to previous
work. \emph{For ease of comparison, in all this section we consider the case
$\expmatmul>2$.} 

\subsection{General case}

In this paragraph, we consider the general \Cref{pbm:mib}, without assuming
that $\mulmat$ is a Jordan matrix. The only previous work that we are aware of
is~\citep{BecLab00}, where it is still assumed that $\mulmat$ is upper
triangular. This assumption allows one to use an iteration on the columns of
$\evMat$, combining Gaussian elimination with multiplication by monic
polynomials of degree $1$ to build the basis $\intBasis$: after $i$ iterations,
$\intBasis$ is an $\shifts$-minimal interpolation basis for the first $i$
columns of $\evMat$ and the $i \times i$ leading principal submatrix of
$\mulmat$. This algorithm uses $\bigO{ \rdim \order^4 }$ operations and returns
a basis in $\shifts$-Popov form. We note that in this context the mere
representation of $\mulmat$ uses $\Theta(\order^2)$ elements.

As a comparison, the algorithm of \Cref{thm:mib-linear} computes an
interpolation basis for $(\evMat,\mulmat)$ in $\shifts$-Popov form for
{\em any} matrix $\mulmat$ in $\matSpace[\order]$. For any shift
$\shifts$, this algorithm uses $\bigO{ \order^\expmatmul \log(\order)
}$ operations when $\rdim \in \bigO{ \order }$ and $\bigO{
  \order^{\expmatmul-1} \rdim + \order^{\expmatmul} \log(\order) }$
operations when $\order \in \bigO{\rdim}$.  

\subsection{M-Pad\'e approximation}

We continue with the case of a matrix $\mulmat$ in Jordan canonical form. As
pointed out in the introduction, \Cref{pbm:mib} can be formulated in this case
in terms of polynomial equations and corresponds to an M-Pad\'e approximation
problem; this problem was studied in \citep{Lubbe83,Beckermann90} and named
after the work of Mahler, including in particular \citep{Mahler68}. Indeed, up
to applying a translation in the input polynomials, the problem stated
in~\eqref{eq:vectorMPade} can be rephrased as follows.

\medskip
\begin{center}
\fbox{ \begin{minipage}{10.2cm}
\begin{problem}[M-Pad\'e approximation]
\label{pbm:m-pade}
~ \smallskip \\
\emph{Input:}
  \begin{itemize}
  \item pairwise distinct points $\evpt_1,\ldots,\evpt_\cdim$ in $\field$,
  \item integers $\order_1 \ge  \cdots \ge \order_\cdim > 0$,
  \item matrix $\mat{F}$ in $\polMatSpace[\rdim][\cdim]$ with its
    $j$-th column $\matcol{\mat{F}}{j}$ of degree $< \order_j$,
  \item shift $\shifts \in \NN^m$.
  \end{itemize}

\smallskip
  \emph{Output:} a matrix $\intBasis$ in $\intSpace$ such that
  \begin{itemize}
    \item the rows of $\intBasis$ form a basis of the $\polRing$-module
    \[ \{ \row{p} \in \polMatSpace[1][\rdim] \;\mid\; \row{p}
    \matcol{\mat{F}}{j} = 0 \bmod (\var-\evpt_j)^{\order_j} \text{ for each } j
      \}, \]
    \item $\intBasis$ is $\shifts$-reduced.
  \end{itemize}
\end{problem}
\end{minipage}
}
\end{center}
\medskip

Here, $\order = \order_1 + \cdots + \order_\cdim$. Previous work on this
particular problem includes~\citep{Beckermann92,BarBul92}; note that
in~\citep{Beckermann92}, the input consists of a single column $\mat{F}$ in
$\polMatSpace[\rdim][1]$ of degree less than $\order$: to form the input of our
problem, we compute $\hat{\mat{F}} = [ \mat{F} \bmod (\var-\evpt_1)^{\order_1}
| \cdots | \mat{F} \bmod (\var-\evpt_\cdim)^{\order_\cdim} ]$. The algorithms
in these references have a cost of $\bigO{\rdim^2 \order^2}$ operations, which
can be lowered to $\bigO{\rdim \order^2}$ if one computes only the small degree
rows of an $\shifts$-minimal basis (gradually discarding the rows of degree
more than, say, $2\order/\rdim$ during the computation).  To the best of our
knowledge, no algorithm with a cost quasi-linear in $\order$ has been given in
the literature.

\subsection{Hermite-Pad\'e approximation}

Specializing the discussion of the previous paragraph to the case where all
$\evpt_i$ are zero, we obtain the following important particular case.

\medskip
\begin{center}
\fbox{ \begin{minipage}{10.2cm}
\begin{problem}[Simultaneous Hermite-Pad\'e approximation]
  \label{pbm:h-pade}
  ~\smallskip \\
  \emph{Input:}
  \begin{itemize}
  \item integers $\order_1 \ge \cdots \ge \order_\cdim > 0$,
  \item matrix $\mat{F}$ in $\polMatSpace[\rdim][\cdim]$ with its
    $j$-th column $\matcol{\mat{F}}{j}$ of degree $< \order_j$,
  \item shift $\shifts \in \NN^m$.
  \end{itemize}

  \smallskip
  \emph{Output:} a matrix $\intBasis$ in $\intSpace$ such that
  \begin{itemize}
    \item the rows of $\intBasis$ form a basis of the $\polRing$-module
    \[ \{ \row{p} \in \polMatSpace[1][\rdim] \;\mid\; \row{p}
      \matcol{\mat{F}}{j} = 0 \bmod \var^{\order_j} \text{ for each } j \}, \]
    \item $\intBasis$ is $\shifts$-reduced.
  \end{itemize}
\end{problem}
\end{minipage}
}
\end{center}
\medskip

Here, $\order = \order_1 + \cdots + \order_\rdim$. Our main result says that
there is an algorithm which solves \Cref{pbm:h-pade} using $\bigO{
\rdim^{\expmatmul-1} \polmultime{\order} \log(\order)\log(\order/\rdim) +
\rdim^{\expmatmul-1} \polmultime{\xi} \log(\xi/\rdim)}$ operations in
$\field$, where $\xi = \sshifts[\shifts-\min(\shifts)]$. As we will see,
slightly faster algorithms for this problem exist in the literature, but they
do not cover the same range of cases as we do; to the best of our knowledge,
for instance, most previous work on this problem dealt with the case
$\order_1 = \cdots = \order_\cdim = \order/\cdim$ and $\cdim \le \rdim$.

First algorithms with a cost quadratic in $\order$ were given
in~\citep{Sergeyev87,Paszkowski87} in the case of Hermite-Pad\'e approximation
($\cdim=1$), assuming a type of genericity of $\mat{F}$ and outputting a single
approximant $\row{p} \in \polMatSpace[1][\rdim]$ which satisfies some
prescribed degree constraints. For $\cdim=1$, \citet{BarBul91} propose an
algorithm which uses $\bigO{\rdim^2 \order^2}$ operations to compute a
$\shifts$-minimal basis of approximants for $\mat{F}$ at order $\order$, for
any $\mat{F}$ and $\shifts$. This result was extended by \citet{BecLab94} to
the case of any $\cdim \le \rdim$, with the additional remark that the cost
bound is $\bigO{\rdim \order^2}$ if one restricts to computing only the rows of
small degree of an $\shifts$-minimal basis.

In~\citep{BecLab94}, the authors also propose a divide-and-conquer algorithm
using $\softO{\rdim^\expmatmul \order}$ operations in $\field$; the base case
of the recursion deals with the constant coefficient of a single column of the
input $\mat{F}$. Then, \citet{GiJeVi03} follow a similar divide-and-conquer
approach, introducing a base case which has matrix dimension $\rdim \times
\cdim$ and is solved efficiently by means of linear algebra over $\field$; this
yields an algorithm with the cost bound $\bigO{\rdim^\expmatmul
  \polmultime{\order/\cdim} \log(\order/\cdim)}$, which is particularly
  efficient when $\cdim \in \Theta(\rdim)$. On the other hand, in the case of
  Hermite-Pad\'e approximation ($\cdim=1$) it was noticed by \citet{Lecerf01}
  that the cost bound $\softO{\rdim^\expmatmul \order}$ is pessimistic, at
  least when some type of genericity is assumed concerning the input $\mat{F}$,
  and that in this case there is hope to achieve $\softO{\rdim^{\expmatmul-1}
\order}$.

This cost bound was then obtained by~\citet{Storjohann06}, for computing only
the small degree rows of an $\shifts$-minimal basis, via a reduction of the
case of small $\cdim$ and order $\order$ to a case with larger column dimension
$\cdim' \approx \rdim$ and smaller order $\order' \approx \cdim \order /
\rdim$.  This was exploited to compute a full $\shifts$-minimal
basis using $\bigO{ \rdim^\expmatmul \polmultime{\order/\rdim}
\log(\order/\cdim) }$ operations~\citep{ZhoLab12}, under the assumption that either
$\sshifts[{\shifts-\min(\shifts)}] \in \bigO{\order}$ or
$\sshifts[{\max(\shifts)-\shifts}] \in \bigO{\order}$ (which both imply that an
$\shifts$-minimal basis has size $\bigO{\rdim \order}$). We note that this
algorithm is faster than ours by a logarithmic factor, and that our result does
not cover the second assumption on the shift with a similar cost bound; on the
other hand, we handle situations that are not covered by that result, when for
instance all $\order_i$'s are not equal.

\citet[Section 3.6]{Zhou12} gives a fast algorithm for the case $\nbbl \le
\order \le \rdim$, for the uniform shift and $\szbl_1 = \cdots =
\szbl_\nbbl=\order/\cdim$. For an input of dimensions $\rdim \times \nbbl$, the
announced cost bound is $\softO{\rdim^{\expmatmul-1} \order}$; a more precise
analysis shows that the cost is $\bigO{\order^{\expmatmul-1} \rdim +
\order^\expmatmul \log(\order/\cdim)}$. For this particular case, there is no
point in using our divide-and-conquer algorithm, since the recursion stops at
$\order \le \rdim$; our general algorithm in \Cref{sec:lin-mib} handles
these situations, and its cost (as given in \Cref{prop:lin-mib} with
$\maxDeg = \order/\cdim$) is the same as that of Zhou's algorithm in this
particular case. In addition, this cost bound is valid for any shift $\shifts$
and the algorithm returns a basis in $\shifts$-Popov form.

\subsection{Multivariate interpolation}
\label[subsec]{subsec:multi-int}
In this subsection, we discuss how our algorithm can be used to solve bivariate
interpolation problems, such as those appearing in the
list-decoding~\citep{Sudan97,GurSud98} and the
soft-decoding~\citep[Sec.\,III]{KoeVar03a} of Reed-Solomon codes; as well as
multivariate interpolation problems, such as those appearing in the
list-decoding of folded Reed-Solomon codes~\citep{GurRud08} and in robust
Private Information Retrieval~\citep{DeGoHe12}. Our contribution leads to the
best known cost bound we are aware of for the interpolation steps of the
list-decoding and soft-decoding of Reed-Solomon codes: this is detailed in
\Cref{subsec:list-dec,subsec:soft-dec}.

In those problems, we have $\nvars$ new variables $Y = Y_1,\ldots,Y_\nvars$ and
we want to find a multivariate polynomial $Q(X,Y)$ which vanishes at some given
points $\{(x_k,y_k)\}_{1\le k\le \nbpt}$ with prescribed \emph{supports}
$\{\supp_k\}_{1\le k\le \nbpt}$. In this context, given a point $(x,y) \in
\field^{\nvars+1}$ and an exponent set $\supp \subset \NN^{\nvars+1}$, we say
that \emph{$Q$ vanishes at $(x,y)$ with support $\supp$} if the shifted
polynomial $Q(X+x,Y+y)$ has no monomial with exponent in $\supp$. Besides, the
solution $Q(X,Y)$ should also have sufficiently small \emph{weighted degree}
$\deg_X(Q(X,X^{\shifts[w]} Y))$ for some given weight $\shifts[w] \in
\shiftSpace[\nvars]$.

To make this fit in our framework, we first require that we are given an
exponent set $\expSet \subseteq \NN^\nvars$ such that any monomial $X^i Y^j$
appearing in a solution $Q(X,Y)$ should satisfy $j \in \expSet$. Then, we can
identify $Q(X,Y) = \sum_{j \in \expSet} p_j(X) Y^j$ with $\row{p} =
[p_j]_{j\in\expSet} \in \polMatSpace[1][\rdim]$, where $\rdim$ is the
cardinality of $\expSet$. We will show below how to construct matrices $\evMat$
and $\mulmat$ such that solutions $Q(X,Y)$ to those multivariate interpolation
problems correspond to interpolants $\row{p} = [p_j]_{j\in\expSet}$ for
$(\evMat,\mulmat)$ that have sufficiently small shifted degree.

We also require that each considered support $\supp$ satisfies
\begin{equation}
  \label{eqn:multiplicity}
  \text{if } (i,j) \in \supp \text{ and } i>0, \text{ then } (i-1,j) \in\supp.
\end{equation}
Then, the set
\begin{align*}
  \mathfrak{I}_{\mathrm{int}} & = \Bigg\{ \row{p} = (p_j)_{j\in\expSet} \in
  \polMatSpace[1][\rdim] \;\Big| \\
  & \sum_{j\in\expSet} p_j(X) Y^j \text{ vanishes at } (\evpt_k,y_k) \text{
    with support } \supp_k \text{ for } 1 \le k \le \nbpt \Bigg\}
\end{align*}
is a $\polRing$-module, as can be seen from the equality 
\begin{equation}\label{eq:XQ}
(XQ)(X+\evpt, Y+y) = (X+\evpt) Q(X+\evpt, Y+y).
\end{equation}

Finally, as before, we assume that we are given a shift $\shifts$ such that we
are looking for polynomials of small $\shifts$-row degree in
$\mathfrak{I}_{\mathrm{int}}$. In this context, the shift is often derived from
a weighted degree condition which states that, given some input weights
$\shifts[w] = (\shift[w]{1}, \ldots, \shift[w]{\nvars}) \in
\shiftSpace[\nvars]$, the degree $\deg_X ( Q(X,X^{\shift[w]{1}} Y_1, \cdots,
X^{\shift[w]{\nvars}} Y_\nvars))$ should be sufficiently small. Since
$Q(X,X^{\shift[w]{1}} Y_1, \cdots, X^{\shift[w]{\nvars}} Y_\nvars) = \sum_{j
\in \expSet} X^{\shift[w]{1} j_1 + \cdots + \shift[w]{\nvars} j_{\nvars}}
p_j(X) Y^j$, this $\shifts[w]$-weighted degree of $Q$ is exactly the
$\shifts$-row degree of $\row{p}$ for the shift $\shifts = (\shift[w]{1} j_1 +
\cdots + \shift[w]{\nvars} j_{\nvars})_{ (j_1,\ldots,j_\nvars)\in\expSet }$. An
$\shifts$-reduced basis of $\mathfrak{I}_{\mathrm{int}}$ contains a row of
minimal $\shifts$-row degree among all $\row{p} \in
\mathfrak{I}_{\mathrm{int}}$; we note that in some applications, for example in
robust Private Information Retrieval~\citep{DeGoHe12}, it is important to
return a whole basis of solutions and not only a small degree one.

\medskip
\begin{center}
\fbox{ \begin{minipage}{10.2cm}
\begin{problem} [Constrained multivariate interpolation]
  \label{pbm:multi-int}
 ~\smallskip \\
 \emph{Input:}
 \begin{itemize}
   \item number of variables $\nvars>0$,
   \item exponent set $\expSet  \subset \NN^\nvars$ of cardinality $\rdim$,
   \item pairwise distinct points $\{(\evpt_k,y_k) \in \field^{\nvars+1}, 1\le
   k\le \nbpt \}$ with $\evpt_1,\ldots,\evpt_\nbpt \in \field$ and
   $y_1,\dots,y_\nbpt$ in $\field^\nvars$,
   \item supports $\{\supp_k \subset \NN^{\nvars+1}, 1 \le k \le \nbpt\}$
     where each $\supp_k$ satisfies~\eqref{eqn:multiplicity},
 \item shift $\shifts \in \shiftSpace$.
 \end{itemize}

 \smallskip
 \emph{Output:} a matrix $\intBasis \in \intSpace$ such that
  \begin{itemize}
    \item the rows of $\intBasis$ form a basis of the $\polRing$-module $\mathfrak{I}_{\mathrm{int}}$,
  \item $\intBasis$ is $\shifts$-reduced.
  \end{itemize}
\end{problem}
\end{minipage}
}
\end{center}
\medskip

This problem can be embedded in our framework as follows. We consider $\module
= \field[\var,Y_1,\ldots,Y_\nvars]$ and the $\field$-linear functionals $\{
  \linFunc_{i,j,k} : \module \rightarrow \field, (i,j) \in \supp_k, 1 \le k \le
\nbpt \}$, where $\linFunc_{i,j,k}(Q)$ is the coefficient of $X^{i}Y^j$ in
$Q(X+\evpt_k, Y+y_k)$. These functionals are linearly independent, and the
intersection $\kernel$ of their kernels is the $\polRing$-module of polynomials
in $\module$ vanishing at $(x_k,y_k)$ with support $\supp_k$ for all $k$. The
quotient $\module/\kernel$ is a $\field$-vector space of dimension $\order =
\szbl_1 + \cdots + \szbl_\nbpt$ where $\szbl_k = \#\supp_k$; it is thus
isomorphic to $\vecSpace = \field^\order$, with a basis of the dual space given
by the functionals $\linFunc_{i,j,k}$.

Our assumption on the supports $\supp_k$ implies that $\module/\kernel$ is a
$\field[X]$-module; we now describe the corresponding multiplication matrix
$\mulmat$.  For a given $k$, let us order the functionals $\linFunc_{i,j,k}$ in
such a way that, for any $(i,j)$ such that $(i+1,j)$ is in $\supp_k$, the
successor of $\linFunc_{i,j,k}$ is $\linFunc_{i+1,j,k}$. Equation~\eqref{eq:XQ}
implies that 
\begin{equation}\label{eq:ellXQ}
\linFunc_{i,j,k}(XQ) = \linFunc_{i-1,j,k}(Q) + \evpt_k
\linFunc_{i,j,k}(Q)  
\end{equation}
holds for all $Q$, all $k \in \{1,\ldots,\nbpt\}$, and all $(i,j) \in \supp_k$
with $i>0$.

Hence, $\mulmat$ is block diagonal with diagonal blocks
$\mulmat_1,\ldots,\mulmat_\nbpt$, where $\mulmat_k$ is a $\szbl_k \times
\szbl_k$ Jordan matrix with only eigenvalue $\evpt_k$ and block sizes given by
the support $\supp_k$. More precisely, denoting $\Lambda_k = \{ j \in
\NN^\nvars \mid (i,j) \in \supp_k \text{ for some } i \}$ and $\szbl_{k,j} =
\max\{i\in\NN \mid (i,j) \in \supp_k \}$ for each $j \in \Lambda_k$, we have
the disjoint union
\[ \supp_k = \bigcup_{j\in\Lambda_k} \big\{ (i,j), 0\le i\le \szbl_{k,j}\big\}. \]
Then, $\mulmat_k$ is block diagonal with $\#\Lambda_k$ blocks: to each $j \in
\Lambda_k$ corresponds a $\szbl_{k,j} \times \szbl_{k,j}$ Jordan block with
eigenvalue $\evpt_k$. It is reasonable to consider $\evpt_1,\ldots,\evpt_\nbpt$
ordered as we would like for a standard representation of $\mulmat$. For
example, in problems coming from coding theory, these points are part of the
code itself, so the reordering can be done as a pre-computation as soon as the
code is fixed.

To complete the reduction to \Cref{pbm:mib}, it remains to construct
$\evMat$. For each exponent $\gamma \in \expSet$ we consider the monomial
$Y^\gamma$ and take its image in $\module/\kernel$: this is the vector
$\row{e}_\gamma \in \vecSpace$ having for entries the evaluations of the
functionals $\linFunc_{i,j,k}$ at $Y^\gamma$. Let then $\evMat$ be the matrix
in $\matSpace[\rdim][\order]$ with rows $(\row{e}_\gamma)_{\gamma \in
\expSet}$: our construction shows that a row $\row{p} = (p_\gamma)_{\gamma \in
\expSet} \in \polMatSpace[1][\rdim]$ is in $\mathfrak{I}_{\mathrm{int}}$ if and
only if it is an interpolant for $(\evMat,\mulmat)$.

\medskip To make this reduction to \Cref{pbm:mib} efficient, we make the
assumption that the exponent sets $\expSet$ and $\supp_k$ are \emph{stable
under division}: this means that if $j \in \expSet$ then all $j'$ such that $j'
\le j$ (for the product order on $\NN^\nvars$) belong to $\expSet$; and if
$(i,j)$ is in $\supp$, then all $(i',j')$ such that $(i',j') \le (i,j)$ (for
the product order on $\NN^{\nvars+1}$) belong to $\supp_k$. This assumption is
satisfied in the applications detailed below; besides, using the
straightforward extension of~\eqref{eq:ellXQ} to multiplication by
$Y_1,\dots,Y_r$, it allows us to compute all entries
$\linFunc_{i,j,k}(Y^\gamma)$ of the matrix $\evMat$ inductively in $\bigO{\rdim
\order}$, which is negligible compared to the cost of solving the resulting
instance of \Cref{pbm:mib}. (As a side note, we remark that this
assumption also implies that $\kernel$ is a zero-dimensional ideal of
$\module$.)

\begin{proposition}
  \label{prop:multi-int}
  Assuming that $\expSet$ and $\supp_1,\ldots,\supp_\nbpt$ are stable under
  division, there is an algorithm which solves \Cref{pbm:multi-int}
  using
  \begin{align*}
   \bigO{ & \rdim^{\expmatmul-1} \polmultime{\order} \log(\order)\log(\order/\rdim)
   + \rdim^{\expmatmul-1} \polmultime{\xi} \log(\xi/\rdim)}  & \text{if } \expmatmul>2 \\
   \bigO{ & \rdim \polmultime{\order} \log(\order)\log(\order/\rdim) \log(\rdim)^3
   + \rdim \polmultime{\xi} \log(\xi/\rdim) \log(\rdim)^2 } & \text{if } \expmatmul=2
  \end{align*}
  operations in $\field$, where $\order = \#\supp_1 + \cdots + \#\supp_\nbpt$
  and $\xi = \sshifts[\shifts-\min(\shifts)]$.
\end{proposition}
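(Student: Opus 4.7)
The plan is simply to invoke the reduction to \Cref{pbm:mib} set up in the paragraphs preceding the statement, and then apply \Cref{thm:mib}; the work is to verify that the preprocessing stays within the announced cost bound and that $\mulmat$ can be put in standard form.

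First, I would build the multiplication matrix $\mulmat$. From the discussion, $\mulmat$ is block diagonal with blocks $\mulmat_1,\dots,\mulmat_\nbpt$, and each $\mulmat_k$ is itself a direct sum of Jordan blocks with eigenvalue $\evpt_k$, indexed by $j\in\Lambda_k$ with sizes $\szbl_{k,j}+1$. Since the supports are stable under division, scanning each $\supp_k$ in $\bigO{\szbl_k}$ operations is enough to enumerate $\Lambda_k$ and the sizes $\szbl_{k,j}$. This produces a list representation $((\evpt_k,\szbl_{k,j}))$ of $\mulmat$ of total length $\bigO{\order}$. I would then sort it into the standard form~\eqref{eq:mulmat_normal} (grouping blocks by common eigenvalue, sorting eigenvalues, and sorting block sizes within each group), permuting the columns of $\evMat$ accordingly. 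Using the algorithm of \citet[Proposition~12]{BoJeSc08} as indicated just after~\eqref{eq:mulmat_normal}, this costs $\bigO{\polmultime{\order}\log(\order)^3}$ field operations, which is absorbed in the cost bound of \Cref{thm:mib}.

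Next, I would build the matrix $\evMat\in\matSpace[\rdim][\order]$ whose row indexed by $\gamma\in\expSet$ is the vector of values $\linFunc_{i,j,k}(Y^\gamma)$ for $(i,j,k)$ ranging over the functionals that form the dual basis. The stability of $\expSet$ and of each $\supp_k$ under division lets me fill in these entries inductively: starting from $\linFunc_{0,0,k}(1)=1$ and all other $\linFunc_{i,j,k}(1)=0$, I use the recurrence~\eqref{eq:ellXQ} and its analogue for multiplication by each $Y_s$ to go from $Y^{\gamma'}$ to $Y^\gamma=Y_s Y^{\gamma'}$ or from $X Q$ to $Q$. Each entry is computed in $\bigO{1}$ from previously computed ones, and there are $\rdim\order$ entries in total, so the construction costs $\bigO{\rdim\order}$ operations; this is again negligible compared with the bound in \Cref{thm:mib}.

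Once $\evMat$ and $\mulmat$ are assembled, the paragraphs preceding the statement already establish that the rows $\row{p}=(p_\gamma)_{\gamma\in\expSet}$ of a matrix $\intBasis\in\intSpace$ form a basis of $\mathfrak{I}_{\mathrm{int}}$ if and only if $\intBasis$ is an interpolation basis for $(\evMat,\mulmat)$, because $\mathfrak{I}_{\mathrm{int}}$ is precisely the $\polRing$-module of interpolants under our construction. Similarly, the $\shifts$-reducedness condition transfers verbatim. Hence I can apply \Cref{thm:mib} to $(\evMat,\mulmat,\shifts)$ with the same parameters $\rdim$, $\order$, and $\xi=\sshifts[\shifts-\min(\shifts)]$, obtaining the announced complexity. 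The only subtle point, and what I would view as the main thing to double-check, is that the value of $\order$ used in \Cref{thm:mib} coincides with $\#\supp_1+\cdots+\#\supp_\nbpt$; this is exactly the dimension of $\module/\kernel$ by linear independence of the functionals $\linFunc_{i,j,k}$, so the sizes of the Jordan blocks assembled above indeed sum to $\order$, and the reduction is complete.
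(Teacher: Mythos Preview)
Your proposal is correct and follows essentially the same approach as the paper: the paper does not give a separate proof for this proposition, since the preceding discussion in \Cref{subsec:multi-int} already establishes the reduction to \Cref{pbm:mib}, shows that $\evMat$ can be built in $\bigO{\rdim\order}$ under the stability-under-division assumption, and then the cost bound follows directly from \Cref{thm:mib}. Your write-up makes the preprocessing cost for the standard representation of $\mulmat$ more explicit than the paper (which simply remarks that the points can be assumed ordered, e.g.\ as a precomputation in coding-theory applications), but this is a harmless and helpful addition.
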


\subsection{List-decoding of Reed-Solomon codes} 
\label[subsec]{subsec:list-dec}

In the algorithms of \citet{Sudan97,GurSud99}, the bivariate interpolation step
deals with \Cref{pbm:multi-int} with $r=1$, $\expSet =
\{0,\ldots,\rdim-1\}$, pairwise distinct points $x_1,\ldots,x_\nbpt$, and
$\supp_k = \{(i,j) \mid i+j < b\}$ for all $k$; $b$ is the \emph{multiplicity
parameter} and $\rdim-1$ is the \emph{list-size parameter}. As explained above,
the shift $\shifts$ takes the form $\shifts = (0,w,2w,\ldots,(\rdim-1)w)$; here
$w+1$ is the message length of the considered Reed-Solomon code and $\nbpt$ is
its block length.

We will see below, in the more general interpolation step of the soft-decoding,
that $\sshifts \in \bigO{\order}$. As a consequence,
\Cref{prop:multi-int} states that this bivariate interpolation step
can be performed using $\bigO{\rdim^{\expmatmul-1} \polmultime{\order}
\log(\order) \log(\order/\rdim)}$ operations. To our knowledge, the best
previously known cost bound for this problem is $\bigO{\rdim^{\expmatmul-1}
\polmultime{\order} \log(\order)^2}$ and was obtained using a randomized
algorithm~\citep[Corollary~14]{CJNSV15}. In contrast, \Cref{algo:mib}
in this paper makes no random choices. The algorithm in~\citep{CJNSV15} uses
fast structured linear algebra, following an approach studied
in~\citep{OlsSho99,RotRuc00,ZeGeAu11}. Restricting to deterministic algorithms,
the best previously known cost bound is $\bigO{ \rdim^\expmatmul
  \polmultime{\order/b} (\log(\rdim)^2 + \log(\order/b))
}$~\citep{Bernstein11,CohHen15}, where $b < \rdim$ is the multiplicity
parameter mentioned above. This is obtained by first building a known $\rdim
\times \rdim$ interpolation basis with entries of degree at most $\order/b$,
and then using fast deterministic reduction of polynomial
matrices~\citep{GuSaStVa12}; other references on this approach
include~\citep{Alekhnovich02,LeeOSul08,BeeBra10}.

A previous divide-and-conquer algorithm can be found in~\citep{Nielsen14}. The
recursion is on the number of points $\nbpt$, and using fast multiplication of
the bases obtained recursively, this algorithm has cost bound $\bigO{\rdim^2
\order b} + \softO{\rdim^\expmatmul \order / b}$
\citep[Proposition~3]{Nielsen14}. In this reference, the bases computed
recursively are allowed to have size as large as $\Theta(\rdim^2 \order/b)$,
with $b < \rdim$.

For a more detailed perspective of the previous work on this specific problem,
the reader may for instance refer to the cost comparisons in the introductive
sections of~\citep{BeeBra10,CJNSV15}.

\subsection{Soft-decoding of Reed-Solomon codes}
\label[subsec]{subsec:soft-dec}

In K\"otter and Vardy's algorithm, the so-called soft-interpolation
step~\citep[Section~III]{KoeVar03a} is \Cref{pbm:multi-int} with $\nvars=1$,
$\expSet = \{0,\ldots,\rdim-1\}$, and $\shifts = (0,w,\ldots,(\rdim-1)w)$. The
points $\evpt_1,\ldots,\evpt_\nbpt$ are not necessarily pairwise distinct, and
to each $\evpt_k$ for $k \in \{1,\ldots,\nbpt\}$ is associated a multiplicity
parameter $b_k$ and a corresponding support $\supp_k = \{(i,j) \mid i+j < b_k
\}$. Then, $\order = \sum_{1\le k \le \nbpt} \binom{b_k+1}{2}$ is called the
\emph{cost} \citep[Section~III]{KoeVar03a}, since it corresponds to the number
of linear equations that one obtains in the straightforward linearization of
the problem.

In this context, one chooses for $\rdim$ the smallest integer such that the
number of linear unknowns in the linearized problem is more than $\order$. This
number of unknowns being directly linked to $\sshifts$, this leads to $\sshifts
\in \bigO{\order}$, which can be proven for example using \citep[Lemma~1 and
Equations~(10) and~(11)]{KoeVar03a}. Thus, our algorithm solves the
soft-interpolation step using $\bigO{ \rdim^{\expmatmul-1} \polmultime{\order}
\log(\order) \log(\order/\rdim) }$ operations in $\field$, which is to our
knowledge the best known cost bound to solve this problem.

Iterative algorithms, now often referred to as \emph{K\"otter's algorithm} in
coding theory, were given in~\citep{Koetter96,HohNie00}; one may also refer to
the general presentation of this solution in~\citep[Section 7]{McEliece03}.
These algorithms use $\bigO{\rdim \order^2}$ operations in $\field$ (for the
considered input shifts which satisfy $\sshifts \in \bigO{\order}$). We showed
above how the soft-interpolation step can be reduced to a specific instance of
M-Pad\'e approximation (\Cref{pbm:m-pade}): likewise, one may remark
that the algorithms in~\citep{HohNie00,McEliece03} are closely linked to those
in~\citep{Beckermann92,BarBul92,BecLab00}. In particular, up to the
interpretation of row vectors in $\polMatSpace[\rdim]$ as bivariate polynomials
of degree less than $\rdim$ in $Y$, they use the same recurrence. Our recursive
\Cref{algo:mib-rec} is a fast divide-and-conquer version of these
algorithms.

An approach based on polynomial matrix reduction was developed
in~\citep{Alekhnovich02,Alekhnovich05} and in~\citep{LeeOSul06}. It consists
first in building an interpolation basis, that is, a basis $\mat{B} \in
\intSpace$ of the $\polRing$-module $\mathfrak{I}_{\mathrm{int}}$, and then in
reducing the basis for the given shift $\shifts$ to obtain an $\shifts$-minimal
interpolation basis. The maximal degree in $\mat{B}$ is
\[\maxDeg = \sum_{1\le k\le \nbpt} \frac{ \max \{b_i \;\mid\; 1\le i \le \nbpt
  \text{ and } \evpt_i = \evpt_k\} }{ \# \{1\le i \le \nbpt \;\mid\; \evpt_i =
  \evpt_k\} }; \] one can check using $b_k < \rdim$ for all $k$ that \[ \order
  = \sum_{1\le k\le \nbpt} \frac{b_k(b_k+1)}{2} \le \frac{\rdim}{2} \sum_{1 \le
  k\le \nbpt} b_k \le \frac{\rdim\maxDeg}{2}.\] Using the fast deterministic
  reduction algorithm in~\citep{GuSaStVa12}, this approach has cost bound
  $\bigO{\rdim^\expmatmul \polmultime{ \maxDeg } (\log(\rdim)^2 +
\log(\maxDeg))}$; the cost bound of our algorithm is thus smaller by a factor
$\softO{\rdim \delta / \order}$.

In~\citep[Section~5.1]{Zeh13} the so-called key equations commonly used in the
decoding of Reed-Solomon codes were generalized to this soft-interpolation
step. It was then showed in~\citep{CJNSV15} how one can efficiently compute a
solution to these equations using fast structured linear algebra. In this
approach, the set of points $\{(x_k,y_k), 1\le k \le \nbpt\}$ is partitioned as
$\mathcal{P}_1 \cup \cdots \cup \mathcal{P}_q$, where in each $\mathcal{P}_h$
the points have pairwise distinct $\evpt$-coordinates. We further write
$b^{(h)} = \max \{ b_k \;\mid\; 1\le i\le \nbpt \text{ and } (x_k,y_k) \in
\mathcal{P}_h \}$ for each $h$, and $\beta = \sum_{1\le h\le q} b^{(h)}$. Then,
the cost bound is $\bigO{ (\rdim+\beta)^{\expmatmul-1} \polmultime{\order}
\log(\order)^2 }$, with a probabilistic
algorithm~\citep[Section~IV.C]{CJNSV15}.  We note that $\beta$ depends on the
chosen partition of the points. The algorithm in this paper is deterministic
and has a better cost.

All the mentioned algorithms for the interpolation step of list- or
soft-decoding of Reed-Solomon codes, including the one presented in this paper,
can be used in conjunction with the \emph{re-encoding
technique}~\citep{WelBer86,KoeVar03b} for the decoding problem.

\subsection{Applications of the multivariate case} The case $\nvars > 1$ is
used for example in the interpolation steps of the list-decoding of
Parvaresh-Vardy codes~\citep{ParVar05} and of folded Reed-Solomon
codes~\citep{GurRud08}, as well as in Private Information
Retrieval~\citep{DeGoHe12}. In these contexts, one deals with
\Cref{pbm:multi-int} for some $\nvars>1$ with, in most cases, $\expSet =
\{ j \in \NN^\nvars \;\mid\; \sumVec{j} < a\}$ where $a$ is the list-size
parameter, and $\supp_k = \{ (i,j) \in \NN\times \NN^\nvars \;\mid\; i +
  \sumVec{j} < b \}$ where $b$ is the multiplicity parameter. Then, $\rdim =
  \binom{\nvars+a}{\nvars}$ and $\order = \binom{\nvars+b}{\nvars+1} \nbpt$.
  Besides, the weight (as mentioned in \Cref{subsec:multi-int}) is
  $\shifts[w] = (w,\ldots,w) \in \shiftSpace[\nvars]$ for a fixed positive
  integer $w$; then, the corresponding input shift is $\shifts = (\sumVec{j}
  w)_{j \in \expSet}$.

To our knowledge, the best known cost has been obtained by a probabilistic
algorithm which uses $\bigO{ \rdim^{\expmatmul-1} \polmultime{\order}
\log(\order)^2 }$ operations~\citep[Theorem~1]{CJNSV15} to compute one solution
with some degree constraints related to $\shifts$. However, this is not
satisfactory for the application to Private Information Retrieval, where one
wants an $\shifts$-minimal basis of solutions: using the polynomial matrix
reduction approach mentioned in \Cref{subsec:list-dec,subsec:soft-dec}, such a
basis can be computed in $\bigO{ \rdim^\expmatmul \polmultime{ b \nbpt } (
\log(\rdim)^2 + \log( b \nbpt ) ) }$ operations~\citep{PB08,KB10,CohHen12} (we
note that $\rdim b \nbpt > \order$). It is not clear to us what bound we have
on $\sshifts$ in this context, and thus how our algorithm compares to these two
results.

%
%

\section{A divide-and-conquer algorithm}\label{sec:algo}

In this section, we assume that $\mulmat \in \matSpace[\order]$ is a Jordan
matrix given by means of a standard representation as
in~\eqref{eq:mulmat_normal}, and we provide a description of our
divide-and-conquer algorithm together with a proof of the following refinement
of our main result, \Cref{thm:mib}.

\begin{proposition}
  \label{prop:mib}
  Without loss of generality, assume that $\min(\shifts)=0$; then, let $\xi =
  \sshifts[\shifts]$. \Cref{algo:mib} solves \Cref{pbm:mib}
  deterministically, and the sum of the $\shifts$-row degrees of the computed
  $\shifts$-minimal interpolation basis is at most $\order+\xi$. If
  $\order>\rdim$, a cost bound for this algorithm is given by
  \begin{align*}
  \bigO{ & \rdim^{\expmatmul-1} \polmultime{\order} + \rdim^\expmatmul \polmultime{\order/\rdim}
    \log(\order/\rdim)^2 + \rdim \polmultime{\order} \log(\order) \log(\order/\rdim) \\ 
    & + \rdim^{\expmatmul-1} \polmultime{\xi}  + \rdim^\expmatmul \polmultime{\xi/\rdim}
  \log(\xi/\rdim)}
    & \text{if } \expmatmul>2, \\[0.2cm]
    \bigO{ & \rdim \polmultime{\order} (\log(\rdim)^3 + \log(\order) \log(\order/\rdim)) + \rdim^2 \polmultime{\order/\rdim}
    \log(\order) \log(\order/\rdim) \log(\rdim) \\ 
    & + \rdim \polmultime{\xi} \log(\rdim)^2 + \rdim^2 \polmultime{\xi/\rdim}
  \log(\xi/\rdim) \log(\rdim) }
    & \text{if } \expmatmul=2;
  \end{align*}
  if $\order\le\rdim$, it is given in \Cref{prop:lin-mib}.
\end{proposition}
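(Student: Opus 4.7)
The plan is to establish correctness, the output row-degree bound, and the cost estimate by structural induction on the divide-and-conquer recursion sketched in \Cref{sec:algo}. When $\order\le\rdim$, the algorithm delegates to the linear-algebra routine underlying \Cref{prop:lin-mib}, whose correctness, row-degree bound, and cost are already established there. When $\order>\rdim$, the algorithm splits $\mulmat$ into two diagonal Jordan subblocks $\mulmat^{(1)},\mulmat^{(2)}$ of orders $\approx\order/2$, splits $\evMat$ into $\evMat^{(1)},\evMat^{(2)}$ accordingly, recursively computes a basis $\intBasis^{(1)}$ for the left subproblem, forms the residual $\evMatF=\intBasis^{(1)}\mul\evMat^{(2)}$ via the product procedure of \Cref{sec:multiplication}, recursively computes a basis $\intBasis^{(2)}$ for $(\evMatF,\mulmat^{(2)})$, and returns $\intBasis^{(2)}\intBasis^{(1)}$ computed via the unbalanced-degree multiplication of \Cref{sec:cost-mult}. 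Correctness rests on the standard module factorisation: an element $\row{p}\in\polMatSpace[1][\rdim]$ is an interpolant for $(\evMat,\mulmat)$ if and only if $\row{p}=\row{q}\intBasis^{(1)}$ for some $\row{q}$ that is itself an interpolant for $(\evMatF,\mulmat^{(2)})$. The $\shifts$-reducedness of the returned product comes from the predictable-degree property applied with the shift $\shifts'=\rdeg[\shifts]{\intBasis^{(1)}}$, which also yields $\rdeg[\shifts]{\intBasis^{(2)}\intBasis^{(1)}}=\rdeg[\shifts']{\intBasis^{(2)}}$.

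To prevent intermediate bases from blowing up when $\shifts'$ is unbalanced, the algorithm performs all recursive calls with the uniform shift $\unishift$, and applies the change-of-shift procedure of \Cref{sec:change-shift} at each level to recover the $\shifts'$-minimal (resp.\ $\shifts$-minimal, at the top level) basis needed for the combination. With this discipline, every recursive call of order $\order'$ returns a basis whose sum of row degrees is at most $\order'$ (by the bound proved in \Cref{sec:lin-mib} for the uniform-shift case), so the intermediate matrices passed into the residual, multiplication, and change-of-shift subroutines have size $\bigO{\rdim\order'}$ independent of $\xi$. The announced global bound $\order+\xi$ on the sum of $\shifts$-row degrees of the output follows by combining the uniform-shift bound $\order$ on the unshifted degrees of the output with the $\xi$ contribution of the user-supplied shift.

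The cost recurrence on an instance of order $\order>\rdim$ then takes the form $T(\order)\le 2T(\order/2)+C_{\mathrm{res}}(\order)+C_{\mathrm{mul}}(\order)$, with base case at $\order\le\rdim$ given by \Cref{prop:lin-mib}. The procedure of \Cref{sec:multiplication} bounds $C_{\mathrm{res}}(\order)$ essentially by one $\rdim\times\rdim$ polynomial matrix product of degree $\bigO{\order/\rdim}$, while \Cref{sec:cost-mult} bounds $C_{\mathrm{mul}}(\order)$ by the cost of an unbalanced $\rdim\times\rdim$ product of total degree $\bigO{\order}$; summing over the $\bigO{\log(\order/\rdim)}$ levels of the recursion, together with the per-level cost of the change-of-shift, yields the $\order$-dependent part of the stated bound. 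The $\xi$-dependent terms come from the change-of-shift that is applied once at the top of the recursion to handle the user-supplied input shift $\shifts$.

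The main obstacle is the cost accounting of the polynomial matrix multiplications at each recursion level in the presence of unbalanced row degrees of the recursively computed bases, which is precisely what motivates isolating this difficulty in the dedicated procedure of \Cref{sec:cost-mult} and using uniform-shift recursion plus change-of-shift rather than threading $\shifts'$ directly through the recursion. A secondary subtlety is the careful separation of the $\order$-driven contributions (coming from the recursion tree itself) from the $\xi$-driven contributions (confined to the change-of-shift step), which is what allows the cost to depend on $\xi$ only through additive $\polmultime{\xi}$-type terms rather than through every level of the recursion.
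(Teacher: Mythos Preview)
Your proposal is essentially correct and follows the same approach as the paper: the paper proves \Cref{prop:mib} by delegating the analysis of the uniform-shift recursion (Steps~\textbf{2.a}) to a separate lemma (\Cref{lem:mib-rec}), which establishes both correctness and the bound $\sshifts[{\rdeg{\intBasis}}]\le\order$, and then invokes \Cref{prop:change_shift} once for the top-level call to \algoname{Shift} (Step~\textbf{2.b}), which yields both the $\xi$-dependent cost terms and the exact identity $\sshifts[{\rdeg[\shifts]{\mat{R}}}]=\sshifts[{\rdeg{\intBasis}}]+\sshifts\le\order+\xi$. Your sketch merges these two layers into a single induction but covers the same ingredients; the only imprecision is that in your first paragraph the residual is not $\intBasis^{(1)}\mul\evMat^{(2)}$ with $\evMat^{(2)}$ the right half of $\evMat$, but rather the last $\lceil\order/2\rceil$ columns of the full product $\intBasis^{(1)}\mul\evMat$ (the upper-triangular structure of $\mulmat$ makes the first $\lfloor\order/2\rfloor$ columns vanish).
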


This algorithm relies on some subroutines, for which cost estimates are given
in the following sections and are taken for granted here:
\begin{itemize}
\item in \Cref{sec:cost-mult}, the fast multiplication of two
  polynomial matrices with respect to the average row degree of the operands
  and of the result;
\item in \Cref{sec:change-shift}, the \emph{change of shift}: given an
  $\shifts$-minimal interpolation basis $\mat{P}$ and some shift $\shifts[t]$,
  compute an $(\shifts+\shifts[t])$-minimal interpolation basis;
\item in \Cref{sec:multiplication}, the fast computation of a product of
  the form $\mat{P} \mul \evMat$;
\item in \Cref{sec:lin-mib}, the computation of an $\shifts$-minimal
  interpolation basis using linear algebra, which is used here for the base
  case of the recursion.
\end{itemize}

	\begin{figure}[h!]
	\centering
	\fbox{\begin{minipage}{12.5cm}
	\begin{algorithm}[\algoname{InterpolationBasis}]
    \label{algo:mib}

  ~\smallskip \\
	\algoword{Input:} 
  \begin{itemize}
    \item a matrix $\evMat\in\evSpace{\order}$,
    \item a Jordan matrix $\jordan \in \matSpace[\order]$ in standard representation,
    \item a shift $\shifts\in\shiftSpace$.
  \end{itemize}

  \smallskip
  \algoword{Output:} an $\shifts$-minimal interpolation basis $\intBasis \in
  \intSpace$ for $(\evMat,\mulmat)$.

  \smallskip
	\begin{enumerate}[{\bf 1.}] 
    \item \algoword{If} $\order \le \rdim$, \algoword{Return} $\algoname{LinearizationInterpolationBasis}(\evMat,\mulmat,\shifts,2^{\lceil\log(\order)\rceil})$
    \item \algoword{Else}
      \begin{enumerate}[{\bf a.}]
        \item $\intBasis \leftarrow \algoname{InterpolationBasisRec}(\jordan,\evMat)$
        \item \algoword{Return} $\algoname{Shift}(\intBasis, \unishift, \shifts)$
      \end{enumerate}
	\end{enumerate}
	\end{algorithm}

	\begin{algorithm}[\algoname{InterpolationBasisRec}]
  \label{algo:mib-rec}
  ~\smallskip\\
	\algoword{Input:}
    \begin{itemize}
    \item a matrix $\evMat\in\evSpace{\order}$,
    \item a Jordan matrix $\jordan \in \matSpace[\order]$ in standard representation.
    \end{itemize}

  \smallskip
  \algoword{Output:} a $\unishift$-minimal interpolation basis $\intBasis \in
  \intSpace$ for $(\evMat,\mulmat)$.

  \smallskip
	\begin{enumerate}[{\bf 1.}]
    \item \algoword{If} $\order \le \rdim$, \algoword{Return} $\algoname{LinearizationInterpolationBasis}(\evMat,\mulmat,\unishift,2^{\lceil\log(\order)\rceil})$
    \item \algoword{Else}
			\begin{enumerate}[{\bf a.}]
				\item $\evMat^{(1)} \leftarrow$ first $\lfloor\sigma/2\rfloor$ columns of $\evMat$
				\item $\intBasis^{(1)} \leftarrow \algoname{InterpolationBasisRec}(\evMat^{(1)},\mulmat^{(1)})$
        \item $\evMat^{(2)} \leftarrow$ last $\lceil\sigma/2\rceil$ columns of $\intBasis^{(1)} \mul \evMat = \algoname{ComputeResiduals}(\mulmat,\intBasis^{(1)},\evMat)$
				\item $\intBasis^{(2)} \leftarrow \algoname{InterpolationBasisRec}(\evMat^{(2)},\mulmat^{(2)})$ 
        \item $\mat{R}^{(2)} \leftarrow \algoname{Shift}(\intBasis^{(2)},\shifts[0],\rdeg{\intBasis^{(1)}})$
        \item \algoword{Return} $\algoname{UnbalancedMultiplication}(\intBasis^{(1)},\mat{R}^{(2)},\order)$
			\end{enumerate}
	\end{enumerate}
	\end{algorithm}
	\end{minipage}}
\end{figure}

First, we focus on the divide-and-conquer subroutine given in
\Cref{algo:mib-rec}. In what follows, $\jordan^{(1)}$ and
$\jordan^{(2)}$ always denote the leading and trailing principal $\order/2
\times \order/2$ submatrices of $\jordan$.  These two submatrices are still in
Jordan canonical form, albeit not necessarily in standard representation; this
can however be restored by a single pass through the array.

\begin{lemma}
	\label{lem:mib-rec}
  \Cref{algo:mib-rec} solves \Cref{pbm:mib} deterministically
  for the uniform shift $\shifts=\unishift$, and the sum of the row degrees of
  the computed $\unishift$-minimal interpolation basis is at most $\order$. 
  If $\order>\rdim$, a cost bound for this algorithm is given by
  \begin{align*}
  \bigO{ & \rdim^{\expmatmul-1} \polmultime{\order} + \rdim^\expmatmul \polmultime{\order/\rdim}
  \log(\order/\rdim)^2 + \rdim \polmultime{\order} \log(\order) \log(\order/\rdim) }
    & \text{if } \expmatmul>2, \\
    \bigO{ & \rdim \polmultime{\order} (\log(\rdim)^3 + \log(\order) \log(\order/\rdim) ) + \rdim^2 \polmultime{\order/\rdim}
    \log(\order) \log(\order/\rdim) \log(\rdim) }
    & \text{if } \expmatmul=2;
  \end{align*}
  if $\order\le\rdim$, it is given in \Cref{prop:lin-mib}.
\end{lemma}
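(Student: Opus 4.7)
The plan is a strong induction on $\order$, invoking the subroutines listed before Algorithm~\ref{algo:mib} as black boxes whose correctness and cost are established in the cited sections. For the base case $\order\le\rdim$, the claim is exactly Proposition~\ref{prop:lin-mib} applied to \algoname{LinearizationInterpolationBasis}. For the recursive step, correctness rests on two facts. First, a splitting observation: since $\jordan$ is upper triangular (being a Jordan matrix in standard representation), each $p_i(\jordan)$ is upper triangular with top-left $\lfloor\order/2\rfloor\times\lfloor\order/2\rfloor$ block equal to $p_i(\jordan^{(1)})$; hence for every $\row{p}\in\polRing^\rdim$ the first $\lfloor\order/2\rfloor$ entries of $\row{p}\mul\evMat$ coincide with $\row{p}\mul\evMat^{(1)}$ computed using $\jordan^{(1)}$. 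By induction $\intBasis^{(1)}$ is therefore a $\unishift$-minimal interpolation basis for $(\evMat^{(1)},\jordan^{(1)})$, $(\evMat^{(2)},\jordan^{(2)})$ is a genuine residual problem, and the second recursive call delivers a $\unishift$-minimal basis $\intBasis^{(2)}$ for it.

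The second fact is the classical Beckermann--Labahn product lemma: if $\intBasis^{(1)}$ is $\unishift$-minimal for the first half and $\mat{R}^{(2)}$ is $\rdeg{\intBasis^{(1)}}$-minimal for the residual, then $\mat{R}^{(2)}\intBasis^{(1)}$ is a $\unishift$-minimal interpolation basis for $(\evMat,\jordan)$. To obtain such an $\mat{R}^{(2)}$ with the correct non-uniform shift, the algorithm feeds $\intBasis^{(2)}$ to the \algoname{Shift} routine of Section~\ref{sec:change-shift} to change the shift from $\unishift$ to $\rdeg{\intBasis^{(1)}}$; the correctness of \algoname{UnbalancedMultiplication} (Section~\ref{sec:cost-mult}) then closes the correctness argument.

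The row-degree bound follows from the predictable-degree property: for any $\unishift$-reduced basis of $\intMod$, the sum of its row degrees equals the degree of its determinant, which equals $\dim_\field(\polRing^\rdim/\intMod)$. Since $\row{p}\mapsto\row{p}\mul\evMat$ induces an injection $\polRing^\rdim/\intMod\hookrightarrow\vecSpace$, this dimension is at most $\order$.

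For the complexity, writing $T(\order)$ for the cost on input of size $\rdim\times\order$ with $\order>\rdim$, the recurrence $T(\order)\le T(\lfloor\order/2\rfloor)+T(\lceil\order/2\rceil)+C(\order)$ unfolds into the announced bound once the cost estimates for $C(\order)$, which gathers the costs of \algoname{ComputeResiduals}, \algoname{Shift} and \algoname{UnbalancedMultiplication}, are imported from Sections~\ref{sec:multiplication}--\ref{sec:change-shift}. The main obstacle I anticipate is charging the multiplication and residual subroutines against \emph{sums} of row degrees rather than individual maxima: the $\unishift$-row degrees of $\intBasis^{(1)}$ and $\mat{R}^{(2)}$ may individually reach $\order$, yet the row-degree-sum bound established in the previous paragraph, applied inductively to both recursive calls, is exactly the average-degree control that these subroutines need to meet the announced complexity. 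The $\log(\order/\rdim)$ factor reflects the depth of the recursion, compounded with a further logarithmic factor intrinsic to unbalanced polynomial matrix multiplication.
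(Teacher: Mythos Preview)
Your proposal is correct and follows essentially the same approach as the paper: the paper proves the product/reducedness step inline (showing explicitly that the $\unishift$-leading matrix of $\mat{R}^{(2)}\intBasis^{(1)}$ factors as the product of two invertible leading matrices) rather than invoking it as a named lemma, and it derives the row-degree bound by citing \citep[Theorem~4.1]{BarBul92} through the M-Pad\'e link, whereas your injection argument $\polRing^\rdim/\intMod\hookrightarrow\vecSpace$ is a cleaner self-contained alternative (the paper in fact gives an argument of this flavour later, in \Cref{rmk:sum_rdeg}). Your cost discussion is structurally identical to the paper's but much sketchier: the paper itemizes Steps~\textbf{2.c}, \textbf{2.e}, \textbf{2.f} and the recursion leaves separately, then sums across the $\log(\order/\rdim)$ levels using the explicit bounds of \Cref{app:cost-mult}, and you would need to do the same to actually land on the stated expression rather than merely assert that the recurrence ``unfolds'' into it.
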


In the rest of this paper, we use convenient notation for the cost of
polynomial matrix multiplication and for related quantities that arise when
working with submatrices of a given degree range as well as in
divide-and-conquer computations. Hereafter, $\log(\cdot)$ always stands for the
logarithm in base $2$.
\begin{definition}
  \label{dfn:polmatmul}
  Let $\rdim$ and $d$ be two positive integers. Then, $\polmatmultime{\rdim,d}$
  is such that two matrices in $\polMatSpace[\rdim]$ of degree at most $d$ can
  be multiplied using $\polmatmultime{\rdim,d}$ operations in $\field$. Then,
  writing $\bar{\rdim}$ and $\bar{d}$ for the smallest powers of $2$ at least
  $\rdim$ and $d$, we also define
  \begin{itemize}
    \item $\polmatmultimePrime{\rdim,d} = \sum_{0\le i \le \log(\bar{\rdim})} 2^i
      \polmatmultime{2^{-i}\bar{\rdim},2^i \bar{d}}$,
    \item $\polmatmultimePrimeDnc{\rdim,d} = \sum_{0\le i \le
        \log(\bar{\rdim})} 2^i \polmatmultimePrime{2^{-i}\bar{\rdim},2^i \bar{d}}$,
    \item $\polmatmultimeBis{\rdim,d} = \sum_{0\le i\le
        \log(\bar{d})} 2^i \polmatmultime{\bar{\rdim},2^{-i}\bar{d}}$,
    \item $\polmatmultimeBisDnc{\rdim,d} = \sum_{0\le i \le \log(\bar{\rdim})} 2^i
      \polmatmultimeBis{2^{-i}\bar{\rdim},2^i \bar{d}}$.
  \end{itemize}
\end{definition}
\noindent We note that one can always take $\polmatmultime{\rdim,d} \in \bigO{
  \rdim^\expmatmul \polmultime{d} }$. Upper bounds for the other quantities are
  detailed in \Cref{app:cost-mult}.

\begin{proof}[Proof of \Cref{lem:mib-rec}]
  Concerning the base case $\order \le \rdim$, the correctness and the cost
  bound both follow from results that will be given in \Cref{sec:lin-mib}, in
  particular \Cref{prop:lin-mib}.
  
  Let us now consider the case of $\order > \rdim$, and assume that
  $\intBasis^{(1)}$ and  $\intBasis^{(2)}$, as computed by the recursive calls,
  are $\unishift$-minimal interpolation bases for
  $(\evMat^{(1)},\jordan^{(1)})$ and $(\evMat^{(2)},\jordan^{(2)})$,
  respectively. The input $\evMat$ takes the form $\evMat = [\evMat^{(1)} |
  \;\boldsymbol{*}\; ]$ and we have $\intBasis^{(1)} \mul \evMat = [\;\mat{0}\;
  | \evMat^{(2)}]$ as well as $\intBasis^{(2)} \mul \evMat^{(2)} = 0$.
  Besides, $\mat{R}^{(2)}$ is by construction unimodularly equivalent to
  $\intBasis^{(2)}$, so there exists $\mat{U}$ unimodular such that
  $\mat{R}^{(2)} = \mat{U}\intBasis^{(2)}$. Defining $\intBasis =
  \mat{R}^{(2)}\intBasis^{(1)}$, our goal is to show that $\intBasis$ is a
  minimal interpolation basis for $(\evMat,\jordan)$.

  First, since $\jordan$ is upper triangular we have $\intBasis \mul \evMat =
  \mat{R}^{(2)} \mul [\;\mat{0}\: | \evMat^{(2)}] = [\;\mat{0}\; |
  \mat{U}\intBasis^{(2)}\mul\evMat^{(2)}] = 0$, so that every row of
  $\intBasis$ is an interpolant for $(\evMat,\jordan)$. Let us now consider an
  arbitrary interpolant $\int\in\polMatSpace[1][\rdim]$ for $(\evMat,\jordan)$.
  Then, $\int$ is in particular an interpolant for
  $(\evMat^{(1)},\jordan^{(1)})$: there exists some row vector $\row{V}$ such
  that $\int = \row{V}\intBasis^{(1)}$.  Furthermore, the equalities $0 =
  \row{p}\mul\evMat = \row{V}\intBasis^{(1)}\mul\evMat = [ \;\mat{0}\; |
  \row{v}\mul\evMat^{(2)}]$ show that $\row{v}\mul\evMat^{(2)}=0$.  Thus, there
  exists some row vector $\row{w}$ such that $\row{v}=\row{w}\intBasis^{(2)}$,
  which gives $\int = \row{w} \intBasis^{(2)} \intBasis^{(1)} = \row{w}
  \mat{U}^{-1} \intBasis$.  This means that every interpolant for
  $(\evMat,\jordan)$ is a $\polRing$-linear combination of the rows of
  $\intBasis$.

  Then, it remains to check that $\intBasis$ is $\unishift$-reduced.  As a
  $\unishift$-minimal interpolation basis, $\intBasis^{(2)}$ is
  $\unishift$-reduced and has full rank. Then, the construction of
  $\mat{R}^{(2)}$ using Algorithm \algoname{Shift} ensures that it is
  $\shifts[t]$-reduced, where $\shifts[t] = \rdeg{\intBasis^{(1)}}$. Define
  $\shifts[u] = \rdeg{\mat{P}}=\rdeg{ \mat{R}^{(2)} \intBasis^{(1)} }$; the
  predictable-degree property~\citep[Theorem~6.3-13]{Kailath80} implies that
  $\shifts[u] = \rdeg[{\shifts[t]}]{\mat{R}^{(2)}}$. Using the identity
  $\shiftMat{-\shifts[u]}\mat{P} = \shiftMat{-\shifts[u]} \mat{R}^{(2)}
  \shiftMat{\shifts[t]} \shiftMat{-\shifts[t]} \mat{P}^{(1)}$, we obtain that
  the $\unishift$-leading matrix of $\mat{P}$ is the product of the
  $\shifts[t]$-leading matrix of $\mat{R}^{(2)}$ and the $\unishift$-leading
  matrix of $\mat{P}^{(1)}$, which are both invertible. Thus, the
  $\unishift$-leading matrix of $\mat{P}$ is invertible as well, and therefore
  $\mat{P}$ is $\unishift$-reduced.

  Thus, for any $\order$, the algorithm correctly computes an $\shifts$-minimal
  interpolation basis for $(\evMat,\jordan)$.  As shown in \Cref{sec:intro},
  there is a direct link between M-Pad\'e approximation and shifted minimal
  interpolation bases with a multiplication matrix in Jordan canonical form.
  Then, the result in \citep[Theorem 4.1]{BarBul92} proves that for a given
  $\order$, the determinant of a $\unishift$-minimal interpolation basis
  $\intBasis$ has degree at most $\order$. Hence $\sumVec{\rdeg{\intBasis}} =
  \deg(\det(\intBasis)) \le \order$ follows the fact that the sum of the row
  degrees of a $\unishift$-reduced matrix equals the degree of its determinant.

  Let us finally prove the announced cost bound for $\order > \rdim$. Without
  loss of generality, we assume that $\order / \rdim$ is a power of $2$. Each
  of Steps~\textbf{2.b} and~\textbf{2.d} calls the algorithm recursively on an
  instance of \Cref{pbm:mib} with dimensions $\rdim$ and $\order/2$.
  \begin{itemize}
    \item The leaves of the recursion are for $\rdim/2 \le \order \le \rdim$
      and thus according to \Cref{prop:lin-mib} each of them uses
      $\bigO{\rdim^\expmatmul \log(\rdim)}$ operations if $\expmatmul>2$, and
      $\bigO{\rdim^2 \log(\rdim)^2}$ operations if $\expmatmul=2$. For
      $\order>\rdim$ (with $\order/\rdim$ a power of $2$), the recursion leads
      to $\order/\rdim$ leaves, which thus yield a total cost of $\bigO{
        \rdim^{\expmatmul-1} \order \log(\rdim)}$ operations if $\expmatmul>2$,
        and $\bigO{ \rdim \order \log(\rdim)^2}$ operations if $\expmatmul=2$.
    \item According to \Cref{prop:update-evaluations},
      Step~\textbf{2.c} uses $\bigO{ \polmatmultime{\rdim,\order/\rdim}
      \log(\order/\rdim) + \rdim \polmultime{\order} \log(\order) }$
      operations. Using the super-linearity property of $d \mapsto
      \polmatmultime{\rdim,d}$, we see that this contributes to the total cost
      as $ \bigO{ \polmatmultime{\rdim,\order/\rdim} \log(\order/\rdim)^2 +
      \rdim \polmultime{\order} \log(\order)\log(\order/\rdim) }$ operations.
    \item For Step~\textbf{2.e}, we use \Cref{prop:change_shift} with $\xi = \order$,
      remarking that both the sum of the entries of $\shifts[t] =
      \rdeg{\intBasis^{(1)}}$ and that of $\rdeg{\intBasis^{(2)}}$ are at most
      $\order/2$. Then, the change of shift is performed using $\bigO{
        \polmatmultimePrimeDnc{ \rdim, \order/\rdim } + \polmatmultimeBisDnc{
        \rdim, \order/\rdim } }$ operations. Thus, altogether the time spent in
        this step is $\bigO{ \sum_{0\le i\le \log(\order/\rdim)} 2^i
        (\polmatmultimePrimeDnc{ \rdim, 2^{-i}\order/\rdim } +
        \polmatmultimeBisDnc{ \rdim, 2^{-i}\order/\rdim } ) }$ operations; we
        give an upper bound for this quantity in
        \Cref{lem:polmatmulDoubleDnc-bound}.
    \item From \Cref{prop:change_shift} we obtain that
      $\rdeg[{\shifts[t]}]{\mat{R}^{(2)}} \le \sshifts[\rdeg{\intBasis^{(2)}}]
      + \sshifts[{\shifts[t]}] \le \order$. Then, using
      \Cref{prop:unbalanced-polmatmul} with $\xi = \order$, the polynomial
      matrix multiplication in Step~\textbf{2.f} can be done in time $\bigO{
        \polmatmultimePrime{ \rdim, \order/\rdim } }$. Besides, it is easily
        verified that by definition $\polmatmultimePrime{ \rdim, \order/\rdim }
        \le \polmatmultimePrimeDnc{ \rdim, \order/\rdim }$, so that the cost
        for this step is dominated by the cost for the change of shift.
  \end{itemize}
  Adding these costs and using the bounds in \Cref{app:cost-mult} leads
  to the conclusion.
\end{proof}

We now prove our main result.
\medskip
\begin{proof}[Proof of \Cref{prop:mib}]
  The correctness of \Cref{algo:mib} follows from the correctness of
  \Cref{algo:mib-rec,algo:shift,algo:lin-mib}. Concerning the cost bound when
  $\order > \rdim$, \Cref{lem:mib-rec} gives the number of operations used by
  Step~\textbf{2.a} to produce $\intBasis$, which satisfies
  $\sshifts[{\rdeg{\intBasis}}] \le \order$. Then, considering
  $\min(\shifts)=0$ without loss of generality, we have
  $\sshifts[\rdeg{\intBasis}] + \sshifts \le \order + \xi$:
  \Cref{prop:change_shift} states that Step~\textbf{2.b} can be performed using
  $\bigO{ \polmatmultimePrimeDnc{\rdim,(\order+\xi)/\rdim} +
  \polmatmultimeBisDnc{\rdim,(\order+\xi)/\rdim} }$ operations. The cost bound
  then follows from the bounds in \Cref{lem:polmatmulDnc-bound}. Furthermore,
  from \Cref{prop:change_shift} we also know that the sum of the $\shifts$-row
  degrees of the output matrix is exactly $\sshifts[{\rdeg{\intBasis}}] +
  \sshifts$, which is itself at most $\order + \xi$.
\end{proof}

%
%
\section{Multiplying matrices with unbalanced row degrees}
\label{sec:cost-mult}

In this section, we give a detailed complexity analysis concerning the fast
algorithm from~\citep[Section 3.6]{ZhLaSt12} for the multiplication of matrices
with controlled, yet possibly unbalanced, row degrees. For completeness, we
recall this algorithm below in \Cref{algo:unbalanced-polmatmul}. It is a
central building block in our algorithm: it is used in the multiplication of
interpolation bases (Step~\textbf{2.f} of \Cref{algo:mib-rec}) and also for the
multiplication of nullspace bases that occur in the algorithm
of~\citep{ZhLaSt12}, which we use to perform the change of shift in
Step~\textbf{2.e} of \Cref{algo:mib-rec}.

In the rest of the paper, most polynomial matrices are interpolation bases and
thus are square and nonsingular. In contrast, in this section polynomial
matrices may be rectangular and have zero rows, since this may occur in
nullspace basis computations (see \Cref{app:cost-mnb}). Thus, we extend the
definitions of shift and row degree as follows. For a matrix $\mat{A} =
[a_{i,j}]_{i,j}$ in $\polMatSpace[k][\rdim]$ for some $k$, and a shift $\shifts
= (\shift{1}, \ldots, \shift{\rdim}) \in (\NN\cup\{-\infty\})^\rdim$, the
$\shifts$-row degree of $\mat{A}$ is the tuple $\rdeg[\shifts]{\mat{A}} = (d_1,
\ldots , d_k) \in (\NN\cup\{-\infty\})^k$, with $d_i = \max_j (\deg(a_{i,j}) +
\shift{j})$ for all~$i$, and using the usual convention that $\deg(0) =
-\infty$. Besides, $\sshifts$ denotes the sum of the non-negative entries of
$\shifts$, that is, $\sshifts = \sum_{1\le i\le k, \shift{i} \neq -\infty}
\shift{i}$.

\medskip In order to multiply matrices with unbalanced row degrees, we use in
particular a technique based on \emph{partial linearization}, which can be seen
as a simplified version of the one in~\citep[Section~6]{GuSaStVa12} for the
purpose of multiplication. For a matrix $\mat{B}$ with sum of row degrees
$\xi$, meant to be the left operand in a product $\mat{B} \mat{A}$ for some
$\mat{A} \in \polMatSpace$, this technique consists in expanding the
high-degree rows of $\mat{B}$ so as to obtain a matrix $\widetilde{\mat{B}}$
with $\bigO{\rdim}$ rows and degree at most $\xi/\rdim$, then computing the
product $\widetilde{\mat{B}} \mat{A}$, and finally retrieving the actual
product $\mat{B}\mat{A}$ by grouping together the rows that have been expanded
(called \emph{partial compression} in what follows).

More precisely, let $\mat{B} \in \polMatSpace[k][\rdim]$ for some $k$ and
$\rdim$, with $\rdeg{\mat{B}} = (d_1,\ldots,d_k)$ and write $\xi = d_1 + \cdots
+ d_k$. We are given a target degree bound $d$. For each $i$, the row
$\matrow{\mat{B}}{i}$ of degree $d_i$ is expanded into $\alpha_i = 1 + \lfloor d_i /
(d+1) \rfloor$ rows $\matrow{\widetilde{\mat{B}}}{(i,0)},\ldots,\matrow{\widetilde{\mat{B}}}{(i,\alpha_i-1)}$ of
degree at most $d$, related by the identity
\begin{equation}
  \label{eqn:partial-lin}
  \matrow{\mat{B}}{i} = \matrow{\widetilde{\mat{B}}}{(i,0)} + X^{d+1}
  \matrow{\widetilde{\mat{B}}}{(i,1)} + \cdots + X^{(\alpha_i-1) (d+1)}
  \matrow{\widetilde{\mat{B}}}{(i,\alpha_i-1)}.
\end{equation}
Then, the expanded matrix $\widetilde{\mat{B}}$ has $\sum_{1\le i\le k}
\alpha_i \le k + \xi / (d+1)$ rows $\matrow{\widetilde{\mat{B}}}{(i,j)}$. We
will mainly use this technique for $k \le \rdim$ and $d = \lfloor \xi / \rdim
\rfloor$ or $d = \lceil \xi / \rdim \rceil$, in which case
$\widetilde{\mat{B}}$ has fewer than $2\rdim$ rows. The partial compression is
the computation of the row $i$ of the product $\mat{B} \mat{A}$ from the rows
$(i,0), \ldots, (i,\alpha_i-1)$ of $\widetilde{\mat{B}} \mat{A}$ using the
formula in~\eqref{eqn:partial-lin}. 

\begin{proposition}
  \label{prop:unbalanced-polmatmul}
  Let $\mat{A}$ and $\mat{B}$ in $\polMatSpace[\rdim]$, and $\shifts[d] =
  \rdeg{\mat{A}}$. Let $\xi \ge \rdim$ be an integer such that
  $\sshifts[{\shifts[d]}] \le \xi$ and
  $\sshifts[{\rdeg[{\shifts[d]}]{\mat{B}}}] \le \xi$. Then, the product
  $\mat{B}\mat{A}$ can be computed using 
  \begin{align*}
    \bigO{ & \polmatmultimePrime{\rdim,\xi/\rdim} }  \\
    & \subseteq \bigO{ \rdim^{\expmatmul-1} \polmultime{\xi} } & \text{if } \expmatmul>2 \\
    & \subseteq \bigO{ \rdim \polmultime{\xi} \log(\rdim) } & \text{if } \expmatmul=2
  \end{align*}
  operations in $\field$. 
\end{proposition}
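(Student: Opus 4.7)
My plan is to describe the algorithm \Cref{algo:unbalanced-polmatmul}, which combines the partial linearization technique recalled at the beginning of this section with a dyadic decomposition of the inner dimension of the product. The starting observation is that $\shifts[d] \ge \bzero$, so the hypothesis $\sshifts[{\rdeg[{\shifts[d]}]{\mat{B}}}] \le \xi$ entails $\sshifts[{\rdeg{\mat{B}}}] \le \xi$; hence, with target degree $d = \lceil \xi/\rdim \rceil$, the row-wise partial linearization applied to $\mat{B}$ yields a matrix $\widetilde{\mat{B}}$ with at most $2\rdim$ rows, each of degree at most $d$, from which $\mat{B}\mat{A}$ is recovered via partial compression at a cost linear in the output size.

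I then partition the indices $\{1,\ldots,\rdim\}$, which correspond simultaneously to the rows of $\mat{A}$ and to the columns of $\widetilde{\mat{B}}$, dyadically with respect to the row degrees of $\mat{A}$: for $i \ge 1$ put $G_i = \{j : 2^{i-1}d < d_j \le 2^i d\}$, and $G_0 = \{j : d_j \le d\}$. The hypothesis $\sshifts[{\shifts[d]}] \le \xi = \rdim d$ implies $|G_i| \le 2\rdim/2^i$. Denoting by $\widetilde{\mat{B}}^{(i)}$ and $\mat{A}^{(i)}$ the restrictions of $\widetilde{\mat{B}}$ and $\mat{A}$ to columns, respectively rows, indexed by $G_i$, we write $\widetilde{\mat{B}}\mat{A} = \sum_{i} \widetilde{\mat{B}}^{(i)} \mat{A}^{(i)}$ as a sum of $\bigO{\log \rdim}$ rectangular sub-products; at level $i$, the inner dimension is $\bigO{\rdim/2^i}$ and all entries have degree at most $2^i d$. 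Following~\citep[Section~3.6]{ZhLaSt12}, each such sub-product is computed by regrouping the $\bigO{\rdim}$ rows of $\widetilde{\mat{B}}^{(i)}$ into $\bigO{\rdim/2^i}$ packs, compatibly with the linearization of the preceding step, thereby reducing it to $\bigO{2^i}$ essentially square polynomial matrix multiplications of dimension $\rdim/2^i$ and degree $\bigO{2^i d}$.

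Summing the costs $\bigO{2^i \polmatmultime{\rdim/2^i, 2^i d}}$ for $i = 0, \ldots, \lceil \log \rdim \rceil$ yields the bound $\polmatmultimePrime{\rdim, \xi/\rdim}$ by definition of $\polmatmultimePrime$ in \Cref{dfn:polmatmul}, and the two explicit asymptotic estimates then follow from the upper bounds for $\polmatmultimePrime$ established in \Cref{app:cost-mult}. The main obstacle lies in the regrouping step at each dyadic level: a brute-force block decomposition of the level-$i$ rectangular product into square blocks of dimension $\rdim/2^i$ would require $\Theta(4^i)$ multiplications, which for $2 < \expmatmul < 3$ would add a prohibitive factor $\rdim^{3-\expmatmul}$. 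Exploiting the structure of $\widetilde{\mat{B}}$ inherited from its initial partial linearization is precisely what allows one to save a factor $2^i$ and attain the stated cost.
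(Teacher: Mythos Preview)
Your overall decomposition---dyadic partition of the inner dimension according to $\rdeg{\mat{A}}$, yielding $\bigO{2^i}$ products of dimension $\rdim/2^i$ and degree $\bigO{2^i\xi/\rdim}$ at level $i$---is the right shape, and matches the approach in \citep[Section~3.6]{ZhLaSt12} and in the paper. But the way you handle the row dimension of the left factor is different from the paper's proof, and the crucial ``regrouping'' step you rely on does not work as stated.

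Your plan is to linearize $\mat{B}$ \emph{globally} first (target degree $d=\lceil\xi/\rdim\rceil$), obtaining $\widetilde{\mat{B}}$ with $\bigO{\rdim}$ rows of degree $\le d$, and then at level $i$ to repack the $\bigO{\rdim}$ rows of $\widetilde{\mat{B}}^{(i)}$ into $\bigO{\rdim/2^i}$ rows of degree $\bigO{2^i d}$. This repacking cannot be done without losing information: packing $2^i$ rows of degree $\le d$ via $X$-shifts with gap $d+1$ produces a row that, when multiplied by $\mat{A}^{(i)}$ of degree $\le 2^i d$, yields overlapping degree ranges, so the individual row-products cannot be recovered. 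And undoing the linearization ``compatibly'' (i.e.\ recombining only those linearized rows that came from the same original row of $\mat{B}$) still leaves at least $\rdim$ rows, since every original row contributes at least one. The structure inherited from the global linearization does not by itself give the factor-$2^i$ saving you need.

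What the paper does instead is partition \emph{first}, and only then linearize, separately at each level. The key point---which you discard by weakening $\sshifts[{\rdeg[{\shifts[d]}]{\mat{B}}}]\le\xi$ to $\sshifts[{\rdeg{\mat{B}}}]\le\xi$ in your opening sentence---is that the \emph{shifted} constraint directly bounds the number of nonzero rows of the level-$i$ column block $\mat{B}_i$ of $\mat{B}$: every nonzero row of $\mat{B}_i$ has $\shifts[d]_i$-row degree exceeding $2^{i-1}\xi/\rdim$ (because $\min(\shifts[d]_i)>2^{i-1}\xi/\rdim$), so there are fewer than $\rdim/2^{i-1}$ of them. One then discards the zero rows and partially linearizes the remaining $\bigO{\rdim/2^i}$ rows with target degree $2^i\xi/\rdim$; since their unshifted row-degree sum is still $\le\xi$, the linearized matrix has $\bigO{\rdim/2^i}$ rows. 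This is how the level-$i$ product becomes $\bigO{\rdim/2^i}\times\bigO{\rdim/2^i}\times\rdim$ in degree $\bigO{2^i\xi/\rdim}$, hence $\bigO{2^i\,\polmatmultime{\rdim/2^i,2^i\xi/\rdim}}$ operations.
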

\begin{proof}
  In this proof, we use notation from \Cref{algo:unbalanced-polmatmul}. The
  correctness of this algorithm follows from the identity $\mat{B}\mat{A} =
  \mat{\hat{B}}\mat{\hat{A}} = \mat{B}_0\mat{A}_0 + \mat{B}_1\mat{A}_1 + \cdots
  + \mat{B}_\ell \mat{A}_\ell$. In what follows we focus on proving the cost
  bound $\bigO{\polmatmultimePrime{\rdim,\xi/\rdim} }$; the announced upper
  bounds on this quantity follow from \Cref{lem:polmatmul-bound}.

  We start with Step~\textbf{6}, which adds the $\rdim \times \rdim$ matrices
  $\mat{P}_i = \mat{B}_i \mat{A}_i$ obtained after the first five steps. For
  each $i$ in $\{0,\ldots,\ell\}$, we have $\rdeg{\mat{P}_i} \le
  \rdeg{\mat{B}\mat{A}} \le \rdeg[{\shifts[d]}]{\mat{B}}$ componentwise, hence
  $\sshifts[{\rdeg{\mat{P}_i}}] \le \xi$. Recalling that $\ell = \lceil
  \log(\rdim) \rceil$, the sum at Step~\textbf{6} thus uses $\bigO{\rdim \xi
  \ell} = \bigO{\rdim \xi \log(\rdim)}$ additions in $\field$. On the other
  hand, by definition of $\polmatmultimePrime{\cdot,\cdot}$, the trivial lower
  bound $\polmatmultime{\cdim,d} \ge \cdim^2 d$ for any $\cdim,d$ implies that
  $\rdim \xi \log(\rdim) \in \bigO{\polmatmultimePrime{\rdim,\xi/\rdim}}$.

  Now we study the \algoword{for} loop. We remark that only Step~\textbf{5.c}
  involves arithmetic operations in $\field$. Therefore the main task is to
  give bounds on the dimensions and degrees of the matrices we multiply at
  Step~\textbf{5.c}. For $i$ in $\{0,\ldots,\ell\}$, the column dimension of
  $\mat{A}_i$ is $\rdim$ and the row dimension of $\mat{B}^\emptyset_i$ is
  $k_i$. We further denote by $\rdim_i$ the row dimension of $\mat{A}_i$ (and
  column dimension of $\widetilde{\mat{B}}^\emptyset_i$), and we write
  $[\shift[d]{\pi(1)},\ldots,\shift[d]{\pi(\rdim)}] = [ \shifts[d]_0 | \cdots |
  \shifts[d]_\ell ]$ where the sizes of $\shifts[d]_0,\ldots,\shifts[d]_\ell$
  correspond to those of the blocks of $\mat{\hat{B}}$ as in Step~\textbf{4} of
  the algorithm.

  First, let $i=0$. Then, $\mat{A}_0$ is $\rdim_0 \times \rdim$ of degree at
  most $\xi/\rdim$ and $\mat{B}_0$ is $k_0 \times \rdim_0$ with $\rdim_0 \le
  \rdim$ and $k_0 \le \rdim$ (we note that for $i=0$ these may be equalities
  and thus one does not need to discard the zero rows of $\mat{B}_0$ to obtain
  efficiency). Besides, we have the componentwise inequality $\rdeg{\mat{B}_0}
  \le \rdeg[{\shifts[d]_0}]{\mat{B}_0} \le \rdeg[{\shifts[d]}]{\mat{B}}$, so
  that $\sshifts[{\rdeg{\mat{B}_0}}] \le
  \sshifts[{\rdeg[{\shifts[d]}]{\mat{B}}}] \le \xi$. Then,
  $\mat{B}^\emptyset_0$ can be partially linearized into a matrix
  $\widetilde{\mat{B}}^\emptyset_0$ which has at most $2\rdim$ rows and degree
  at most $\xi/\rdim$, and the computation at Step~\textbf{5.c} for $i=0$ uses
  $\bigO{\polmatmultime{\rdim,\xi/\rdim}}$ operations.

  Now, let $i\in \{1,\ldots,\ell\}$. By assumption, the sum of the row degrees
  of $\mat{A}$ does not exceed $\xi$: since all rows in $\mat{A}_i$ have degree
  more than $2^{i-1}\xi/\rdim$, this implies that $\rdim_i < \rdim / 2^{i-1}$.
  Besides, since $\min(\shifts[d]_i) > 2^{i-1}\xi/\rdim$, we obtain that every
  nonzero row of $\mat{B}_i$ has $\shifts[d]_i$-row degree more than
  $2^{i-1}\xi/\rdim$. Then, $\xi \ge \sshifts[{\rdeg[{\shifts[d]}]{\mat{B}}}]
  \ge \sshifts[{\rdeg[{\shifts[d]_i}]{\mat{B}_i^\emptyset}}] > k_i
  2^{i-1}\xi/\rdim$ implies that $k_i < \rdim/2^{i-1}$. Furthermore, since we
  have $\sshifts[{\rdeg{\mat{B}^\emptyset_i}}] = \sshifts[{\rdeg{\mat{B}_i}}]
  \le \xi$, the partial linearization at Step \textbf{5.b} can be done by at
  most doubling the number of rows of $\mat{B}^\emptyset_i$, producing
  $\widetilde{\mat{B}}^\emptyset_i$ with fewer than $2 \rdim/2^{i-1}$ rows and
  of degree at most $2^{i-1}\xi/\rdim$. To summarize: $\mat{A}_i$ has $\rdim$
  columns, $\rdim_i < \rdim/2^{i-1}$ rows, and degree at most $2^i\xi/\rdim$;
  $\widetilde{\mat{B}}^\emptyset_i$ has fewer than $2 \rdim/2^{i-1}$ rows, and
  degree less than $2^i\xi/\rdim$.  Then, the computation of
  $\widetilde{\mat{P}}^\emptyset_i$ uses $\bigO{ 2^i
    \polmatmultime{\rdim/2^{i-1}, 2^i\xi/\rdim} }$ operations in $\field$.
    Thus, overall the \algoword{for} loop uses $\bigO{
      \polmatmultimePrime{\rdim, \xi/\rdim} }$ operations in $\field$.
\end{proof}

\begin{figure}
\centering
\fbox{\begin{minipage}{12.5cm}
\begin{algorithm} [\algoname{UnbalancedMultiplication}]
\label{algo:unbalanced-polmatmul}

~\smallskip\\
\algoword{Input:}
\begin{itemize}
  \item polynomial matrices $\mat{A}$ and $\mat{B}$ in $\polMatSpace[\rdim]$,
  \item an integer $\xi$ with $\xi \ge \rdim$, $\sshifts[{\shifts[d]}]
    \le \xi$, and $\sshifts[{\rdeg[{\shifts[d]}]{\mat{B}}}] \le \xi$ where
    $\shifts[d] = \rdeg{\mat{A}}$.
\end{itemize}

\smallskip
\algoword{Output:} 
the product $\mat{P} = \mat{B}\mat{A}$.

\smallskip
\begin{enumerate}[{\bf 1.}]
  \item $\pi$ $\leftarrow$ a permutation of $\{1,\ldots,\rdim\}$ such that
    $(\shift[d]{\pi(1)},\ldots,\shift[d]{\pi(\rdim)})$ is non-decreasing
  \item $\mat{\hat{A}} \leftarrow \pi \mat{A}$ and $\mat{\hat{B}} \leftarrow \mat{B} \pi^{-1}$
  \item define $\ell = \lceil \log(\rdim) \rceil$ and the row blocks
    $\mat{\hat{A}} = \trsp{[ \trsp{\mat{A}_0} | \trsp{\mat{A}_1} | \cdots |
    \trsp{\mat{A}_\ell} ]}$, \\ where the rows in $\mat{A}_0$ have row degree
    at most $\xi/\rdim$ and for $i=1,\ldots,\ell$ the rows in $\mat{A}_i$ have
    row degree in $\{2^{i-1}\xi/\rdim+1, \dots, 2^i\xi/\rdim\}$
  \item define $\mat{\hat{B}} = 
    [ \mat{B}_0 | \mat{B}_1 | \cdots | \mat{B}_\ell ]$
    the corresponding column blocks of $\mat{\hat{B}}$
  \item \algoword{For} $i$ \algoword{from} $0$ \algoword{to} $\ell$:
    \begin{enumerate}[{\bf a.}]
      \item Read $r_1,\ldots,r_{k_i}$ the indices of the nonzero rows in $\mat{B}_i$ \\
        and define $\mat{B}^{\emptyset}_i$ the submatrix of $\mat{B}_i$ obtained
        by removing the zero rows
      \item 
        $\widetilde{\mat{B}}^\emptyset_i \leftarrow$ partial
        linearization of $\mat{B}^\emptyset_i$ with
        $\deg(\widetilde{\mat{B}}^\emptyset_i) \le 2^i \xi/\rdim$
      \item 
        $\widetilde{\mat{P}}^\emptyset_i \leftarrow
        \widetilde{\mat{B}}^\emptyset_i \mat{A}_i$
      \item Perform the partial compression
        $\mat{P}^\emptyset_i \leftarrow \widetilde{\mat{P}}^\emptyset_i$
      \item Re-introduce the zero rows to obtain $\mat{P}_i$, which is $\mat{B}_i \mat{A}_i$ \\
        (its rows at indices $r_1,\ldots,r_{k_i}$ are those of
        $\mat{P}^\emptyset_i$, its other rows are zero)
    \end{enumerate}
  \item \algoword{Return} $\mat{P} = \mat{P}_0 + \mat{P}_1 + \cdots + \mat{P}_\ell$
\end{enumerate}
\end{algorithm}
\end{minipage}}
\end{figure}

\section{Fast shifted reduction of a reduced matrix}\label{sec:change-shift} 

In our interpolation \cref{algo:mib}, a key ingredient to achieve efficiency is
to control the size of the intermediate interpolation bases that are computed
in recursive calls. For this, we compute all minimal bases for the
\emph{uniform} shift and then recover the \emph{shifted} minimal basis using
what we call a change of shift, that we detail in this section. More precisely,
we are interested in the ability to transform an $\shifts$-reduced matrix
$\mat{P} \in \polMatSpace[\rdim]$ with full rank into a unimodularly equivalent
matrix that is $(\shifts+\shifts[t])$-reduced for some given shift $\shifts[t]
\in \shiftSpace$: this is the problem of polynomial lattice reduction for the
shift $\shifts+\shifts[t]$, knowing that the input matrix is already reduced
for the shift $\shifts$.

Compared to a general row reduction algorithm such as the one
in~\citep{GuSaStVa12}, our algorithm achieves efficient computation with
regards to the average row degree of the input $\mat{P}$ rather than the
maximum degree of the entries of $\mat{P}$. The main consequence of having an
\emph{$\shifts$-reduced} input $\mat{P}$ is that no high-degree cancellation
can occur when performing unimodular transformations on the rows of $\mat{P}$,
which is formalized as the predictable-degree property~\citep[Theorem
6.3-13]{Kailath80}. In particular, the unimodular transformation between
$\mat{P}$ and an $(\shifts+\shifts[t])$-reduced equivalent matrix has small row
degree, and the proposition below shows how to exploit this to solve our
problem via the computation of a shifted minimal (left) nullspace basis of some
$2\rdim \times \rdim$ polynomial matrix. We remark that similar ideas about the
use of minimal nullspace bases to compute reduced forms were already
in~\citep[Section~3]{BeHuPr88}.

\begin{lemma}
	\label{lem:change-shift}
  Let $\shifts\in\shiftSpace$ and $\shifts[t]\in\shiftSpace$, let
  $\mat{P}\in\polMatSpace[\rdim]$ be $\shifts$-reduced and nonsingular, and
  define $\shifts[d]=\rdeg[\shifts]{\intBasis}$.  Then $\mat{R} \in
  \polMatSpace[\rdim]$ is an $(\shifts+\shifts[t])$-reduced form of $\mat{P}$
  with unimodular transformation $\mat{U} = \mat{R}
  \mat{P}^{-1}\in\polMatSpace[\rdim]$ if and only if $[ \mat{U} |
  \mat{R}\shiftMat{\shifts} ]$ is a $(\shifts[d],\shifts[t])$-minimal nullspace
  basis of $\trsp{[\shiftMat{\shifts}\trsp{\mat{P}} | -\idMat[\rdim]]}$.
\end{lemma}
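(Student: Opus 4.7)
The plan has two pieces: match the notion of ``basis'' on both sides by describing the left nullspace $\mathcal{N}$ explicitly, then match the notion of ``reduced'' via the predictable-degree property. The matrix $\trsp{[\shiftMat{\shifts}\trsp{\mat{P}} \mid -\idMat[\rdim]]}$ is the $2\rdim \times \rdim$ block matrix whose top $\rdim$ rows form $\mat{P}\shiftMat{\shifts}$ and whose bottom $\rdim$ rows form $-\idMat[\rdim]$; a row $[\row{u} \mid \row{r}] \in \polMatSpace[1][2\rdim]$ lies in its left nullspace iff $\row{r} = \row{u}\mat{P}\shiftMat{\shifts}$. Thus the map $\row{u} \mapsto [\row{u} \mid \row{u}\mat{P}\shiftMat{\shifts}]$ is a $\polRing$-module isomorphism $\polMatSpace[1][\rdim] \to \mathcal{N}$. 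Consequently $[\mat{U} \mid \mat{R}\shiftMat{\shifts}]$ is a basis of $\mathcal{N}$ iff $\mat{R} = \mat{U}\mat{P}$ and $\mat{U}$ is unimodular, which, using nonsingularity of $\mat{P}$, is the condition that $\mat{U} = \mat{R}\mat{P}^{-1}$ is unimodular.

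For the ``reducedness'' half, the key ingredient is the row-degree identity
\[
\rdeg[{(\shifts[d], \shifts[t])}]{[\row{u} \mid \row{u}\mat{P}\shiftMat{\shifts}]} \;=\; \rdeg[{\shifts+\shifts[t]}]{\row{u}\mat{P}}
\quad\text{for every } \row{u} \in \polMatSpace[1][\rdim].
\]
By definition of the column-weighted row degree, the left-hand side equals $\max(\rdeg[{\shifts[d]}]{\row{u}},\, \rdeg[{\shifts+\shifts[t]}]{\row{u}\mat{P}})$, so I need only show that the first argument does not exceed the second. Here I would invoke the predictable-degree property: since $\mat{P}$ is $\shifts$-reduced with $\rdeg[\shifts]{\mat{P}} = \shifts[d]$, one has $\rdeg[{\shifts[d]}]{\row{u}} = \rdeg[\shifts]{\row{u}\mat{P}}$, and because $\shifts[t] \ge \unishift$ entrywise this is bounded above by $\rdeg[{\shifts+\shifts[t]}]{\row{u}\mat{P}}$. (The hypothesis $\min(\shifts)=0$ which makes $\shifts[t]$ nonnegative is implicit in $\shifts[t] \in \shiftSpace$.)

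To conclude, I compare sorted tuples of row degrees. As $\mat{U}$ ranges over unimodular matrices, $\mat{R} = \mat{U}\mat{P}$ ranges over all bases of the row space of $\mat{P}$, and by the identity above the $(\shifts[d],\shifts[t])$-row degrees of the rows of $[\mat{U} \mid \mat{R}\shiftMat{\shifts}]$ coincide row by row with the $(\shifts+\shifts[t])$-row degrees of the rows of $\mat{R}$. Hence a basis $[\mat{U} \mid \mat{R}\shiftMat{\shifts}]$ of $\mathcal{N}$ achieves the componentwise minimal sorted row-degree tuple (i.e.\ is $(\shifts[d],\shifts[t])$-reduced, equivalently $(\shifts[d],\shifts[t])$-minimal) iff $\mat{R}$ achieves the componentwise minimal sorted $(\shifts+\shifts[t])$-row degrees among bases of the row space of $\mat{P}$, i.e.\ $\mat{R}$ is a $(\shifts+\shifts[t])$-reduced form of $\mat{P}$. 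The main obstacle is placing the predictable-degree property in the right spot: one must apply it with the shift $\shifts$ for which $\mat{P}$ is reduced (not $\shifts+\shifts[t]$, for which $\mat{P}$ need not be reduced), and then close the gap using monotonicity of the row degree in the shift.
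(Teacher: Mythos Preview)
Your proof is correct and rests on the same key ingredient as the paper's---the predictable-degree property yielding $\rdeg[{\shifts[d]}]{\row{u}}=\rdeg[\shifts]{\row{u}\mat{P}}\le \rdeg[{\shifts+\shifts[t]}]{\row{u}\mat{P}}$---but the organization differs. The paper first reduces to the uniform shift $\shifts=\unishift$ and then handles the two implications separately: the forward one directly, and the backward one by comparing $[\mat{U}\mid\mat{R}]$ to an auxiliary pair $[\widehat{\mat{U}}\mid\widehat{\mat{R}}]$ built from a known $(\shifts+\shifts[t])$-reduced form and invoking that two reduced bases of the same module share the same shifted row degrees up to permutation. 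You instead establish the row-wise degree identity once and for all (for general $\shifts$ and arbitrary $\row{u}$) and then appeal directly to the minimality characterization of reducedness to obtain both directions at once. Your route is a bit more streamlined; the paper's has the minor expository advantage of isolating the reduction to $\shifts=\unishift$ explicitly.
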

\begin{proof}
  We first assume that the result holds for the uniform shift $\shifts =
  \unishift\in\shiftSpace$, and we show that the general case
  $\shifts\in\shiftSpace$ follows. Indeed, considering the $\shifts[0]$-reduced
  matrix $\mat{P}\shiftMat{\shifts}$ we have
  $\shifts[d]=\rdeg[\shifts]{\mat{P}}=\rdeg{\mat{P}\shiftMat{\shifts}}$.  Hence
  $[ \mat{U} | \mat{R} ]$ is a $(\shifts[d],\shifts[t])$-minimal nullspace
  basis of $\trsp{[\shiftMat{\shifts}\trsp{\mat{P}} | -\idMat[\rdim]]}$ if and
  only if $\mat{R}$ is a $\shifts[t]$-reduced form of
  $\mat{P}\shiftMat{\shifts}$ with unimodular transformation $\mat{U}$ such
  that $\mat{U} \mat{P}\shiftMat{\shifts} = \mat{R}$; that is, if and only if
  $\mat{R}\shiftMat{-\shifts}\in\polMatSpace[\rdim]$ is a
  $(\shifts+\shifts[t])$-reduced form of $\mat{P}$ with unimodular
  transformation $\mat{U}$ such that $\mat{U}\mat{P} =
  \mat{R}\shiftMat{-\shifts}$.

  Let us now prove the proposition for the uniform shift $\shifts =
  \shifts[0]$.  First, we assume that $\mat{R}\in\polMatSpace[\rdim]$ is a
  $\shifts[t]$-reduced form of $\mat{P}$ with unimodular transformation
  $\mat{U}$.  From $\mat{U}\mat{P} = \mat{R}$ it follows that the rows of
  $[\mat{U} | \mat{R} ]$ are in the nullspace of
  $\trsp{[\trsp{\mat{P}} | -\idMat[\rdim]]}$.  Writing
  $[\mat{N}|\,*\,]$ with $\mat{N}\in\polMatSpace[\rdim]$ to denote an arbitrary
  basis of that nullspace, we have $[\mat{U} | \mat{R} ] = \mat{V}
  [\mat{N}|\,*\,]$ for some $\mat{V} \in \polMatSpace[\rdim]$ and thus $\mat{U}
  = \mat{V} \mat{N}$. Since $\mat{U}$ is unimodular, $\mat{V}$ is unimodular
  too and $[\mat{U} | \mat{R} ]$ is a basis of the nullspace of
  $\trsp{[\trsp{\mat{P}} | -\idMat[\rdim]]}$.  It remains to check
  that $[\mat{U} | \mat{R} ]$ is $(\shifts[d],\shifts[t])$-reduced.  Since
  $\mat{P}$ is reduced, we have $\rdeg[{\shifts[d]}\,]{\mat{U}}=
  \rdeg{\mat{U}\mat{P}} = \rdeg{\mat{R}}$ by the predictable-degree
  property~\citep[Theorem 6.3-13]{Kailath80} and, using $\shifts[t] \ge
  \shifts[0]$, we obtain $\rdeg[{\shifts[d]}\,]{\mat{U}} \le
  \rdeg[{\shifts[t]}\,]{\mat{R}}$.  Hence
  $\rdeg[{(\shifts[d],\shifts[t])}]{[\mat{U} | \mat{R} ]} =
  \rdeg[{\shifts[t]}\,]{\mat{R}}$ and, since $\mat{R}$ is
  $\shifts[t]$-reduced, this implies that $[\mat{U} | \mat{R} ]$ is
  $(\shifts[d],\shifts[t])$-reduced.

  Now, let $[ \mat{U} | \mat{R} ]$ be a $(\shifts[d],\shifts[t])$-minimal
  nullspace basis of $\trsp{[\trsp{\mat{P}} | -\idMat[\rdim]]}$.  First, we
  note that $\mat{U}$ satisfies $\mat{U}= \mat{R}\mat{P}^{-1} $.  It remains to
  check that $\mat{U}$ is unimodular and that $\mat{R}$ is
  $\shifts[t]$-reduced.  To do this, let $\widehat{\mat{R}}$ denote an
  arbitrary $\shifts[t]$-reduced form of $\mat{P}$ and let $\widehat{\mat{U}}=
  \widehat{\mat{R}} \mat{P}^{-1}$ be the associated unimodular transformation.
  From the previous paragraph, we know that $[ \widehat{\mat{U}} |
  \widehat{\mat{R}} ]$ is a basis of the nullspace of $\trsp{[\trsp{\mat{P}} |
  -\idMat[\rdim]]}$, and since by definition $[ \mat{U} | \mat{R} ]$ is also
  such a basis, we have $[\mat{U} | \mat{R}] = \mat{W} [\widehat{\mat{U}} |
  \widehat{\mat{R}} ]$ for some unimodular matrix
  $\mat{W}\in\polMatSpace[\rdim]$. In particular, $\mat{U} =
  \mat{W}\widehat{\mat{U}}$ is unimodular.  Furthermore, the two unimodularly
  equivalent matrices $[\mat{U} | \mat{R}]$ and $[\widehat{\mat{U}} |
  \widehat{\mat{R}}]$ are $(\shifts[d],\shifts[t])$-reduced, so that they share
  the same shifted row degree up to permutation (see for instance~\citep[Lemma
  6.3-14]{Kailath80}). Now, from the previous paragraph, we know that
  $\rdeg[{(\shifts[d],\shifts[t])}]{[\widehat{\mat{U}} | \widehat{\mat{R}}]} =
  \rdeg[{\shifts[t]}\,]{\widehat{\mat{R}}}$, and similarly, having $\mat{P}$
  reduced, $\mat{U}\mat{P} = \mat{R}$, and $\shifts[t]\ge
  \shifts[0]$ imply that $\rdeg[{(\shifts[d],\shifts[t])}]{[\mat{U} | \mat{R}]}
  = \rdeg[{\shifts[t]}\,]{\mat{R}}$.  Thus $\rdeg[{\shifts[t]}\,]{\mat{R}}$ and
  $\rdeg[{\shifts[t]}\,]{\widehat{\mat{R}}}$ are equal up to permutation, and
  combining this with the fact that $\mat{R}=\mat{W}\widehat{\mat{R}}$ where
  $\widehat{\mat{R}}$ is $\shifts[t]$-reduced and $\mat{W}$ is unimodular, we
  conclude that $\mat{R}$ is $\shifts[t]$-reduced.
\end{proof}

This leads to \Cref{algo:shift}, and in particular such a change of shift can
be computed efficiently using the minimal nullspace basis algorithm of
\citet{ZhLaSt12}.

	\begin{figure}[h!]
	\centering
	\fbox{\begin{minipage}{12.5cm}
  \begin{algorithm} [\algoname{Shift}]
	\label{algo:shift}

  ~\smallskip\\
  \algoword{Input:}
  \begin{itemize}
    \item a matrix $\intBasis\in\intSpace$ with full rank,
    \item two shifts $\shifts,\shifts[t]\in\shiftSpace$ such that $\intBasis$
      is $\shifts$-reduced.
  \end{itemize}

  \smallskip
	\algoword{Output:} 
		an $(\shifts+\shifts[t])$-reduced form of $\intBasis$.

  \smallskip
	\begin{enumerate}[{\bf 1.}]
		\item $\shifts[d] \leftarrow \rdeg[\shifts]{\intBasis}$
    \item $[\mat{U} | \mat{R}] \leftarrow$ $\algoname{MinimalNullspaceBasis}(
      \trsp{[\shiftMat{\shifts}\trsp{\mat{P}} | -\idMat[\rdim]]},
      (\shifts[d],\shifts[t]) )$
    \item \algoword{Return} $\mat{R}\shiftMat{-\shifts}$
	\end{enumerate}
	\end{algorithm}
	\end{minipage}}
	\end{figure}

\begin{proposition} \label{prop:change_shift}
  Let $\shifts\in\shiftSpace$ and $\shifts[t]\in\shiftSpace$, let
  $\mat{P}\in\polMatSpace[\rdim]$ have full rank and be $\shifts$-reduced, and
  define $\shifts[d]=\rdeg[\shifts]{\mat{P}}$. We write $\xi$ to denote a
  parameter such that $\xi \ge \rdim$ and $\sshifts[{\shifts[d]}] +
  \sshifts[{\shifts[t]}] \le \xi$.  Then, an $(\shifts+\shifts[t])$-reduced
  form $\mat{R}\in\polMatSpace[\rdim]$ of $\mat{P}$ and the corresponding
  unimodular transformation $\mat{U} = \mat{R}\mat{P}^{-1} \in
  \polMatSpace[\rdim]$ can be computed using
  \begin{align*}
    \bigO{ & \polmatmultimePrimeDnc{\rdim,\xi/\rdim} +
    \polmatmultimeBisDnc{\rdim,\xi/\rdim} }   \\
    & \subseteq\; \bigO{ \rdim^{\expmatmul-1} \polmultime{\xi} +
    \rdim^{\expmatmul} \polmultime{\xi/\rdim} \log(\xi/\rdim) } & \text{if } \expmatmul>2 \\
    & \subseteq\; \bigO{ \rdim \polmultime{\xi} \log(\rdim)^2 + \rdim^2
    \polmultime{\xi/\rdim} \log(\xi/\rdim) \log(\rdim) } & \text{if } \expmatmul=2
  \end{align*}
  operations in $\field$. Besides, we have
  $\sshifts[{\rdeg[{\shifts+\shifts[t]}]{\mat{R}}}] = \sshifts[{\shifts[d]}] +
  \sshifts[{\shifts[t]}]$.
\end{proposition}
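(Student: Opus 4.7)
The plan is to verify that \algoname{Shift} correctly implements the change of shift and fits within the announced cost, by combining \Cref{lem:change-shift} with an analysis of the minimal nullspace basis subroutine called at Step~\textbf{2}. Correctness is immediate from \Cref{lem:change-shift}: since $\mat{P}$ is $\shifts$-reduced with full rank, the lemma characterizes the matrices $[\mat{U}\,|\,\widehat{\mat{R}}]$ with $\widehat{\mat{R}}=\mat{R}\shiftMat{\shifts}$ as exactly the $(\shifts[d],\shifts[t])$-minimal nullspace bases of $\mat{K}=\trsp{[\shiftMat{\shifts}\trsp{\mat{P}}\,|\,-\idMat[\rdim]]}$. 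Hence Step~\textbf{3} outputs $\mat{R}=\widehat{\mat{R}}\shiftMat{-\shifts}$, an $(\shifts+\shifts[t])$-reduced form of $\mat{P}$ with unimodular transformation $\mat{U}=\mat{R}\mat{P}^{-1}$.

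For the row-degree identity, I use the standard fact that a nonsingular $\shifts$-reduced matrix $\mat{A}\in\polMatSpace[\rdim]$ satisfies $\sshifts[{\rdeg[\shifts]{\mat{A}}}]=\deg(\det(\mat{A}))+\sshifts$. Applied to $\mat{P}$ this yields $\deg(\det(\mat{P}))=\sshifts[{\shifts[d]}]-\sshifts$. The proof of \Cref{lem:change-shift} ensures that $\widehat{\mat{R}}$ is $\shifts[t]$-reduced, so the same formula yields $\sshifts[{\rdeg[{\shifts[t]}]{\widehat{\mat{R}}}}]=\deg(\det(\widehat{\mat{R}}))+\sshifts[{\shifts[t]}]$. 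From $\widehat{\mat{R}}=\mat{U}\mat{P}\shiftMat{\shifts}$ with $\mat{U}$ unimodular and $\det(\shiftMat{\shifts})=\var^{\sshifts}$ we deduce $\deg(\det(\widehat{\mat{R}}))=\sshifts[{\shifts[d]}]$, hence $\sshifts[{\rdeg[{\shifts[t]}]{\widehat{\mat{R}}}}]=\sshifts[{\shifts[d]}]+\sshifts[{\shifts[t]}]$. The claimed identity then follows from $\rdeg[{\shifts+\shifts[t]}]{\widehat{\mat{R}}\shiftMat{-\shifts}}=\rdeg{\widehat{\mat{R}}\shiftMat{\shifts[t]}}=\rdeg[{\shifts[t]}]{\widehat{\mat{R}}}$.

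For the cost, the dominant work lies in the call to \algoname{MinimalNullspaceBasis} at Step~\textbf{2}. Inspecting $\mat{K}$ row by row, its first $\rdim$ rows coincide with the rows of $\mat{P}\shiftMat{\shifts}$ and thus have degrees $\shift[d]{1},\ldots,\shift[d]{\rdim}$, while its last $\rdim$ rows have degree $0$; in particular the sum of the row degrees of $\mat{K}$ equals $\sshifts[{\shifts[d]}]\le\xi$. Together with $\sshifts[{\shifts[t]}]\le\xi$, this is precisely the setting under which the nullspace basis algorithm of \citet{ZhLaSt12}, adapted with the partial linearization of \Cref{sec:cost-mult}, runs within $\bigO{\polmatmultimePrimeDnc{\rdim,\xi/\rdim}+\polmatmultimeBisDnc{\rdim,\xi/\rdim}}$ operations. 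Steps~\textbf{1} and~\textbf{3} merely read off $\shifts$-row degrees and apply a diagonal shift, contributing only $\bigO{\rdim\xi}$ operations and thus being absorbed by the previous bound. The explicit estimates in the $\expmatmul>2$ and $\expmatmul=2$ regimes then follow from \Cref{lem:polmatmulDnc-bound}.

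The main obstacle is this last step: controlling the cost of the nullspace basis algorithm when the input row degrees are unbalanced (only the average is controlled by $\xi/\rdim$) and when the target shift $\shifts[t]$ is itself unbalanced. Handling both phenomena simultaneously, via partial linearization interleaved with divide-and-conquer, is what gives rise to the two iterated sums $\polmatmultimePrimeDnc{\cdot,\cdot}$ and $\polmatmultimeBisDnc{\cdot,\cdot}$ in the cost estimate, in place of a single term of $\polmatmultime{\cdot,\cdot}$-type.
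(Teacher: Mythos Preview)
Your proof is correct and follows essentially the same approach as the paper: correctness via \Cref{lem:change-shift}, the row-degree identity via the determinantal characterization of reduced matrices, and the cost via the minimal nullspace basis algorithm of \citet{ZhLaSt12} applied to $\mat{K}$ with shift $(\shifts[d],\shifts[t])$ of total size at most $\xi$. The paper is slightly more explicit in verifying the preconditions of \algoname{MinimalNullspaceBasis} (that the shift bounds the row degrees of $\mat{K}$ componentwise, and that a preliminary row permutation makes the shift non-decreasing), but these are routine checks that your account implicitly covers.
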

\begin{proof}
  Write $\shifts[u] = (\shifts[d],\shifts[t])$ and
  $\mat{M}=\trsp{[\shiftMat{\shifts}\trsp{\mat{P}} | -\idMat[\rdim]]}$.
  According to \Cref{lem:change-shift}, \Cref{algo:shift} is
  correct: it computes $[\mat{U}|\mat{R}]$ a $\shifts[u]$-minimal nullspace
  basis of $\mat{M}$, and returns $\mat{R}\shiftMat{-\shifts}$ which is an
  $(\shifts+\shifts[t])$-reduced form of $\mat{P}$. For a fast solution, the
  minimal nullspace basis can be computed using~\citep[Algorithm 1]{ZhLaSt12},
  which we have rewritten in \Cref{app:cost-mnb}
  (\Cref{algo:mnb}) along with a detailed cost analysis.

  Here, we show that the requirements of this algorithm on its input are
  fulfilled in our context. Concerning the input matrix, we note that $\mat{M}$
  has more rows than columns, and $\mat{M}$ has full rank since by assumption
  $\mat{P}$ has full rank. Now, considering the requirement on the input shift,
  first, each element of the shift $\shifts[u]$ bounds the corresponding row
  degree of $\mat{M}$; and second, the rows of $\mat{M}$ can be permuted before
  the nullspace computation so as to have $\shifts[u]$ non-decreasing, and then
  the columns of the obtained nullspace basis can be permuted back to the
  original order. In details, we first compute $\shifts[v]$ being the tuple
  $\shifts[u]$ sorted in non-decreasing order together with the corresponding
  permutation matrix $\pi\in\matSpace[2\rdim]$ such that, when $\shifts[v]$ and
  $\shifts[u]$ are seen as column vectors in $\NN^{2\rdim\times 1}$, we have
  $\shifts[v] = \pi\shifts[u]$. Now that $\shifts[v]$ is non-decreasing and
  bounds the corresponding row degree of $\pi\mat{M}$, we compute $\mat{N}$ a
  $\shifts[v]$-minimal nullspace basis of $\pi\mat{M}$ using
  \Cref{algo:mnb}, then, $\mat{N}\pi$ is a $\shifts[u]$-minimal
  nullspace basis of $\mat{M}$. Since by assumption $|\shifts[v]| =
  \sshifts[{\shifts[d]}] + \sshifts[{\shifts[t]}] \le \xi$, the announced cost
  bound follows directly from \Cref{prop:cost-mnb} in
  \Cref{app:cost-mnb}.

  Finally, we prove the bound on the sum of the $(\shifts + \shifts[t])$-row
  degrees of $\mat{R}$. Since $\mat{P}$ is $\shifts$-reduced and $\mat{R}$ is
  $(\shifts + \shifts[t])$-reduced, we have $\sshifts[{\shifts[d]}] =
  \deg(\det(\mat{P} \shiftMat{\shifts}))$ as well as
  $\sshifts[{\rdeg[{\shifts+\shifts[t]}]{\mat{R}}}] = \deg(\det(\mat{R}
  \shiftMat{\shifts+\shifts[t]}))$ \citep[Section~6.3.2]{Kailath80}. Then, we
  have that $\sshifts[{\rdeg[{\shifts+\shifts[t]}]{\mat{R}}}] =
  \deg(\det(\mat{U} \mat{P} \shiftMat{\shifts+\shifts[t]})) = \deg(\det(\mat{P}
  \shiftMat{\shifts})) + \sshifts[{\shifts[t]}] = \sshifts[{\shifts[d]}] +
  \sshifts[{\shifts[t]}]$, which concludes the proof.
\end{proof}


\section{Computing residuals}
\label{sec:multiplication} 
        
Let $\vecSpace = \field^{1\times \order}$ and $\mulmat \in \matSpace[\order]$
be as in the introduction; in particular, we suppose that $\mulmat$ is a Jordan
matrix, given by a standard representation as in~\eqref{eq:mulmat_normal}.
Given $\evMat$ in $\vecSpace^\rdim = \evSpace{\order}$ and a matrix $\mat{P}$
in $\polMatSpace[\rdim]$, we show how to compute the product $\mat{P} \mul
\evMat \in \vecSpace^\rdim$. We will often call the result \emph{residual}, as
this is the role this vector plays in our main algorithm.

To give our complexity estimates, we will make two assumptions, namely that
$\rdim \le \order$ and that the sum of the row degrees of $\mat{P}$ is in
$\bigO{\order}$; they will both be satisfied when we apply the following
result.

\begin{proposition}
  \label{prop:update-evaluations}
  There exists an algorithm \algoname{ComputeResiduals} that computes the
  matrix $\mat{P}\mul \evMat \in \vecSpace^\rdim$. If $\order \ge \rdim$ and if
  the sum of the row degrees of $\mat{P}$ is $\bigO{\order}$, this algorithm
  uses $\bigO{ \polmatmultime{ \rdim, \order/\rdim } \log(\order/\rdim) + \rdim
  \polmultime{\order} \log(\order) }$ operations in $\field$.
\end{proposition}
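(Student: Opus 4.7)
The strategy is to exploit the Jordan structure of $\mulmat$ so as to reduce the computation of $\mat{P}\mul\evMat$ to a collection of Hermite evaluations of $\mat{P}$ at the distinct eigenvalues, followed by truncated polynomial matrix-vector products. For each Jordan block $(\evpt_k,\szbl_k)$, the identity $\jordan_k = \evpt_k\idMat[\szbl_k] + \mulshift$ together with the fact that $\mulshift$ acts on row vectors of length $\szbl_k$ by right shift shows that, upon identifying $\field^{1\times\szbl_k}$ with $\polRing/X^{\szbl_k}$, the action of $p(\jordan_k)$ on such a row vector is exactly multiplication by the Taylor polynomial $p(X+\evpt_k)\bmod X^{\szbl_k}$. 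Consequently, grouping the columns of $\evMat$ by Jordan blocks and denoting by $\vec{e}_k\in(\polRing/X^{\szbl_k})^{\rdim}$ the column vector of polynomials encoding the $k$-th block of each row of $\evMat$, the restriction of $\mat{P}\mul\evMat$ to that block equals the matrix-vector product $\mat{P}(X+\evpt_k)\cdot\vec{e}_k(X)\bmod X^{\szbl_k}$.

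I would then proceed in two stages. Stage~(I) is a Hermite evaluation producing the $\rdim\times\rdim$ matrices $\mat{P}_i := \mat{P}(X+x_i)\bmod X^{\sigma_{i,1}}$ for each distinct eigenvalue $x_i$, via the subproduct tree of $N(X)=\prod_{i=1}^{t}(X-x_i)^{\sigma_{i,1}}$, whose degree is at most $\order$; building this tree costs $\bigO{\polmultime{\order}\log(\order)}$. I would analyze the top-down evaluation in two regimes: at the uppermost $\log(\order/\rdim)$ levels, where all modular polynomials still have degree at least $\rdim$, I exploit the hypothesis $\sshifts[{\rdeg{\mat{P}}}]\in\bigO{\order}$ via partial linearization to treat $\mat{P}$ as a $\bigO{\rdim}\times\rdim$ matrix with rows of degree $\bigO{\order/\rdim}$, so that each level of reductions becomes a constant number of balanced polynomial matrix multiplications of cost $\bigO{\polmatmultime{\rdim,\order/\rdim}}$; below that depth, the modular polynomials have small degree and I revert to row-by-row scalar arithmetic, applying standard scalar Hermite evaluation to each of the $\rdim$ rows of $\mat{P}$ in $\bigO{\polmultime{\order}\log(\order)}$ per row. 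Stage~(II) then computes, for each $i$, the batched product $\mat{P}_i\cdot[\vec{e}_{k_1}\mid\cdots\mid\vec{e}_{k_{r_i}}]\bmod X^{\sigma_{i,1}}$, where $k_1,\ldots,k_{r_i}$ index the Jordan blocks with eigenvalue $x_i$; since $r_i\le\rdim$, this is a polynomial matrix multiplication whose costs sum, using super-linearity of $d\mapsto\polmatmultime{\rdim,d}/d$ and $\sum_i\sigma_{i,1}\le\order$, to a bound subsumed by that of Stage~(I).

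The main obstacle I expect is the careful amortization in the top levels of Stage~(I): one has to show that each modular reduction really can be realized, after partial linearization, by $\bigO{1}$ polynomial matrix multiplications of shape $\rdim\times\rdim$ and degree $\bigO{\order/\rdim}$, rather than by $\rdim$ independent scalar reductions, which would incur an unwanted $\rdim$-factor blow-up. The row-degree-sum hypothesis on $\mat{P}$ is precisely what enables this reduction to balanced-shape polynomial matrix multiplication. The remaining ingredients — subproduct tree construction, scalar Hermite evaluation, super-linearity of $\polmultime{\cdot}$, and summation of the per-level contributions — are standard, and yield the announced bound $\bigO{\polmatmultime{\rdim,\order/\rdim}\log(\order/\rdim)+\rdim\polmultime{\order}\log(\order)}$ by summing across the $\log(\order/\rdim)$ top levels and the remaining row-by-row phase.
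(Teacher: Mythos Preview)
Your reduction of each Jordan block to a truncated product $\mat{P}(X+\evpt_k)\cdot\vec{e}_k \bmod X^{\szbl_k}$ is correct and matches the paper's Lemma~6.1. However, the two-stage plan has a genuine gap.

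The claim ``since $r_i\le\rdim$'' in Stage~(II) is simply false: nothing bounds the number $r_i$ of Jordan blocks sharing a given eigenvalue. More importantly, Stage~(I) as described cannot meet the target cost. Consider the diagonal case with $\order$ pairwise distinct eigenvalues (so $t=\order$ and every $\sigma_{i,1}=1$). Your Stage~(I) outputs the $\order$ constant matrices $\mat{P}(x_1),\dots,\mat{P}(x_\order)$, whose total size is already $\rdim^2\order$. The target bound is $\bigO{\polmatmultime{\rdim,\order/\rdim}\log(\order/\rdim)+\rdim\polmultime{\order}\log(\order)}$, which with $\polmatmultime{\rdim,d}\in\bigO{\rdim^\omega\polmultime{d}}$ is of order $\rdim^{\omega-1}\order$ up to logarithms; for $\omega<3$ this is asymptotically smaller than $\rdim^2\order$. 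So merely writing down all the $\mat{P}_i$ already overshoots the budget, and the claimed ``$\bigO{\polmultime{\order}\log(\order)}$ per row'' for the bottom phase is off by a factor $\rdim$ (each row has $\rdim$ entries to evaluate). The attempt to recast the top-level modular reductions as balanced $\rdim\times\rdim$ polynomial matrix multiplications is also unclear: entrywise reduction modulo a scalar polynomial is not naturally a polynomial matrix product, and partial linearization of $\mat{P}$ does not turn it into one.

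The paper's algorithm avoids precisely this pitfall by \emph{not} computing $\mat{P}_i$ for every eigenvalue. It first stratifies the Jordan blocks by size class $2^k\le s<2^{k+1}$, and then, within each class, separates eigenvalues by whether their repetition count exceeds~$\rdim$. When the count exceeds~$\rdim$, there are few such eigenvalues (Lemma~6.3 bounds $\sum r_i^{(k)}s_{i,1}^{(k)}$ by~$\bigO{\order}$), so computing the shifted-and-truncated $\mat{P}_i^{(k)}$ is affordable (Lemma~6.4). When the count is at most~$\rdim$, the paper works on the \emph{input} side instead: it CRT-combines the vectors $\vec{e}_{i,j}$ across the distinct eigenvalues into a small number of polynomial columns, multiplies once by $\mat{P}$ itself (never expanding $\mat{P}$ per eigenvalue), and then reduces back modulo each $(X-\xi_i^{(k)})^{\sigma_{i,j}^{(k)}}$. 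This CRT detour is the key idea your proposal is missing; without it, the diagonal case (and more generally any instance with many distinct eigenvalues of low repetition) cannot be handled within the stated cost.
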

Remark that when the sum of the row degrees of $\mat{P}$ is $\bigO{\order}$,
storing $\mat{P}$ requires $\bigO{\rdim \order}$ elements in $\field$, so that
representing the input and output of this computation involves $\bigO{\rdim
\order}$ field elements. At best, one could thus hope for an algorithm of cost
$\bigO{\rdim \order}$. Our result is close, as we get a cost of
$\softO{\rdim^{1.38}\order}$ with the best known value of $\expmatmul$.


\subsection{Preliminaries, Chinese remaindering, and related questions}

The following lemma writes the output in a more precise manner. The proof is a
straightforward consequence of the discussion in \Cref{sec:intro} about
writing the notion of interpolant in terms of M-Pad\'e approximation.

\begin{lemma}
  \label{lem:evaluations}
  Suppose that $\mulmat$ has the form $ ( (\evpt_1,\szbl_1), \ldots,
  (\evpt_\nbbl,\szbl_\nbbl))$. Let $\mat{P} \in \polMatSpace[\rdim]$ and
  $\evMat \in \matSpace[\rdim][\order]$, and write $\evMat = [\evMat_1 | \cdots
  | \evMat_\nbbl] $ with $\evMat_j$ in $\matSpace[\rdim][\szbl_j]$ for $1\le
  j\le \nbbl$. For $1\le j\le\nbbl$, define the following matrices:
  \begin{itemize}
  \item $\evMat_{j,{\rm poly}} = \evMat_j\,
    \trsp{[1,\var,\ldots,\var^{\szbl_j-1}]} \in \polMatSpace[\rdim][1]$ is the
    column vector with polynomial entries built from the columns of $\evMat_j$,
  \item $\evMatF_{j,{\rm poly}} = \mat{P}(\var+\evpt_j)\,\evMat_{j,{\rm poly}} \bmod X^{\sigma_j}
    \in \polMatSpace[\rdim][1]$,
  \item $\evMatF_j = [\evMatF_{j,0}, \dots,
    \evMatF_{j,\szbl_j-1}] \in \matSpace[\rdim][\szbl_j]$ is the matrix whose
    columns are the coefficients of $\evMatF_{j,{\rm poly}}$ of degrees
    $0,\dots,\szbl_j-1$.
  \end{itemize}
 Then, $\mat{P}\mul\evMat = \evMatF$ with $\evMatF = [\evMatF_1 | \cdots |
 \evMatF_\nbbl] \in\matSpace[\rdim][\order]$.
\end{lemma}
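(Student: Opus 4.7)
The plan is to reduce to the case of a single Jordan block and then invoke the correspondence between the $\polRing$-action on $\vecSpace$ and multiplication modulo $X^{\szbl_j}$ in $\polRing$ that is already spelled out in the introduction.

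First I would observe that because $\mulmat$ is block diagonal with Jordan blocks $\mulmat_1,\ldots,\mulmat_\nbbl$, the matrix $p(\mulmat)$ is block diagonal with diagonal blocks $p(\mulmat_j)$ for any $p \in \polRing$. Writing $\row{e} = [\row{e}_{(1)}\,|\,\cdots\,|\,\row{e}_{(\nbbl)}] \in \vecSpace$ blockwise, this gives $p \mul \row{e} = [\row{e}_{(1)} p(\mulmat_1)\,|\,\cdots\,|\,\row{e}_{(\nbbl)} p(\mulmat_\nbbl)]$. Consequently the product $\mat{P} \mul \evMat$ decomposes column-blockwise, and it suffices to prove that the $j$-th column block of $\mat{P} \mul \evMat$ equals $\evMatF_j$ for each fixed $j$.

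Second I would fix $j$ and establish the single-block identity: for $\row{e} \in \field^{1 \times \szbl_j}$ with associated polynomial $e(X) = \sum_{k=0}^{\szbl_j-1} \row{e}_k X^k$, the row vector $\row{e}\, p(\mulmat_j)$ is the coefficient sequence of $p(X+\evpt_j)\, e(X) \bmod X^{\szbl_j}$. Writing $\mulmat_j = \evpt_j \idMat[\szbl_j] + N_j$ where $N_j$ is the $\szbl_j \times \szbl_j$ nilpotent upper shift matrix, by linearity in $p$ and the substitution $p(\mulmat_j) = p(X+\evpt_j)|_{X = N_j}$ it suffices to check that the right action of $N_j$ on $\row{e}$ corresponds to multiplication by $X$ modulo $X^{\szbl_j}$ on $e(X)$. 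This is immediate from the shape of $N_j$, and is precisely the identification of $\vecSpace$ with $\polSpace$ used in the introduction.

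Combining these facts, the $k$-th row of $\mat{P} \mul \evMat$ restricted to the $j$-th block equals $\sum_{i=1}^{\rdim} p_{k,i} \mul (\evMat_j)_{i,*}$, which by the single-block identity corresponds to the polynomial $\sum_{i=1}^{\rdim} p_{k,i}(X+\evpt_j)\, (\evMat_{j,\mathrm{poly}})_i \bmod X^{\szbl_j}$. This is the $k$-th entry of $\mat{P}(X+\evpt_j)\, \evMat_{j,\mathrm{poly}} \bmod X^{\szbl_j} = \evMatF_{j,\mathrm{poly}}$, whose coefficient of degree $\ell$ is by definition the $(k,\ell)$ entry of $\evMatF_j$. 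No real obstacle arises; the only point requiring care is pinning down the convention relating the right action of the upper shift matrix to multiplication by $X$ in $\polRing/(X^{\szbl_j})$, which is a one-line verification from the definition in~\eqref{eqn:mulshift}.
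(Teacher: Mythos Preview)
Your proposal is correct and follows essentially the same approach as the paper, which simply says the result is a straightforward consequence of the discussion in \Cref{sec:intro} identifying the $\polRing$-action on $\vecSpace$ with componentwise multiplication in $\polSpace=\prod_j \polRing/(X^{\szbl_j})$ after the shift $X\mapsto X+\evpt_j$. You have spelled out explicitly the block-diagonal reduction and the one-line verification for a single Jordan block that the paper leaves implicit.
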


To give an idea of our algorithm's behaviour, let us first consider the case
where $\mulmat$ is the upper shift matrix $\mulshift$ as
in~\eqref{eqn:mulshift}, so there is only one Jordan block whose eigenvalue is
$0$. This corresponds to having $\nbbl=1$ in the previous lemma, which thus
says that we can turn the input $\evMat$ into a vector of $\rdim$ polynomials
of degree at most $\order$, and that we simply have to left-multiply this
vector by~$\mat{P}$. Suppose furthermore that all entries in $\mat{P}$ have
degree $\bigO{\order/\rdim}$ (this is the most natural situation ensuring that
the sum of its row degrees is $\bigO{\order}$, as assumed in
\Cref{prop:update-evaluations}), so that we have to multiply an $\rdim \times
\rdim$ matrix with entries of degree $\bigO{\order/\rdim}$ by an $\rdim \times
1$ vector with entries of degree $\order$. For this, we use the partial
linearization presented in \Cref{sec:cost-mult}: we expand the right-hand side
into an $\rdim \times \rdim$ polynomial matrix with entries of degree
$\bigO{\order/\rdim}$, we multiply it by $\mat{P}$, and we recombine the
entries of the result; this leads us to the cost $\bigO{ \polmatmultime{ \rdim,
\order/\rdim } }$.

On the other side of the spectrum, we encountered the case of a diagonal matrix
$\mulmat$, with diagonal entries $\evpt_1,\dots,\evpt_\order$ (so all
$\order_i$'s are equal to $1$); suppose furthermore that these entries are
pairwise distinct. In this case, if we let $\evMat_1,\dots,\evMat_\order$ be
the columns of $\evMat$, \Cref{lem:evaluations} shows that the output is
the matrix whose columns are $\mat{P}(\evpt_1)
\evMat_1,\dots,\mat{P}(\evpt_\order) \evMat_\order$. Evaluating $\mat{P}$ at
all $\evpt_i$'s would be too costly, as simply representing all the evaluations
requires $\rdim^2 \order$ field elements; instead, we interpolate a column
vector of $\rdim$ polynomials $E_1,\dots,E_\rdim$ of degree less than $\order$
from the respective rows of $\evMat$, do the same matrix-vector product as
above, and evaluate the output at the $\evpt_i$'s; the total cost is $\bigO{
  \polmatmultime{ \rdim, \order/\rdim } + \rdim \polmultime{\order}\log(\order)
}$.

Our main algorithm generalizes these two particular processes. We now state a
few basic results that will be needed for this kind of calculation, around
problems related to polynomial modular reduction and Chinese remaindering.

\begin{lemma}\label{lemma:multirem}
  The following cost estimates hold:
  \begin{itemize}
  \item Given $p$ of degree $d$ in $\polRing$, and $x$ in $\field$, one can
    compute $p(X+x)$ in $\bigO{\polmultime{d} \log(d)}$ operations in $\field$.
  \item Given moduli $q_1,\dots,q_s$ in $\polRing$, whose sum of degrees is
    $e$, and $p$ of degree $d+e$, one can compute $p \bmod q_1,\dots,p \bmod
    q_s$ using $\bigO{\polmultime{d} + \polmultime{e} \log(e)}$ operations in
    $\field$.
  \item Conversely, Chinese remaindering modulo polynomials 
    with sum of degrees $d$ can be done in $\bigO{ \polmultime{d} \log(d)}$
    operations in $\field$.
  \end{itemize}
\end{lemma}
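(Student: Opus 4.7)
The plan is to treat the three items separately, each via a standard technique from fast polynomial arithmetic; all cost bounds will rely on the super-linearity of $\polmultime{\cdot}$.

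For the Taylor shift (first item), I would use divide-and-conquer on $d$. Writing $p(X) = p_0(X) + X^{\lceil d/2 \rceil} p_1(X)$ with each $p_i$ of degree less than $\lceil d/2 \rceil$, one has $p(X+x) = p_0(X+x) + (X+x)^{\lceil d/2 \rceil} p_1(X+x)$. After the two recursive shifts, one merges them by a single multiplication by the polynomial $(X+x)^{\lceil d/2 \rceil}$, precomputed once by repeated squaring of $(X+x)$ in total cost $\bigO{\polmultime{d}}$. The merge step costs $\bigO{\polmultime{d}}$, and the recurrence $T(d) = 2 T(d/2) + \bigO{\polmultime{d}}$ solves to $\bigO{\polmultime{d}\log(d)}$.

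For multi-remaindering (second item), the plan has two stages. First, build the binary subproduct tree of $q_1,\dots,q_s$, whose root is $Q = q_1 \cdots q_s$ of degree $e$; this costs $\bigO{\polmultime{e} \log(e)}$. Reduce $p$ modulo $Q$ by fast Euclidean division: Newton iteration on the reciprocal $Q^\star = X^e Q(1/X)$ to precision $d+1$ produces the quotient in $\bigO{\polmultime{d}}$, and recovering the remainder of degree less than $e$ needs only one low-order multiplication of cost $\bigO{\polmultime{e}}$, so this stage fits in $\bigO{\polmultime{d}+\polmultime{e}}$. Second, descend the subproduct tree from the root: at each internal node one performs two reductions whose sizes are bounded by those of its children, so a whole level costs $\bigO{\polmultime{e}}$, and the $\bigO{\log(e)}$ levels together cost $\bigO{\polmultime{e}\log(e)}$. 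Summing yields $\bigO{\polmultime{d}+\polmultime{e}\log(e)}$.

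For Chinese remaindering (third item), I would again build the subproduct tree of the $q_i$, with root $Q$ of degree $d$, in $\bigO{\polmultime{d}\log(d)}$, precompute the cofactor inverses $(Q/q_i)^{-1} \bmod q_i$ by descending the tree (cost $\bigO{\polmultime{d}\log(d)}$), and then reconstruct $p$ bottom-up via the standard CRT identity: at an internal node with children subproducts $T_L,T_R$ holding partial sums $p_L,p_R$, one forms $p' = p_L + T_L\cdot\bigl((p_R - p_L)\, T_L^{-1} \bmod T_R\bigr)$ using two multiplications whose sizes are bounded by the degrees of the sub-moduli. Each level costs $\bigO{\polmultime{d}}$, giving the announced total $\bigO{\polmultime{d}\log(d)}$.

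The only step requiring care is the initial fast division in the second item: we must ensure that dividing a degree-$(d+e)$ polynomial by one of degree $e$ costs $\bigO{\polmultime{d}+\polmultime{e}}$ rather than $\bigO{\polmultime{d+e}\log(d+e)}$. This is exactly what the reverse-Newton scheme sketched above delivers, since it never multiplies polynomials of size much larger than $\max(d,e)$.
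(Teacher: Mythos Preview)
Your proposal is correct and follows the same subproduct-tree approach as the paper, which simply refers to von zur Gathen--Gerhard for items one and three and, for item two, bounds the initial division of $p$ by $Q=q_1\cdots q_s$ by the coarser $\bigO{\polmultime{d+e}}$ before absorbing it into $\bigO{\polmultime{d}+\polmultime{e}\log(e)}$ via the case split $d\le e$ versus $d>e$. Your more detailed Newton-iteration argument for that division reaches the same endpoint by a slightly finer route.
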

\begin{proof}
  For the first and third point, we refer the reader to \citep[Chapters~9
  and~10]{vzGathen13}. For the second point, we first compute $q=q_1 \cdots
  q_s$ in time $\bigO{\polmultime{e} \log(e)}$, reduce $p$ modulo $q$ in time
  $\bigO{\polmultime{d+e}}$, and use the simultaneous modular reduction
  algorithm of~\citepalias[Corollary~10.17]{vzGathen13}, which takes time
  $\bigO{\polmultime{e}\log(e)}$. Besides, we have $\polmultime{d+e} +
  \polmultime{e}\log(e) \in \bigO{ \polmultime{d} + \polmultime{e}\log(e) }$,
  as can be seen by considering the cases $d \le e$ and $d > e$.
\end{proof}


\subsection{Main algorithm}

For a Jordan matrix $\mulmat \in \matSpace[\order]$ given in standard
representation, and for any $x$ in $\field$, we will denote by
$\rep{x}{\mulmat}$ the number of pairs $(x,s)$ appearing in that
representation, counting repetitions (so that $\sum_{x \in \field}
\rep{x}{\mulmat} = \order$).

For an integer $k \in \{0,\ldots,\lceil\log(\order)\rceil\}$, we select from
the representation of $\mulmat$ all those pairs $(x,s)$ with $s$ in
$\{2^k,\dots,2^{k+1}-1\}$, obtaining a set $\mulmat^{(k)}$. Since $\mulmat$ is
in standard representation, we can compute all $\mulmat^{(k)}$ by a single pass
through the array $\mulmat$, and we can ensure for free that all
$\mulmat^{(k)}$ themselves are in standard representation. We decompose
$\mulmat^{(k)}$ further into two classes $\mulmat^{(k,> \rdim)}$, where all
pairs $(x,s)$ are such that $\rep{x}{\mulmat^{(k)}}$ is greater than $\rdim$,
and $\mulmat^{(k,\le \rdim)}$, which contains all other pairs.  As above, this
decomposition can be done in linear time, and we can ensure for no extra cost
that $\mulmat^{(k,> \rdim)}$ and $\mulmat^{(k,\le \rdim)}$ are in standard
representation.  Explicitly, these sequences will be written as
\begin{align*}
\mulmat^{(k,> \rdim)} &=((\evpt^{(k)}_{1},s^{(k)}_{1,1}),\dots,(\evpt^{(k)}_{1},s^{(k)}_{1,r^{(k)}_{1}}),\dots,(\evpt^{(k)}_{t^{(k)}},s^{(k)}_{t^{(k)},1}),\dots,(\evpt^{(k)}_{t^{(k)}},s^{(k)}_{t^{(k)},r^{(k)}_{t^{(k)}}})),
\end{align*}
with $(r^{(k)}_i)_i = (\rep{\evpt^{(k)}_{i}}{\mulmat^{(k)}})_i$ non-increasing,
and where for $i$ in $\{1,\dots,t^{(k)}\}$, $r^{(k)}_{i} >\rdim$ and
$(s^{(k)}_{i,j})_j$ is a non-increasing sequence of elements in
$\{2^k,\dots,2^{k+1}-1\}$. The corresponding sets of columns in the input
matrix $\evMat$ and the output $\mat{F}$ will be written $$\evMat^{(k, >
\rdim)} = (\evMat^{(k,>\rdim)}_{i,j})_{1 \le i \le t^{(k)}, 1 \le j \le
r^{(k)}_i}$$ and $$\mat{F}^{(k, > \rdim)} = (\mat{F}^{(k,>\rdim)}_{i,j})_{1 \le
  i \le t^{(k)}, 1 \le j \le r^{(k)}_i};$$ they will be treated using a direct
  application of \Cref{lem:evaluations}. Similarly, we write
\begin{align*}
\mulmat^{(k, \le \rdim)}&=((\xi^{(k)}_{1},\szbl^{(k)}_{1,1}),\dots,(\xi^{(k)}_{1},\szbl^{(k)}_{1,\rho^{(k)}_{1}}),\dots,(\xi^{(k)}_{\tau^{(k)}},\szbl^{(k)}_{\tau^{(k)},1}),\dots,(\xi^{(k)}_{\tau^{(k)}},\szbl^{(k)}_{\tau^{(k)},\rho^{(k)}_{\tau^{(k)}}})),
\end{align*}
with $(\rho^{(k)}_i)_i = (\rep{\xi^{(k)}_{i}}{\mulmat^{(k)}})_i$
non-increasing, and where for $i$ in $\{1,\dots,\tau^{(k)}\}$, $\rho^{(k)}_{i}
\le \rdim$ and $(\szbl^{(k)}_{i,j})_j$ is a non-increasing sequence of elements
in $\{2^k,\dots,2^{k+1}-1\}$. The corresponding sets of columns in the input
matrix $\evMat$ and the output $\mat{F}$ will be written $\evMat^{(k,\le
\rdim)}$ and $\mat{F}^{(k,\le \rdim)}$; more precisely, they take the form
$$\evMat^{(k, \le \rdim)} = (\evMat^{(k,\le \rdim)}_{i,j})_{1 \le i \le
  \tau^{(k)}, 1 \le j \le \rho^{(k)}_i}$$ and $$\mat{F}^{(k, \le \rdim)} =
  (\mat{F}^{(k, \le \rdim)}_{i,j})_{1 \le i \le \tau^{(k)}, 1 \le j \le
  \rho^{(k)}_i},$$ and will be treated using a Chinese remaindering approach.

In the main loop, the index $k$ will range from $0$ to $\lfloor
\log(\order/\rdim) \rfloor$. After that stage, all entries $(x,s)$ in
$\mulmat$ that were not processed yet are such that $s > \order/\rdim$. In
particular, if we call $\mulmat^{(\infty,\le \rdim)}$ the set of these
remaining entries, we deduce that this set has cardinality at most $\rdim$;
thus $\rep{x}{\mulmat^{(\infty,\le \rdim)}} \le \rdim$ holds for all $x$ and we
process these entries using the Chinese remaindering approach.

Algorithm \algoname{ComputeResiduals} constructs all these sets $\mulmat^{(k,>
\rdim)}$, $\mulmat^{(k,\le \rdim)}$, and $\mulmat^{(\infty,\le \rdim)}$, then
extracts the corresponding columns from~$\evMat$ (this is the subroutine
\algoname{ExtractColumns}), and processes these subsets of columns, before
merging all the results.

\begin{figure}[h!]
  \centering
  \fbox{\begin{minipage}{12.5cm}
  \begin{algorithm}[\algoname{ComputeResiduals}]
  \label{algo:update-evaluations}
  ~\smallskip\\
  \algoword{Input:}
  \begin{itemize}
  \item a Jordan matrix $\mulmat$ in $\matSpace[\order]$ in standard representation,
  \item a matrix $\mat{P}\in\intSpace$,
  \item a matrix $\evMat \in \matSpace[\rdim][\order]$.
  \end{itemize}

  \smallskip
	\algoword{Output:} 
	the product $\mat{P} \mul \evMat \in \matSpace[\rdim][\order]$.

  \smallskip
	\begin{enumerate}[{\bf 1.}]
    \item \algoword{For} $k$ \algoword{from} $0$ \algoword{to} $\lfloor \log(\order/\rdim) \rfloor$
    \begin{enumerate}[{\bf a.}]
	  \item ${\mulmat}^{(k)} \leftarrow ((\evpt,s)\in\mulmat \mid  2^k \le s <  2^{k+1})$
	  \item ${\mulmat}^{(k,> \rdim)} \leftarrow ((\evpt,s)\in\mulmat^{(k)} \mid  \rep{\evpt}{\mulmat^{(k)}} > \rdim)$
          \item $\evMat^{(k,> \rdim)} \leftarrow \algoname{ExtractColumns}(\evMat, {\mulmat}^{(k,> \rdim)})$
          \item $\mat{F}^{(k,> \rdim)}\leftarrow \algoname{ComputeResidualsByShiftingP}({\mulmat}^{(k,> \rdim)},\mat{P},\evMat^{(k,> \rdim)})$
	  \item ${\mulmat}^{(k,\le \rdim)}  \leftarrow ((\evpt,s)\in\mulmat^{(k)} \mid \rep{\evpt}{\mulmat^{(k)}} \le \rdim )$
          \item $\evMat^{(k, \le \rdim)} \leftarrow \algoname{ExtractColumns}(\evMat, {\mulmat}^{(k,\le \rdim)})$
          \item $\mat{F}^{(k,\le \rdim)}\leftarrow \algoname{ComputeResidualsByCRT}({\mulmat}^{(k,\le \rdim)},\mat{P},\evMat^{(k,\le \rdim)})$
          \end{enumerate}
	\item ${\mulmat}^{(\infty,\le \rdim)} \leftarrow ((\evpt,s)\in\mulmat \mid 2^{\lfloor \log(\sigma/\rdim) \rfloor+1} \le s)$
        \item $\evMat^{(\infty,\le \rdim)} \leftarrow \algoname{ExtractColumns}(\evMat, {\mulmat}^{(\infty,\le \rdim)})$
        \item $\mat{F}^{(\infty,\le \rdim)}\leftarrow \algoname{ComputeResidualsByCRT}({\mulmat}^{(\infty,\le \rdim)},\mat{P},\evMat^{(\infty,\le \rdim)})$
  \item \algoword{Return} $\algoname{Merge}(
         ( \mat{F}^{(k,> \rdim)})_{0 \le k \le \lfloor \log(\order/\rdim) \rfloor},
         ( \mat{F}^{(k,\le \rdim)})_{ 0 \le k \le \lfloor \log(\order/\rdim) \rfloor},
          \mat{F}^{(\infty,\le \rdim)}  )$
	\end{enumerate}
      \end{algorithm}
  \end{minipage}}
\end{figure}

\subsubsection{Computing the residual by shifting $\mathbf{P}$}
\label{subsubsec:residuals-byshifting}

We start with the case of the sets $\mulmat^{(k, > \rdim)}$, for which we
follow a direct approach. Below, recall that we write
$$\mulmat^{(k,> \rdim)}=(
(\evpt^{(k)}_{1},s^{(k)}_{1,1}),\dots,(\evpt^{(k)}_{1},s^{(k)}_{1,r^{(k)}_{1}}),
\dots, (\evpt^{(k)}_{t^{(k)}},s^{(k)}_{t^{(k)},1}), \dots,
(\evpt^{(k)}_{t^{(k)}},s^{(k)}_{t^{(k)},r^{(k)}_{t^{(k)}}})),$$
with
$s^{(k)}_{i,1} \ge s^{(k)}_{i,j}$ for any $k$, $i$, and $j$. For a fixed $k$,
we compute $\mat{P}_i^{(k)}=\mat{P}(X+\evpt^{(k)}_i) \bmod X^{s^{(k)}_{i,1}}$,
for $i$ in $\{1,\dots,t^{(k)}\}$, and do the corresponding matrix products. This is
described in \Cref{algo:update-evaluations-direct}; we give below a
bound on the total time spent in this algorithm, that is, for all $k$ in
$\{0,\dots,\lfloor \log(\order/\rdim) \rfloor\}$. Before that, we give two
lemmas: the first one will allow us to control the cost of the calculations in
this case; in the second one, we explain how to efficiently compute the
polynomial matrices $\mat{P}_i^{(k)}$.

\begin{lemma}\label{lemma:sumdeg}
  The following bound holds:
  \[ \sum_{k=0}^{\lfloor\log(\order/\rdim)\rfloor}
    \sum_{i=1}^{t^{(k)}} r_i^{(k)} s^{(k)}_{i,1} \;\in \bigO{\order}. \]
\end{lemma}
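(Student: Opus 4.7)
The plan is to exploit the dyadic grouping used to define $\mulmat^{(k)}$: inside the $k$-th layer, every block size $s^{(k)}_{i,j}$ lies in the interval $\{2^k,\ldots,2^{k+1}-1\}$, so the largest block size for a fixed eigenvalue is at most twice the smallest. This turns the term $r^{(k)}_i s^{(k)}_{i,1}$, where $s^{(k)}_{i,1}$ is the maximum, into something comparable to the actual contribution $\sum_j s^{(k)}_{i,j}$ of the pairs $(\evpt^{(k)}_i,\cdot)$ to $\order$.

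More precisely, first I would fix $k$ and $i\in\{1,\ldots,t^{(k)}\}$ and use $s^{(k)}_{i,j}\ge 2^k$ together with $s^{(k)}_{i,1}\le 2^{k+1}-1<2\cdot 2^k$ to obtain $s^{(k)}_{i,1}\le 2 s^{(k)}_{i,j}$ for every $j$, hence
\[
  r^{(k)}_i\, s^{(k)}_{i,1} \;=\; \sum_{j=1}^{r^{(k)}_i} s^{(k)}_{i,1}
  \;\le\; 2 \sum_{j=1}^{r^{(k)}_i} s^{(k)}_{i,j}.
\]
Summing over $i$ yields
\[
  \sum_{i=1}^{t^{(k)}} r^{(k)}_i\, s^{(k)}_{i,1}
  \;\le\; 2 \sum_{i=1}^{t^{(k)}}\sum_{j=1}^{r^{(k)}_i} s^{(k)}_{i,j}
  \;=\; 2\!\!\sum_{(x,s)\in\mulmat^{(k,>\rdim)}}\!\! s.
\]

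Then I would sum over $k\in\{0,\ldots,\lfloor\log(\order/\rdim)\rfloor\}$ and observe that the sets $\mulmat^{(k,>\rdim)}$ are pairwise disjoint subsets of $\mulmat$ (they are carved out by disjoint ranges of block size). Therefore
\[
  \sum_{k=0}^{\lfloor\log(\order/\rdim)\rfloor} \sum_{(x,s)\in\mulmat^{(k,>\rdim)}}\!\! s
  \;\le\; \sum_{(x,s)\in\mulmat} s \;=\; \order,
\]
which gives the claimed bound with explicit constant $2$. No step here is delicate: the only thing to keep in mind is that the dyadic partition must be handled tightly, since a looser bound like $s^{(k)}_{i,1}\le 2^{k+1}$ combined with $s^{(k)}_{i,j}\ge 1$ would merely give $\bigO{\order\log(\order)}$; it is essential to compare $s^{(k)}_{i,1}$ with $s^{(k)}_{i,j}$ within the same layer.
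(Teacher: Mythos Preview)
Your proof is correct and follows essentially the same approach as the paper: both use the dyadic bound $s^{(k)}_{i,1}\le 2\,s^{(k)}_{i,j}$ within a layer to get $r^{(k)}_i s^{(k)}_{i,1}\le 2\sum_j s^{(k)}_{i,j}$, then sum over $i$ and $k$ and bound by the total block size $\order$. Your version is in fact slightly more explicit than the paper's, spelling out where the factor~$2$ comes from and why the layers are disjoint.
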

\begin{proof}
By construction, we have the estimate \[
  \sum_{k=0}^{\lfloor\log(\order/\rdim)\rfloor} \sum_{i=1}^{t^{(k)}}
  \sum_{j=1}^{r^{(k)}_i}  s^{(k)}_{i,j} \;\le \sigma,
\] since this represents the total size of all blocks contained in the
sequences $\mulmat^{(k, > \rdim)}$. Now, for fixed $k$ and $i$, the
construction of $\mulmat^{(k)}$ implies that the inequality $s^{(k)}_{i,1} \le
2 s^{(k)}_{i,j}$ holds for all $j$. This shows that we have $$r_i^{(k)}
s^{(k)}_{i,1} \le 2 \sum_{j=1}^{r^{(k)}_i} s^{(k)}_{i,j},$$ and the
conclusion follows by summing over all $k$ and $i$.
\end{proof}

In the following lemma, we explain how to compute the polynomial matrices
$\mat{P}_i^{(k)}$ in an efficient manner, for $i$ in $\{1,\dots,t^{(k)}\}$ and
for all the values of $k$ we need.
\begin{lemma}\label{lemma:allsimulmod}
  Suppose that the sum of the row degrees of $\mat{P}$ is $\bigO{\order}$.
  Then one can compute the matrices $\mat{P}_i^{(k)}$ for all $k$ in
  $\{0,\dots,\lfloor \log(\order/\rdim)\rfloor\}$ and $i$ in
  $\{1,\dots,t^{(k)}\}$ using $\bigO{\rdim \, \polmultime{\order}
  \log(\order)}$ operations in $\field$.
\end{lemma}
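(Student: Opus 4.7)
My plan is to process each of the $\rdim^2$ entries $p$ of $\mat{P}$ independently, reducing the problem per entry to a single simultaneous modular reduction followed by one Taylor shift per pair $(k,i)$. Concretely, for a fixed entry $p$ lying in row $\ell$ (so $\deg p \le d_\ell$, where $d_1,\ldots,d_\rdim$ denote the row degrees of $\mat{P}$), I would first compute all the residues $\tilde p_{k,i} = p \bmod (Y-\evpt_i^{(k)})^{s_{i,1}^{(k)}}$ by one simultaneous modular reduction, and then apply a Taylor shift $Y\mapsto X+\evpt_i^{(k)}$ to each $\tilde p_{k,i}$. Since $\tilde p_{k,i}$ has degree less than $s_{i,1}^{(k)}$ in $Y$, its shifted version has degree less than $s_{i,1}^{(k)}$ in $X$, and a direct check shows that it equals the corresponding entry of $\mat{P}_i^{(k)} = \mat{P}(X+\evpt_i^{(k)}) \bmod X^{s_{i,1}^{(k)}}$.

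The crucial quantitative input is that every multiplicity in $\mulmat^{(k,>\rdim)}$ satisfies $r_i^{(k)}>\rdim$. Combined with \Cref{lemma:sumdeg}, this yields the pivotal inequality $\sum_{k,i} s_{i,1}^{(k)} \le \rdim^{-1}\sum_{k,i} r_i^{(k)} s_{i,1}^{(k)} \in \bigO{\order/\rdim}$, so the total degree $e$ of the moduli in the simultaneous reduction for a single entry is $\bigO{\order/\rdim}$. By the second item of \Cref{lemma:multirem}, the reductions for that entry then cost $\bigO{\polmultime{d_\ell} + \polmultime{\order/\rdim}\log(\order/\rdim)}$, and by super-additivity of $\polmultime{\cdot}$ together with the first item of \Cref{lemma:multirem}, the associated Taylor shifts cost $\bigO{\polmultime{\order/\rdim}\log(\order/\rdim)}$ in total for that entry.

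Summing over the $\rdim^2$ entries, the $\polmultime{d_\ell}$ terms accumulate to $\rdim\sum_\ell \polmultime{d_\ell} \le \rdim\polmultime{\sum_\ell d_\ell} \in \bigO{\rdim\polmultime{\order}}$ via super-additivity and the hypothesis $\sum_\ell d_\ell \in \bigO{\order}$, while the remaining $\rdim^2\polmultime{\order/\rdim}\log(\order/\rdim)$ collapses to $\bigO{\rdim\polmultime{\order}\log(\order)}$ by the super-linearity inequality $\rdim\polmultime{\order/\rdim} \le \polmultime{\order}$. The main hurdle is carefully exploiting the defining property $r_i^{(k)}>\rdim$ to secure the bound $e \in \bigO{\order/\rdim}$; once this is in place, the rest is standard bookkeeping with super-linearity of $\polmultime{\cdot}$ and the two estimates of \Cref{lemma:multirem}.
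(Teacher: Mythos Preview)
Your proposal is correct and follows essentially the same approach as the paper: process the $\rdim^2$ entries individually, perform one simultaneous modular reduction per entry against the moduli $(X-\evpt_i^{(k)})^{s_{i,1}^{(k)}}$, then apply a Taylor shift, with the key inequality $\sum_{k,i} s_{i,1}^{(k)} \in \bigO{\order/\rdim}$ coming from $r_i^{(k)}>\rdim$ and \Cref{lemma:sumdeg}. The only cosmetic difference is that the paper keeps $S=\sum_{k,i} s_{i,1}^{(k)}$ abstract and bounds $\rdim^2\polmultime{S}\log(S)$ via $\rdim S \le \sum_{k,i} r_i^{(k)} s_{i,1}^{(k)}$ and super-linearity of $\polmultime{\cdot}$, whereas you extract the bound $S\in\bigO{\order/\rdim}$ explicitly first; the arithmetic is identical.
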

\begin{proof}
  We use the second item in \Cref{lemma:multirem} to first compute
  $\mat{P} \bmod (X-\evpt^{(k)}_i)^{s^{(k)}_{i,1}}$, for all $k$ and $i$ as in
  the statement of the lemma. Here, the sum of the degrees is $$S =\sum_{k,i}
  s^{(k)}_{i,1},$$ so we get a total cost of $\bigO{ \polmultime{d} +
  \polmultime{S}\log(S) }$ for an entry of $\mat{P}$ of degree $d$. Summing
  over all entries, and using the fact that the sum of the row degrees of
  $\mat{P}$ is $\bigO{\order}$, we obtain a total cost of $$\bigO{
    \rdim\,\polmultime{\order} + \rdim^2\polmultime{S }\log(S) }.$$ Now,
    because we consider here $\mulmat^{(k,> \rdim)}$, we have $r^{(k)}_i >
    \rdim$ for all $k$ and $i$. Hence, using the super-linearity of
    $\polmultime{\cdot}$, the term $\rdim^2 \polmultime{S} \log(S)$ admits the
    upper bound $$ \rdim\, \polmultimePar {\left( \sum_{k,i} r^{(k)}_i
    s^{(k)}_{i,1} \right) } \log(S),$$ which is in $\bigO{\rdim\,
      \polmultime{\order}\log(\order)}$ in view of \Cref{lemma:sumdeg}.

  Then we apply a variable shift to all these polynomials to replace $X$ by
  $X+\evpt^{(k)}_i$. Using the first item in \Cref{lemma:multirem}, for
  fixed $k$ and $i$, the cost is $\bigO{\rdim^2 \polmultime{s^{(k)}_{i,1}}
  \log(s^{(k)}_{i,1})}$.  Hence, the total time is again $\bigO{\rdim^2
  \polmultime{S}\log(S)}$, so the same overall bound as above holds.
\end{proof}

\begin{figure}[h!]
  \centering
  \fbox{\begin{minipage}{12.5cm}
  \begin{algorithm}[\algoname{ComputingResidualsByShiftingP}]
	\label{algo:update-evaluations-direct}
  ~\smallskip\\
	\algoword{Input:}
    \begin{itemize}
    \item $\mulmat^{(k,> \rdim)} =((\evpt^{(k)}_{1},s^{(k)}_{1,1}),\dots,(\evpt^{(k)}_{1},s^{(k)}_{1,r^{(k)}_{1}}),\dots,(\evpt^{(k)}_{t^{(k)}},s^{(k)}_{t^{(k)},1}),\dots,(\evpt^{(k)}_{t^{(k)}},s^{(k)}_{t^{(k)},r^{(k)}_{t^{(k)}}}))$
in standard representation,
    \item a matrix $\mat{P}\in\intSpace$,
    \item a matrix $\evMat^{(k,>\rdim)}=[\evMat^{(k,>\rdim)}_{1,1} | \cdots | \evMat^{(k,>\rdim)}_{t^{(k)},r^{(k)}_{t^{(k)}}}] \in \matSpace[\rdim][\sum_{i,j}s^{(k)}_{i,j}]$
      with $\evMat^{(k,>\rdim)}_{i,j} \in \matSpace[\rdim][s^{(k)}_{i,j}]$ for all $i,j$.
    \end{itemize}

  \smallskip
	\algoword{Output:} 
	the product $\mat{P} \mul \evMat^{(k,>\rdim)} \in \matSpace[\rdim][\sum_{i,j}s^{(k)}_{i,j}]$.

  \smallskip
	\begin{enumerate}[{\bf 1.}]
	\item $(\mat{P}^{(k)}_i)_{1 \le i \le t^{(k)}} \leftarrow (\mat{P}(X+\evpt_i) \bmod X^{s^{(k)}_{i,1}})_{1 \le i \le t^{(k)}}$
  \item \algoword{For} $i$ \algoword{from} $1$ \algoword{to} $t^{(k)}$
	  \begin{enumerate}[{\bf a.}]
	  \item $(\evMat^{(k,>\rdim)}_{i,j,{\rm poly}})_{1 \le j \le r^{(k)}_i} \leftarrow (\evMat^{(k,>\rdim)}_{i,j} \trsp{[1,\var,\ldots,\var^{s^{(k)}_{i,j}-1}]})_{1 \le j \le r^{(k)}_i}$
	  \item $ [\evMatF^{(k,>\rdim)}_{i,1,{\rm poly}} | \cdots | \evMatF^{(k,>\rdim)}_{i,r^{(k)}_i,{\rm poly}}] \leftarrow {\mat{P}}^{(k)}_i [\evMat^{(k,>\rdim)}_{i,1,{\rm poly}} | \cdots | \evMat^{(k,>\rdim)}_{i,r^{(k)}_i,{\rm poly}}] $
    \item \algoword{For} $j$ \algoword{from} $1$ \algoword{to} $r^{(k)}_i$, $\evMatF^{(k,>\rdim)}_{i,j} \leftarrow ( {\rm coeff}(\evMatF^{(k,>\rdim)}_{i,j,{\rm poly}}, \ell ))_{ 0 \le \ell < s^{(k)}_{i,j}} $ 
	  \end{enumerate}
  \item \algoword{Return} $[\evMatF^{(k,>\rdim)}_{1,1} | \cdots | \evMatF^{(k,>\rdim)}_{t^{(k)},r^{(k)}_{t^{(k)}}}]$
	\end{enumerate}
  \end{algorithm}
  \end{minipage}}
\end{figure}

\begin{lemma}
  \label{lem:update-evaluations-direct}
  \Cref{algo:update-evaluations-direct} is correct. Given the
  polynomial matrices computed in \Cref{lemma:allsimulmod}, the total time
  spent in this algorithm for all $k$ in
  $\{0,\dots,\lfloor\log(\order/\rdim)\rfloor\}$ is $\bigO{
    \polmatmultime{\rdim,\order/\rdim} }$ operations in $\field$.
\end{lemma}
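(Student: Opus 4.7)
The plan is to establish correctness by a direct application of \Cref{lem:evaluations} on each ``slice'' $\mulmat^{(k,>\rdim)}$, then to bound the cost of Step~\textbf{2.b} by a charging argument that combines super-linearity of $\polmatmultime{\rdim,\cdot}$ with the aggregated bound from \Cref{lemma:sumdeg}.

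For correctness, fix $k$ and observe that the algorithm processes each point $\evpt^{(k)}_i$ independently, so it is enough to show that for every pair $(i,j)$ the matrix $\evMatF^{(k,>\rdim)}_{i,j}$ produced in Step~\textbf{2.c} equals the submatrix of $\mat{P}\mul \evMat^{(k,>\rdim)}$ indexed by the columns associated with the block $(\evpt^{(k)}_i,s^{(k)}_{i,j})$. By \Cref{lem:evaluations}, this target submatrix is obtained by multiplying $\mat{P}(X+\evpt^{(k)}_i)$ by $\evMat^{(k,>\rdim)}_{i,j,{\rm poly}}$ modulo $X^{s^{(k)}_{i,j}}$ and extracting coefficients of degree $0,\dots,s^{(k)}_{i,j}-1$. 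Since $s^{(k)}_{i,j}\le s^{(k)}_{i,1}$ by the standard representation ordering, one may replace $\mat{P}(X+\evpt^{(k)}_i)$ by its truncation $\mat{P}^{(k)}_i = \mat{P}(X+\evpt^{(k)}_i)\bmod X^{s^{(k)}_{i,1}}$ without changing the residue modulo $X^{s^{(k)}_{i,j}}$; the truncation performed in Step~\textbf{2.c} then recovers exactly the required coefficients.

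For the cost bound, Steps~\textbf{2.a} and~\textbf{2.c} are merely reindexings of already-available coefficients, and their total contribution is bounded by the output size $\bigO{\rdim\,\order}$, which fits inside $\polmatmultime{\rdim,\order/\rdim}$. Thus the arithmetic cost is concentrated in Step~\textbf{2.b}. For fixed $(k,i)$, this step multiplies an $\rdim\times\rdim$ polynomial matrix of degree less than $s^{(k)}_{i,1}$ by an $\rdim\times r^{(k)}_i$ polynomial matrix of the same degree bound. Since $r^{(k)}_i>\rdim$, one can split the right operand into $\bigO{r^{(k)}_i/\rdim}$ blocks of $\rdim$ columns and perform the product via $\bigO{r^{(k)}_i/\rdim}$ square multiplications, each in time $\polmatmultime{\rdim,s^{(k)}_{i,1}}$. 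Hence the cost of Step~\textbf{2.b} for the pair $(k,i)$ is $\bigO{(r^{(k)}_i/\rdim)\,\polmatmultime{\rdim,s^{(k)}_{i,1}}}$.

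Now I combine these bounds. Because $k\le \lfloor\log(\order/\rdim)\rfloor$ we have $s^{(k)}_{i,1}<2^{k+1}\le 2\order/\rdim$, so super-linearity of $d\mapsto\polmatmultime{\rdim,d}$ yields
\[
  \polmatmultime{\rdim,s^{(k)}_{i,1}} \;\le\; s^{(k)}_{i,1}\,\frac{\polmatmultime{\rdim,\order/\rdim}}{\order/\rdim}\,\bigO{1}.
\]
Substituting and summing gives
\[
  \sum_{k,i}\frac{r^{(k)}_i}{\rdim}\,\polmatmultime{\rdim,s^{(k)}_{i,1}} \;\in\; \bigO{\frac{\polmatmultime{\rdim,\order/\rdim}}{\order}\sum_{k,i} r^{(k)}_i s^{(k)}_{i,1}},
\]
and \Cref{lemma:sumdeg} turns the inner sum into $\bigO{\order}$, yielding the announced bound $\bigO{\polmatmultime{\rdim,\order/\rdim}}$. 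The only non-routine step is the charging argument: once one notices that super-linearity lets one ``flatten'' all block sizes $s^{(k)}_{i,1}$ against the uniform value $\order/\rdim$, everything reduces to \Cref{lemma:sumdeg}; no further subtleties are expected.
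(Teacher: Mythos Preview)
Your proof is correct and follows essentially the same approach as the paper: correctness via \Cref{lem:evaluations}, cost concentrated in Step~\textbf{2.b} with $\bigO{(r^{(k)}_i/\rdim)\,\polmatmultime{\rdim,s^{(k)}_{i,1}}}$ per pair $(k,i)$, then a super-linearity argument combined with \Cref{lemma:sumdeg}. The only cosmetic difference is in how super-linearity is applied: the paper pulls $r^{(k)}_i/\rdim$ inside the degree argument to get $\bigO{\polmatmultime{\rdim,r^{(k)}_i s^{(k)}_{i,1}/\rdim}}$ and then sums, whereas you use the bound $s^{(k)}_{i,1}\le 2\order/\rdim$ to flatten each term against $\polmatmultime{\rdim,\order/\rdim}$ before summing; both routes are valid and equivalent.
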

\begin{proof}
  Correctness of the algorithm follows from \Cref{lem:evaluations}, so we
  focus on the cost analysis. 
  
  \Cref{lemma:allsimulmod} gives the cost of computing all polynomial matrices
  needed at Step~\textbf{1}. The only other arithmetic operations are those
  done in the matrix products at Step~\textbf{2.b}: we multiply matrices of
  respective sizes $\rdim\times \rdim$ and $\rdim \times r^{(k)}_i$, with
  entries of degree less than $s^{(k)}_{i,1}$. For given $k$ and $i$, since we
  have $\rdim < r^{(k)}_i$, the cost is $\bigO{ \polmatmultime{\rdim,
    s^{(k)}_{i,1} } r^{(k)}_i / \rdim }$; using the super-linearity of $d
    \mapsto \polmatmultime{\rdim,d}$, this is in $\bigO{ \polmatmultime{\rdim,
      r^{(k)}_i s^{(k)}_{i,1} / \rdim } }$. Applying again \Cref{lemma:sumdeg},
      we deduce that the sum over all $k$ and $i$ is $\bigO{ \polmatmultime{
      \rdim, \order/\rdim } }$.
\end{proof}

\subsubsection{Computing the residual by Chinese remaindering}
\label{subsubsec:residual-bycrt}

The second case to consider is $\mulmat^{(k, \le \rdim)}$. Recall that for a
given index $k$, we write this sequence as $$\mulmat^{(k, \le \rdim)} =
((\xi^{(k)}_{1},\sigma^{(k)}_{1,1}),\dots,(\xi^{(k)}_{1},\sigma^{(k)}_{1,\rho^{(k)}_{1}}),\dots,(\xi^{(k)}_{\tau^{(k)}},\sigma^{(k)}_{\tau^{(k)},1}),\dots,(\xi^{(k)}_{\tau^{(k)}},\sigma^{(k)}_{\tau^{(k)},\rho^{(k)}_{\tau^{(k)}}})),$$
with $\rho^{(k)}_{\tau^{(k)}} \le \cdots \le \rho^{(k)}_{1} \le \rdim$
for all $i$ in $\{1,\dots,\tau^{(k)}\}$. In this case, $\tau^{(k)}$ may be
large so the previous approach may lead us to compute too many matrices
$\mat{P}^{(k)}_i$. Instead, for fixed $k$ and~$j$, we use Chinese remaindering
to transform the corresponding submatrices $\evMat^{(k,\le \rdim)}_{i,j}$ into
a polynomial matrix $\evMat^{(k,\le \rdim)}_{j}$ of small column dimension;
this allows us to efficiently perform matrix multiplication by $\mat{P}$ on the
left, and we eventually get $\mat{P}\mul\evMat^{(k,\le \rdim)}_{i,j}$ by
computing the first coefficients in a Taylor expansion of this product around
every $\xi^{(k)}_i$. 

To simplify the notation in the algorithm, we also suppose that for a fixed
$k$, the points $\xi^{(k)}_{1},\dots,\xi^{(k)}_{\tau^{(k)}}$ all appear the
same number of times in $\mulmat^{(k, \le \rdim)}$. This is done by replacing
$\rho^{(k)}_1,\dots,\rho^{(k)}_{\tau^{(k)}}$ by their maximum $\rho^{(k)}_1$
(simply written $\rho^{(k)}$ in the pseudo-code) and adding suitable blocks
$(\xi^{(k)}_i,\sigma^{(k)}_{i,j})$, with all new $\sigma^{(k)}_{i,j}$ set to
zero.

\begin{figure}[h!]
  \centering
  \fbox{\begin{minipage}{12.5cm}
    \begin{algorithm}[\algoname{ComputingResidualsByCRT}]
	\label{algo:update-evaluations-interpolation}
  ~\smallskip\\
	\algoword{Input:}
    \begin{itemize}
     \item $\mulmat^{(k, \le \rdim)}=((\xi^{(k)}_{1},\sigma^{(k)}_{1,1}),\dots,(\xi^{(k)}_{1},\sigma^{(k)}_{1,\rho^{(k)}}),\dots,(\xi^{(k)}_{\tau^{(k)}},\sigma^{(k)}_{\tau^{(k)},1}),\dots,(\xi^{(k)}_{\tau^{(k)}},\sigma^{(k)}_{\tau^{(k)},\rho^{(k)}}))$
in standard representation,
    \item a matrix $\mat{P}\in\intSpace$,
    \item a matrix $\evMat^{(k, \le \rdim)} = [\evMat^{(k, \le \rdim)}_{1,1} | \cdots | \evMat^{(k, \le \rdim)}_{\tau^{(k)},\rho^{(k)}}] \in \matSpace[\rdim][\sum_{i,j}\sigma^{(k)}_{i,j}]$
      with $\evMat^{(k,\le \rdim)}_{i,j} \in \matSpace[\rdim][\sigma^{(k)}_{i,j}]$ for all $i,j$.
    \end{itemize}

  \smallskip
	\algoword{Output:} 
	the product $\mat{P} \mul \evMat^{(k,\le \rdim)} \in \matSpace[\rdim][\sum_{i,j}\sigma^{(k)}_{i,j}]$.
	
  \smallskip
	\begin{enumerate}[{\bf 1.}]
    \item \algoword{For} $j$ \algoword{from} $1$ \algoword{to} $\rho^{(k)}$
	  \begin{enumerate}[{\bf a.}]
          \item  $({\evMat}^{(k,\le \rdim)}_{i,j,{\rm shifted}})_{1 \le i \le \tau^{(k)}} \leftarrow (\evMat^{(k,\le \rdim)}_{i,j} \trsp{[1,\var-\xi^{(k)}_i,\ldots,(\var-\xi^{(k)}_i)^{\sigma^{(k)}_{i,j}-1}]})_{1 \le i \le \tau^{(k)}}$
          \item $\evMat^{(k,\le \rdim)}_{j,{\rm shifted}} \leftarrow \algoname{CRT}((\evMat^{(k,\le \rdim)}_{i,j,{\rm shifted}})_{1 \le i \le \tau^{(k)}},((X-\xi^{(k)}_i)^{\sigma^{(k)}_{i,j}})_{1 \le i \le \tau^{(k)}})$
          \end{enumerate}
        \item $ [\evMatF^{(k,\le \rdim)}_{1,{\rm shifted}} | \cdots | \evMatF^{(k,\le \rdim)}_{\rho^{(k)},{\rm shifted}}] \leftarrow {\mat{P}} [\evMat^{(k,\le \rdim)}_{1,{\rm shifted}} | \cdots | {\evMat}^{(k,\le \rdim)}_{\rho^{(k)},{\rm shifted}}] $
        \item \algoword{For} $j$ \algoword{from} $1$ \algoword{to} $\rho^{(k)}$
	  \begin{enumerate}[{\bf a.}]
          \item $(\evMatF^{(k, \le \rdim)}_{i,j,{\rm shifted}})_{1 \le i \le \tau^{(k)}} \leftarrow (\evMatF^{(k, \le \rdim)}_{j,{\rm shifted}} \bmod (X-\xi^{(k)}_i)^{\sigma^{(k)}_{i,j}})_{1 \le i \le \tau^{(k)}}$
          \item ${\evMatF}^{(k,\le \rdim)}_{i,j} \leftarrow ( {\rm coeff}(\evMatF^{(k, \le \rdim)}_{i,j,{\rm shifted}}(X+\xi^{(k,\le m)}_i), \ell) )_{0 \le \ell < \sigma^{(k)}_{i,j}} $
          \end{enumerate}
    \item \algoword{Return} $[\evMatF^{(k,\le \rdim)}_{1,1} | \cdots | \evMatF^{(k,\le \rdim)}_{\tau^{(k)},\rho^{(k)}}]$
	\end{enumerate}
      \end{algorithm}
  \end{minipage}}
\end{figure}

\begin{lemma}
  \label{lem:update-evaluations-interpolation}
  \Cref{algo:update-evaluations-interpolation} is correct. If the sum
  of the row degrees of $\mat{P}$ is in $\bigO{\order}$, the total time spent
  in this algorithm for all $k$ in
  $\{0,\dots,\lfloor\log(\order/\rdim)\rfloor,\infty\}$ is \[ \bigO{
  \polmatmultime{ \rdim, \order/\rdim } \log(\order/\rdim) + \rdim
\polmultime{\order}\log(\order) }\] operations in $\field$.
\end{lemma}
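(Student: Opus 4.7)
The plan is to first verify correctness by unwinding the change of variables and the Chinese remainder theorem, and then to split the cost analysis between the single matrix multiplication at Step~2 and the univariate manipulations at Steps~1 and~3. For correctness, by construction the column vector computed at Step~1.a equals, as a polynomial in $X-\xi^{(k)}_i$, the polynomial column that \Cref{lem:evaluations} attaches to the block $(\xi^{(k)}_i,\sigma^{(k)}_{i,j})$. The CRT reconstruction at Step~1.b therefore produces a single polynomial vector $\evMat^{(k,\le\rdim)}_{j,\text{shifted}}$ that agrees with $\evMat^{(k,\le\rdim)}_{i,j,\text{shifted}}$ modulo $(X-\xi^{(k)}_i)^{\sigma^{(k)}_{i,j}}$ for every $i$. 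Left-multiplication by $\mat{P}(X)$ preserves these congruences, so Step~3.a recovers $\mat{P}(X)\evMat^{(k,\le\rdim)}_{i,j,\text{shifted}}(X)\bmod (X-\xi^{(k)}_i)^{\sigma^{(k)}_{i,j}}$, and the Taylor shift by $+\xi^{(k)}_i$ in Step~3.b returns precisely the polynomials prescribed in \Cref{lem:evaluations}, whose coefficients form the block $\mat{P}\mul\evMat^{(k,\le\rdim)}_{i,j}$.

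For Steps~1 and~3, I rely on \Cref{lemma:multirem}. Each Taylor shift in Step~1.a and Step~3.b acts on a polynomial of degree less than $\sigma^{(k)}_{i,j}$ and is applied to every row of $\mat{P}$, for a cost of $\bigO{\rdim\polmultime{\sigma^{(k)}_{i,j}}\log(\sigma^{(k)}_{i,j})}$ per block. Writing $\tau_j^{(k)} = \sum_i \sigma^{(k)}_{i,j}$, the CRT at Step~1.b costs $\bigO{\rdim\polmultime{\tau_j^{(k)}}\log(\tau_j^{(k)})}$ per $(k,j)$, and the simultaneous modular reduction at Step~3.a applied to the $\ell$-th row (of degree bounded by $d_\ell+\tau_j^{(k)}$, where $d_\ell$ is the row degree of $\mat{P}$'s $\ell$-th row) contributes $\bigO{\polmultime{d_\ell}+\polmultime{\tau_j^{(k)}}\log(\tau_j^{(k)})}$. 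Summing over $i$, $j$, $\ell$, and all levels $k\in\{0,\ldots,\lfloor\log(\order/\rdim)\rfloor,\infty\}$, and combining the super-linearity of $\polmultime{\cdot}$ with the bounds $\sum_\ell d_\ell\in\bigO{\order}$ and $\sum_{k,i,j}\sigma^{(k)}_{i,j}\le\order$, yields a total of $\bigO{\rdim\polmultime{\order}\log(\order)}$.

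For Step~2, the right-hand operand has $\rho^{(k)}\le\rdim$ columns, and its column degrees $\tau_j^{(k)}$ satisfy $\sum_j\tau_j^{(k)}=\sigma^{(k)}\le\order$. I would partially linearize both operands: expanding the rows of $\mat{P}$ produces a matrix with $\bigO{\rdim}$ rows and degree $\bigO{\order/\rdim}$, while expanding the columns of the right operand with target degree $\order/\rdim$ yields $\rho^{(k)}+\rdim\sigma^{(k)}/\order\in\bigO{\rdim}$ columns of degree $\bigO{\order/\rdim}$. The resulting product costs $\bigO{\polmatmultime{\rdim,\order/\rdim}}$, and the recombination of linearized rows and columns uses only polynomial shifts and additions. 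Summing over the $\bigO{\log(\order/\rdim)}$ levels yields the $\bigO{\polmatmultime{\rdim,\order/\rdim}\log(\order/\rdim)}$ term. The main obstacle is controlling Step~2 uniformly across levels: since the per-column degrees $\tau_j^{(k)}$ depend on $k$, one must simultaneously exploit $\rho^{(k)}\le\rdim$ and $\sigma^{(k)}\le\order$ to ensure that partial linearization yields only $\bigO{\rdim}$ columns of degree $\bigO{\order/\rdim}$ at every level; once this is established, the product at each level fits into a single $\bigO{\rdim}\times\rdim$ by $\rdim\times\bigO{\rdim}$ multiplication at degree $\order/\rdim$, which is the target cost.
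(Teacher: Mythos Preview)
Your proposal is correct and follows essentially the same approach as the paper: the correctness argument via \Cref{lem:evaluations} and variable shifts is identical, and the cost analysis for Steps~1 and~3 using \Cref{lemma:multirem} matches the paper's (your $\tau_j^{(k)}$ is the paper's $S^{(k)}_j$). The only minor organizational difference is at Step~2: the paper observes that the partially linearized left operand $\mat{P}$ is the same at every level and therefore concatenates all the linearized right-hand sides into a single product with $\bigO{\rdim\log(\order/\rdim)}$ columns, whereas you perform one $\bigO{\rdim}\times\rdim$ by $\rdim\times\bigO{\rdim}$ product per level and then sum over the $\bigO{\log(\order/\rdim)}$ levels; both routes give the same $\bigO{\polmatmultime{\rdim,\order/\rdim}\log(\order/\rdim)}$ bound.
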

\begin{proof}
  Proving correctness amounts to verifying that we compute the
  quantities described in \Cref{lem:evaluations}. Indeed, the
  formulas in the algorithm show that for all $k,i,j$, we have
  $\evMatF^{(k, \le \rdim)}_{i,j,{\rm shifted}}= \mat{P}\, \evMat^{(k, \le \rdim)}_{i,j,{\rm
      shifted}} \bmod (X-\xi^{(k)}_i)^{\sigma^{(k)}_{i,j}}$; the link with
  \Cref{lem:evaluations} is made by observing that
  $\evMat^{(k, \le \rdim)}_{i,j,{\rm shifted}}=\evMat^{(k,\le \rdim)}_{i,j,{\rm poly}}(X-\xi^{(k)}_i)$ and
  $\evMatF^{(k, \le \rdim)}_{i,j,{\rm shifted}}=\evMatF^{(k,\le \rdim)}_{i,j,{\rm poly}}(X-\xi^{(k)}_i)$.

  In terms of complexity, the first item in \Cref{lemma:multirem} shows
  that for a given index $k$, Step~{\bf 1.a} can be done in time $$\bigOPar{\rdim
  \sum_{i,j} \polmultimePar{\left(\sigma^{(k)}_{i,j}\right)}
  \log\left(\sigma^{(k)}_{i,j}\right)},$$ for a total cost of $\bigO{\rdim\,
  \polmultime \order \log(\order)}$. Step~{\bf 1.b} can be done in
  quasi-linear time as well: for each $k$ and $j$, we can compute each of the
  $\rdim$ entries of the polynomial vector $\evMat^{(k,\le \rdim)}_{j,{\rm
  shifted}}$ by fast Chinese remaindering (third item in
  \Cref{lemma:multirem}), using $$\bigOPar{\polmultimePar{\left (
  S^{(k)}_j \right)}\log\left ( S^{(k)}_j \right)}$$ operations in $\field$,
  with $S^{(k)}_j = \sum_i \sigma^{(k)}_{i,j}$. Taking all rows into account,
  and summing over all indices $k$ and $j$, we obtain again a total cost of
  $\bigO{\rdim\, \polmultime \order \log(\order)}$.

  The next step to examine is the polynomial matrix product at Step~{\bf 2}.
  The matrix $\mat{P}$ has size $\rdim \times \rdim$, and the sum of its row
  degrees is by assumption $\bigO{\order}$; using the partial linearization
  technique presented in \Cref{sec:cost-mult}, we can replace $\mat{P}$
  by a matrix of size $\bigO{\rdim} \times \rdim$ with entries of degree at
  most $\order/\rdim$.

  For a fixed choice of $k$, the right-hand side has size $\rdim
  \times \rho^{(k)}$, and its columns have respective degrees less
  than $S^{(k)}_1,\dots,S^{(k)}_{\rho^{(k)}}$.  We split each of its
  columns into new columns of degree at most $\order/\rdim$, so that the $j$th
  column is split into $\bigO{1 + S^{(k)}_j \rdim/\order}$ columns
  (the constant term 1 dominates when $S^{(k)}_j \le \order/\rdim$).
  Thus, the new right-hand side
  has $\bigO{\rho^{(k)} + (S^{(k)}_1 +\cdots +S^{(k)}_{\rho^{(k)}})
  \rdim/\order}$ columns and degree at most~$\order/\rdim$.
  
  Now, taking all $k$ into account, we remark that the left-hand side remains
  the same; thus, we are led to do one matrix product with degrees
  $\order/\rdim$, with left-hand side of size $\bigO{\rdim}\times \rdim$, and
  right-hand side having column dimension at most $$\sum_{k \in \{ 0,\dots,\lfloor
  \log(\order/\rdim)\rfloor\} \cup \{\infty \} } \rho^{(k)} + \frac{(S^{(k)}_1 +\cdots
  +S^{(k)}_{\rho^{(k)}}) \rdim}\order.$$ Since all $\rho^{(k)}$ are at most
  $\rdim$, the first term sums up to $\bigO{\rdim \log(\order/\rdim)}$; by
  construction, the second one adds up to $\bigO{\rdim}$. Hence, the matrix
  product we need can be done in time $\bigO{ \polmatmultime{ \rdim,
  \order/\rdim } \log(\order/\rdim)}$.

  For a given $k$, $\evMatF^{(k,\le \rdim)}_{1,{\rm shifted}}, \dots,
  \evMatF^{(k,\le \rdim)}_{\rho^{(k)},{\rm shifted}}$ are vectors of size
  $\rdim$. Furthermore, for each $j$ the entries of $\evMatF^{(k,\le
  \rdim)}_{j,{\rm shifted}}$ have degree less than $S^{(k)}_1 + d_1,
  \ldots,S^{(k)}_\rdim + d_\rdim$ respectively, where $d_1,\ldots,d_\rdim$ are
  the degrees of the rows of $\mat{P}$. In particular, for a fixed $k$, the
  reductions at Step~{\bf 3.a} can be done in time $$\bigOPar{ \rho^{(k)}
  (\polmultime{d_1 + \cdots + d_\rdim}) +\rdim \sum_{j=1}^{\rho^{(k)}}
  \polmultime{S^{(k)}_j} \log(S^{(k)}_j) }$$ using fast multiple reduction, by
  means of the second item in \Cref{lemma:multirem}.  Using our assumption on
  $\mat{P}$, and the fact that $\rho^{(k)} \le m$, we see that the first term
  is $\bigO{m \polmultime \sigma}$, which adds up to $\bigO{\rdim \polmultime
  \sigma \log(\sigma/\rdim)}$ if we sum over $k$. The second term adds up to
  $\bigO{\rdim \polmultime \sigma \log(\sigma)}$, as was the case for Step~{\bf
  1.b}.

  The same analysis is used for the shifts taking place at Step~{\bf 3.b} as
  for those in Step~\textbf{1.a}: for fixed $k$ and $j$, the cost is
  $\bigO{\rdim \polmultime{S^{(k)}_j}\log(S^{(k)}_j)}$, and we conclude as
  above.
\end{proof}

%
%
\section{Minimal interpolation basis via linearization}
\label{sec:lin-mib} 

In this section, we give an efficient algorithm based on linearization
techniques to compute interpolation bases for the case of an arbitrary matrix
$\mulmat$ and an arbitrary shift; in particular, we prove
\Cref{thm:mib-linear}.

In addition to being interesting on its own, the algorithm in this section
allows us to handle the base cases in the recursion of the divide-and-conquer
algorithm presented in \Cref{sec:algo}. For that particular case, we
have $\rdim/2 \le \order \le \rdim$; the algorithm we give here solves this
base case using $\bigO{\rdim^\expmatmul \log(\rdim)}$ operations in $\field$.

\begin{proposition}
\label{prop:lin-mib}
Let $\mulmat \in \matSpace[\order]$, $\evMat \in \matSpace[\rdim][\order]$,
$\shifts \in \shiftSpace$, and let $\maxDeg \in \NN$ be a bound on the
degree of the minimal polynomial of $\mulmat$. Then,
\Cref{algo:lin-mib} solves \Cref{pbm:mib} deterministically,
using 
\begin{align*}
  \bigO{ & \order^\expmatmul ( \lceil \rdim / \order \rceil + \log(\maxDeg)) } & \text{if } \expmatmul >2 \\
  \bigO{ & \order^2 ( \lceil \rdim / \order \rceil + \log(\maxDeg)) \log(\order) } & \text{if } \expmatmul =2
\end{align*}
operations in $\field$; it returns the unique interpolation basis
$\intBasis$ for $(\evMat,\mulmat)$ which is in $\shifts$-Popov form. Besides,
the maximal degree in $\intBasis$ is at most $\maxDeg$ and the sum of the
\emph{column} degrees of $\intBasis$ is at most $\order$.
\end{proposition}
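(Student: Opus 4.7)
The plan is to reduce \Cref{pbm:mib} to a rank-profile computation on a block Krylov matrix, exploiting the Cayley--Hamilton-type bound $\deg \Pi_{\mulmat} \le \maxDeg$ to control its size, and then to execute the linear algebra via a Keller--Gehrig-style doubling strategy. The starting observation is that a tuple $\row{p} = (p_1, \ldots, p_\rdim)$ with $\deg p_i < \maxDeg$ is an interpolant for $(\evMat,\mulmat)$ if and only if the vector formed by the coefficients $c_{i,j}$ of $p_i = \sum_{0\le j < \maxDeg} c_{i,j} \var^j$ lies in the left nullspace of the block Krylov matrix
\[
  K \;=\; \begin{bmatrix} \evMat \\ \evMat\,\mulmat \\ \vdots \\ \evMat\,\mulmat^{\maxDeg-1} \end{bmatrix} \,\in\, \matSpace[\rdim\maxDeg][\order].
\]
Because $\maxDeg$ bounds $\deg \Pi_{\mulmat}$, any element of $\intMod$ is an $\polRing$-combination of interpolants with entries of degree less than $\maxDeg$, so $K$ captures a full generating set.

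To obtain the $\shifts$-Popov basis directly (and not merely any $\shifts$-reduced one), I would order the $\rdim \maxDeg$ rows of $K$ according to the priority $(\shift{i}+j,\,i)$ ordered lexicographically, smallest first, and compute the rank profile by a greedy left-to-right scan in this order. For each index $i \in \{1,\ldots,\rdim\}$, the first height $\delta_i$ at which $\row{e}_i \mulmat^{\delta_i}$ becomes linearly dependent on the preceding rows yields, after normalizing the leading coefficient, the $i$-th row of the output basis with pivot in column $i$ of degree $\delta_i$. The shift-priority ordering ensures that the pivot degree dominates both within its row (weighted by $\shifts$) and within its column (unweighted), which are exactly the Popov pivot conditions; uniqueness of the $\shifts$-Popov form is then the classical uniqueness statement for Popov bases of a given module.

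For the cost, $K$ is of course not materialized. Instead I would proceed by Keller--Gehrig-style doubling: $\lceil \log\maxDeg \rceil$ successive squarings of $\mulmat$ produce $\mulmat^{2^k}$, and at each level the current rank-profile-compressed working block (of row count $\bigO{\order}$, since any further rows are redundant) is extended by right-multiplication by $\mulmat^{2^k}$ and then re-compressed. Each level costs one $\order\times\order$ matrix product, for a total of $\bigO{\order^\expmatmul \log\maxDeg}$ operations when $\expmatmul > 2$, and $\bigO{\order^2 \log(\order)\log\maxDeg}$ when $\expmatmul = 2$. When $\rdim > \order$, the initial $\rdim\times\order$ matrix $\evMat$ and its first Krylov extension are handled by splitting its rows into $\lceil \rdim/\order \rceil$ blocks of $\order$ rows, each processed at cost $\bigO{\order^\expmatmul}$ (resp.\ $\bigO{\order^2\log\order}$), which accounts for the $\lceil \rdim/\order\rceil$ term in the stated bounds.

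The degree bounds on the output are then read from the construction: each $\delta_i$ is at most $\maxDeg$ by the choice of the Krylov height, and the $\delta_i$'s are the column degrees of $\intBasis$ (the pivot strictly dominates its column in Popov form), so their sum equals $\deg\det \intBasis = \dim_\field (\polRing^\rdim/\intMod)$, which is the $\field$-dimension of the image of the map $\row{p} \mapsto \row{p}\mul\evMat$ into $\vecSpace = \field^\order$, hence at most $\order$. The main obstacle I anticipate is reconciling the shift-priority ordering of rows with the batched doubling schedule: shifts $\shift{i}$ of different magnitudes force rows of different powers of $\mulmat$ to be interleaved in the scan, so the bookkeeping of the rank-profile compression across levels must carry the shift information and commit pivots as soon as the current shift-priority threshold is crossed, rather than processing one power of $\mulmat$ entirely before moving to the next.
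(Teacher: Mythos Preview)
Your proposal is essentially the paper's own approach: the priority ordering $(\shift{i}+j,i)$ is exactly the paper's $\prioEval$ and tie-breaking rule, the Keller--Gehrig doubling on the shift-permuted Krylov matrix is \Cref{algo:rank-profile}, and your degree-sum argument via $\deg\det\intBasis$ matches \Cref{lem:sum-mindeg}. Two small points to tighten: the Krylov height should be $\maxDeg$ (degrees $\le \maxDeg$, not $<\maxDeg$), and the Popov \emph{column} condition does not follow from the priority ordering alone---it holds because you express each dependent row $\evRow_i\mulmat^{\delta_i}$ using \emph{only} the rank-profile rows (so no coefficient lands on $\evRow_c\mulmat^d$ with $d\ge\delta_c$), which is precisely the $\relmat = \mat{D}\mat{C}^{-1}$ step in \Cref{lem:lin-mib}; your ``obstacle'' about interleaving shifts with the doubling schedule is resolved by the monotonicity in \Cref{lem:structure}, which guarantees that the rank-profile indices found at level $\ell$ remain rank-profile indices at level $\ell+1$.
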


We remark that the degree of the minimal polynomial of $\mulmat$ is at most
$\order$. In \Cref{algo:lin-mib}, we require that $\maxDeg$ be a power of $2$
and thus we may have $\maxDeg > \order$; still we can always choose $\maxDeg <
2\order$. The proof is deferred until \Cref{subsec:lin-mib-algo}, where we
also recall the definition of the shifted Popov form~\citep{BeLaVi06}.

To obtain this result, we rely on linear algebra tools via the use of the
linearization in~\citep{BecLab00}, where an interpolant is seen as a linear
relation between the rows of a striped Krylov matrix. The reader may also refer
to~\citep[\S6.3 and \S6.4]{Kailath80} for a presentation of this point of view.
In \citep{BecLab00}, it is assumed that $\jordan$ is upper triangular: this
yields recurrence relations \citepalias[Theorem 6.1]{BecLab00}, leading to an
iterative algorithm \citepalias[Algorithm FFFG]{BecLab00} to compute an
interpolation basis in shifted Popov form in a fraction-free way.

Here, to obtain efficiency and deal with a general $\jordan$, we proceed in two
steps. First, we compute the row rank profile of the striped Krylov matrix
$\mathcal{K}$ with an algorithm \emph{\`a la}~\citet{KelGeh85}, which uses at
most $\log(\minDeg)$ steps and supports different orderings of the rows in
$\mathcal{K}$ depending on the input shift.  Then, we use the resulting
independent rows of $\mathcal{K}$ to compute the specific rows in the nullspace
of $\mathcal{K}$ which correspond to the interpolation basis in shifted Popov
form.

We note that when $\order=\bigO{1}$, the cost bound in \Cref{prop:lin-mib} is
linear in $\rdim$, while the dense representation of the output $\rdim \times
\rdim$ polynomial matrix will use at least $\rdim^2$ field elements. We will
see in \Cref{subsec:lin-mib-algo} that when $\order < \rdim$, at least
$\rdim-\order$ columns of the basis in $\shifts$-Popov form have only one
nonzero coefficient which is $1$, and thus those columns can be described
without involving any arithmetic operation. Hence, the actual computation is
restricted to an $\rdim \times \order$ submatrix of the output basis.

\subsection{Linearization}
\label[subsec]{subsec:linearization}

Our goal is to explain how to transform the problem of finding interpolants
into a problem of linear algebra over $\field$. This will involve a
straightforward linearization of the polynomials in the output interpolation
basis $\intBasis$, expanding them as a list of coefficients so that $\intBasis$
is represented as a matrix over $\field$. Correspondingly, we show how from the
input $(\evMat,\mulmat)$ one can build a matrix $\mathcal{K}$ over $\field$
which is such that an interpolant for $(\evMat,\mulmat)$ corresponds to a
vector in the left nullspace of $\mathcal{K}$. Then, since we will be looking
for interpolants that have a small degree with respect to the column shifts
given by $\shifts$, we describe a way to adapt these constructions so that they
facilitate taking into account the influence of $\shifts$. This gives a first
intuition of some properties of the linearization of an interpolant that has
small shifted degree: this will then be presented in details in
\Cref{subsec:rel-int}.

Let us first describe the linearization of interpolants, which are seen as row
vectors in $\polMatSpace[1][\rdim]$. In what follows, we suppose that we know a
bound $\maxDeg \in \NNp$ on the degree of the minimal polynomial of $\mulmat$;
one can always choose $\maxDeg = \order$. In \Cref{subsec:rel-int},
we will exhibit $\shifts$-minimal interpolation bases for $(\evMat,\mulmat)$
whose entries all have degree at most $\maxDeg$ (while in general such a basis
may have degree up to $\maxDeg + \sshifts[\shifts-\min(\shifts)]$). Thus, in
this \Cref{sec:lin-mib}, we focus on solutions to \Cref{pbm:mib}
that have degree at most $\maxDeg$. Correspondingly, $\polRing_{\le\maxDeg}$
denotes the set of polynomials in $\polRing$ of degree at most $\maxDeg$.

Given $\mat{P} \in \polMatSpace[\cdim][\rdim]_{\le\maxDeg}$ for some $\cdim \ge
1$, we write it as a polynomial of matrices: $\mat{P} = \mat{P}_0 +
\mat{P}_1\var + \cdots + \mat{P}_{\maxDeg} \var^{\maxDeg}$ where each
$\mat{P}_j$ is a scalar matrix in $\matSpace[\cdim][\rdim]$; then the
\emph{expansion} of $\mat{P}$ (in degree $\maxDeg$) is the matrix
$\linPolMat{\mat{P}} = \left[ \mat{P}_0 \mid \mat{P}_1 \mid \cdots \mid
\mat{P}_\maxDeg \right] \in\matSpace[\cdim][\rdim(\maxDeg+1)]$.  The reciprocal
operation is called \emph{compression} (in degree $\maxDeg$): given a scalar
matrix $\mat{M}\in\matSpace[\cdim][\rdim(\maxDeg+1)]$, we write it with blocks
$\mat{M} = \left[\mat{M}_0 \mid \mat{M}_1 \mid \cdots \mid
\mat{M}_\maxDeg\right]$ where each $\mat{M}_j$ is in $\matSpace[\cdim][\rdim]$,
and then we define its compression as $\polFromLin{\mat{M}} = \mat{M}_0 +
\mat{M}_1\var + \cdots + \mat{M}_{\maxDeg}\var^{\maxDeg} \in
\polMatSpace[\cdim][\rdim]_{\le\maxDeg}$. These definitions of
$\linPolMat{\mat{P}}$ and $\polFromLin{\mat{M}}$ hold for any row dimension
$n$; this $n$ will always be clear from the context.

Now, given some matrices $\evMat \in \evSpace{\order}$ and $\mulmat \in
\matSpace[\order]$, our interpolation problem asks to find $\intBasis \in
\intSpace_{\le\maxDeg}$ such that $\intBasis \mul \evMat = 0$.  Writing
$\intBasis = \intBasis_0 + \intBasis_1\var + \cdots + \intBasis_{\maxDeg}
\var^{\maxDeg}$, we recall that $\intBasis \mul \evMat = \intBasis_0 \evMat +
\intBasis_1 \evMat \mulmat + \cdots + \intBasis_\maxDeg \evMat
\mulmat^\maxDeg$.  Then, in accordance to the linearization of $\intBasis$, the
input $(\evMat,\mulmat)$ is expanded as follows: \[\krylov{\evMat} =
\left[\begin{array}{c} \evMat \\ \hline \evMat\mulmat \\ \hline \vdots \\
  \hline \evMat\mulmat^{\maxDeg} \end{array}\right] \in
\matSpace[\rdim(\maxDeg+1)][\order].\] This way, we have $\mat{P}\mul\evMat =
\linPolMat{\mat{P}}\krylov{\evMat}$ for any polynomial matrix $\mat{P} \in
\polMatSpace[\cdim][\rdim]_{\le\maxDeg}$.  In particular, a row vector
$\row{p} \in \polMatSpace[1][\rdim]_{\le\maxDeg}$ is an interpolant for
$\evMat$ if and only if $\linPolMat{\row{p}} \krylov{\evMat} = 0$, that is,
$\linPolMat{\row{p}}$ is in the (left) nullspace of $\krylov{\evMat}$. Up to
some permutation of the rows and different degree constraints, this so-called
\emph{striped-Krylov} matrix $\krylov{\evMat}$ was used in~\citep{BecLab00} for
the purpose of computing interpolants.

\medskip \noindent \textbf{Notation.}
For the rest of this \Cref{sec:lin-mib}, we will use the letter $i$ for
rows of $\krylov{\evMat}$ and for columns of $\linPolMat{\mat{P}}$; the letter
$j$ for columns of $\krylov{\evMat}$; the letter $d$ for the block of columns
of $\linPolMat{\mat{P}}$ which correspond to coefficients of degree $d$ in
$\mat{P}$, as well as for the corresponding block $\evMat\mulmat^d$ of rows of
$\krylov{\evMat}$; the letter $c$ for the columns of this degree $d$ block in
$\linPolMat{\mat{P}}$ and for the rows of the block $\evMat\mulmat^d$ in
$\krylov{\evMat}$.

\begin{example}[Linearization]
\label{eg:linearization}

In this example, we have $\rdim = \order = \maxDeg = 3$ and the base field is
the finite field with $ 97 $ elements; the input matrices are
\[ \evMat =
  \begin{bmatrix}
27 & 49 & 29 \\
50 & 58 & 0 \\
77 & 10 & 29
\end{bmatrix}  \qquad \text{and} \qquad
\mulshift = \begin{bmatrix}
0 & 1 & 0 \\
0 & 0 & 1 \\
0 & 0 & 0
\end{bmatrix} .\]
Then, we have
\[\krylov{\evMat} = \begin{bmatrix}
27 & 49 & 29 \\
50 & 58 & 0 \\
77 & 10 & 29 \\
0 & 27 & 49 \\
0 & 50 & 58 \\
0 & 77 & 10 \\
0 & 0 & 27 \\
0 & 0 & 50 \\
0 & 0 & 77 \\
0 & 0 & 0 \\
0 & 0 & 0 \\
0 & 0 & 0
\end{bmatrix} .\]
It is easily checked that $\row{p}_1 = (-1, -1, 1) \in \polRing^\rdim$ is an
interpolant for $(\evMat,\mulshift)$, since $\matrow{\evMat}{3} =
\matrow{\evMat}{1} + \matrow{\evMat}{2}$. Other interpolants are for example
$\row{p}_2 =  \left(3 X + 13,\,X + 57,\,0\right) $ which has row degree $1$,
$\row{p}_3 =  \left(X^{2} + 36 X,\,31 X,\,0\right) $ which has row degree $2$,
and $\row{p}_4 =  \left(X^{3},\,0,\,0\right) $ which has row degree $3$. We
have 
\[
\begin{array}{ccccccccccccccccccl}
  \linPolMat{\row{p}_1} & = & [ & 96 & 96 & 1 & | & 0  & 0 &  0 & | & 0 & 0 & 0 & | & 0 & 0 & 0 & ]  \\
  \linPolMat{\row{p}_2} & = & [ & 13 & 57 & 0 & | & 3  & 1 &  0 & | & 0 & 0 & 0 & | & 0 & 0 & 0 & ]  \\
  \linPolMat{\row{p}_3} & = & [ & 0  & 0  & 0 & | & 36 & 31 & 0 & | & 1 & 0 & 0 & | & 0 & 0 & 0 & ]  \\
  \linPolMat{\row{p}_4} & = & [ & 0  & 0  & 0 & | & 0  & 0 &  0 & | & 0 & 0 & 0 & | & 1 & 0 & 0 & ].
\end{array}
\]
Besides, one can check that the matrix
\[ \intBasis =  \begin{bmatrix}
X^{2} + 36 X & 31 X & 0 \\
3 X + 13 & X + 57 & 0 \\
96 & 96 & 1
\end{bmatrix} , \] whose rows are $(\row{p}_3,\row{p}_2,\row{p}_1)$ is a
reduced basis for the module $\intMod[\evMat,\mulshift]$ of Hermite-Pad\'e
approximants of order $3$ for $\row{f} = \left(29 X^{2} + 49 X + 27,\,58 X +
50,\,29 X^{2} + 10 X + 77\right)$.  \qed
\end{example}

Now,
we need tools to interpret the $\shifts$-minimality of an interpolation basis.
In \Cref{eg:linearization}, we see that $\row{p}_1$ has $\unishift$-row
degree $0$ and therefore appears in $\intBasis$; however $\row{p}_4$ has
$\unishift$-row degree $3$ and does not appear in $\intBasis$.  On the other
hand, considering $\shifts=(0,3,6)$, the $\shifts$-row degree of $\row{p}_4$
is $3$, while the one of $\row{p}_1$ is $6$: when forming rows of a
$(0,3,6)$-minimal interpolation basis, $\row{p}_4$ is a better candidate than
$\row{p}_1$. We see through this example that the uniform shift
$\shifts=\unishift$ leads to look in priority for relations involving the
first rows of the matrix $\krylov{\evMat}$; on the other hand, the shift
$\shifts=(0,3,6)$ leads to look for relations involving in priority the rows
$\matrow{\evMat}{1}$, $\matrow{\evMat}{1} \mulshift$, $\matrow{\evMat}{1}
\mulshift^2$, and $\matrow{\evMat}{1} \mulshift^3$ in $\krylov{\evMat}$
before considering the rows $\matrow{\evMat}{2}$ and $\matrow{\evMat}{3}$.

Going back to the general case, we define a notion of \emph{priority} of the
row $c$ of $\evMat \mulmat^d$ in $\krylov{\evMat}$. Let $\row{v} \in
\matSpace[1][\rdim(\maxDeg+1)]$ be any relation between the rows of
$\krylov{\evMat}$ involving this row, meaning that, writing $\row{p} =
\row{p}_0 + \row{p}_1 \var + \cdots + \row{p}_\maxDeg \var^\maxDeg=
\polFromLin{\row{v}}$ for the corresponding interpolant, the coefficient in
column $c$ of $\row{p}_d$ is nonzero. This implies that the $\shifts$-row
degree of $\row{p}$ is at least $\shift{c} + d$.  Since the $\shifts$-row
degree is precisely what we want to minimize in order to obtain an
$\shifts$-minimal interpolation basis, the priority of the rows of
$\krylov{\evMat}$ can be measured by the function $\prioEval$ defined by
$\prioEval(c,d) = \shift{c}+d$. Then, when computing relations between rows of
$\krylov{\evMat}$, we should use in priority the rows with low $\prioEval(d,r)$
in order to get interpolants with small $\shifts$-row degree.

To take this into account, we extend the linearization framework by using a
permutation of the rows of $\krylov{\evMat}$ so that they appear in
non-increasing order of their priority given by $\shifts$.  This way, an
interpolant with \emph{small} $\shifts$-row degree is always one whose
expansion forms a relation between the \emph{first} rows of the permuted
$\krylov{\evMat}$. To preserve properties such as $\intBasis \mul \evMat =
\linPolMat{\intBasis} \krylov{\evMat}$, we naturally permute the columns of
$\linPolMat{\mat{P}}$ accordingly. If $\ell = [\prioEval(1,0), \ldots,
\prioEval(\rdim,0), \prioEval(1,1), \ldots, \prioEval(\rdim,1), \ldots,
\prioEval(1,\maxDeg), \ldots, \prioEval(\rdim,\maxDeg)]$ in $\ZZ^{1 \times
\rdim(\maxDeg+1)}$ denotes the row vector indicating the priorities of the rows
of $\krylov{\evMat}$, then we choose an $\rdim(\maxDeg+1) \times
\rdim(\maxDeg+1)$ permutation matrix $\prioPerm$ such that the list $\ell
\prioPerm$ is non-decreasing. Then, the matrix $\prioPerm^{-1} \krylov{\evMat}$
is the matrix $\krylov{\evMat}$ with rows permuted so that they are arranged by
non-increasing priority, that is, by non-decreasing values of $\prioEval$.
Furthermore, the permutation $\prioPerm$ induces a bijection $\prioIndex$ which
keeps track of the position changes when applying the permutation: it
associates to $(c,d)$ the index $\prioIndex(c,d)$ of the element
$\prioEval(c,d)$ in the sorted list $\ell \prioPerm$. We now give precise
definitions.

\begin{definition}[Priority]
\label{dfn:priority}
Let $\evMat \in \matSpace[\rdim][\order]$ and $\mulmat \in \matSpace[\order]$,
and let $\shifts \in \shiftSpace$. The priority function $\prioEval:
\{1,\ldots,\rdim\} \times \{0,\ldots,\maxDeg\} \rightarrow \ZZ$ is defined by
$\prioEval(c,d) = \shift{c}+d$. Let $\ell = [\prioEval(1,0), \ldots,
\prioEval(\rdim,0), \prioEval(1,1), \ldots, \prioEval(\rdim,1), \ldots,
\prioEval(1,\maxDeg), \ldots, \prioEval(\rdim,\maxDeg)]$ be the sequence of
priorities in $\ZZ^{1 \times \rdim(\maxDeg+1)}$. Then, we define $\prioPerm$ as
the unique permutation matrix in $\matSpace[\rdim(\maxDeg+1)]$ along with the
corresponding indexing function
\begin{align*}
  & \prioIndex: \{1,\ldots,\rdim\} \times
\{0,\ldots,\maxDeg\} \rightarrow \{1,\ldots,\rdim(\maxDeg+1)\} \\
  & [\prioIndex(1,0), \ldots, \prioIndex(\rdim,0),
\ldots, \prioIndex(1,\maxDeg), \ldots, \prioIndex(\rdim,\maxDeg)] =
[1,2,\ldots,\rdim(\maxDeg+1)] \; \prioPerm
\end{align*}
which are such that
\begin{enumerate}[(i)]
  \item $\ell \prioPerm$ is non-decreasing;
  \item whenever $(c,d) \neq (c',d')$ are such that $\prioEval(c,d) =
    \prioEval(c',d')$, we have $c\neq c'$ and assuming without loss of
    generality that $c < c'$, then $\prioIndex(c,d) < \prioIndex(c',d')$.
\end{enumerate}
Besides, we define $\krylov[\shifts]{\evMat} = \prioPerm^{-1} \krylov{\evMat}$
as well as the shifted expansion $\linPolMat[\shifts]{\mat{P}} =
\linPolMat{\mat{P}} \prioPerm$ and the shifted compression
$\polFromLin[\shifts]{\mat{M}} = \polFromLin{\mat{M} \prioPerm}$.
\end{definition}

\noindent In other words, $\prioPerm$ is the unique permutation which
lexicographically sorts the sequence \[ [(\prioEval(1,0),1), \ldots,
(\prioEval(\rdim,0),\rdim), \ldots, (\prioEval(1,\maxDeg),1), \ldots,
(\prioEval(\rdim,\maxDeg),\rdim)]. \]
A representation of $\prioPerm$ can be computed using $\bigO{\rdim\maxDeg
\log(\rdim\maxDeg)}$ integer comparisons, and a representation of $\prioIndex$
can be computed using its definition in time linear in $\rdim(\maxDeg+1)$.
In the specific case of the uniform shift $\shifts=\unishift$, we have
$\prioEval(c,d) = d$, $\prioPerm$ is the identity matrix, and $\prioIndex(c,d)
= c + \rdim d$, and we have the identities $\krylov[\shifts]{\evMat} =
\krylov{\evMat}$, $\polFromLin[\shifts]{\mat{M}} = \polFromLin{\mat{M}}$,
$\linPolMat[\shifts]{\mat{P}} = \linPolMat{\mat{P}}$. The main ideas of the
rest of \Cref{sec:lin-mib} can be understood focusing on this particular
case.

\begin{example}[Input linearization, continued]
\label{eg:linearization-bis}


In the context of \Cref{eg:linearization}, if we consider the shifts $\shifts[s] =  \left(0,\,3,\,6\right) $ and $\shifts[t] =  \left(3,\,0,\,2\right) $, then we have \[ \krylov[{\shifts[s]}]{\evMat} =  \begin{bmatrix}
27 & 49 & 29 \\
0 & 27 & 49 \\
0 & 0 & 27 \\
0 & 0 & 0 \\
50 & 58 & 0 \\
0 & 50 & 58 \\
0 & 0 & 50 \\
0 & 0 & 0 \\
77 & 10 & 29 \\
0 & 77 & 10 \\
0 & 0 & 77 \\
0 & 0 & 0
\end{bmatrix} \qquad \text{and} \qquad \krylov[{\shifts[t]}]{\evMat} =
\begin{bmatrix}
50 & 58 & 0 \\
0 & 50 & 58 \\
0 & 0 & 50 \\
77 & 10 & 29 \\
27 & 49 & 29 \\
0 & 0 & 0 \\
0 & 77 & 10 \\
0 & 27 & 49 \\
0 & 0 & 77 \\
0 & 0 & 27 \\
0 & 0 & 0 \\
0 & 0 & 0
\end{bmatrix} . \] 
Besides, one can check that the shifted expansions of the interpolants  $\row{p}_1$, $\row{p}_2$, $\row{p}_3$, and $\row{p}_4$  with respect to $\shifts$ and $\shifts[t]$ are 
\[
\begin{array}{cccccccccccccccl}
\linPolMat[{\shifts}]{\row{p}_1}    & = & [ & 96 & 0 & 0 & 0 & 96 & 0 & 0 & 0 & 1 & 0 & 0 & 0 & ] \\ 
\linPolMat[{\shifts[t]}]{\row{p}_1} & = & [ & 96 & 0 & 0 & 1 & 96 & 0 & 0 & 0 & 0 & 0 & 0 & 0 & ]  \\ 
\linPolMat[{\shifts}]{\row{p}_2}    & = & [ & 13 & 3 & 0 & 0 & 57 & 1 & 0 & 0 & 0 & 0 & 0 & 0 & ] \\ 
\linPolMat[{\shifts[t]}]{\row{p}_2} & = & [ & 57 & 1 & 0 & 0 & 13 & 0 & 0 & 3 & 0 & 0 & 0 & 0 & ] \\ 
\linPolMat[{\shifts}]{\row{p}_3}    & = & [ & 0 & 36 & 1 & 0 & 0 & 31 & 0 & 0 & 0 & 0 & 0 & 0 & ] \\ 
\linPolMat[{\shifts[t]}]{\row{p}_3} & = & [ & 0 & 31 & 0 & 0 & 0 & 0 & 0 & 36 & 0 & 1 & 0 & 0  & ]\\ 
\linPolMat[{\shifts}]{\row{p}_4}    & = & [ & 0 & 0 & 0 & 1 & 0 & 0 & 0 & 0 & 0 & 0 & 0 & 0  & ]  \\ 
\linPolMat[{\shifts[t]}]{\row{p}_4} & = & [ & 0 & 0 & 0 & 0 & 0 & 0 & 0 & 0 & 0 & 0 & 0 & 1 & ].
\text{\qed}\end{array}
\]

\end{example}

\subsection{Minimal linear relations and minimal interpolation bases}
\label[subsec]{subsec:rel-int}

From the previous subsection, the intuition is that the minimality of
interpolants can be read on the corresponding linear relations between the rows
of $\krylov[\shifts]{\evMat}$, as the fact that they involve in priority the
first rows. Here, we support this intuition with rigorous statements,
presenting a notion of minimality for linear relations between the rows of
$\krylov[\shifts]{\evMat}$, and showing that an $\shifts$-minimal interpolation
basis for $(\evMat,\mulmat)$ corresponds to a specific set of $\rdim$ such
minimal relations.

First we show that, given a polynomial row vector and a degree shift, one can
directly read the pivot index~\citep[Section 2]{MulSto03} of the vector from its
expansion. Extending the definitions in~\citep{MulSto03} to the shifted case, we
define the $\shifts$-pivot index, $\shifts$-pivot entry, and $\shifts$-pivot
degree of a nonzero row vector as follows.

\begin{definition}[Pivot]
\label{dfn:pivot}
Let $\row{p} = [p_c]_c \in \polMatSpace[1][\rdim]$ be a nonzero row vector and
let $\shifts \in \shiftSpace$ be a degree shift. The \emph{$\shifts$-pivot
index} of $\row{p}$ is the largest column index $c \in \{1,\ldots,\rdim\}$ such
that $\deg(p_c) + \shift{c}$ is equal to the $\shifts$-row degree
$\rdeg[\shifts]{\row{p}}$ of this row; then, $p_c$ and $\deg(p_c)$ are called
the \emph{$\shifts$-pivot entry} and the \emph{$\shifts$-pivot degree} of
$\row{p}$, respectively.
\end{definition}

\noindent The following result will be useful for our purpose, since pivot
indices can be used to easily identify some specific forms of reduced
polynomial matrices.

\begin{lemma}
\label{lem:pivot}
Let $\row{p} = [p_c]_c \in \polMatSpace[1][\rdim]_{\le\maxDeg}$ be a nonzero
row vector.  Then, $i = \prioIndex(c,d)$ is the column index of the rightmost
nonzero coefficient in $\linPolMat[\shifts]{\row{p}}$ if and only if
$\row{p}$ has $\shifts$-pivot index~$c$ and $\shifts$-pivot degree $\deg(p_c)
= d$.
\end{lemma}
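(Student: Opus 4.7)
The plan is to translate the claim back to the definition of $\prioPerm$ and $\prioIndex$. First I would recall the structure of $\linPolMat{\row{p}}$: writing $\row{p} = \row{p}_0 + \row{p}_1 \var + \cdots + \row{p}_\maxDeg \var^\maxDeg$ with $\row{p}_d = [[\var^d] p_c]_c \in \matSpace[1][\rdim]$, the entry of $\linPolMat{\row{p}}$ in the column indexed $(c,d)$ (i.e.\ column $c + \rdim d$) is exactly $[\var^d] p_c$. Since $\linPolMat[\shifts]{\row{p}} = \linPolMat{\row{p}} \prioPerm$, the permutation $\prioPerm$ sends the column originally at position $(c,d)$ to position $\prioIndex(c,d)$; hence the entry of $\linPolMat[\shifts]{\row{p}}$ at column $\prioIndex(c,d)$ equals $[\var^d] p_c$. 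In particular, ``the column index of the rightmost nonzero coefficient of $\linPolMat[\shifts]{\row{p}}$ is $\prioIndex(c,d)$'' translates to: $(c,d)$ maximizes $\prioIndex(c',d')$ among all pairs $(c',d')$ with $[\var^{d'}]p_{c'} \neq 0$.

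Next I would unfold how $\prioIndex$ orders its arguments. By \Cref{dfn:priority}, $\prioIndex(c',d') < \prioIndex(c'',d'')$ holds iff either $\prioEval(c',d') = \shift{c'}+d' < \shift{c''}+d'' = \prioEval(c'',d'')$, or these priorities are equal and $c' < c''$. Therefore maximizing $\prioIndex$ over pairs $(c',d')$ with $[\var^{d'}]p_{c'}\neq 0$ amounts to a two-level lexicographic maximization: first maximize $\shift{c'}+d'$, then maximize $c'$.

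For the first maximization, observe that any such $(c',d')$ satisfies $d' \le \deg(p_{c'})$, so
\[
\shift{c'}+d' \;\le\; \shift{c'} + \deg(p_{c'}) \;\le\; \rdeg[\shifts]{\row{p}},
\]
and equality in both inequalities forces $d' = \deg(p_{c'})$ together with $\shift{c'}+\deg(p_{c'}) = \rdeg[\shifts]{\row{p}}$. So the first-level maximum is $\rdeg[\shifts]{\row{p}}$, attained precisely at pairs of the form $(c',\deg(p_{c'}))$ with $c'$ achieving the shifted row degree. For the second-level tie-break, picking the largest such $c'$ is exactly the definition of the $\shifts$-pivot index $c$ of $\row{p}$ (\Cref{dfn:pivot}), and then $d' = \deg(p_c)$ is by definition the $\shifts$-pivot degree. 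This gives both directions of the equivalence: $(c,d)$ arises as the rightmost nonzero column index under $\prioIndex$ iff $c$ is the $\shifts$-pivot index of $\row{p}$ and $d = \deg(p_c)$ is its $\shifts$-pivot degree.

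The only delicate point is the tie-breaking step, but it follows verbatim from clause (ii) in \Cref{dfn:priority}; the rest is a direct unfolding of definitions, so I do not expect any serious obstacle.
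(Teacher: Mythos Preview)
Your proof is correct and follows essentially the same approach as the paper's own proof: both arguments reduce the statement to the fact that the ordering induced by $\prioIndex$ is lexicographic first in $\prioEval(c',d')=\shift{c'}+d'$ and then in $c'$, and then observe that maximizing over pairs with $[\var^{d'}]p_{c'}\neq 0$ yields precisely the $\shifts$-pivot index and degree. The paper organizes this as a three-case analysis of the entries at indices $\ge i$ (the entry at $i$, those with strictly higher $\prioEval$, and those with equal $\prioEval$ but larger index), whereas you phrase it as a single two-level lexicographic maximization; the underlying logic is identical.
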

\begin{proof}
We distinguish three sets of entries of $\linPolMat[\shifts]{\row{p}}$ with
column index $\ge i$: the one at index $i$, the ones that have a higher
$\prioEval$, and the ones that have the same $\prioEval$:
\begin{itemize}
  \item if the coefficient at index $i=\prioIndex(c,d)$ in
    $\linPolMat[\shifts]{\row{p}}$ is nonzero then
    $\deg(p_c) \ge d$, and if $\deg(p_c) = d$ then
    the coefficient at index $i=\prioIndex(c,d)$ in
    $\linPolMat[\shifts]{\row{p}}$ is nonzero;
  \item the coefficient at index $\prioIndex(c',d')$ in
    $\linPolMat[\shifts]{\row{p}}$ is zero for all $(c',d')$ such that
    $\prioEval(c',d') > \prioEval(c,d)$ if and only if $ \shift{c'} +
    \deg(p_{c'}) \le \shift{c} + d$ for all $1 \le c' \le \rdim$;
  \item assuming $ \shift{c'} + \deg(p_{c'}) \le \shift{c} + d$ for all $1
    \le c' \le \rdim$, the coefficient at index $\prioIndex(c',d')$ in
    $\linPolMat[\shifts]{\row{p}}$ is zero for all $(c',d')$ such that
    $\prioEval(c',d') = \prioEval(c,d)$ with $\prioIndex(c',d') > i =
    \prioIndex(c,d)$ (by definition of $\prioPerm$, this implies $c' > c$) if
    and only if we have $\shift{c'} + \deg(p_{c'}) < \shift{c} + d$ for all
    $c'>c$;
\end{itemize}
these three points prove the equivalence.
\end{proof}

\noindent We have seen that an interpolant for $(\evMat,\mulmat)$ corresponds
to a linear relation between the rows of $\krylov[\shifts]{\evMat}$.  From
this perspective, the preceding result implies that an interpolant with
$\shifts$-pivot index $c$ and $\shifts$-pivot degree $d$ corresponds to a
linear relation which expresses the row at index $\prioIndex(c,d)$ in
$\krylov[\shifts]{\evMat}$ as a linear combination of the rows at indices
smaller than $\prioIndex(c,d)$. Now, we give a precise correspondence between
minimal interpolation bases and sets of linear relations which involve in
priority the first rows of $\krylov[\shifts]{\evMat}$.

\begin{example}[Minimal relations]
\label{eg:minimal-relations}
Let us consider the context of \Cref{eg:linearization} with the uniform
shift. As mentioned above, the matrix $\intBasis$ whose rows are
$(\row{p}_3,\row{p}_2,\row{p}_1)$ is a minimal interpolation basis.  The pivot
indices of $\row{p}_3,\row{p}_2,\row{p}_1$ are $1$, $2$, $3$, and their pivot
degrees are $2$, $1$, $0$. Besides, we remark that 
\begin{itemize}
\item the relation $\linPolMat{\row{p}_1}$ involves the row $c=3$ of
  $\evMat$ and the rows above this one in $\krylov{\evMat}$;
\item $\linPolMat{\row{p}_2}$ involves the row $c=2$ of $\evMat \mulshift$
  and the rows above this one in $\krylov{\evMat}$, and there is no linear
  relation involving the row $c=2$ of $\evMat$ and the rows above it
  in $\krylov{\evMat}$;
\item $\linPolMat{\row{p}_3}$ involves the row $c=1$ of $\evMat \mulshift^2$
  and the rows above it in $\krylov{\evMat}$: one can check that there is
  no linear relation between the row $c=1$ of $\evMat \mulshift$ and the rows
  above it in $\krylov{\evMat}$. \qed
\end{itemize}
\end{example}

This example suggests that we can give a link between the minimal row degree of
a minimal interpolation basis and some minimal exponent $\minDeg_c$ such that
the row $c$ of the block $\evMat \mulmat^{\minDeg_c}$ is a linear combination
on the rows above it in $\krylov{\evMat}$. Extending this to the case of any
shift $\shifts$ leads us to the following definition, which is reminiscent of
the so-called \emph{minimal indices} (or \emph{Kronecker indices}) for
$\unishift$-minimal nullspace bases~\citep[Section 6.5.4]{Kailath80}.

\begin{definition}[Minimal degree]
\label{dfn:min-deg}
Let $\mulmat \in \matSpace[\order]$ and $\evMat \in \matSpace[\rdim][\order]$,
and let $\shifts \in \shiftSpace$.  The \emph{$\shifts$-minimal degree} of
$(\evMat,\mulmat)$ is the tuple $(\minDeg_1,\ldots,\minDeg_\rdim)$ where for
each $c \in \{1,\ldots,\rdim\}$, $\minDeg_c \in \NN$ is the smallest exponent
such that the row $\matrow{ {\evMat \mulmat^{\minDeg_c}} }{c} = \matrow{
\krylov[\shifts]{\evMat} }{\prioIndex(c,\minDeg_c)}$ is a linear combination
of the rows in $\{\matrow{\krylov[\shifts]{\evMat}}{i},
i<\prioIndex(c,\minDeg_c)\}$.
\end{definition}

\noindent 
We note that we have $\minDeg_c \le \maxDeg$ for every $c$, since the minimal
polynomial of the matrix $\mulmat \in \matSpace[\order]$ is of degree at most
$\maxDeg$. We now state in \Cref{lem:lin-piv-deg} and
\Cref{cor:weak-popov-mib} that the minimal degree of
$(\evMat,\mulmat)$ indeed corresponds to a notion of minimality of interpolants
and interpolation bases. Until the end of this \cref{subsec:rel-int},
we fix a matrix $\evMat \in \evSpace{\order}$ and a matrix $\mulmat \in
\matSpace[\order]$.

\begin{lemma}
\label{lem:lin-piv-deg}
Let $\int = [p_1,\ldots,p_\rdim] \in\polMatSpace[1][\rdim]_{\le\maxDeg}$,
$\shifts \in \shiftSpace$, and $c$ in $\{1,\ldots,\rdim\}$. If $\int$ is an
interpolant for $(\evMat,\mulmat)$ with $\shifts$-pivot index $c$, then $\int$
has $\shifts$-pivot degree $\deg(p_c) \ge \minDeg_c$. Besides, there is an
interpolant $\int \in \polMatSpace[1][\rdim]_{\le\maxDeg}$ for
$(\evMat,\mulmat)$ which has $\shifts$-pivot index $c$ and $\shifts$-pivot
degree $\deg(p_c) = \minDeg_c$.
\end{lemma}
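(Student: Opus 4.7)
The plan is to reduce both claims to the correspondence established in Lemma~\ref{lem:pivot}, combined with the linearization identity $\int \mul \evMat = \linPolMat[\shifts]{\int}\,\krylov[\shifts]{\evMat}$ introduced in Subsection~\ref{subsec:linearization} (which follows immediately from $\linPolMat{\int}\krylov{\evMat} = \int \mul \evMat$ and the definitions $\krylov[\shifts]{\evMat} = \prioPerm^{-1}\krylov{\evMat}$, $\linPolMat[\shifts]{\int} = \linPolMat{\int}\prioPerm$). Through these, a statement about the $\shifts$-pivot of an interpolant becomes a statement about a specific linear dependence between rows of $\krylov[\shifts]{\evMat}$, which is exactly what Definition~\ref{dfn:min-deg} measures.

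For the lower bound, suppose $\int$ is an interpolant for $(\evMat,\mulmat)$ of degree at most $\maxDeg$, with $\shifts$-pivot index $c$ and $\shifts$-pivot degree $d=\deg(p_c)$. By Lemma~\ref{lem:pivot}, the rightmost nonzero entry of $\linPolMat[\shifts]{\int}$ lies at column $i = \prioIndex(c,d)$. Since $\linPolMat[\shifts]{\int}\,\krylov[\shifts]{\evMat} = 0$, after scaling this entry to $1$ we obtain a relation expressing $\matrow{\krylov[\shifts]{\evMat}}{i} = \matrow{\evMat\mulmat^d}{c}$ as an $\field$-linear combination of the rows of $\krylov[\shifts]{\evMat}$ indexed by integers smaller than $i$. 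By the minimality in Definition~\ref{dfn:min-deg}, this forces $d \ge \minDeg_c$.

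For the existence claim, I would simply read Definition~\ref{dfn:min-deg} in the reverse direction. By construction, the row at index $\prioIndex(c,\minDeg_c)$ of $\krylov[\shifts]{\evMat}$ is an $\field$-linear combination of the rows strictly above it, so there is a vector $\row{v} \in \matSpace[1][\rdim(\maxDeg+1)]$ with $\row{v}\,\krylov[\shifts]{\evMat} = 0$ and whose rightmost nonzero entry sits precisely at column $\prioIndex(c,\minDeg_c)$ (with value $1$). Setting $\int = \polFromLin[\shifts]{\row{v}}$ produces a polynomial vector whose shifted expansion is exactly $\row{v}$; since $\minDeg_c \le \maxDeg$, we have $\int \in \polMatSpace[1][\rdim]_{\le\maxDeg}$, and since $\linPolMat[\shifts]{\int}\,\krylov[\shifts]{\evMat} = 0$, $\int$ is an interpolant for $(\evMat,\mulmat)$. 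Applying Lemma~\ref{lem:pivot} once more, now in the direction from linearizations back to pivots, yields that $\int$ has $\shifts$-pivot index $c$ and $\shifts$-pivot degree $\minDeg_c$.

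No substantial obstacle is expected: the lemma is essentially a dictionary between the polynomial and the linear-algebra viewpoints, and Lemma~\ref{lem:pivot} supplies the dictionary entries. The only point to handle carefully is that the identity $\int \mul \evMat = \linPolMat[\shifts]{\int}\,\krylov[\shifts]{\evMat}$ must be used in both directions, and the bound $\minDeg_c \le \maxDeg$ (which follows from the hypothesis that $\maxDeg$ bounds the degree of the minimal polynomial of $\mulmat$, as already noted just after Definition~\ref{dfn:min-deg}) must be invoked in the existence part to keep the constructed interpolant inside $\polMatSpace[1][\rdim]_{\le\maxDeg}$.
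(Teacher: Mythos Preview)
Your proof is correct and follows essentially the same approach as the paper: both parts rely on Lemma~\ref{lem:pivot} to translate between the $\shifts$-pivot data of $\int$ and the position of the rightmost nonzero entry of $\linPolMat[\shifts]{\int}$, and then invoke Definition~\ref{dfn:min-deg} in each direction. The paper's proof is slightly terser but contains the same ideas in the same order.
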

\begin{proof}
First, assume $\int$ is an interpolant with $\shifts$-pivot index $c$, and let
$d = \deg(p_c)$ be the degree of the $\shifts$-pivot entry of $\int$.
According to \Cref{lem:pivot}, the rightmost nonzero element of
$\linPolMat[\shifts]{\int}$ is at index $\prioIndex(c,d)$, and since $\int$ is
an interpolant for $(\evMat,\mulmat)$ we have $\linPolMat[\shifts]{\mat{P}}
\krylov[\shifts]{\evMat} = 0$.  This implies that the row $\prioIndex(c,d)$ of
$\krylov[\shifts]{\evMat}$ is a linear combination of the rows in $\{\matrow{
\krylov[\shifts]{\evMat} }{i}, i<\prioIndex(c,d) \}$, which in turn implies
$d\ge \minDeg_c$ by definition of $\minDeg_c$.  Now, the definition of
$\minDeg_c$ also ensures that the row $\prioIndex(c,\minDeg_c)$ of
$\krylov[\shifts]{\evMat}$ is a linear combination of the rows $\{\matrow{
\krylov[\shifts]{\evMat} }{i}, i<\prioIndex(c,d_c) \}$.  This linear
combination forms a vector $\rowvec$ in the nullspace of
$\krylov[\shifts]{\evMat}$ with its rightmost nonzero element at index
$\prioIndex(c,\minDeg_c)$; then by \Cref{lem:pivot}, $\int =
\polFromLin[\shifts]{\rowvec}$ is an interpolant with $\shifts$-pivot index
$c$ and $\shifts$-pivot degree $\minDeg_c$. Besides, $\int$ has degree at most
$\maxDeg$ by construction.
\end{proof}

Now, we want to extend these considerations on row vectors and
interpolants to matrices and interpolation bases. In connection with the notion
of pivot of a row, there is a specific form of reduced matrices called the weak
Popov form~\citep{MulSto03}, for which we extend the definition to any shift
$\shifts$ as follows.

\begin{definition}[weak Popov form, pivot degree]
\label{dfn:weak-popov}
Let $\mat{P}$ in $\polMatSpace[\rdim]$ have full rank, and let $\shifts$ in
$\shiftSpace$. Then, $\mat{P}$ is said to be in \emph{$\shifts$-weak Popov
form} if the $\shifts$-pivot indices of its rows are pairwise distinct.
Furthermore, the \emph{$\shifts$-pivot degree} of $\mat{P}$ is the tuple
$(d_1,\ldots,d_\rdim) \in \NN^\rdim$ where for $c \in \{1,\ldots,\rdim\}$,
$d_c$ is the $\shifts$-pivot degree of the row of $\mat{P}$ which has
$\shifts$-pivot index $c$.
\end{definition}

\noindent
A matrix $\mat{P}$ in $\shifts$-weak Popov form is in particular
$\shifts$-reduced. Then, \Cref{lem:lin-piv-deg} leads to the following
result; we remark that even though the matrix $\mat{P}$ in this corollary is
$\shifts$-reduced and each of its rows is an interpolant, we do not yet claim
that it is an interpolation basis. 

\begin{corollary}
\label{cor:weak-popov-mib}
There is a matrix $\mat{P} \in \polMatSpace[\rdim]_{\le \maxDeg}$ in
$\shifts$-weak Popov form, with $\shifts$-pivot entries on the diagonal
and $\shifts$-pivot degree $(\minDeg_1,\ldots,\minDeg_\rdim)$, such that
every row of $\mat{P}$ is an interpolant for $(\evMat,\mulmat)$.
\end{corollary}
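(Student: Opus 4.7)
The plan is to construct $\mat{P}$ row by row by invoking \Cref{lem:lin-piv-deg} once for each column index $c \in \{1,\dots,\rdim\}$. Specifically, for each such $c$, \Cref{lem:lin-piv-deg} produces an interpolant $\row{p}_c \in \polMatSpace[1][\rdim]_{\le \maxDeg}$ for $(\evMat,\mulmat)$ whose $\shifts$-pivot index equals $c$ and whose $\shifts$-pivot degree equals the minimal exponent $\minDeg_c$ from \Cref{dfn:min-deg}. I would then define $\mat{P}$ to be the matrix whose $c$-th row is exactly $\row{p}_c$.

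With this construction, I need to verify the three claimed properties. First, each row of $\mat{P}$ is by construction an interpolant for $(\evMat,\mulmat)$ of degree at most $\maxDeg$, so $\mat{P} \in \polMatSpace[\rdim]_{\le \maxDeg}$ and the last condition is immediate. Second, the $\shifts$-pivot indices of the rows of $\mat{P}$ are $1, 2, \dots, \rdim$, which are pairwise distinct; this is precisely the definition of being in $\shifts$-weak Popov form (\Cref{dfn:weak-popov}). Moreover, the $c$-th row has $\shifts$-pivot index $c$, which means the $\shifts$-pivot entry lies in position $(c,c)$, i.e.\ on the diagonal. Third, the $\shifts$-pivot degree of the $c$-th row is $\minDeg_c$, so the $\shifts$-pivot degree of $\mat{P}$ is exactly $(\minDeg_1,\dots,\minDeg_\rdim)$.

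There is no real obstacle here: the corollary is a straightforward packaging of \Cref{lem:lin-piv-deg}, which already supplies an interpolant realizing the minimal pivot degree for each column index. The only subtlety worth flagging in the write-up is that \Cref{cor:weak-popov-mib} stops short of asserting that $\mat{P}$ is an \emph{interpolation basis}; it only states that its rows are interpolants. That stronger statement (every interpolant lies in the row span of $\mat{P}$) would require an additional dimension/rank argument and is deferred, so my proof should not attempt it here.
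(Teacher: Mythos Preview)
Your proposal is correct and follows essentially the same approach as the paper: invoke \Cref{lem:lin-piv-deg} for each $c\in\{1,\dots,\rdim\}$ to obtain an interpolant $\row{p}_c$ with $\shifts$-pivot index $c$ and $\shifts$-pivot degree $\minDeg_c$, and stack these as the rows of $\mat{P}$. Your remark that the corollary does not yet assert $\mat{P}$ is an interpolation basis is also on point and matches the paper's deferral of that fact to \Cref{lem:red-to-mib}.
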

\begin{proof}
For every $c$ in $\{1,\ldots,\rdim\}$, \Cref{lem:lin-piv-deg} shows
that there is an interpolant $\row{p}_c$ for $(\evMat,\mulmat)$ which has
degree at most $\maxDeg$, has $\shifts$-pivot index $c$, and has
$\shifts$-pivot degree $\minDeg_c$.  Then, considering the matrix $\mat{P}$
in $\polMatSpace[\rdim]$ whose row $c$ is $\row{p}_c$ gives the conclusion.
\end{proof}

We conclude this section by proving that the $\shifts$-minimal degree of
$(\evMat,\mulmat)$ is directly linked to the $\shifts$-row degree of a minimal
interpolation basis, which proves in particular that the matrix $\intBasis$ in
\Cref{cor:weak-popov-mib} is an $\shifts$-minimal interpolation basis
for $(\evMat,\mulmat)$.

\begin{lemma}
\label{lem:red-to-mib}
Let $\spolmat[1] \in \intSpace$ be $\shifts$-reduced such that each row of
$\spolmat[1]$ is an interpolant for $(\evMat,\mulmat)$.  Then, $\spolmat[1]$
is an interpolation basis for $(\evMat,\mulmat)$ if and only if the
$\shifts$-row degree $\rdeg[\shifts]{\spolmat[1]}$ of $\spolmat[1]$ is
$(\shift{1}+\minDeg_1, \ldots, \shift{\rdim}+\minDeg_\rdim)$ up to a
permutation of the rows of $\spolmat[1]$. In particular, if $\intBasis$ is a
matrix as in \Cref{cor:weak-popov-mib}, then $\intBasis$ is an
$\shifts$-minimal interpolation basis for $(\evMat,\mulmat)$.
\end{lemma}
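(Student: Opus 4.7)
The plan is to first establish that the matrix $\intBasis$ from \Cref{cor:weak-popov-mib} is actually an interpolation basis, and then derive the iff from a determinant-degree comparison together with the Kailath permutation lemma already used in the proof of \Cref{prop:change_shift}.

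First I would show that every interpolant $\int$ for $(\evMat,\mulmat)$ lies in the $\polRing$-row span of $\intBasis$ via an iterative reduction. If $\int\neq 0$ has $\shifts$-pivot index $c$ and $\shifts$-pivot degree $d$, then $d\ge\minDeg_c$ by \Cref{lem:lin-piv-deg}; since $\matrow{\intBasis}{c}$ has $\shifts$-pivot index $c$ and $\shifts$-pivot degree exactly $\minDeg_c$, an appropriate scalar $\alpha\in\field$ cancels the pivot entry in $\int - \alpha X^{d-\minDeg_c}\matrow{\intBasis}{c}$. A column-by-column inspection (using at columns $c' > c$ the strict inequalities $\shift{c'}+\deg(p_{c'}) < \shift{c}+d$ and $\shift{c'}+\deg(\intBasis_{c,c'}) < \shift{c}+\minDeg_c$ coming from the pivot definitions) shows that the resulting interpolant has $\shifts$-row degree at most $\shift{c}+d$, with equality forcing its $\shifts$-pivot index to be strictly less than $c$. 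The pair ($\shifts$-row degree, $\shifts$-pivot index) therefore strictly decreases in lexicographic order; as both coordinates are bounded below while the vector is nonzero, the iteration must terminate at $0$, expressing $\int$ as a $\polRing$-combination of the rows of $\intBasis$. Hence $\intBasis$ is an interpolation basis, and being in $\shifts$-weak Popov form it is in particular $\shifts$-reduced.

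Since $\intBasis$ is $\shifts$-reduced, the standard determinant identity \citep[Section 6.3.2]{Kailath80} gives $\deg(\det(\intBasis)) = \sum_c\rdeg[\shifts]{\matrow{\intBasis}{c}} - \sshifts = \minDeg_1+\cdots+\minDeg_\rdim$, and because any two interpolation bases differ by a unimodular factor this value equals $\deg(\det(\mat{A}))$ for every interpolation basis $\mat{A}$. For the forward direction of the iff, $\spolmat[1]$ and $\intBasis$ are two $\shifts$-reduced bases of $\intMod[\evMat,\mulmat]$, so by \citep[Lemma 6.3-14]{Kailath80} their $\shifts$-row degrees coincide up to permutation, which yields the claim. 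For the reverse direction, the hypothesis combined with $\shifts$-reducedness of $\spolmat[1]$ forces $\deg(\det(\spolmat[1])) = \minDeg_1+\cdots+\minDeg_\rdim$; since each row of $\spolmat[1]$ is an interpolant we can write $\spolmat[1] = \mat{U}\intBasis$ for some $\mat{U}\in\polMatSpace[\rdim]$, and comparing determinants gives $\deg(\det(\mat{U})) = 0$, so $\mat{U}$ is unimodular and $\spolmat[1]$ is itself an interpolation basis. The ``in particular'' statement is then immediate, the matrix of \Cref{cor:weak-popov-mib} having $\shifts$-row degree $(\shift{c}+\minDeg_c)_c$ by construction.

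The main obstacle is the reduction argument in the first step: one has to track carefully the behaviour of the $\shifts$-row degree and of the $\shifts$-pivot index under a single subtraction in order to guarantee lexicographic decrease and hence termination. Everything that follows is routine bookkeeping with the degree-of-determinant identity for $\shifts$-reduced matrices.
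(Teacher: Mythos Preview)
Your argument is correct, and it takes a genuinely different route to the key fact that the matrix $\intBasis$ of \Cref{cor:weak-popov-mib} is an interpolation basis. You prove this directly by an explicit polynomial reduction: repeatedly subtract a monomial multiple of the matching row of $\intBasis$ and track the lexicographic pair ($\shifts$-row degree, $\shifts$-pivot index) to force termination at zero. The paper instead gets this fact as a byproduct of the forward direction: it takes an arbitrary interpolation basis $\spolmat[1]$, transforms it unimodularly into an $\shifts$-weak Popov form $\spolmat[2]$ via~\citep{MulSto03}, applies \Cref{lem:lin-piv-deg} rowwise to obtain $d_c \ge \minDeg_c$, writes $\intBasis = \mat{U}\spolmat[2]$, and compares determinant degrees to force equality and $\mat{U}$ unimodular in one stroke. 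Your approach is more self-contained (no external appeal to the existence of the weak Popov form) and makes the generation property constructive; the paper's approach is more economical, extracting both the forward direction and the basis property of $\intBasis$ from a single determinant comparison. Once $\intBasis$ is known to be a basis, both proofs finish identically: the forward direction is \citep[Lemma~6.3-14]{Kailath80}, and the reverse direction is the determinant-degree comparison you wrote.
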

\begin{proof}
We denote $\intBasis \in \intSpace_{\le \maxDeg}$ a matrix as in
\Cref{cor:weak-popov-mib}; $\intBasis$ is in particular
$\shifts$-reduced and has $\shifts$-row degree exactly $(\shift{1}+\minDeg_1,
\ldots, \shift{\rdim}+\minDeg_\rdim)$.

First, we assume that $\spolmat[1]$ is an interpolation basis for
$(\evMat,\mulmat)$.  Remarking that a matrix $\spolmat[2]$ is in $\shifts$-weak
Popov form if and only if $\spolmat[2]\shiftMat{\shifts}$ is in weak Popov
form, we know from~\citep[Section 2]{MulSto03} that $\spolmat[1]$ is
left-unimodularly equivalent to a matrix $\spolmat[2]$ in $\shifts$-weak Popov
form. Besides, up to a permutation of the rows of $\spolmat[2]$, we assume
without loss of generality that the pivot entries $\spolmat[2]$ are on the
diagonal. Then, denoting its $\shifts$-pivot degree by $(d_1,\ldots,d_\rdim)$,
the $\shifts$-row degree of $\spolmat[2]$ is $(\shift{1}+d_1, \ldots,
\shift{\rdim}+d_\rdim)$. Since $\spolmat[1]$ and $\spolmat[2]$ are
$\shifts$-reduced and unimodularly equivalent, they have the same $\shifts$-row
degree up to a permutation \citep[Lemma 6.3-14]{Kailath80}: thus it is enough
to prove that $(d_1, \ldots, d_\rdim) = (\minDeg_1, \ldots, \minDeg_\rdim)$.
By \Cref{lem:lin-piv-deg} applied to each row of $\spolmat[2]$,
$d_1,\ldots,d_\rdim$ are at least $\minDeg_1,\ldots,\minDeg_\rdim$,
respectively. On the other hand, since $\spolmat[2]$ is an interpolation basis,
there is a nonsingular matrix $\mat{U}\in\intSpace$ such that $\intBasis =
\mat{U}\spolmat[2]$. Since $\intBasis$ is $\shifts$-reduced with the
$\shifts$-row degree $(\shift{1}+\minDeg_1, \ldots,
\shift{\rdim}+\minDeg_\rdim)$, we have $\deg(\det(\intBasis)) =
\sshifts[{\rdeg[\shifts]{\intBasis}}] - \sshifts = \minDeg_1 + \cdots +
\minDeg_\rdim$ \citep[Lemma~2.10]{Zhou12}. Similarly, we have
$\deg(\det(\spolmat[2])) = \sshifts[{\rdeg[\shifts]{\spolmat[2]}}] - \sshifts =
d_1 + \cdots + d_\rdim$.  Considering the determinantal degree in the identity
$\intBasis = \mat{U}\spolmat[2]$ yields $\minDeg_1+\cdots+\minDeg_\rdim \ge
d_1+\cdots+d_\rdim$, from which we conclude $d_c = \minDeg_c$ for all $1\le c
\le \rdim$.

Now, we note that this also implies that the determinant of $\mat{U}$ is
constant, thus $\mat{U}$ is unimodular and consequently $\intBasis$ is an
interpolation basis: since $\intBasis$ is $\shifts$-reduced by construction,
it is an $\shifts$-minimal interpolation basis.

Finally, we assume that $\spolmat[1]$ has $\shifts$-row degree
$(\shift{1}+\minDeg_1, \ldots, \shift{\rdim}+\minDeg_\rdim)$ up to a
permutation. Since $\intBasis$ is an interpolation basis, there is a
nonsingular matrix $\mat{U}\in\intSpace$ such that
$\spolmat[1]=\mat{U}\intBasis$.  Since $\spolmat[1]$ is $\shifts$-reduced, we
have $\deg(\det(\spolmat[1])) = \sshifts[{\rdeg[\shifts]{\spolmat[1]}}] -
\sshifts = \minDeg_1 + \cdots + \minDeg_\rdim$.  Then, considering the
determinantal degree in the identity $\spolmat[1]=\mat{U}\intBasis$ shows that
the determinant of $\mat{U}$ is a nonzero constant, that is, $\mat{U}$ is
unimodular.  Thus, $\spolmat[1]$ is an interpolation basis.
\end{proof}

\begin{remark}
  \label{rmk:elimination}
As can be observed in the definition of the $\shifts$-minimal degree and in the
proof of \Cref{lem:lin-piv-deg}, one can use Gaussian elimination on the
rows of $\krylov[\shifts]{\evMat}$ to build each row of the $\shifts$-minimal
interpolation basis~$\intBasis$. This gives a method for solving
\Cref{pbm:mib} using linear algebra. Then, the main goal in the rest of
this section is to show how to perform the computation of $\intBasis$
efficiently. \qed
\end{remark}

\subsection{Row rank profile and minimal degree}
\label[subsec]{subsec:rank-profile}

The reader may have noted that $\krylov[\shifts]{\evMat}$ has
$\rdim\order(\maxDeg+1)$ coefficients in $\field$, and in general $\rdim \order
\maxDeg$ may be beyond our target cost bound given in
\Cref{prop:lin-mib}. Here, we show that one can focus on a small
subset of the rows of $\krylov[\shifts]{\evMat}$ which contains enough
information to compute linear relations leading to a matrix $\mat{P}$ as in
\Cref{cor:weak-popov-mib}. Then, we present a fast algorithm to
compute this subset of rows. We also use these results to bound the average
$\shifts$-row degree of any $\shifts$-minimal interpolation basis.

To begin with, we give some helpful structure properties of
$\krylov[\shifts]{\evMat}$, which will be central in choosing the subset of
rows and in the designing a fast algorithm which computes independent rows in
$\krylov[\shifts]{\evMat}$ without having to consider the whole matrix. 

\begin{lemma}[{Structure of $\krylov[\shifts]{\evMat}$}]
\label{lem:structure}
Let $\mulmat \in \matSpace[\order]$ and $\evMat \in
\matSpace[\rdim][\order]$, and let $\shifts \in \shiftSpace$.
Let $\prioIndex,\prioEval$ be as in \Cref{dfn:priority}.
\begin{itemize}
  \item For each $c$, $d \mapsto \prioIndex(c,d)$ is strictly increasing.
  \item If $(c,d)$ and $(c',d')$ are such that $\prioIndex(c,d) <
    \prioIndex(c',d')$, then for any $k \le \min(\maxDeg- d, \maxDeg-d')$ we
    have $\prioIndex(c,d+k) < \prioIndex(c',d'+k)$.
  \item Suppose that for some $i \in \{1,\ldots,\rdim(\maxDeg+1)\}$, the row at
    index $i$ in $\krylov[\shifts]{\evMat}$ is a linear combination of the
    rows of $\krylov[\shifts]{\evMat}$ with indices in $\{1,\ldots,i-1\}$.
    Then, writing $i= \prioIndex(c,d)$, for every $d' \in
    \{0,\ldots,\maxDeg-d\}$ the row at index $i' = \prioIndex(c,d+d')$ in
    $\krylov[\shifts]{\evMat}$ is a linear combination of the rows of
    $\krylov[\shifts]{\evMat}$ with indices in $\{1,\ldots,i'-1\}$.
\end{itemize}
\end{lemma}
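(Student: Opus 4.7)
The plan is to prove the three assertions in order, dispatching the first two by unfolding \Cref{dfn:priority} and saving the main work for part~(3). Throughout, I would use the fact that the row of $\krylov[\shifts]{\evMat}$ at index $\prioIndex(c,d)$ equals $\matrow{\evMat}{c}\mulmat^d$, which is immediate from $\krylov[\shifts]{\evMat} = \prioPerm^{-1}\krylov{\evMat}$ and the definition of $\prioIndex$. For part~(1), $\prioEval(c,d+1) = \shift{c}+d+1 > \prioEval(c,d)$, so property~(i) of $\prioPerm$ immediately yields $\prioIndex(c,d) < \prioIndex(c,d+1)$. For part~(2), I would distinguish two subcases depending on whether the inequality $\prioIndex(c,d) < \prioIndex(c',d')$ came from strict inequality $\prioEval(c,d) < \prioEval(c',d')$ or from an equality together with the tie-breaker $c < c'$ of property~(ii); adding $k$ to both arguments preserves both the relative ordering of priorities and the tie-breaker on $c$, so the desired inequality transfers directly.

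For part~(3), I would start from the given relation
\[
\matrow{\evMat}{c}\mulmat^d \;=\; \sum \alpha_{c'',d''}\,\matrow{\evMat}{c''}\mulmat^{d''},
\]
with the sum running over pairs $(c'',d'')$ satisfying $d'' \le \maxDeg$ and $\prioIndex(c'',d'') < \prioIndex(c,d)$, and then right-multiply by $\mulmat^{d'}$. The goal is to rewrite the resulting identity as a linear combination of rows of $\krylov[\shifts]{\evMat}$ at indices strictly smaller than $i' = \prioIndex(c,d+d')$. Whenever $d''+d' \le \maxDeg$, the row $\matrow{\evMat}{c''}\mulmat^{d''+d'}$ is a genuine row of $\krylov[\shifts]{\evMat}$ at index $\prioIndex(c'',d''+d')$, and part~(2) applied with $k = d'$ places this index below $i'$.

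The main obstacle is the residual case $d''+d' > \maxDeg$, where $\matrow{\evMat}{c''}\mulmat^{d''+d'}$ lies outside $\krylov[\shifts]{\evMat}$ and must be reduced. My workaround would invoke the hypothesis that $\maxDeg$ bounds the degree of the minimal polynomial of $\mulmat$: iterating the resulting Cayley--Hamilton-type relation expresses $\mulmat^{d''+d'}$ as an $\field$-linear combination of $I, \mulmat, \ldots, \mulmat^{\maxDeg-1}$, so $\matrow{\evMat}{c''}\mulmat^{d''+d'}$ becomes a combination of rows $\matrow{\evMat}{c''}\mulmat^j$ with $0 \le j \le \maxDeg-1$, sitting at indices $\prioIndex(c'',j)$ in $\krylov[\shifts]{\evMat}$. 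To bound these indices, I would observe that in this regime $d'' > \maxDeg - d' \ge d$, so part~(2) applied with $k = \maxDeg - d''$ yields $\prioIndex(c'',\maxDeg) < \prioIndex(c, d+\maxDeg-d'')$; since $\maxDeg - d'' < d'$, part~(1) gives $\prioIndex(c, d+\maxDeg-d'') < \prioIndex(c, d+d') = i'$, and a final use of part~(1) pushes each $\prioIndex(c'',j)$ with $j < \maxDeg$ below $\prioIndex(c'',\maxDeg)$, closing the argument.
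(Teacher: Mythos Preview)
Your proof is correct and follows essentially the same route as the paper. Parts~(1) and~(2) match the paper's argument exactly, and for part~(3) both arguments multiply the given relation by $\mulmat^{d'}$ and invoke the minimal polynomial of $\mulmat$ to handle the overflow; the only difference is cosmetic: the paper phrases this at the level of interpolants (the relation corresponds to an interpolant $\row{p}$ with $\shifts$-pivot index $c$ and $\shifts$-pivot degree $d$, and one considers $X^{d'}\row{p}$ reduced modulo $\Pi_\mulmat$), whereas you unfold the same manipulation directly on rows of $\krylov[\shifts]{\evMat}$ and explicitly bound each resulting index using parts~(1) and~(2).
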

\begin{proof}
  The first item is clear since $\prioEval(c,d) = \shift{c}+d$.  For the
  second item, we consider two cases. First, if $\prioEval(c,d) <
  \prioEval(c',d')$, this means $\shift{c} + d < \shift{c'} + d'$ from which
  we obviously obtain $\prioEval(c,d+k) < \prioEval(c',d'+k)$, and in
  particular $\prioIndex(c,d+k) < \prioIndex(c',d'+k)$. Second, if
  $\prioEval(c,d) = \prioEval(c',d')$, we must have $c<c'$ by choice of
  $\prioIndex$, and then we also have $\prioEval(c,d+k) = \prioEval(c',d'+k)$
  with $c<c'$ which implies that $\prioIndex(c,d+k) < \prioIndex(c',d'+k)$.

  The third item is a direct rewriting in the linearization framework of the
  following property. Let $\row{p} \in \polMatSpace[1][\rdim]_{\le \maxDeg}$ be
  an interpolant for $(\evMat,\mulmat)$ with $\shifts$-pivot index $c$ and
  $\shifts$-row degree $d$, and consider $d' \le \maxDeg - d$; then the row
  vector $\var^{d'} \row{p}$, with entries of degree more than $\delta$ taken
  modulo the minimal polynomial of $\mulmat$, is an interpolant for
  $(\evMat,\mulmat)$ with $\shifts$-pivot index $c$ and $\shifts$-row degree
  $d+d'$.
\end{proof}

We remark that, when choosing a subset of $r =
\mathrm{rank}(\krylov[\shifts]{\evMat})$ linearly independent rows in
$\krylov[\shifts]{\evMat}$, all other rows in the matrix are linear
combinations of those. Because our goal is to find relations which involve in
priority the first rows of $\krylov[\shifts]{\evMat}$, we are specifically
interested in the \emph{first} $r$ independent rows in
$\krylov[\shifts]{\evMat}$. More precisely, we focus on the \emph{row rank
profile} $(i_1,\ldots,i_r)$ of $\krylov[\shifts]{\evMat}$, that is, the
lexicographically smallest tuple with entries in
$\{1,\ldots,\rdim(\maxDeg+1)\}$ such that the submatrix of
$\krylov[\shifts]{\evMat}$ formed by the rows with indices in
$\{i_1,\ldots,i_r\}$ has rank $r = \mathrm{rank}(\krylov[\shifts]{\evMat})$.
Then, for each $c$ the row $\matrow{ {\evMat \mulmat^{\minDeg_c}} }{c} =
\matrow{\krylov[\shifts]{\evMat}}{\prioIndex(c,\minDeg_c)}$ is a linear
combination of the rows in $\{ \matrow{\krylov[\shifts]{\evMat}}{i_k}, 1\le k
\le r \}$. We now show that this row rank profile is directly related to the
$\shifts$-minimal degree of $(\evMat,\mulmat)$.

\begin{lemma}
\label{lem:rank-profile}
Let $\mulmat \in \matSpace[\order]$ and $\evMat \in
\matSpace[\rdim][\order]$, let $\shifts \in \shiftSpace$, and
let $(i_1,\ldots,i_r)$ be the row rank profile of $\krylov[\shifts]{\evMat}$.
For $k \in \{1,\ldots,r\}$, we write $i_k = \prioIndex(c_k,d_k)$ for some
(unique) $1 \le c_k \le \rdim$ and $0 \le d_k \le \maxDeg$. Given $c \in
\{1,\ldots,\rdim\}$,
\begin{itemize}
  \item for all $0 \le d < \minDeg_c$ we have $\prioIndex(c,d) \in
    \{i_1,\ldots,i_r\}$,
  \item $\minDeg_c = 0$ if and only if $c_k \neq c$ for all $k \in
    \{1,\ldots,r\}$,
  \item if $\minDeg_c > 0$ we have $\minDeg_c = 1 + \max \{d_k \mid 1\le k \le
    r \text{ and } c_k = c \}$.
\end{itemize}
\end{lemma}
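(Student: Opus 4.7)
The plan is to use two ingredients: (a) the equivalence \emph{$i$ lies in the row rank profile iff $\matrow{\krylov[\shifts]{\evMat}}{i}$ is not a linear combination of the rows at strictly smaller indices}, which is just the definition; and (b) the third item of \Cref{lem:structure}, which says that linear dependence at position $\prioIndex(c,d)$ propagates upward along $d$, that is, to every position $\prioIndex(c,d+d')$ with $d+d'\le\maxDeg$. Combined with the defining property of $\minDeg_c$ (the smallest $d$ for which $\matrow{\krylov[\shifts]{\evMat}}{\prioIndex(c,d)}$ is a linear combination of earlier rows), this essentially forces the row rank profile to meet each ``column'' $c$ exactly in the block $\{\prioIndex(c,0),\ldots,\prioIndex(c,\minDeg_c-1)\}$.

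First I would prove the first bullet: for $0\le d<\minDeg_c$, by minimality of $\minDeg_c$ the row $\matrow{\krylov[\shifts]{\evMat}}{\prioIndex(c,d)}$ is not a linear combination of the rows at smaller indices, so by (a) the index $\prioIndex(c,d)$ belongs to the row rank profile $\{i_1,\ldots,i_r\}$. Next I would settle the second bullet: $\minDeg_c=0$ means $\matrow{\krylov[\shifts]{\evMat}}{\prioIndex(c,0)}$ is a linear combination of the strictly earlier rows; by the third item of \Cref{lem:structure}, every row $\matrow{\krylov[\shifts]{\evMat}}{\prioIndex(c,d')}$ with $d'\le\maxDeg$ is then also a linear combination of its predecessors, so none of these indices lie in $\{i_1,\ldots,i_r\}$, i.e.\ $c_k\neq c$ for every $k$. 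Conversely, if $c_k\neq c$ for every $k$, then $\prioIndex(c,0)\notin\{i_1,\ldots,i_r\}$, so $\matrow{\krylov[\shifts]{\evMat}}{\prioIndex(c,0)}$ is a linear combination of the earlier rows, giving $\minDeg_c=0$.

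Finally, for the third bullet I assume $\minDeg_c>0$. By the first bullet, $d_k=d$ occurs (with $c_k=c$) for every $d\in\{0,\ldots,\minDeg_c-1\}$; in particular $\minDeg_c-1\le\max\{d_k\mid c_k=c\}$. For the reverse inequality I use \Cref{lem:structure}(3) once more: from the defining property of $\minDeg_c$, the row $\matrow{\krylov[\shifts]{\evMat}}{\prioIndex(c,\minDeg_c)}$ is a linear combination of the earlier rows, hence for all $d\ge\minDeg_c$ with $d\le\maxDeg$ the row at $\prioIndex(c,d)$ is also such a linear combination and its index does not appear in the row rank profile. Thus no $d_k$ with $c_k=c$ can exceed $\minDeg_c-1$, and the equality $\minDeg_c=1+\max\{d_k\mid c_k=c\}$ follows.

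I do not expect any real obstacle: each of the three bullets is a direct consequence of the definition of $\minDeg_c$ together with the ``upward propagation of linear dependence'' given by \Cref{lem:structure}(3). The only mild care needed is to keep track of the correspondence between ``row at index $i$ with $i=\prioIndex(c,d)$'' in $\krylov[\shifts]{\evMat}$ and the row $\matrow{\evMat\mulmat^d}{c}$ of the unpermuted Krylov matrix, and to recall that being ``in the row rank profile'' is exactly being linearly independent from the rows at strictly smaller indices.
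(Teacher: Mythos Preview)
Your proposal is correct and follows essentially the same approach as the paper: both arguments combine the definition of $\minDeg_c$ with the upward propagation of linear dependence from \Cref{lem:structure}(3) to characterize exactly which indices $\prioIndex(c,d)$ appear in the row rank profile. The only cosmetic difference is that, for the second bullet, the paper proves one direction via the contrapositive (deducing $c_k=c$ for some $k$ from $\minDeg_c>0$ via the first bullet at $d=0$), whereas you argue both implications directly.
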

\begin{proof}
Let us fix $c \in \{1,\ldots,\rdim\}$. We recall that $\minDeg_c$ is the
smallest exponent such that the row $\matrow{{\evMat \mulmat^{\minDeg_c}}}{c} =
\matrow{ \krylov[\shifts]{\evMat} }{\prioIndex(c,\minDeg_c)}$ is a linear
combination of the rows in $\{\matrow{\krylov[\shifts]{\evMat}}{i},
i<\prioIndex(c,\minDeg_c)\}$.

First, we assume that $\minDeg_c > 0$ and we let $d < \minDeg_c$.  By
definition of $\minDeg_c$, the row at index $\prioIndex(c,d)$ in
$\krylov[\shifts]{\evMat}$ is linearly independent from the rows with smaller
indices. Thus, by minimality of the row rank profile, $\prioIndex(c,d) \in
\{i_1,\ldots,i_r\}$.

In particular, choosing $d=0$, we obtain that $\prioIndex(c,0) \in
\{i_1,\ldots,i_r\}$, or in other words, there is some $k \in \{1,\ldots,r\}$
such that $(c,0) = (c_k,d_k)$. This proves that if $\minDeg_c > 0$, then $c_k =
c$ for some $k \in \{1,\ldots,r\}$.

Now, we assume that $\minDeg_c = 0$ and we show that $c \neq c_k$ for all $k
\in \{1,\ldots,r\}$. The definition of $\minDeg_c = 0$ and the third item in
\Cref{lem:structure} together prove that for every $d \in \{0,\ldots,
\maxDeg\}$, the row $\matrow{\evMat}{c} \mulmat^d$ which is at index
$\prioIndex(c,d)$ in $\krylov[\shifts]{\evMat}$ is a linear combination of the
rows with smaller indices. Then, by minimality of the row rank profile,
$\prioIndex(c,d) \not\in \{i_1,\ldots,i_r\}$ for all $d \in \{0,\ldots,
\maxDeg\}$, and thus in particular $(c,d_k) \neq (c_k,d_k)$ for all $k \in
\{1,\ldots,r\}$.

Finally, we assume that $\minDeg_c > 0$ and we show that $\minDeg_c = 1 + \max
\{ d_k \mid 1 \le k \le r \text{ and } c_k = c \}$. Using the first item with
$d = \minDeg_c-1$, there exists $\bar{k} \in \{1,\ldots,r\}$ such that
$(c_{\bar{k}},d_{\bar{k}}) = (c,\minDeg_c-1)$. As in the previous paragraph,
the definition of $\minDeg_c$, the third item in \Cref{lem:structure}, and
the minimality of the row rank profile imply that $\prioIndex(c,d) \not\in
\{i_1,\ldots,i_r\}$ for all $d \in \{\minDeg_c,\ldots,\maxDeg\}$; in
particular, $(c,d_k) \neq (c_k,d_k)$ for all $k \in \{1,\ldots,r\}$ such that
$d_k > d_{\bar{k}}$. Thus, we have $d_{\bar{k}} = \max \{d_k \mid 1\le k \le r
\text{ and } c_k = c \}$.
\end{proof}

\begin{example}[Minimal degree and row rank profile]
  \label{eg:mindeg}
In the context of \Cref{eg:linearization,eg:linearization-bis},

\begin{itemize}
  \item for the uniform shift, the row rank profile of
    $\krylov[\unishift]{\evMat}$ is $(0,1,3)$ with $0 =
    \prioIndex[\unishift](0,0)$, $1 = \prioIndex[\unishift](1,0)$, and $3 =
    \prioIndex[\unishift](0,1)$: then, the $\unishift$-minimal degree of
    $(\evMat,\mulshift)$ is $(2,1,0)$;
  \item for the shift $\shifts = (0,3,6)$, the row rank profile of $\krylov{\evMat}$
    is $(0,1,2)$ with $0 = \prioIndex(0,0)$, $1 = \prioIndex(0,1)$, and $2 =
    \prioIndex(0,2)$: the $\shifts$-minimal degree of 
    $(\evMat,\mulshift)$ is $(3,0,0)$;
  \item for the shift $\shifts[t] = (3,0,2)$, the row rank profile of
    $\krylov[{\shifts[t]}]{\evMat}$ is $(0,1,2)$ with $0 =
    \prioIndex[{\shifts[t]}](1,0)$, $1 = \prioIndex[{\shifts[t]}](1,1)$, and $2
    = \prioIndex[{\shifts[t]}](1,2)$: the $\shifts[t]$-minimal degree of
    $(\evMat,\mulshift)$ is $(0,3,0)$. \qed
\end{itemize}
\end{example}

In particular, the previous lemma implies a bound on the $\shifts$-minimal
degree of $(\evMat,\mulmat)$. Since the minimal polynomial of the matrix
$\mulmat \in \matSpace[\order]$ is of degree at most $\maxDeg$, we have
$\minDeg_c \le \maxDeg \le \order$ for $1 \le c \le \rdim$: we actually have
the following stronger identity, which shows that the sum of
$\minDeg_1,\ldots,\minDeg_\rdim$ is at most $\order$.

\begin{lemma}
\label{lem:sum-mindeg}
Let $\mulmat \in \matSpace[\order]$ and $\evMat \in \matSpace[\rdim][\order]$,
and let $\shifts \in \shiftSpace$. Let $(\minDeg_1,\ldots,\minDeg_\rdim)$ be
the $\shifts$-minimal degree of $(\evMat,\mulmat)$. Then,
$\minDeg_1+\cdots+\minDeg_\rdim = \mathrm{rank}(\krylov[\shifts]{\evMat})$.
\end{lemma}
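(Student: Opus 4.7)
The plan is to match each index in the row rank profile of $\krylov[\shifts]{\evMat}$ with a contribution $d$ to some $\minDeg_c$, and show the correspondence is a bijection. Writing $r = \mathrm{rank}(\krylov[\shifts]{\evMat})$ and $(i_1,\ldots,i_r)$ for the row rank profile, with $i_k = \prioIndex(c_k,d_k)$ as in the preceding lemma, the goal reduces to establishing that for every $c \in \{1,\ldots,\rdim\}$ the cardinality of $\{k \mid c_k = c\}$ equals $\minDeg_c$; summing over $c$ then gives $r = \minDeg_1 + \cdots + \minDeg_\rdim$.

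First I would handle the lower bound. If $\minDeg_c > 0$, the first item of the preceding lemma (on the row rank profile) says that $\prioIndex(c,d) \in \{i_1,\ldots,i_r\}$ for every $d \in \{0,\ldots,\minDeg_c-1\}$. Since $d \mapsto \prioIndex(c,d)$ is injective (first item of Lemma \ref{lem:structure}), this produces $\minDeg_c$ distinct indices $k$ with $c_k = c$, hence $\#\{k \mid c_k = c\} \ge \minDeg_c$. If $\minDeg_c = 0$, the second item of the preceding lemma already gives $\#\{k \mid c_k = c\} = 0 = \minDeg_c$.

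For the matching upper bound when $\minDeg_c > 0$, I would argue by contradiction: suppose some $k$ satisfies $c_k = c$ with $d_k \ge \minDeg_c$. By definition of $\minDeg_c$ the row at index $\prioIndex(c,\minDeg_c)$ is a linear combination of the rows of $\krylov[\shifts]{\evMat}$ with smaller indices, so by the third item of Lemma \ref{lem:structure} applied with $d' = d_k - \minDeg_c$, the row at index $\prioIndex(c,d_k) = i_k$ is also a linear combination of rows with strictly smaller indices. This contradicts the minimality of the row rank profile, which forbids $i_k$ from being linearly dependent on $\matrow{\krylov[\shifts]{\evMat}}{i}$ for $i < i_k$. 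Hence every such $k$ satisfies $d_k \in \{0,\ldots,\minDeg_c-1\}$, and injectivity of $d \mapsto \prioIndex(c,d)$ forces $\#\{k \mid c_k = c\} \le \minDeg_c$.

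Combining both bounds yields $\#\{k \mid c_k = c\} = \minDeg_c$ for every $c$, and summing over $c \in \{1,\ldots,\rdim\}$ gives the claimed identity. The only subtle step is the use of Lemma \ref{lem:structure} to propagate linear dependence along the sequence $\prioIndex(c,\cdot)$; everything else is bookkeeping on the characterization of the row rank profile provided by the preceding lemma.
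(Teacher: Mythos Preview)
Your proof is correct and follows essentially the same approach as the paper: both partition the row rank profile indices according to the value of $c_k$ and count each piece as $\minDeg_c$. The only difference is that for the upper bound you re-derive, via Lemma~\ref{lem:structure}, the fact that no $d_k$ with $c_k=c$ can reach $\minDeg_c$, whereas the paper simply reads this off from the third item of the preceding lemma (which gives $\max\{d_k\mid c_k=c\}=\minDeg_c-1$); your argument is thus slightly redundant but entirely sound.
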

\begin{proof}
  From \Cref{lem:rank-profile}, we see that one can partition the set
  $\{i_1,\ldots,i_r\}$ as the disjoint union of the sets $\{ \prioIndex(c,d) ,
  0 \le d < \minDeg_c \}$ for each $c$ with $\minDeg_c > 0$. This union has
  cardinality $\minDeg_1 + \cdots + \minDeg_\rdim$, and the set
  $\{i_1,\ldots,i_r\}$ has cardinality $r =
  \mathrm{rank}(\krylov[\shifts]{\evMat})$.
\end{proof}
\noindent

\begin{remark}
  \label{rmk:sum_rdeg}
Combining this with \Cref{lem:red-to-mib}, one can directly deduce the
following bound on the average row degree of minimal interpolation bases:

\noindent\emph{Let $\mulmat \in \matSpace[\order]$ and $\evMat \in
  \matSpace[\rdim][\order]$, and let $\shifts \in \shiftSpace$. Then, for any
  $\shifts$-minimal interpolation basis $\intBasis$ for $(\evMat,\mulmat)$, the
  sum of the $\shifts$-row degrees of $\intBasis$ satisfies
  $\sshifts[{\rdeg[\shifts]{\intBasis}}] \le \order + \sshifts$.}

This bound has already been given before in \citep[Theorem 7.3.(b)]{BecLab00},
and also in the context of M-Pad\'e approximation \citep[Theorem
4.1]{BarBul92}, which includes order basis computation. This result is central
regarding the cost of algorithms which compute shifted minimal interpolation
bases since it gives a bound on the size of the output matrix. In particular it
is a keystone for the efficiency of our divide-and-conquer algorithm in
\Cref{sec:algo}, where it gives a bound on the average row degree of all
intermediate bases and thus allows fast computation of the product of bases
(\Cref{sec:cost-mult}), of the change of shift (\Cref{sec:change-shift}), and
of the residuals (\Cref{sec:multiplication}).  \qed
\end{remark}

Now, we show how to compute the row rank profile of $\krylov[\shifts]{\evMat}$
efficiently. In the style of the algorithm of~\citet[p.~313]{KelGeh85}, our
algorithm processes submatrices of $\krylov[\shifts]{\evMat}$ containing all
rows up to some degree, doubling this degree at each iteration. The structure
property in \Cref{lem:structure} allows us to always consider at most $2r$
rows of $\krylov[\shifts]{\evMat}$, discarding most rows with indices not in
$\{i_1,\ldots,i_r\}$ without computing them. (There is one exception at the
beginning, where the $\rdim$ rows of $\evMat$ are considered, with possibly
$\rdim$ much larger than $r$.) This algorithm also returns the submatrix formed
by the rows corresponding to the row rank profile, as well as the column rank
profile of this submatrix, since they will both be useful later in
\Cref{subsec:lin-mib-algo}.

\begin{proposition}
\label{prop:algo-rank-profile}
\Cref{algo:rank-profile} is correct and uses $\bigO{ \order^\expmatmul
( \lceil \rdim/\order \rceil + \log(\maxDeg))}$ operations in $\field$ if
$\expmatmul>2$, and $\bigO{ \order^2 ( \lceil \rdim/\order \rceil +
\log(\maxDeg))\log(\order)}$ operations if $\expmatmul=2$.
\end{proposition}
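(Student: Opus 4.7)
The plan is to prove correctness and the cost bound separately, exploiting the structural properties collected in Lemma~\ref{lem:structure} to argue that the algorithm only ever inspects a tiny fraction of $\krylov[\shifts]{\evMat}$.

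I would first describe the invariant that the algorithm maintains. After each stage, the algorithm stores (i) a partial row rank profile $(i_1,\ldots,i_s)$ that is a prefix of the true row rank profile of $\krylov[\shifts]{\evMat}$, (ii) the submatrix $\mat{M}\in\matSpace[s][\order]$ formed by those independent rows together with a PLUQ-like factorization of it, and (iii) a set of \emph{active} column indices $c\in\{1,\ldots,\rdim\}$ together with, for each, the largest $d$ examined so far and the row $\matrow{\evMat\mulmat^{d}}{c}$ itself. The third item of Lemma~\ref{lem:structure} is what makes this sound: once $\matrow{\krylov[\shifts]{\evMat}}{\prioIndex(c,d)}$ is detected to lie in the row span of earlier rows, every $\prioIndex(c,d')$ with $d'\ge d$ is also dependent, so $c$ can be dropped from the active set forever. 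Combined with the second item of the same lemma, which guarantees that multiplication by $\mulmat$ preserves the $\prioIndex$-ordering, this shows that processing only the frontier rows of the active columns in the correct order reproduces exactly the lexicographically first maximal independent set.

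I would then implement the doubling. Stage $0$ processes the $\rdim$ rows of $\evMat$, permuted according to $\prioIndex$: a single rank profile computation on an $\rdim\times\order$ matrix produces the initial $\mat{M}$, the first chunk of $(i_1,\ldots,i_s)$, and the initial active set, at cost $\bigO{\order^{\expmatmul-1}\rdim + \order^\expmatmul}$, which equals $\bigO{\order^\expmatmul\lceil\rdim/\order\rceil}$ (with the extra $\log\order$ factor when $\expmatmul=2$). Stage $k\ge 1$ then multiplies each currently active frontier row by $\mulmat$ the required number of times to advance to the next batch of $\prioIndex$-consecutive frontier rows for that column, stacks the resulting block of at most $|\text{active set}|\le\order$ new rows on top of $\mat{M}$, and recomputes the rank profile of the augmented $\le 2\order\times\order$ matrix, updating $\mat{M}$, the growing rank profile, and the active set. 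Each such stage costs $\bigO{\order^\expmatmul}$ (respectively $\bigO{\order^2\log\order}$ when $\expmatmul=2$), and the doubling of the stride ensures that $\bigO{\log\maxDeg}$ stages suffice to reach degree $\maxDeg$, since by Lemma~\ref{lem:rank-profile} no $d_k$ in the profile exceeds $\maxDeg$. Summing gives exactly the stated complexity.

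The main obstacle, I expect, is the scheduling of the doubling in the presence of the shift-induced permutation $\prioIndex$: the rows indexed by $\prioIndex(c,d)$ and $\prioIndex(c',d')$ are not grouped by $d$, so the notion of "frontier rows to add at stage $k$" must be defined so that, for every active column $c$, we still examine its rows in order of increasing $d$ and interleave columns consistently with $\prioIndex$. Here the second item of Lemma~\ref{lem:structure} is crucial: it implies that, within the active columns, the relative $\prioIndex$-order of rows $(c,d+k)$ is the same as that of $(c,d)$, so one can process column by column a uniform stride $2^{k}$ at each stage and then sort the resulting block by $\prioIndex$ before augmentation, without destroying the prefix property of the row rank profile. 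Once this bookkeeping is settled, correctness follows directly from Lemma~\ref{lem:structure}, and the cost bound follows from the two observations that $|\text{active set}|\le\order$ at all times and that the total number of stages is $\bigO{\log\maxDeg}$. Finally, the column rank profile of $\mat{M}$ comes for free from the PLUQ factorization maintained along the way, which is why the algorithm can return it as well.
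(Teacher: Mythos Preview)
Your high-level plan is right---use \Cref{lem:structure}, doubling, $\bigO{\log\maxDeg}$ stages, $\bigO{\order^\expmatmul}$ per stage---but the invariant you describe in item~(iii) does not combine correctly with doubling. Tracking one frontier row per active column and advancing each frontier by a doubling stride skips over rows that belong to the rank profile. Concretely, take $\rdim=1$, $\order=4$, $\mulmat$ the $4\times4$ nilpotent Jordan block, $\evMat=[1,0,0,0]$, $\shifts=(0)$: the true profile is $\{(1,0),(1,1),(1,2),(1,3)\}$, but your scheme moves the frontier $0\to 1\to 3\to 7$ and never inspects $(1,2)$, outputting a three-row $\mat{M}$. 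Your correctness paragraph actually justifies the frontier approach only for stride~$1$, which needs $\Theta(\maxDeg)$ stages, not $\bigO{\log\maxDeg}$.

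What \Cref{algo:rank-profile} actually does at iteration $\ell$ is multiply \emph{all} of $\mat{M}$---the entire current profile of $r\le\order$ rows---by $\mulmat^{2^\ell}$, obtained by repeated squaring. The invariant is: at the start of iteration $\ell$, $(i_1,\ldots,i_r)$ is the row rank profile of the submatrix $\mat{K}_\ell$ of $\krylov[\shifts]{\evMat}$ consisting of all rows $(c,d)$ with $d<2^\ell$. The $r$ new candidate indices are $\prioIndex(c_k,d_k+2^\ell)$, one per \emph{entry} of the current profile, not one per active column. The reason these suffice is the third item of \Cref{lem:structure} applied not merely when a column is dropped, but to every pair $(c,d)$ with $d<2^\ell$ that was already found dependent: then $(c,d+2^\ell)$ is dependent too, so only the shifted copies of the surviving independent rows need examination. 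With this corrected invariant, your bounds $r\le\order$, at most $2\order$ rows in the augmented matrix, and the rest of your cost analysis all go through unchanged.
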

\begin{proof}
The algorithm takes as input $\maxDeg$ a power of $2$: one can always ensure
this by taking the next power of $2$ without impacting the cost bound. After
Steps~\textbf{2},~\textbf{3}, and~\textbf{4}, $(i_1,\ldots,i_r)$ correspond to
the indices in $\krylov[\shifts]{\evMat}$ of the row rank profile of $\evMat$,
and $\mat{M}$ is the submatrix of $\krylov[\shifts]{\evMat}$ formed by the rows
with indices in $\{i_1,\ldots,i_r\}$.  Relying on the algorithm
in~\citep[Section 2.2]{Storjohann00}, Step~\textbf{2} can be performed using
$\bigO{ \rdim \order (\min(\rdim,\order)^{\expmatmul-2} +
\log(\min(\rdim,\order)) }$ operations, and Step~\textbf{6} can be computed
using $\bigO{r \order (r^{\expmatmul-2} + \log(r))}$ operations (where the
logarithmic terms account for the possibility that $\expmatmul=2$). The loop
performs $\log(\maxDeg)$ iterations. In each iteration $\ell$, since the
matrix $\mat{M}$ has $\order$ columns and has at most $2r$ rows with $r\le
\order$, one can compute the square $\mulmat^{2^{\ell}}$ and the product
$\mat{M} \mulmat^{2^\ell}$ using $\bigO{\order^\expmatmul}$ operations, and the
row rank profile of $\mat{M}$ using $\bigO{r \order (r^{\expmatmul-2} +
\log(r))}$ operations~\citep[Section 2.2]{Storjohann00}. Thus, overall, the for
loop uses $\bigO{\order^\expmatmul \log(\maxDeg)}$ operations if
$\expmatmul>2$, and $\bigO{\order^2 \log(\maxDeg) \log(\order)}$
operations if $\expmatmul=2$. Adding these costs leads to the announced bound.

Let us now prove the correctness of the algorithm. For each $\ell \in
\{0,\ldots,\log(\maxDeg)\}$ let $\mathcal{I}_\ell = \{ \prioIndex(c,d), 1\le
c \le \rdim, 0\le d < 2^\ell \}$ denote the set of indices of rows of
$\krylov[\shifts]{\evMat}$ which correspond to degrees less than $2^\ell$, and
let $\mat{K}_\ell$ be the submatrix of $\krylov[\shifts]{\evMat}$ formed by the
rows with indices in $\mathcal{I}_\ell$, that is, the submatrix $\mat{K}_\ell$
of $\krylov[\shifts]{\evMat}$ which is a row permutation of the matrix
\[\left[\begin{array}{c} \evMat \\ \hline \evMat\mulmat \\ \hline \vdots \\
\hline \evMat\mulmat^{2^{\ell}-1 }\end{array}\right] \in
\matSpace[(2^{\ell}-1)\rdim][\order];\] for ease of presentation, we continue to
index the rows of $\mat{K}_\ell$ with $\mathcal{I}_\ell$. Now, suppose that at
the beginning of the iteration $\ell$ of the loop, $(i_1,\ldots,i_r)$ is the
row rank profile of $\mat{K}_\ell$, and $\mat{M}$ is the submatrix of
$\krylov[\shifts]{\evMat}$ formed by the rows with indices in
$\{i_1,\ldots,i_r\}$.  Then, we claim that at the end of this iteration,
$(k_{m_1},\ldots,k_{m_r})$ is the row rank profile of $\mat{K}_{\ell+1}$; it is
then obvious that the updated matrix $\mat{M}$, after Step~\textbf{5.g}, is the
corresponding submatrix of $\krylov[\shifts]{\evMat}$. 

First, the indices $(i_{r+1},\ldots,i_{2r})$ computed at Step~\textbf{b} are in
$\mathcal{I}_{\ell+1} - \mathcal{I}_\ell$, which is the set of indices of the
rows of $\mat{K}_\ell \mulmat^{2^\ell}$ in the matrix
$\krylov[\shifts]{\evMat}$ (or in the matrix $\mat{K}_{\ell+1}$ since we chose
to keep the same indexing). From \Cref{lem:structure}, we know that if two
indices $i = \prioIndex(c,d) < i' = \prioIndex(c',d')$ are in
$\mathcal{I}_\ell$, then we also have $\prioIndex(c,d+2^\ell) <
\prioIndex(c',d'+2^\ell)$ in $\mathcal{I}_{\ell+1} - \mathcal{I}_\ell$.  This
means that $\mat{K}_{\ell} \mulmat^{2^\ell}$ is not only formed by the rows of
$\mat{K}_{\ell+1}$ with indices in $\mathcal{I}_{\ell+1} - \mathcal{I}_\ell$:
it is actually the submatrix of $\mat{K}_{\ell+1}$ formed by these rows,
keeping the same row order.

In particular, for a given $k\in\mathcal{I}_\ell$, if the row $k$ of
$\mat{K}_\ell$ is a linear combination of the rows of this matrix with smaller
indices, then the same property holds in the matrix $\mat{K}_{\ell+1}$; and
similarly if the row $k\in \mathcal{I}_{\ell+1} - \mathcal{I}_\ell$ of
$\mat{K}_\ell \mulmat^{2^\ell}$ is a linear combination of the rows of this
matrix with smaller indices, then the same holds in $\mat{K}_{\ell+1}$.
Another consequence is that the sequence $(i_{r+1},\ldots,i_{2r})$ defined in
Step~\textbf{5.b} is strictly increasing, as stated in Step~\textbf{5.c};
besides, it does not share any common element with $(i_1,\ldots,i_r)$, so that
their merge $(k_1,\ldots,k_{2r})$ in Step~\textbf{5.c} is unique and strictly
increasing).

Now, since the row rank profile of $\mat{K}_\ell \mulmat^{2^\ell}$ is a
subsequence of the row rank profile of $\mat{K}_\ell$,
the row rank profile of the submatrix of $\mat{K}_{\ell+1}$ formed by the rows
in $\mathcal{I}_{\ell+1} - \mathcal{I}_\ell$ is a subsequence of
$(i_{r+1},\ldots,i_{2r})$. Thus, if $k$ is an index in $\mathcal{I}_{\ell+1} -
\{k_1,\ldots,k_{2r}\}$, then the row $k$ of $\mat{K}_{\ell+1}$ is a linear
combination of the rows with smaller indices, and thus $k$ will not appear in
the row rank profile of $\mat{K}_{\ell+1}$. Thus, the row rank profile of
$\mat{K}_{\ell+1}$, that is, of the submatrix of $\krylov[\shifts]{\evMat}$
formed by the rows in $\mathcal{I}_{\ell+1}$, is a subsequence of
$(k_1,\ldots,k_{2r})$. This justifies that in Steps~\textbf{5.e}
and~\textbf{5.f} one may only focus on the rows with indices in
$\{k_1,\ldots,k_{2r}\}$. The conclusion follows.
\end{proof}

\begin{figure}[!h]
\centering
\fbox{\begin{minipage}{12.5cm}
\begin{algorithm} [\algoname{KrylovRankProfile}]
\label{algo:rank-profile}

~\smallskip\\
\algoword{Input:}
\begin{itemize}
  \item matrix $\evMat \in \evSpace{\order}$,
  \item matrix $\mulmat \in \matSpace[\order]$, 
  \item shift $\shifts \in \shiftSpace$,
  \item a bound $\maxDeg$ on the degree of
    the minimal polynomial of $\evMat$, with $\maxDeg$ a power of $2$ in $\{1,\ldots,2\order-1\}$.
\end{itemize}

\smallskip
\algoword{Output:}
\begin{itemize}
  \item the row rank profile $(i_1,\ldots,i_r)$
    of $\krylov[\shifts]{\evMat}$, 
  \item the submatrix $\pivmat[\shifts]{\evMat}$ of
    $\krylov[\shifts]{\evMat}$ formed by the rows with indices in
    $\{i_1,\ldots,i_r\}$, 
  \item the column rank profile $(j_1,\ldots,j_r)$ of
    $\pivmat[\shifts]{\evMat}$.
\end{itemize}

\smallskip
\begin{enumerate}[{\bf 1.}]
  \item compute $\prioIndex$ as in \Cref{dfn:priority}
  \item $r,(c_1,\ldots,c_r) \leftarrow \algoname{RowRankProfile}(\evMat)$
  \item $(i_1,\ldots,i_r) \leftarrow (\prioIndex(c_1,0), \ldots, \prioIndex(c_r,0))$
  \item $\mat{M} \leftarrow$ submatrix of $\krylov[\shifts]{\evMat}$ with rows of indices in $\{i_1,\ldots,i_r\}$
  \item \algoword{For} $\ell$ \algoword{from} $0$ \algoword{to} $\log(\maxDeg)$,
    \begin{enumerate}[{\bf a.}]
      \item $(c_1,d_1) \leftarrow \prioIndex^{-1}(i_1), \; \ldots, \; (c_r,d_r) \leftarrow \prioIndex^{-1}(i_r)$
      \item $i_{r+1} \leftarrow \prioIndex(c_1,d_1+2^\ell), \; \ldots, \; i_{2r} \leftarrow \prioIndex(c_r,d_r+2^\ell)$
      \item $(k_1,\ldots,k_{2r}) \leftarrow$ merge the 
        increasing sequences $(i_1,\ldots,i_r)$ and $(i_{r+1},\ldots,i_{2r})$
      \item compute $\mat{M}\mulmat^{2^\ell} \in \matSpace[r][\order]$, the rows at indices $i_{r+1}, \ldots, i_{2r}$ in $\krylov[\shifts]{\evMat}$
      \item $\mat{M} \leftarrow$ the submatrix of $\krylov[\shifts]{\evMat}$
        formed by the rows in $\mat{M}$ and $\mat{M}\mulmat^{2^\ell}$; that
        is, the rows of $\krylov[\shifts]{\evMat}$ with indices in $\{k_1,
        \ldots, k_{2r}\}$
      \item $r', (m_1,\ldots,m_{r'}) \leftarrow \algoname{RowRankProfile}(\mat{M})$
      \item $\mat{M} \leftarrow$ the submatrix of $\mat{M}$ formed by rows with indices in $\{m_1,\ldots,m_{r'}\}$
      \item $r, (i_1,\ldots,i_r) \leftarrow r', (k_{m_1}, \ldots, k_{m_{r'}})$
    \end{enumerate}
  \item $(j_1,\ldots,j_r) \leftarrow \algoname{ColRankProfile}(\mat{M})$ 
  \item \algoword{Return} $(i_1,\ldots,i_r)$, $\mat{M}$, and $(j_1,\ldots,j_r)$.
\end{enumerate}
\end{algorithm}
\end{minipage}}
\end{figure}

\subsection{Computing minimal interpolation bases via linearization}
\label[subsec]{subsec:lin-mib-algo}

As noted in \Cref{rmk:elimination}, an $\shifts$-minimal interpolation
basis for $(\evMat,\mulmat)$ can be retrieved from linear relations which
express the rows of $\krylov[\shifts]{\evMat}$ of indices
$\{\prioIndex(1,\minDeg_1), \ldots, \prioIndex(\rdim,\minDeg_\rdim)\}$ as
combinations of the rows with smaller indices. Concerning the latter rows, one
can for example restrict to those given by the row rank profile
$(i_1,\ldots,i_r)$: thus, one can build an interpolation basis by considering
only $r+\rdim \le \order+\rdim$ rows in $\krylov[\shifts]{\evMat}$. In many
useful cases, $\order+\rdim$ is significantly smaller than the total number of
rows $\rdim (\maxDeg+1)$ in the matrix. 

\begin{definition}
\label{dfn:pivot-target}
Let $\mulmat \in \matSpace[\order]$ and $\evMat \in \matSpace[\rdim][\order]$,
and let $\shifts \in \shiftSpace$. Then $\pivmat[\shifts]{\evMat} \in
\matSpace[r][\order]$ is the submatrix of $\krylov[\shifts]{\evMat}$ formed by
its rows with indices in $\{i_1, \ldots, i_r\}$, and $\tgtmat[\shifts]{\evMat}
\in \matSpace[\rdim][\order]$ is the submatrix of $\krylov[\shifts]{\evMat}$
formed by the rows with indices in $\{\prioIndex(1,\minDeg_1), \ldots,
\prioIndex(\rdim,\minDeg_\rdim)\}$.
\end{definition}

\noindent
The matrix $\pivmat[\shifts]{\evMat} \in \matSpace[r][\order]$ can be thought
of as a \emph{pivot} matrix, since its rows are used as pivots to find
relations through the elimination of the rows in $\tgtmat[\shifts]{\evMat} \in
\matSpace[\rdim][\order]$, which we therefore think of as the \emph{target}
matrix. From \Cref{subsec:rel-int}, we know that these relations
correspond to an interpolation basis $\intBasis$ in $\shifts$-weak Popov form.
It turns out that restricting our view of $\krylov[\shifts]{\evMat}$ to the
submatrix $\pivmat[\shifts]{\evMat}$ leads to find such relations with a
minimal number of coefficients, which corresponds to a stronger type of
minimality: $\intBasis$ is in $\shifts$-Popov form~\citep{Kailath80,BeLaVi06}.

\begin{definition}[shifted Popov form]
\label{dfn:popov}
Let $\mat{P} \in \polMatSpace[\rdim]$ have full rank, and let $\shifts$ in
$\shiftSpace$.  Then, $\mat{P}$ is said to be in \emph{$\shifts$-Popov form}
if the $\shifts$-pivot entries are monic and on the diagonal of $\mat{P}$,
and in each column of $\mat{P}$ the nonpivot entries have degree less than
the pivot entry.
\end{definition}

\noindent
A matrix in $\shifts$-Popov form is in particular in $\shifts$-weak Popov form
and $\shifts$-reduced; besides, this is a normal form in the sense that, for a
given $\mat{A}\in \polMatSpace[\rdim]$ with full rank and a given shift
$\shifts$, there is a unique matrix $\mat{P}$ in $\shifts$-Popov form which is
unimodularly equivalent to $\mat{A}$. In particular, given $(\evMat,\mulmat)$,
for each shift $\shifts$ there is a unique ($\shifts$-minimal) interpolation
basis for $(\evMat,\mulmat)$ which is in $\shifts$-Popov form.

Since all rows in $\krylov[\shifts]{\evMat}$ are linear combinations of those
in the submatrix $\pivmat[\shifts]{\evMat}$, there is an $\rdim \times r$ matrix
$\relmat[\shifts]{\evMat}$ such that $\tgtmat[\shifts]{\evMat} =
\relmat[\shifts]{\evMat} \pivmat[\shifts]{\evMat}$, which we think of as the
\emph{relation} matrix; besides, since the pivot matrix has full rank, this
defines $\relmat[\shifts]{\evMat}$ uniquely. Then, the linear relations that we
are looking for are $[ -\relmat[\shifts]{\evMat} | \idMat ]$, and they can be
computed for example using Gaussian elimination on the rows of $\trsp{ [
\trsp{\pivmat[\shifts]{\evMat}} | \trsp{\tgtmat[\shifts]{\evMat}} ] }$.  More
precisely, $[ -\relmat[\shifts]{\evMat} | \idMat ]$ is the set of columns
with indices in $(i_1,\ldots,i_r, \prioIndex(1,\minDeg_1), \ldots,
\prioIndex(\rdim,\minDeg_\rdim))$ of the relations that we are looking for
between the rows of $\krylov[\shifts]{\evMat}$, and the interpolation basis
in $\shifts$-Popov form is the compression of these relations. Generally,
given a matrix $\smat[1]$ in $\matSpace[n][(r+\rdim)]$ for some $n$, we see it
as formed by the columns with indices $(i_1,\ldots,i_r,
\prioIndex(1,\minDeg_1), \ldots, \prioIndex(\rdim,\minDeg_\rdim))$ (in this
order) of a matrix $\smat[2]$ in $\matSpace[n][\rdim (\maxDeg+1)]$ which has
other columns zero. Then, the compression of $\smat[1]$ is the compression
$\polFromLin[\shifts]{\smat[2]}$ of $\smat[2]$ as defined in
\Cref{subsec:linearization}; we abusively denote it
$\polFromLin[\shifts]{\smat[1]}$ since there will be no ambiguity.

\begin{lemma}
\label{lem:lin-mib}
Let $\mulmat \in \matSpace[\order]$ and $\evMat \in \matSpace[\rdim][\order]$,
and let $\shifts \in \shiftSpace$.  Let $\pivmat[\shifts]{\evMat} \in
\matSpace[r][\order]$ and $\tgtmat[\shifts]{\evMat} \in
\matSpace[\rdim][\order]$ be as in \Cref{dfn:pivot-target}, and let
$\relmat[\shifts]{\evMat}$ be the unique matrix in $\matSpace[\rdim][r]$ such
that $\tgtmat[\shifts]{\evMat} = \relmat[\shifts]{\evMat}
\pivmat[\shifts]{\evMat}$. Then,
\[\intBasis = \polFromLin[\shifts]{ [
  -\relmat[\shifts]{\evMat} \,|\, \idMat ] }\]
is an interpolation basis for
$(\evMat,\mulmat)$ in $\shifts$-Popov form.

Besides, if $(j_1,\ldots,j_r)$ denotes the column rank profile of
$\pivmat[\shifts]{\evMat}$, and $\mat{C} \in \matSpace[r]$ and $\mat{D} \in
\matSpace[\rdim][r]$ are the submatrices of $\pivmat[\shifts]{\evMat}$ and
$\tgtmat[\shifts]{\evMat}$, respectively, formed by the columns with indices in
$\{j_1,\ldots,j_r\}$, then we have \[\relmat[\shifts]{\evMat} = \mat{D}
\mat{C}^{-1}.\]
\end{lemma}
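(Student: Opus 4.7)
The plan is to read both the interpolant property and the $\shifts$-Popov structure of $\intBasis$ directly from the shape of the scalar matrix $[-\relmat[\shifts]{\evMat} \,|\, \idMat]$, viewing its $r+\rdim$ columns as living at positions $(i_1, \ldots, i_r, \prioIndex(1,\minDeg_1), \ldots, \prioIndex(\rdim,\minDeg_\rdim))$ of $\{1, \ldots, \rdim(\maxDeg+1)\}$ and treating all other columns as zero. By the very definition of $\relmat[\shifts]{\evMat}$, this scalar matrix annihilates the submatrix $\trsp{[\trsp{\pivmat[\shifts]{\evMat}} \,|\, \trsp{\tgtmat[\shifts]{\evMat}}]}$ of $\krylov[\shifts]{\evMat}$; since all other rows of $\krylov[\shifts]{\evMat}$ are multiplied by zeros under this embedding, each row of $[-\relmat[\shifts]{\evMat} \,|\, \idMat]$ lies in the left nullspace of $\krylov[\shifts]{\evMat}$, so each row of $\intBasis$ is an interpolant for $(\evMat, \mulmat)$.

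I would then determine the $\shifts$-pivot index and degree of each row of $\intBasis$ via \Cref{lem:pivot}. The key claim is that in row $c$ of $[-\relmat[\shifts]{\evMat} \,|\, \idMat]$ the rightmost nonzero entry is the $1$ of $\idMat$ at column index $\prioIndex(c,\minDeg_c)$; equivalently, $\relmat_{c,k} = 0$ whenever $i_k \ge \prioIndex(c,\minDeg_c)$. By definition of $\minDeg_c$, the row $\matrow{\tgtmat[\shifts]{\evMat}}{c} = \matrow{\krylov[\shifts]{\evMat}}{\prioIndex(c,\minDeg_c)}$ lies in the span of the rows of $\krylov[\shifts]{\evMat}$ at strictly smaller indices, and those rows have the same span as the rows of $\pivmat[\shifts]{\evMat}$ indexed by $\{k : i_k < \prioIndex(c,\minDeg_c)\}$ (this is the defining property of the row rank profile); uniqueness of the decomposition against the linearly independent rows of $\pivmat[\shifts]{\evMat}$ then forces the vanishing. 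Applying \Cref{lem:pivot}, row $c$ of $\intBasis$ has $\shifts$-pivot index $c$, $\shifts$-pivot degree $\minDeg_c$, and monic pivot entry, so $\intBasis$ is in $\shifts$-weak Popov form with diagonal pivots and $\shifts$-row degree $(\shift{1}+\minDeg_1,\ldots,\shift{\rdim}+\minDeg_\rdim)$; \Cref{lem:red-to-mib} then ensures that $\intBasis$ is an $\shifts$-minimal interpolation basis.

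For the upgrade to $\shifts$-Popov form, I would verify that for $c' \neq c$ and $d \ge \minDeg_c$ the coefficient of $X^d$ in $\intBasis_{c',c}$ vanishes: this coefficient equals the entry of row $c'$ of $[-\relmat[\shifts]{\evMat} \,|\, \idMat]$ at column index $\prioIndex(c,d)$; for $d = \minDeg_c$ this is the $(c',c)$ entry of the identity block $\idMat$, hence zero; for $d > \minDeg_c$, \Cref{lem:rank-profile} (together with the third item of \Cref{lem:structure} applied to $\prioIndex(c,\minDeg_c)$) gives $\prioIndex(c,d) \notin \{i_1,\ldots,i_r\}$, and $\prioIndex(c,d) \neq \prioIndex(c'',\minDeg_{c''})$ for any $c''$ by injectivity of $\prioIndex$, so the coefficient is zero under the compression convention. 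For the closed-form expression $\relmat[\shifts]{\evMat} = \mat{D}\mat{C}^{-1}$: since $\pivmat[\shifts]{\evMat}$ has full row rank $r$, its column rank profile submatrix $\mat{C}$ is $r \times r$ and invertible, and restricting the identity $\tgtmat[\shifts]{\evMat} = \relmat[\shifts]{\evMat}\pivmat[\shifts]{\evMat}$ to columns $(j_1,\ldots,j_r)$ yields $\mat{D} = \relmat[\shifts]{\evMat}\mat{C}$, whence the formula. The one genuinely delicate step is the vanishing claim on $\relmat_{c,k}$, where the interplay between the minimality of $\minDeg_c$ and the spanning property of the row rank profile has to be invoked carefully; the remainder is bookkeeping on the support of the scalar matrix together with applications of \Cref{lem:pivot}, \Cref{lem:rank-profile}, and \Cref{lem:red-to-mib}.
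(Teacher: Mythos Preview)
Your proposal is correct and follows essentially the same approach as the paper's proof: both establish the interpolant property from the defining relation of $\relmat[\shifts]{\evMat}$, both obtain the vanishing $\relmat_{c,k}=0$ for $i_k>\prioIndex(c,\minDeg_c)$ from the minimality of $\minDeg_c$ together with the linear independence of the rows of $\pivmat[\shifts]{\evMat}$, both read the pivot structure off via \Cref{lem:pivot} and the column-degree constraints via \Cref{lem:rank-profile}, and both conclude the basis property from \Cref{lem:red-to-mib}. The only difference is cosmetic ordering: the paper verifies the column-degree (Popov) condition before the row-pivot (weak Popov) condition, whereas you do the reverse.
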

\begin{proof}
First, restricting the identity $\tgtmat[\shifts]{\evMat} =
\relmat[\shifts]{\evMat} \pivmat[\shifts]{\evMat}$ to the submatrices with
column indices in $\{j_1,\ldots,j_r\}$ we have in particular $\mat{D} =
\relmat[\shifts]{\evMat} \mat{C}$. By construction, $\mat{C}$ is invertible
and thus $\relmat[\shifts]{\evMat} = \mat{D} \mat{C}^{-1}$.

Let $\mat{R} \in \matSpace[\rdim][\rdim(\maxDeg+1)]$ be the matrix whose
columns at indices $i_1$,$\ldots$,$i_r$, $\prioIndex(1,\minDeg_1)$, $\ldots$,
$\prioIndex(\rdim,\minDeg_\rdim)$ are the columns $1,\ldots,r+\rdim$ of $[
  -\relmat[\shifts]{\evMat} | \idMat ]$, respectively, and other columns are
  zero; let also $\intBasis = \polFromLin[\shifts]{[ -\relmat[\shifts]{\evMat}
| \idMat ] } = \polFromLin[\shifts]{\mat{R}}$. By construction, every row $c$
of $\intBasis$ is the compression $\polFromLin[\shifts]{\matrow{\mat{R}}{c}}$
of a linear relation between the rows of $\krylov[\shifts]{\evMat}$ and is thus
an interpolant for $(\evMat,\mulmat)$. We will further prove that $\intBasis$
is in $\shifts$-Popov form with $\shifts$-pivot degree
$(\minDeg_1,\ldots,\minDeg_\rdim)$; in particular, this implies that
$\intBasis$ is $\shifts$-reduced and has $\shifts$-row degree
$(\shift{1}+\minDeg_1,\ldots,\shift{\rdim}+\minDeg_\rdim)$, so that
\Cref{lem:red-to-mib} shows that $\intBasis$ is an interpolation basis for
$(\evMat,\mulmat)$. For $k \in \{1,\ldots,r\}$, we write $i_k =
\prioIndex(c_k,d_k)$ for some unique $(c_k, d_k)$. We fix
$c\in\{1,\ldots,\rdim\}$.

First, we consider the column $c$ of $\intBasis$ and we show that all its
entries have degree less than $\minDeg_c$ except the entry on the diagonal,
which is monic and has degree exactly $\minDeg_c$. Indeed, for any $k$ such
that $c_k=c$, by definition of $\polFromLin[\shifts]{\mat{R}}$ the column $i_k$
of $\mat{R}$ is compressed into the coefficient of degree $d_k$ in the column
$c$ of $\intBasis$, and by \Cref{lem:rank-profile} we know that $d_k <
\minDeg_c$. Besides, the column of $\mat{R}$ at index
$\prioIndex(c,\minDeg_c)$, which has all its entries $0$ except the entry on
row $c$ which is $1$, only brings a coefficient $1$ of degree $\minDeg_c$ in
the diagonal entry $(c,c)$ of $\intBasis$.

Second, we consider the row $c$ of $\intBasis$ and we show that it has
$\shifts$-pivot index $c$ and $\shifts$-pivot degree $\minDeg_c$. Thanks to
\Cref{lem:pivot}, it is enough to show that the rightmost nonzero entry in
the row $c$ of $\mat{R}$ is the entry $1$ at column index
$\prioIndex(c,\minDeg_c)$. All entries in the row $c$ of $\mat{R}$ with indices
greater than $\prioIndex(c,\minDeg_c)$ and not in $\{i_1,\ldots,i_r\}$ are
obviously zero. Now, by definition of $\minDeg_c$, we know that the row of
$\krylov[\shifts]{\evMat}$ at index $\prioIndex(c,\minDeg_c)$ is a linear
combination of the rows at indices smaller than $\prioIndex(c,\minDeg_c)$; in
particular, because the rows of $\krylov[\shifts]{\evMat}$ at indices
$i_1,\ldots,i_r$ are linearly independent, the linear combination given by the
row $c$ of $\mat{R}$ has entries $0$ on the columns at indices $i_k$ for $k$
such that $i_k > \prioIndex(c,\minDeg_c)$.
\end{proof}

Now, we turn to the fast computation of an interpolation basis $\intBasis$ for
$(\evMat,\mulmat)$ in $\shifts$-Popov form. In view of what precedes, this
boils down to two steps, detailed in \Cref{algo:lin-mib}: first, we
compute the row rank profile $(i_1,\ldots,i_r)$ of $\krylov[\shifts]{\evMat}$
from which we also deduce the $\shifts$-minimal degree
$(\minDeg_1,\ldots,\minDeg_\rdim)$, and second, we compute the linear relations
$\mat{D} \mat{C}^{-1}$. We now prove \Cref{prop:lin-mib}, by
showing that \Cref{algo:lin-mib} is correct and uses $\bigO{
\order^\expmatmul ( \lceil \rdim / \order \rceil + \log(\maxDeg) ) }$
operations in $\field$ if $\expmatmul>2$, and $\bigO{
\order^\expmatmul ( \lceil \rdim / \order \rceil + \log(\maxDeg) ) \log(\order) }$
operations if $\expmatmul=2$.
\begin{proof}[Proof of \Cref{prop:lin-mib}]
The correctness follows from \Cref{lem:rank-profile,lem:lin-mib} and from the
correctness of the algorithm \algoname{KrylovRankProfile}. Since $r \le
\order$, the computation of $\mat{C}^{-1}$ at Step~\textbf{7} uses
$\bigO{r^\expmatmul} \subset \bigO{\order^\expmatmul}$ operations, and the
computation of $\mat{D}\mat{C}^{-1}$ uses $\bigO{\order^\expmatmul}$ operations
when $\rdim \le \order$, and $\bigO{\rdim \order^{\expmatmul-1}}$ operations
when $\order \le \rdim$. Then, the announced cost bound follows from
\Cref{prop:algo-rank-profile}.
\end{proof}

\begin{figure}[h!]
\centering
\fbox{\begin{minipage}{12.5cm}
\begin{algorithm} [\algoname{LinearizationInterpolationBasis}]
\label{algo:lin-mib}

~\smallskip\\
\algoword{Input:} 
\begin{itemize}
  \item matrix $\evMat \in \evSpace{\order}$,
  \item matrix $\mulmat \in \matSpace[\order]$, 
  \item shift $\shifts \in \shiftSpace$,
  \item a bound $\maxDeg$ on the degree of the minimal polynomial
of $\mulmat$, with $\maxDeg$ a power of $2$ in $\{1,\ldots,2\order-1\}$.
\end{itemize}

\smallskip
\algoword{Output:} 
the interpolation basis $\intBasis \in \intSpace_{\le\maxDeg}$
for $(\evMat,\mulmat)$ in $\shifts$-Popov form.

\smallskip
\begin{enumerate}[{\bf 1.}]
  \item compute $\prioEval$ and $\prioIndex$ as in \Cref{dfn:priority}
  \item $(i_1,\ldots,i_r),\pivmat[\shifts]{\evMat},(j_1,\ldots,j_r) \leftarrow \algoname{KrylovRankProfile}(\evMat,\mulmat,\shifts,\maxDeg)$
  \item \algoword{For} $1\le k \le r$, compute $(c_k,d_k) \leftarrow \prioIndex^{-1}(i_k)$
  \item \algoword{For} $1\le c \le \rdim$, compute $\minDeg_c \leftarrow 1 + \max \{d_k \mid 1 \le k \le r \text{ and } c_k=c\}$
    if the set is nonempty, and $\minDeg_c \leftarrow 0$ if it is empty
  \item $\tgtmat[\shifts]{\evMat} \leftarrow$ submatrix 
    of $\krylov[\shifts]{\evMat}$ formed by the rows with indices in
    $\{\prioIndex(1,\minDeg_1), \ldots, \prioIndex(\rdim,\minDeg_\rdim)\}$
  \item $\mat{C},\mat{D} \leftarrow$ submatrices of $\pivmat[\shifts]{\evMat}$
    and $\tgtmat[\shifts]{\evMat}$, respectively, formed by the columns with
    indices in $\{j_1,\ldots,j_r\}$
  \item compute $\relmat[\shifts]{\evMat} \leftarrow \mat{D} \mat{C}^{-1}$
  \item $\intBasis \leftarrow \polFromLin[\shifts]{ [ -\relmat[\shifts]{\evMat} \,|\, \idMat[\rdim] ] }$ 
  \item \algoword{Return} $\intBasis$
\end{enumerate}
\end{algorithm}
\end{minipage}}
\end{figure}

\begin{ack}
We thank B. Beckermann, E. Kaltofen, G. Labahn, and E.  Tsigaridas for useful
discussions. We also thank the reviewers for their thorough reading and helpful
comments. C.-P.  Jeannerod and G. Villard were partly supported by the ANR HPAC
project (ANR 11 BS02 013). V. Neiger was supported by the international
mobility grants from \emph{Projet Avenir Lyon Saint-\'Etienne}, \emph{Mitacs
Globalink - Inria}, and Explo'ra Doc from \emph{R\'egion Rh\^one-Alpes}. \'E.
Schost was supported by NSERC and by the Canada Research Chairs program.
\end{ack}

\appendix

\section{Bounds for polynomial matrix multiplication functions}
\label{app:cost-mult}

In this appendix, we give upper bounds for the quantities in
\Cref{dfn:polmatmul}.

\begin{lemma}
  \label{lem:polmatmul-bound}
  We have the upper bounds
  \begin{align*}
    \polmatmultimePrime{\rdim,d}  \;\in\; \bigO{ & \rdim^{\expmatmul-1}
    \polmultime{\rdim d} } & \text{if } \expmatmul>2, \\
    \polmatmultimePrime{\rdim,d} \;\in\; \bigO{ & \rdim \polmultime{\rdim d}
    \log(\rdim) } & \text{if } \expmatmul=2, \\
    \polmatmultimeBis{\rdim,d} \;\in\; \bigO{ & \rdim^\expmatmul \polmultime{d}
  \log(d)}.
  \end{align*}
\end{lemma}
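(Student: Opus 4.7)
The plan is to substitute the standard bound $\polmatmultime{\rdim, d} \in \bigO{\rdim^\expmatmul \polmultime{d}}$ into each defining sum and then exploit the super-linearity of $\polmultime{\cdot}$ in the form that $\mu(n) := \polmultime{n}/n$ is non-decreasing. This lets me pull out a worst-case value of $\mu$ and reduce the remaining sums to geometric ones.

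For $\polmatmultimePrime{\rdim, d}$, I would write
\[
\polmatmultimePrime{\rdim,d} \;\le\; C\sum_{i=0}^{\log \bar\rdim} 2^i (2^{-i}\bar\rdim)^{\expmatmul} \polmultime{2^i \bar d}
\;=\; C\,\bar\rdim^{\expmatmul}\,\bar d \sum_{i=0}^{\log \bar\rdim} 2^{i(2-\expmatmul)} \mu(2^i \bar d),
\]
after replacing $\polmultime{2^i \bar d}$ by $2^i \bar d \cdot \mu(2^i \bar d)$. Monotonicity of $\mu$ and $i \le \log \bar\rdim$ give $\mu(2^i \bar d) \le \mu(\bar\rdim\bar d)$, so the inner sum is controlled by $\mu(\bar\rdim\bar d) \sum_i 2^{i(2-\expmatmul)}$. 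When $\expmatmul>2$ the geometric ratio $2^{2-\expmatmul}$ is less than $1$ and the sum is $\bigO{1}$, yielding $\bigO{\bar\rdim^\expmatmul \bar d\, \mu(\bar\rdim\bar d)} = \bigO{\bar\rdim^{\expmatmul-1} \polmultime{\bar\rdim \bar d}}$, which is $\bigO{\rdim^{\expmatmul-1} \polmultime{\rdim d}}$ after replacing $\bar\rdim, \bar d$ by $\Theta(\rdim), \Theta(d)$. When $\expmatmul=2$ the ratio is $1$ and the sum has $\Theta(\log \bar\rdim)$ terms, producing the extra $\log \rdim$ factor claimed in the statement.

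For $\polmatmultimeBis{\rdim, d}$, the same substitution gives
\[
\polmatmultimeBis{\rdim,d} \;\le\; C\,\bar\rdim^{\expmatmul}\sum_{i=0}^{\log \bar d} 2^i \polmultime{2^{-i} \bar d}
\;=\; C\,\bar\rdim^{\expmatmul}\,\bar d \sum_{i=0}^{\log \bar d} \mu(2^{-i} \bar d).
\]
Since $\mu(2^{-i}\bar d) \le \mu(\bar d)$, each of the $\Theta(\log \bar d)$ terms is at most $\mu(\bar d)$, giving a total of $\bigO{\bar\rdim^{\expmatmul} \bar d\, \mu(\bar d)\log \bar d} = \bigO{\rdim^{\expmatmul} \polmultime{d}\log d}$.

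There is no real obstacle here; the only point requiring a little care is the correct invocation of super-linearity to justify both the monotonicity of $\mu$ (which enables the worst-case replacement inside the sums) and the fact that the case $\expmatmul>2$ loses only a constant factor from the geometric sum while the case $\expmatmul=2$ pays exactly one logarithmic factor in $\rdim$. Once this is handled, the two bounds follow by direct summation.
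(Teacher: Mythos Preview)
Your proof is correct and follows essentially the same approach as the paper: both arguments substitute $\polmatmultime{\rdim,d}\in\bigO{\rdim^\expmatmul\polmultime{d}}$ and then invoke super-linearity of $\polmultime{\cdot}$ (you phrase it as monotonicity of $\mu(n)=\polmultime{n}/n$, the paper writes the equivalent inequality $2^j\polmultime{2^{-j}d}\le\polmultime{d}$) to reduce each defining sum to a geometric series. The only cosmetic difference is that the paper re-indexes the sum for $\polmatmultimePrime{\rdim,d}$ via $i\mapsto\log\bar\rdim-i$ before applying super-linearity, whereas you work directly with the original index; the resulting geometric sums and final bounds are identical.
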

\begin{proof}
  It is enough to show these bounds for $\rdim$ and $d$ powers of $2$. The
  bound on $\polmatmultimeBis{\rdim,d}$ follows from the super-linearity
  property $2^j \polmultime{2^{-j} d} \le \polmultime{d}$. 

  Using the super-linearity property $\polmultime{ 2^{-i} \rdim d } \le 2^{-i}
  \polmultime{ \rdim d }$, we obtain $\polmatmultimePrime{\rdim,d} = \sum_{0\le
  i \le \log(\rdim)} 2^{-i} \rdim \polmatmultime{2^i, 2^{-i} \rdim d} \in
  \bigO{ \sum_{0\le i \le \log(\rdim)} 2^{i(\expmatmul-2)} \rdim
  \polmultime{\rdim d} }$. This concludes the proof since we have $\sum_{0\le i
  \le \log(\rdim)} 2^{i(\expmatmul-2)} \in \Theta(\rdim^{\expmatmul-2} +
  \log(\rdim) )$, where the logarithmic term accounts for the possibility that
  $\expmatmul=2$.
\end{proof}

\begin{lemma}
  \label{lem:polmatmulDnc-bound}
  We have the upper bounds
  \begin{align*}
    \polmatmultimePrimeDnc{\rdim,d} \;\in\; \bigO{ & \rdim^{\expmatmul-1}
    \polmultime{\rdim d} } & \text{if } \expmatmul>2, \\
    \polmatmultimePrimeDnc{\rdim,d} \;\in\; \bigO{ & \rdim \polmultime{\rdim d}
    \log(\rdim)^2 } & \text{if } \expmatmul=2, \\
    \polmatmultimeBisDnc{\rdim,d} \;\in\; \bigO{ & \rdim^{\expmatmul-1}
    \polmultime{\rdim d} + \rdim^{\expmatmul} \polmultime{d} \log(d) } &
    \text{if } \expmatmul>2, \\
    \polmatmultimeBisDnc{\rdim,d} \;\in\; \bigO{ & \rdim \polmultime{\rdim d}
    \log(\rdim)^2 + \rdim^2 \polmultime{d} \log(d) \log(\rdim) } & \text{if }
    \expmatmul=2.
  \end{align*}
\end{lemma}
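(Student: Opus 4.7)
The plan is to fully unfold the two-level recursive definitions so that each of $\polmatmultimePrimeDnc{\rdim,d}$ and $\polmatmultimeBisDnc{\rdim,d}$ becomes a single double sum of terms of the shape $2^?\polmatmultime{2^?\rdim,\,2^? d}$. Then I apply the universal upper bound $\polmatmultime{n,d} \in \bigO{n^\expmatmul\polmultime{d}}$ together with the two facets of super-linearity of $\polmultime{\cdot}$ (that $\polmultime{m}/m$ is non-decreasing): namely $\polmultime{2^k d} \le 2^k\polmultime{d}$ whenever $k \le 0$, and $\polmultime{2^k d} \le (2^k/\rdim)\polmultime{\rdim d}$ whenever $1 \le 2^k \le \rdim$. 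Without loss of generality I may assume $\rdim$ and $d$ are powers of $2$, so the $\bar{\cdot}$ notation can be dropped throughout.

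For $\polmatmultimePrimeDnc{\rdim,d}$, both levels of the recursion perform the \emph{same} rescaling (each halves the row dimension and doubles the degree), so after substituting $\ell=i+j$ the double sum collapses to
\[
\polmatmultimePrimeDnc{\rdim,d} \;=\; \sum_{\ell=0}^{\log\rdim}(\ell+1)\,2^\ell\,\polmatmultime{2^{-\ell}\rdim,\,2^\ell d},
\]
where the factor $(\ell+1)$ counts the pairs $(i,j)$ with $i+j=\ell$. Plugging in the product bound and then $\polmultime{2^\ell d} \le (2^\ell/\rdim)\polmultime{\rdim d}$ turns each summand into $\bigO{(\ell+1)\,2^{\ell(2-\expmatmul)}\,\rdim^{\expmatmul-1}\polmultime{\rdim d}}$. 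For $\expmatmul>2$ the geometric series is absolutely convergent and yields $\bigO{\rdim^{\expmatmul-1}\polmultime{\rdim d}}$, whereas for $\expmatmul=2$ the common ratio becomes $1$ and summing $(\ell+1)$ over $\ell\in\{0,\ldots,\log\rdim\}$ produces the extra $\log^2\rdim$.

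For $\polmatmultimeBisDnc{\rdim,d}$, the inner recursion halves the degree instead of the row dimension, so the two rescalings do not align. After unfolding and re-indexing by $k = i-j$, the sum becomes $\sum_{i,k}2^{2i-k}\polmatmultime{2^{-i}\rdim,\,2^k d}$ with $k\in\{-\log d,\ldots,\log\rdim\}$ and $i\in\{\max(k,0),\ldots,\log\rdim\}$. The main bookkeeping step is to split the analysis on the sign of $k$: for $k\le 0$ the bound $\polmultime{2^k d}\le 2^k\polmultime{d}$ cancels the $2^{-k}$ and leaves a sum over $k$ of length $\log d$, producing the $\rdim^\expmatmul\polmultime{d}\log d$ contribution; for $k\ge 1$ the bound $\polmultime{2^k d}\le (2^k/\rdim)\polmultime{\rdim d}$ combined with the geometric sum over $i$ produces the $\rdim^{\expmatmul-1}\polmultime{\rdim d}$ contribution. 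The only real subtlety is again $\expmatmul=2$: the common ratio $2^{2-\expmatmul}$ of the inner geometric sums in $i$ degenerates to $1$, so both regimes pick up an extra $\log\rdim$ factor, producing precisely the logarithmic powers in the claimed bounds. Adding the two regime contributions yields the four stated estimates.
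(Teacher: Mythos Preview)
Your proposal is correct and follows essentially the same approach as the paper's proof. The paper uses \Cref{lem:polmatmul-bound} as a black box for the $\polmatmultimePrimeDnc{\rdim,d}$ bound rather than your explicit collapse via $\ell=i+j$, and for $\polmatmultimeBisDnc{\rdim,d}$ it keeps the $(i,j)$ indices and splits on $j\le i$ versus $j>i$, which is exactly your split on the sign of $k=i-j$; one small imprecision is that in the $\expmatmul=2$ case your $k\ge 1$ regime actually picks up \emph{two} factors of $\log\rdim$ (one from the inner sum over $i$, one from the outer sum over $k$), giving the $\log(\rdim)^2$ in the claimed bound.
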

\begin{proof}
  It is enough to show these bounds for $\rdim$ and $d$ powers of $2$.
  The first two bounds are obtained from \Cref{lem:polmatmul-bound}, which
  implies that
  \[
    \sum_{i=0}^{\log(\rdim)} 2^i \polmatmultimePrime{2^{-i}\rdim,2^i d} \in
    \bigOPar{ \left( \sum_{i=0}^{\log(\rdim)} 2^{i(2-\expmatmul)} \rdim^{\expmatmul-1}
    \polmultime{\rdim d} \right)  + \rdim \polmultime{\rdim d} \log(\rdim)^2 },
  \]
  and from the fact that $\sum_{0\le i\le \log(\rdim)} 2^{i(2-\expmatmul)}$
  is upper bounded by a constant if $\expmatmul>2$.
    
  Now, we focus on the last two bounds. By definition,
  \[ \polmatmultimeBisDnc{ \rdim, d } \in \bigOPar{ \sum_{i=0}^{\log(\rdim)}
    2^i \sum_{j=0}^{\log(2^i d)} 2^j (2^{-i}\rdim)^\expmatmul
  \polmultime{2^{i-j} d} }.\] In the inner sum, $j$ goes from $0$ to
  $\log(2^i d) = i + \log(d)$: we will separately study the first terms
  with $j \le i$ and the remaining terms with $j > i$.

  First, using the super-linearity property $\polmultime{2^{i-j} d} \le 2^{i-j}
  \polmultime{\rdim d} / \rdim$, we obtain
  \begin{align*}
  \sum_{i=0}^{\log(\rdim)} 2^i \sum_{j=0}^{i} 2^j (2^{-i}\rdim)^\expmatmul
  \polmultime{2^{i-j} d} & \in \bigOPar{ \sum_{i=0}^{\log(\rdim)} (i+1)
  2^{i(2-\expmatmul)} \rdim^{\expmatmul-1} \polmultime{\rdim d} } \\
   & \in \bigO{ \rdim^{\expmatmul-1}
   \polmultime{\rdim d} + \rdim \polmultime{\rdim d} \log(\rdim)^2 },
  \end{align*}
  since when $\expmatmul>2$, the sum $\sum_{0 \le i \le \log(\rdim)} (i+1)
  2^{i(2-\expmatmul)}$ is known to be less than its limit
  $(1-2^{2-\expmatmul})^{-2}$ when $\rdim \to \infty$. We note that the second
  term $\rdim \polmultime{\rdim d} \log(\rdim)^2$ accounts for the
  possibility that $\expmatmul=2$.

  Then, using the super-linearity property $\polmultime{2^{i-j} d} \le 2^{i-j}
  \polmultime{d}$ when $j > i$, we obtain
  \begin{align*}
    \sum_{i=0}^{\log(\rdim)} 2^i \sum_{j=i+1}^{i+\log(d)} 2^j
    (2^{-i}\rdim)^\expmatmul \polmultime{2^{i-j} d} & \le
    \sum_{i=0}^{\log(\rdim)} 2^{i(2-\expmatmul)} \rdim^\expmatmul
    \polmultime{d}
    \log(d) \\
      & \in \bigO{ \rdim^\expmatmul \polmultime{d} \log(d) + \rdim^2
      \polmultime{d} \log(d)\log(\rdim) },
  \end{align*}
  which concludes the proof.
\end{proof}

\begin{lemma}
  \label{lem:polmatmulDoubleDnc-bound}
  Let $\bar{d}$ denote the power of $2$ such that $d \le \bar{d} < 2d$; we have
  the upper bounds
  \begin{align*}
    \textstyle{\sum_{0\le i\le \log(\bar{d})}}\; 2^i \polmatmultimePrimeDnc{\rdim,2^{-i}\bar{d}} \;\in\; \bigO{ & \rdim^{\expmatmul-1}
    \polmultime{\rdim d} + \rdim^\expmatmul \polmultime{d} \log(d) } & \text{if } \expmatmul>2, \\
    \textstyle{\sum_{0\le i\le \log(\bar{d})}}\; 2^i \polmatmultimePrimeDnc{\rdim,2^{-i}\bar{d}} \;\in\; \bigO{ & \rdim \polmultime{\rdim d}
    \log(\rdim)^3 + \rdim^2 \polmultime{d}\log(d)) \log(\rdim)^2 } & \text{if } \expmatmul=2, \\
    \textstyle{\sum_{0\le i\le \log(\bar{d})}}\; 2^i\polmatmultimeBisDnc{\rdim,2^{-i}\bar{d}} \;\in\; \bigO{ & \rdim^{\expmatmul-1}
    \polmultime{\rdim d} + \rdim^{\expmatmul} \polmultime{d} \log(d)^2 } &
    \text{if } \expmatmul>2, \\
    \textstyle{\sum_{0\le i\le \log(\bar{d})}}\; 2^i\polmatmultimeBisDnc{\rdim,2^{-i}\bar{d}} \;\in\; \bigO{ & \rdim \polmultime{\rdim d}
    \log(\rdim)^3 + \rdim^2 \polmultime{d} \log(d)^2 \log(\rdim) } & \text{if }
    \expmatmul=2.
  \end{align*}
\end{lemma}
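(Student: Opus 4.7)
The plan is to unroll the nested definitions in \Cref{dfn:polmatmul} and then reorder the resulting multiple sums carefully. Concretely, one first expands
\[
\sum_{0\le i\le\log\bar{d}} 2^i \polmatmultimePrimeDnc{\rdim, 2^{-i}\bar{d}} \;=\; \sum_{i,j,k} 2^{i+j+k}\, \polmatmultime{2^{-(k+j)}\bar{\rdim},\, 2^{k+j-i}\bar{d}}
\]
over the ranges $0\le i\le\log\bar{d}$, $0\le j\le\log\bar{\rdim}$, $0\le k\le\log\bar{\rdim}-j$, and similarly
\[
\sum_{0\le i\le\log\bar{d}} 2^i \polmatmultimeBisDnc{\rdim, 2^{-i}\bar{d}} \;=\; \sum_{i,j,k} 2^{i+j+k}\, \polmatmultime{2^{-j}\bar{\rdim},\, 2^{j-i-k}\bar{d}}
\]
over $0\le i\le\log\bar{d}$, $0\le j\le\log\bar{\rdim}$, $0\le k\le j-i+\log\bar{d}$. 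Substituting the basic bound $\polmatmultime{\rdim,d}\in\bigO{\rdim^\expmatmul\polmultime{d}}$ then reduces each to an upper estimate of the form $\rdim^\expmatmul \sum 2^\alpha\polmultime{2^\beta\bar{d}}$ for explicit linear exponents $\alpha,\beta$ in $(i,j,k)$.

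The key step is a change of variables. For the first sum, setting $\ell=k+j$ and $m=\ell-i$ yields
\[
\bigOPar{\rdim^\expmatmul \sum_{m}\, 2^{-m}\,\polmultime{2^m\bar{d}}\;\sum_{\ell}\, (\ell+1)\, 2^{\ell(2-\expmatmul)}},
\]
where for fixed $m$ the auxiliary index $\ell$ ranges in $\{\max(0,m),\ldots,\min(\log\bar{\rdim},m+\log\bar{d})\}$ and the factor $\ell+1$ counts admissible pairs $(j,k)$. An analogous substitution using $m=j-i-k$ and the auxiliary index $j$ for the second sum produces a similar expression except that the inner combinatorial factor becomes $\min(j-m,\log\bar{d})+1$, i.e.\ it depends linearly on the difference $j-m$; this extra dependence is what ultimately turns the $\log(d)$ factor of the first bound into $\log(d)^2$ in the second.

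For $\expmatmul>2$, the inner series converges and evaluates to $\bigO{2^{m(2-\expmatmul)}}$ when $m\ge 0$ (summation starts at $\ell=m$) and to $\bigO{1}$ or $\bigO{|m|+1}$ when $m\le 0$ (summation starts at $\ell=0$), depending on the case. Splitting the $m$-sum at $m=0$ and invoking the super-linearity of $\polmultime{\cdot}$ in its two complementary forms
\[
\polmultime{2^m\bar{d}}\le \bigl(2^m/\rdim\bigr)\polmultime{\rdim d}\;\;(m\ge 0),\qquad \polmultime{2^m\bar{d}}\le 2^m\polmultime{\bar{d}}\;\;(m\le 0),
\]
the $m\ge 0$ contribution telescopes into $\bigO{\rdim^{\expmatmul-1}\polmultime{\rdim d}}$, while the $m\le 0$ contribution becomes $\bigO{\rdim^{\expmatmul}\polmultime{d}\log(d)}$ in the first case and $\bigO{\rdim^\expmatmul\polmultime{d}\log(d)^2}$ in the second. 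For $\expmatmul=2$ the geometric decay in $\ell$ disappears and the inner sum only contributes polylogarithmic-in-$\rdim$ factors, which combine with the $m$-summation to yield the extra $\log(\rdim)^2$ and $\log(\rdim)^3$ factors appearing in that regime.

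The main obstacle is the bookkeeping in the change of variables and in the case split: one must carefully determine the ranges of $(\ell,m)$ after reordering, ensure that the first super-linearity bound applies throughout the range $m\ge 0$ (which reduces to $2^m\le\bar{\rdim}$, a direct consequence of $\ell\le\log\bar{\rdim}$), and collect all logarithmic factors that arise in the $\expmatmul=2$ regime where the geometric series degenerate into polylogs.
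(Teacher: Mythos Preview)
Your proposal is correct and follows essentially the same route as the paper: both unroll the nested definitions into a triple sum, substitute $\polmatmultime{\rdim,d}\in\bigO{\rdim^\expmatmul\polmultime{d}}$, and split according to the sign of the degree exponent (your $m=\ell-i$, respectively $m=j-i-k$, is precisely the paper's case distinction $i\le j+k$ versus $i>j+k$, respectively $k\le j-i$ versus $k>j-i$), applying the two forms of super-linearity of $\polmultime{\cdot}$ on each side. The only cosmetic difference is that you perform an explicit change of variables to $(\ell,m)$ whereas the paper keeps the original indices $(i,j,k)$ throughout; one small imprecision is that for $m\ge 0$ the inner sum is $\bigO{(m+1)2^{m(2-\expmatmul)}}$ rather than $\bigO{2^{m(2-\expmatmul)}}$, but since $\sum_{m\ge 0}(m+1)2^{m(2-\expmatmul)}$ still converges for $\expmatmul>2$ this does not affect the outcome.
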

\begin{proof}
  It is enough to show these bounds for $\rdim$ and $d$ powers of $2$; in
  particular, $d = \bar{d}$.

  Let us study the first two bounds. By definition,
  \begin{align*}
    \sum_{i=0}^{\log(d)} 2^i \polmatmultimePrimeDnc{\rdim,2^{-i}d} 
    & = \sum_{i=0}^{\log(d)} 2^i \sum_{j=0}^{\log(\rdim)} 2^j \sum_{k=0}^{\log(\rdim)-j}
    2^{k} \polmatmultime{2^{-j-k}\rdim,2^{j+k-i}d} \\ 
    & = \sum_{j=0}^{\log(\rdim)} \sum_{k=0}^{\log(\rdim)-j} \sum_{i=0}^{\log(d)} 
    2^{i+j+k} \polmatmultime{2^{-j-k}\rdim,2^{j+k-i}d} .
  \end{align*}
  Considering the terms with $i\le j+k$, we use $\polmultime{2^{j+k-i}d} \le
  2^{j+k-i} \polmultime{\rdim d} / \rdim$ to obtain
  \begin{align*}
    \sum_{j=0}^{\log(\rdim)} \sum_{k=0}^{\log(\rdim)-j} \sum_{i=0}^{j+k}
    & 2^{i+j+k} \polmatmultime{2^{-j-k}\rdim,2^{j+k-i}d} \\
    & \in \bigOPar{
    \sum_{j=0}^{\log(\rdim)} \sum_{k=0}^{\log(\rdim)-j} (j+k+1)
    2^{(j+k)(2-\expmatmul)} \rdim^{\expmatmul-1} \polmultime{\rdim d} },
  \end{align*}
  from which we conclude since the sum $\sum_{0\le j\le \log(\rdim)} \sum_{0
  \le k \le \log(\rdim)-j} (j+k+1) 2^{(j+k)(2-\expmatmul)}$ is $\bigO{1}$ if
  $\expmatmul>2$, and $\bigO{\log(\rdim)^3}$ if $\expmatmul=2$.

  Now, considering the terms with $i> j+k$, we use $\polmultime{2^{j+k-i}d} \le
  2^{j+k-i} \polmultime{d}$ to obtain
  \begin{align*}
    \sum_{j=0}^{\log(\rdim)} \sum_{k=0}^{\log(\rdim)-j} \sum_{i=j+k+1}^{\log(d)}
    & 2^{i+j+k} \polmatmultime{2^{-j-k}\rdim,2^{j+k-i}d} \\
    & \in \bigOPar{
    \sum_{j=0}^{\log(\rdim)} \sum_{k=0}^{\log(\rdim)-j}
    2^{(j+k)(2-\expmatmul)} \rdim^{\expmatmul} \polmultime{d} \log(d) },
  \end{align*}
  where again the sum on $j$ and $k$ is $\bigO{1}$ if $\expmatmul>2$, and
  $\bigO{\log(\rdim)^2}$ if $\expmatmul=2$.

  Then, we study the last two bounds. By definition,
  \begin{align*}
    \sum_{i=0}^{\log(d)} 2^i \polmatmultimeBisDnc{\rdim,2^{-i}d} 
    & = \sum_{i=0}^{\log(d)} 2^i \sum_{j=0}^{\log(\rdim)} 2^j \sum_{k=0}^{\log(d)+j-i}
    2^k \polmatmultime{2^{-j}\rdim,2^{j-i-k}d} \\
    & = \sum_{j=0}^{\log(\rdim)} \sum_{i=0}^{\log(d)} \sum_{k=0}^{\log(d)+j-i}
    2^{i+j+k} \polmatmultime{2^{-j}\rdim,2^{j-i-k}d}.
  \end{align*}
  Considering the terms with $k > j-i$, we use $\polmultime{2^{j-i-k} d} \le 2^{j-i-k} \polmultime{d}$
  to obtain
   \[ \sum_{j=0}^{\log(\rdim)} \sum_{i=0}^{\log(d)} \sum_{k=\max(0,1+j-i)}^{\log(d)+j-i}
    2^{i+j+k} \polmatmultime{2^{-j}\rdim,2^{j-i-k}d} \in
  \bigOPar{ \sum_{j=0}^{\log(\rdim)} 2^{j(2-\expmatmul)} \rdim^\expmatmul \polmultime{d} \log(d)^2 }; \]
  this is $\bigO{\rdim^\expmatmul \polmultime{d} \log(d)^2}$ if $\expmatmul>2$
  and $\bigO{\rdim \polmultime{d} \log(d)^2 \log(\rdim)}$ if $\expmatmul=2$.

  Now, considering the terms with $0 \le k\le j-i$, and thus also $i \le j$, we use
  $\polmultime{2^{j-i-k} d} \le 2^{j-i-k} \polmultime{\rdim d} / \rdim$ to obtain
  \[
    \sum_{j=0}^{\log(\rdim)} \sum_{i=0}^{j} \sum_{k=0}^{j-i}
    2^{i+j+k} \polmatmultime{2^{-j}\rdim,2^{j-i-k}d}
    \in \bigOPar{ \sum_{j=0}^{\log(\rdim)} (j+1)^2 2^{j(2-\expmatmul)}
    \rdim^{\expmatmul-1} \polmultime{\rdim d} }.
  \]
  This gives the conclusion, since $\sum_{j=0}^{\log(\rdim)} (j+1)^2
  2^{j(2-\expmatmul)}$ is $\bigO{1}$ if $\expmatmul>2$ and
  $\bigO{\log(\rdim)^3}$ if $\expmatmul=2$.
\end{proof}

\section{Cost analysis for the computation of minimal nullspace bases}
\label{app:cost-mnb}

Here, we give a detailed cost analysis for the minimal nullspace basis
algorithm of \citet{ZhLaSt12}, which we rewrite in \Cref{algo:mnb}
using our convention here that basis vectors are rows of the basis matrix
(whereas in the above reference they are its columns).  Furthermore, we assume
that the input matrix has full rank, which allows us to better control the
dimensions of the matrices encountered in the computations: in the recursive
calls, we always have input matrices with more rows than columns.

Here, the quantity $\polmatmultimeBis{\rdim,d} = \sum_{0\le j\le \log(d)} 2^j
\polmatmultime{\rdim,2^{-j}d}$ arises in the cost analysis of fast algorithms
for the computation of Hermite-Pad\'e approximants~\citep{BecLab94,GiJeVi03},
which use a divide-and-conquer approach on the degree $d$. The minimal
nullspace basis algorithm in~\citep{ZhLaSt12} follows a divide-and-conquer
approach on the dimension of the input matrix, and computes at each node of the
recursion some products of matrices with unbalanced row degrees as well as a
minimal basis of Hermite-Pad\'e approximants. In particular, its cost will be
expressed using the quantities $\polmatmultimePrimeDnc{\rdim,d}$ and
$\polmatmultimeBisDnc{\rdim,d}$ introduced in \Cref{dfn:polmatmul}.

The following result refines the cost analysis in~\citep[Theorem~4.1]{ZhLaSt12},
counting the logarithmic factors. 

\begin{figure}[h!]
\centering
\fbox{\begin{minipage}{12.5cm}
  \begin{algorithm} [\algoname{MinimalNullspaceBasis} \citep{ZhLaSt12}]
  \label{algo:mnb}

~\smallskip\\
\algoword{Input:} 
\begin{itemize}
  \item matrix $\mat{F}\in\polMatSpace[\rdim][\cdim]$ with full rank and $\rdim
    \ge \cdim$, 
  \item a shift $\shifts \in \shiftSpace$ with entries in non-decreasing order
    and bounding the row degree of $\mat{F}$ componentwise.
\end{itemize}

\smallskip
\algoword{Output:} 
\begin{itemize}
  \item an $\shifts$-minimal nullspace basis $\mat{N}$ of $\mat{F}$,
  \item the $\shifts$-row degree of $\mat{N}$.
\end{itemize}

\smallskip
\begin{enumerate}[{\bf 1.}]
  \item $\rho \leftarrow \sum_{i=\rdim-\cdim+1}^{\rdim} \shift{i}$ and $\lambda \leftarrow \lceil \rho / \cdim \rceil$
  \item $\mat{P} \leftarrow$ a solution to \Cref{pbm:h-pade} on input
    $((3\lambda,\ldots,3\lambda),\mat{F},\shifts)$, obtained using the
    algorithm \algoname{PM-Basis}~\citep{GiJeVi03}, and with the rows of
    $\mat{P}$ arranged so that $\rdeg[\shifts]{\mat{P}}$ is non-decreasing
  \item Write $\mat{P} = \trsp{[\trsp{\mat{P}_1} | \trsp{\mat{P}_2}]}$ where $\mat{P}_1$ consists of
    all rows $\row{p}$ of $\mat{P}$ satisfying $\row{p} \mat{F} = 0$
  \item \algoword{If} $\cdim = 1$, \algoword{Return} $(\mat{P}_1,\rdeg[\shifts]{\mat{P}_1})$
  \item \algoword{Else}
    \begin{enumerate}[{\bf a.}]
      \item $\shifts[t] \leftarrow \rdeg[\shifts]{\mat{P}_2} - (3\lambda,\ldots,3\lambda)$
      \item $\mat{G} \leftarrow X^{-3\lambda} \mat{P}_2 \mat{F}$
      \item Write $\mat{G} = [ \mat{G}_1 | \mat{G}_2 ]$ where $\mat{G}_1$ has
        $\lfloor \cdim/2 \rfloor$ columns and $\mat{G}_2$ has $\lceil \cdim/2
        \rceil$ columns
      \item $(\mat{N}_1,\shifts[u]) \leftarrow \algoname{MinimalNullspaceBasis}(\mat{G}_1,\shifts[t])$
      \item $(\mat{N}_2,\shifts[v]) \leftarrow \algoname{MinimalNullspaceBasis}(\mat{N}_1 \mat{G}_2,\shifts[u])$
      \item $\mat{N} \leftarrow \trsp{[ \trsp{\mat{P}_1} | \trsp{(\mat{N}_2 \mat{N}_1 \mat{P}_2)} ]}$
      \item \algoword{Return} $(\mat{N},(\rdeg[\shifts]{\mat{P}_1},\shifts[v]))$
    \end{enumerate}
\end{enumerate}
\end{algorithm}

\end{minipage}}
\end{figure}

\begin{proposition}
  \label{prop:cost-mnb}
  Let $\mat{F}$ in $\polMatSpace[\rdim][\cdim]$ have full rank with $\rdim \ge
  \cdim$, and let $\shifts$ in $\shiftSpace$ which bounds the row degree of
  $\mat{F}$ componentwise. Let $\xi \ge \rdim$ be an integer such that
  $\sshifts \le \xi$. Assuming that $\rdim \in \bigO{\cdim}$, \Cref{algo:mnb}
  computes an $\shifts$-minimal nullspace basis of $\mat{F}$ using
  \begin{align*}
    \bigO{ & \polmatmultimePrimeDnc{\rdim,\xi/\rdim} +
    \polmatmultimeBisDnc{\rdim,\xi/\rdim} }   \\
    & \subseteq\; \bigO{ \rdim^{\expmatmul-1} \polmultime{\xi} +
    \rdim^{\expmatmul} \polmultime{\xi/\rdim} \log(\xi/\rdim) } & \text{if } \expmatmul>2, \\
    & \subseteq\; \bigO{ \rdim \polmultime{\xi} \log(\rdim)^2 + \rdim^2
    \polmultime{\xi/\rdim} \log(\xi/\rdim) \log(\rdim) } & \text{if } \expmatmul=2
  \end{align*}
  operations in $\field$.
\end{proposition}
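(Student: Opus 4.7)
The approach is to induct on the column dimension $\cdim$, tracking two quantities along the recursion tree of \Cref{algo:mnb}: the total cost contributed by calls to \algoname{PM-Basis} in Step~\textbf{2}, and the total cost contributed by polynomial matrix multiplications in Steps~\textbf{5.b}, \textbf{5.e}, and~\textbf{5.f}. Since each recursive call halves $\cdim$, the tree has depth $\bigO{\log\cdim} = \bigO{\log\rdim}$ under the assumption $\rdim\in\bigO{\cdim}$, and at depth $i$ there are at most $2^i$ sub-problems with row dimension $\bigO{\rdim/2^i}$ (this uses the full-rank assumption on $\mat{F}$ to ensure the recursive inputs also have more rows than columns).

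The central invariant is that the sum of the entries of the shift input to each node of the recursion is $\bigO{\xi}$. I will establish this by induction on the recursion: at the root it is given by hypothesis; and at an internal node with input shift of sum at most $\xi'$, the choice $\lambda=\lceil \rho/\cdim \rceil$ gives $3\lambda\cdim = \bigO{\xi'}$, so the \algoname{PM-Basis} output satisfies $\sshifts[\rdeg[\shifts]{\mat{P}}] \le \sshifts + 3\lambda\cdim = \bigO{\xi'}$ (using the determinantal-degree argument behind \citep[Theorem~4.1]{BarBul92} as recalled in \Cref{rmk:sum_rdeg}). Consequently $\sshifts[{\shifts[t]}]$, $\sshifts[{\shifts[u]}]$, and $\sshifts[{\shifts[v]}]$ are all $\bigO{\xi}$, so both recursive calls inherit the invariant.

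Given the invariant, the cost is estimated level by level. For Step~\textbf{2}, at depth $i$ each sub-problem has $\bigO{\rdim/2^i}$ rows and is called at order $3\lambda = \bigO{2^i\xi/\rdim}$, so using the cost of \algoname{PM-Basis} from \citep{GiJeVi03} the cost at this level is $\bigO{2^i \polmatmultimeBis{\rdim/2^i,2^i\xi/\rdim}}$; summing over $i=0,\ldots,\log\rdim$ gives $\bigO{\polmatmultimeBisDnc{\rdim,\xi/\rdim}}$ by definition. For Steps~\textbf{5.b}, \textbf{5.e}, and~\textbf{5.f}, each multiplication is between two matrices whose sums of shifted row degrees are bounded by $\bigO{\xi}$ at the root and $\bigO{2^i\xi/\rdim \cdot \rdim/2^i}$ at depth $i$; invoking \Cref{prop:unbalanced-polmatmul} then yields a per-node cost of $\bigO{\polmatmultimePrime{\rdim/2^i,2^i\xi/\rdim}}$, and summing over the tree gives $\bigO{\polmatmultimePrimeDnc{\rdim,\xi/\rdim}}$. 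The explicit upper bounds in the two regimes $\expmatmul>2$ and $\expmatmul=2$ then follow directly from \Cref{lem:polmatmulDnc-bound}.

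The main obstacle is the bookkeeping for the shift-sum invariant, since the recursive calls use the shifts $\shifts[t] = \rdeg[\shifts]{\mat{P}_2} - (3\lambda,\ldots,3\lambda)$ and then $\shifts[u],\shifts[v]$, all of which are outputs of previous computations rather than user-supplied. One must show both (i) that these shifts componentwise bound the row degrees of the matrices they are attached to, so that the recursive calls meet the preconditions of \Cref{algo:mnb} and of \Cref{prop:unbalanced-polmatmul}, and (ii) that their entrywise sums stay within $\bigO{\xi}$ down the whole tree. Point (i) follows from the predictable-degree property applied to $\mat{P}_2$ and the $\shifts$-minimality of the nullspace bases, while point (ii) hinges on the single inequality $\sshifts[\rdeg[\shifts]{\mat{P}}] - 3\lambda\cdot(\text{\# rows of }\mat{P}_2) = \bigO{\xi'}$ at each node; together they propagate the invariant and close the induction.
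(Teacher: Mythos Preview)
Your proposal is correct and follows essentially the same approach as the paper: establish that the shift-sum invariant $\sshifts[\cdot] = \bigO{\xi}$ propagates through the recursion (via the approximant-basis degree bound and \citep[Theorem~3.4]{ZhLaSt12}), bound the per-node cost by $\bigO{\polmatmultimePrime{\cdim,\xi/\cdim} + \polmatmultimeBis{\cdim,\xi/\cdim}}$, and sum over the recursion tree to obtain $\polmatmultimePrimeDnc{\cdot,\cdot}+\polmatmultimeBisDnc{\cdot,\cdot}$. The paper additionally verifies explicitly that the recursive inputs have full rank and satisfy the row/column-dimension precondition (showing $\cdim \le \hat{\rdim} \le 3\cdim/2$ for the row dimension of $\mat{P}_2$, and that $\mat{N}_1\mat{G}_2$ has full rank via a dimension count on the final nullspace basis); you allude to this but would need to fill in those details, and note that the multiplication $\mat{P}\mat{F}$ is actually performed in Step~\textbf{3} rather than \textbf{5.b}.
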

\begin{proof}
  The proof of correctness can be found in~\citep{ZhLaSt12}. We prove the cost
  bound following \Cref{algo:mnb} step by step.

  Step~\textbf{1}: since $\rho \le \sshifts \le \xi$, we have $\lambda \le \lceil
  \xi/\cdim \rceil$.

  Step~\textbf{2}: using the algorithm \texttt{PM-Basis} in~\citep{GiJeVi03},
  $\mat{P}$ can be computed using $\bigO{ \polmatmultimeBis{\rdim, \lambda} }$
  operations in $\field$; see \citep[Theorem~2.4]{GiJeVi03}. Since $\lambda \le
  \lceil \xi/\cdim \rceil$ and $\rdim \in \bigO{\cdim}$, this step uses
  $\bigO{\polmatmultimeBis{ \cdim, \xi/\cdim }}$ operations. Besides, from
  \Cref{rmk:sum_rdeg} on the sum of the $\shifts$-row degrees of an
  $\shifts$-minimal interpolation basis, we have
  $\sshifts[{\rdeg[\shifts]{\mat{P}}}] \le 3\cdim \lambda + \xi \le 3(\rho +
  \rdim) + \xi \le 7\xi$.

  Step~\textbf{3}: finding $\mat{P}_1$ and $\mat{P}_2$ can be done by computing
  $\mat{P} \mat{F}$. The matrix $\mat{F}$ is $\rdim \times \cdim$ with row
  degree $\shifts[w] = \rdeg{\mat{F}} \le \shifts$ (componentwise); in
  particular, $\sshifts[{\shifts[w]}] \le \xi$. Besides, $\mat{P}$ is an $\rdim
  \times \rdim$ matrix and $\sshifts[{\rdeg[{\shifts[w]}]{\mat{P}}}] \le
  \sshifts[{\rdeg[\shifts]{\mat{P}}}] \le 7\xi$. Then, one can augment
  $\mat{F}$ with $\rdim - \cdim$ zero columns and use
  \Cref{algo:unbalanced-polmatmul} to compute $\mat{P} \mat{F}$;
  according to \Cref{prop:unbalanced-polmatmul}, this uses $\bigO{
    \polmatmultimePrime{\rdim,\xi/\rdim} } \subseteq \bigO{
      \polmatmultimePrime{\cdim,\xi/\cdim} }$ operations.

  Steps~\textbf{5.a} and \textbf{5.b}: Computing $\mat{G}$ involves no
  arithmetic operation since the product $\mat{P} \mat{F}$ has already been
  computed in Step~\textbf{3}; $\mat{G}$ has row degree bounded by $\shifts[t]$
  (componentwise). Let us denote $\hat{\rdim}$ the number of rows of
  $\mat{P}_2$.  Because both $\mat{P}$ and $\mat{F}$ have full rank and
  $\mat{P}_1 \mat{F} = \mat{0}$, $\mat{G}$ has full rank and at least $\cdim$
  rows in $\mat{P}$ are not in the nullspace of $\mat{F}$, which means $\cdim
  \le \hat{\rdim}$. Furthermore, according to~\citep[Theorem~3.6]{ZhLaSt12}, we
  have $\hat{\rdim} \le 3\cdim/2$. Then, $\mat{G}$ is an $\hat{\rdim} \times
  \cdim$ matrix with $\cdim \le \hat{\rdim} \le 3\cdim/2$ and with row degree
  bounded by $\shifts[t]$. In addition, we have $\shifts[t] \le \shifts$
  \citep[Lemma~3.12]{ZhLaSt12}, and thus in particular $\sshifts[{\shifts[t]}]
  \le \xi$.

  Step~\textbf{5.c}: for the recursive calls of Steps~\textbf{5.d}
  and~\textbf{5.e}, we will need to check that our assumptions on the
  dimensions, the degrees, and the rank of the input are maintained. Here, we
  first remark that $\mat{G}_1$ and $\mat{G}_2$ have full rank and respective
  dimensions $\hat{\rdim} \times \lfloor \cdim/2 \rfloor$ and $\hat{\rdim}
  \times \lceil \cdim/2 \rceil$, with $\hat{\rdim} \ge \lceil \cdim/2 \rceil
  \ge \lfloor \cdim/2 \rfloor$. Their row degrees are bounded by $\shifts[t]$,
  which is in non-decreasing order and satisfies $\sshifts[{\shifts[t]}] \le
  \xi$.

  Step~\textbf{5.d}: $\mat{N}_1$ is a $\shifts[t]$-minimal nullspace basis
  of~$\mat{G}_1$ and therefore it has $\hat{\rdim} - \lfloor \cdim/2 \rfloor$
  rows and $\hat{\rdim}$ columns. Besides, $\shifts[u] =
  \rdeg[{\shifts[t]}]{\mat{N}_1}$ and by~\citep[Theorem~3.4]{ZhLaSt12}, we have
  $\sshifts[{\shifts[u]}] \le \sshifts[{\shifts[t]}] \le \xi$.

  Step~\textbf{5.e}: we remark that $\mat{N}_1 \mat{G}_2$ has $\lceil \cdim/2
  \rceil$ columns and $\hat{\rdim} - \lfloor \cdim/2 \rfloor \ge \lceil \cdim/2
  \rceil$ rows. We now show that it has full rank. Let us consider
  $\hat{\mat{N}}_2$ any $\shifts[u]$-minimal nullspace basis of $\mat{N}_1
  \mat{G}_2$.  Then $\hat{\mat{N}}_2$ has $\hat{\rdim} - \lfloor \cdim/2
  \rfloor - r$ rows, where $r$ is the rank of $\mat{N}_1 \mat{G}_2$. Our goal
  is to prove that $r = \lceil \cdim/2 \rceil$. The matrix $\hat{\mat{N}} =
  \trsp{[ \trsp{\mat{P}_1} | \trsp{(\hat{\mat{N}}_2 \mat{N}_1 \mat{P}_2)} ]}$
  is an $\shifts$-minimal nullspace basis of $\mat{F}$ \citep[Theorems 3.9 and
  3.15]{ZhLaSt12}. In particular, since $\mat{F}$ has full rank,
  $\hat{\mat{N}}$ has $\rdim - \cdim$ rows.  Since $\mat{P}_1$ has $\rdim -
  \hat{\rdim}$ rows, this gives $\rdim-\cdim = \rdim - \hat{\rdim} +
  \hat{\rdim} - \lfloor \cdim/2 \rfloor - r = \rdim - \lfloor \cdim/2 \rfloor -
  r$. Thus $\cdim = \lfloor \cdim/2 \rfloor + r$, and $r = \lceil \cdim/2
  \rceil$.

  Furthermore, $\mat{G}_2$ has row degree bounded by $\shifts[t]$ and
  $\mat{N}_1$ has $\shifts[t]$-row degree exactly $\shifts[u]$, so that
  $\rdeg{\mat{N}_1 \mat{G}_2} \le \rdeg[\rdeg{\mat{G}_2}]{\mat{N}_1} \le
  \rdeg[{\shifts[t]}]{\mat{N}_1} = \shifts[u]$. We have $\sshifts[{\shifts[t]}]
  \le \xi$ and $\sshifts[{\shifts[u]}] \le \xi$. Augmenting $\mat{N}_1$ and
  $\mat{G}_2$ so that they are $\hat{\rdim} \times \hat{\rdim}$, by
  \Cref{prop:unbalanced-polmatmul}, $\mat{N}_1 \mat{G}_2$ can be computed using
  $\bigO{ \polmatmultimePrime{\hat{\rdim},\xi/\hat{\rdim}} } \subseteq \bigO{
    \polmatmultimePrime{\cdim,\xi/\cdim} }$ operations. Then, $\mat{N}_2$ is a
    $\shifts[t]$-minimal nullspace basis of~$\mat{N}_1 \mat{G}_2$; it has
    $\hat{\rdim} - \cdim$ rows and $\hat{\rdim} - \lceil \cdim/2 \rceil$
    columns, its $\shifts[u]$-row degree is $\shifts[v] =
    \rdeg[{\shifts[u]}]{\mat{N}_2}$, and we have $\sshifts[{\shifts[v]}] \le
    \sshifts[{\shifts[u]}] \le \xi$~\citep[Theorem~3.4]{ZhLaSt12}.

  Step~\textbf{5.f}: using the previously given dimensions and degree bounds
  for $\mat{N}_1$ and $\mat{N}_2$, one can easily check that the product
  $\mat{N}_2 \mat{N}_1$ can be computed by \Cref{algo:unbalanced-polmatmul}
  using $\bigO{ \polmatmultimePrime{ \hat{\rdim}, \xi/\hat{\rdim} } } \subseteq
  \bigO{ \polmatmultimePrime{ \cdim, \xi/\cdim } }$ operations. Now,
  $\mat{P}_2$ is $\hat{\rdim} \times \rdim$ with $\rdim \ge \hat{\rdim}$, and
  denoting $\shifts[w'] = \shifts[t] + (3\lambda,\ldots,3\lambda)$, $\mat{P}_2$
  has its row degree bounded by $\rdeg[\shifts]{\mat{P}_2} = \shifts[w']$, with
  $\sshifts[{\shifts[w']}] = \sshifts[{\rdeg[\shifts]{\mat{P}_2}}] \le
  \sshifts[{\rdeg[\shifts]{\mat{P}}}] \le 7\xi$. Besides,
  $\sshifts[{\rdeg[{\shifts[w']}]{\mat{N}_2 \mat{N}_1}}] \le
  \sshifts[{\rdeg[{\shifts[t]}]{\mat{N}_2 \mat{N}_1}}] +
  3(\hat{\rdim}-\cdim)\lambda \le \sshifts[{\shifts[v]}] + 3 \cdim \lambda/2
  \le 4\xi$.  Then, the product $\mat{N}_2\mat{N}_1\mat{P}_2$ can be computed
  with \Cref{algo:unbalanced-polmatmul} using $\bigO{ \rdim / \hat{\rdim}
  \polmatmultimePrime{\hat{\rdim},\xi/\hat{\rdim}} } \subseteq \bigO{
    \polmatmultimePrime{ \cdim, \xi/\cdim } }$ operations, since $\rdim \in
    \bigO{\cdim}$ and $\cdim \le \hat{\rdim}$.

  Thus, we have two recursive calls with half the column dimension and the same
  bound $\xi$, and additional $\bigO{ \polmatmultimePrime{\cdim,\xi/\cdim} +
  \polmatmultimeBis{\cdim,\xi/\cdim} }$ operations for the matrix products and
  the computation of a minimal basis of Hermite-Pad\'e approximants. Overall
  \Cref{algo:mnb} uses $\bigO{ \polmatmultimePrimeDnc{\cdim,\xi/\cdim}
  + \polmatmultimeBisDnc{\cdim,\xi/\cdim} }$ operations: since $\cdim \in
  \Theta(\rdim)$, we obtain the announced cost estimate; the upper bound is
  a direct consequence of \Cref{lem:polmatmulDnc-bound}.
\end{proof}


\bibliographystyle{elsart-harv}

\end{document}